\documentclass[11pt]{article}

\usepackage{amsmath,amssymb,amsthm,mathtools}
\usepackage{geometry}
\geometry{margin=25mm}
\usepackage{url}
\newcommand{\email}[1]{\href{mailto:#1}{#1}}

\usepackage{float}

\usepackage{adjustbox}
\usepackage{diagbox}
\usepackage{slashbox}
\usepackage{multirow,booktabs}
\usepackage{rotating}

\theoremstyle{definition}
\newtheorem{definition}{Definition}[section]
\newtheorem{remark}[definition]{Remark}
\newtheorem{example}[definition]{Example}
\theoremstyle{plain}
\newtheorem{theorem}[definition]{Theorem}
\newtheorem{lemma}[definition]{Lemma}
\newtheorem{proposition}[definition]{Proposition}
\newtheorem{corollary}[definition]{Corollary}
\usepackage{physics}

\usepackage[most]{tcolorbox}

\tcolorboxenvironment{theorem}{
  enhanced,
  colback=white,
  colframe=black,
  boxrule=0.8pt,
  arc=4pt,
  left=6pt,right=6pt,top=6pt,bottom=6pt,
  before skip=\topsep, after skip=\topsep
}
\definecolor{lightgray}{gray}{0.85}
\tcolorboxenvironment{definition}{
  enhanced,
  borderline west={3pt}{0pt}{lightgray}, 
  frame hidden,                       
  colback=white,                      
  left=6pt, right=6pt, top=6pt, bottom=6pt,
  before skip=\topsep, after skip=\topsep,
      breakable   
}

\usepackage{fontawesome5}


\tcolorboxenvironment{proof}{
  enhanced,
  borderline west={3pt}{0pt}{gray},   
  frame hidden,                             
  colback=white,
  left=6pt, right=6pt, top=6pt, bottom=6pt,
  before skip=\topsep, after skip=\topsep,
  breakable                                 
}

\newcommand{\C}{\mathbb{C}}
\newcommand{\PP}{\mathbb{P}}
\newcommand{\GL}{\mathrm{GL}}
\newcommand{\PGL}{\mathrm{PGL}}
\newcommand{\End}{\mathrm{End}}
\newcommand{\id}{\mathrm{id}}
\newcommand{\rk}{\mathrm{rk}}

\DeclareMathOperator{\Vol}{Vol}

\DeclareMathOperator{\Hom}{Hom}
\DeclareMathOperator{\Sing}{Sing}
\DeclareMathOperator{\Gr}{Gr}

\usepackage{amsmath,amssymb}

\providecommand{\cO}{\mathcal{O}}
\providecommand{\Gm}{\mathbb{G}_{\mathrm m}}
\providecommand{\PP}{\mathbb{P}}
\providecommand{\Spec}{\operatorname{Spec}}
\providecommand{\Proj}{\operatorname{Proj}}
\providecommand{\PP}{\mathbb{P}}

\providecommand{\GL}{\operatorname{GL}}
\providecommand{\PGL}{\operatorname{PGL}}
\providecommand{\SL}{\operatorname{SL}}


\providecommand{\per}{\operatorname{per}}

\providecommand{\End}{\operatorname{End}}
\providecommand{\Mat}{\operatorname{Mat}}

\providecommand{\codim}{\operatorname{codim}}
\providecommand{\Hilb}{\operatorname{Hilb}}

\providecommand{\Schur}{\mathbb{S}}
\providecommand{\len}{\ell}

\newcommand{\Z}{\mathbb{Z}}

\newcommand{\im}{\mathrm{im}}

\newcommand{\sslash}{\mathbin{/\mkern-6mu/}}

\definecolor{QGLBlue}{RGB}{65,105,225}   
   
\usepackage[unicode]{hyperref} 
\hypersetup{
  colorlinks=true,
  linkcolor=QGLBlue,
  citecolor=QGLBlue,
  urlcolor=QGLBlue,
  breaklinks=true
}

\title{Introduction to Quantum Entanglement Geometry\\
{\small Entanglement Filtration in Azumaya Algebras and Geometric Entanglement in Quantum Systems}}
\author{Kazuki Ikeda \thanks{Email: \email{kazuki.ikeda@umb.edu} University of Massachusetts Boston}}

\begin{document}
\maketitle

\begin{abstract}
This article is an expository account aimed at viewing entanglement in finite-dimensional quantum many-body systems as a phenomenon of \emph{global geometry}. While the mathematics of general quantum states has been studied extensively, this article focuses specifically on their entanglement. When a quantum system varies over a classical parameter space $X$, each fiber may look like the same Hilbert space, yet there may be no global identification because of twisting in the gluing data. Describing this situation by an Azumaya algebra, one always obtains the family of pure-state spaces as a Severi--Brauer scheme $SB(A)\to X$.

The main focus is to characterize the condition under which the subsystem decomposition required to define entanglement exists globally and compatibly, by a reduction to the stabilizer subgroup $G_d$ of the Segre variety, and to explain that the obstruction appears in the Brauer class. As a consequence, quantum states yield a natural filtration dictated by entanglement on $SB(A)$.

Using a spin system on a torus as an example, we show concretely that the holonomy of the gluing can produce an entangling quantum gate, and can appear as an obstruction class distinct from the usual Berry numbers or Chern numbers. For instance, even for quantum systems that have traditionally been regarded as having no topological band structure, the entanglement of their eigenstates can be related to global geometric universal quantities, reflecting the background geometry. 
\end{abstract}

\setcounter{tocdepth}{2} 
\tableofcontents\

\section{What Is Quantum Entanglement Geometry?}

\subsection{Parameterized quantum systems}
In quantum information textbooks (for example \cite{nielsen2000quantum}), one typically begins by fixing a single Hilbert space
\[
H \simeq \C^{n}
\]
and then discusses states, measurements, and operations on that fixed stage. This is ideal when one wants to treat the same system repeatedly on the same stage.

In actual physics, however, quantum systems often appear as depending on \emph{classical} parameters. Typical situations include
\begin{itemize}
  \item control parameters such as coupling constants, external fields, temperature, and boundary conditions;
  \item spacetime parameters such as time $t$, position $x$, and momentum $k$. 
\end{itemize}
In such cases, for each point $x\in X$ of a parameter space $X$, one has a quantum system at that point. This naturally leads to an assignment
\[
x\in X \quad\longmapsto\quad H_x
\]
where $H_x$ is the Hilbert space at parameter $x$.

The simplest situation is when the $H_x$ arise as fibers of a single global vector bundle $E\to X$, so that $H_x\simeq E_x$. Then one can treat bases and operators coherently over $X$. In practice, however, one easily encounters situations where
\begin{center}
\emph{locally $H_x\simeq \C^n$, yet the gluing of bases on overlaps fails to be globally consistent.}
\end{center}
Moreover, the Hilbert spaces $H_x$ and $H_y$ for different $x,y\in X$ need not even have the same dimension.

This note aims to construct a geometric description of quantum states, with a particular focus on quantum entanglement.

A key point is that what matters in quantum information is not a vector itself, but the state obtained after identifying vectors up to an overall scalar. A pure state is represented as
\[
\text{nonzero vectors } H_x\setminus\{0\}\ \text{modulo scalars}
\quad\longleftrightarrow\quad \PP(H_x).
\]
This leads to an important observation:
\begin{center}
\emph{even when a global vector bundle $E$ does not exist, a family of projective spaces often \\
does exist globally.}
\end{center}
However, not every quantum state is of interest. In quantum theory, product states are valid quantum states, but they can be regarded as classical in the sense that they exhibit no quantum correlations. Among pure states, only entangled states possess genuinely nonclassical (quantum) features. Our first aim is to formulate entanglement as a geometric object.

In what follows, keeping in mind this fact that the pure-state space exists globally, we regard pure states not as a mere collection of points, but as a (twisted) projective bundle over $X$. 

\vskip0.3cm
The following three objects play central roles in this note.
\begin{enumerate}
  \item \textbf{A family of pure states:} fiberwise one sees $\PP(H_x)\simeq \PP^{n-1}$, but globally it may be twisted.
  \item \textbf{The locus of product states:} once a subsystem decomposition is given, product states appear as a Segre variety.
  \item \textbf{A filtration of quantum entanglement:} in the bipartite case, Schmidt rank forms a filtration given by determinantal loci (minors).
\end{enumerate}
All of this is familiar over a fixed Hilbert space $H$, but for parameterized systems a new issue arises: (2) and (3) do not automatically globalize. The goal of this work is to determine when they \emph{can} (or \emph{cannot}) be globalized. This idea was initiated and developed in \cite{Ikeda:2026ojm}, and the remainder of this article is devoted to explaining the underlying concepts and the main results.

\subsection{What is entanglement?}
In quantum information, the entanglement of a pure state is defined relative to a fixed subsystem decomposition
\[
H \simeq H_{1}\otimes\cdots\otimes H_{r}.
\]
A state $|\psi\rangle$ is entangled (with respect to this decomposition) if it cannot be written in the product form
\[
|\psi\rangle = |\psi_{1}\rangle\otimes\cdots\otimes|\psi_{r}\rangle.
\]

Geometrically, the set of all product states is the image of the Segre embedding:
\[
\Sigma_{d}\;=\;\PP^{d_1-1}\times\cdots\times\PP^{d_r-1}\ \subset\ \PP(H)
\qquad(n=\prod_i d_i).
\]
Thus the product locus appears as a distinguished subvariety $\Sigma_d\subset \PP(H)$ (rank-$1$ tensors), and the entangled states are those outside it. Entanglement is an intrinsically quantum correlation present in a state, and a central theme in quantum information and quantum physics is to quantify how much (how strongly) entanglement is present.

For parameterized systems, however, before one even asks whether a state is entangled, a more fundamental question appears:
\begin{quote}
\begin{enumerate}
    \item \emph{How should one choose a decomposition into subsystems?}
    \item \emph{Can one choose the tensor decomposition itself coherently over the entire parameter space?}
\end{enumerate}
\end{quote}

Even if locally one can identify
\[
H_x \simeq \C^{d_1}\otimes\cdots\otimes \C^{d_r},
\]
if the gluing on overlaps becomes a \emph{general} change of basis, then the Segre variety (the locus of product states) visible locally will not be preserved. As a result, over the whole space $X$ one may be forced into the situation that
\begin{center}
\emph{product/entangled cannot be defined coherently as global geometric objects.}
\end{center}
This is the main theme of the present paper. The quantum entanglement geometry discussed here is an
approach that treats, in an algebro--geometric way, both
\begin{center}
\emph{state entanglement} \quad and \quad \emph{the obstruction to the existence of a subsystem decomposition}
\end{center}
simultaneously \cite{Ikeda:2026ojm}.

\subsection{\label{sec:elementary_example}An elementary example of quantum information in algebraic geometry}
\noindent\textbf{Concurrence.}
Our aim is to present a geometric framework for quantum information in a concrete and elementary way. To explain the underlying philosophy, let us consider the most elementary example. Any two-qubit pure state can be written as
\[
|\psi\rangle = a|00\rangle+b|01\rangle+c|10\rangle+d|11\rangle,\qquad
\Psi=\begin{pmatrix}a&b\\ c&d\end{pmatrix}.
\]
Since a global scalar is physically meaningless, it is natural to treat the state as a point in the projective space
\[
[\,\psi\,]=[a:b:c:d]\in \mathbb P(\mathbb C^2\otimes \mathbb C^2)\cong \mathbb P^3.
\]

There are several ways to decide whether this state is entangled. Here we consider the concurrence $C$ \cite{PhysRevLett.78.5022}. For a \emph{pure state} of two qubits, under the normalization $\langle\psi|\psi\rangle=1$, one has
\[
C(|\psi\rangle)=2\,|ad-bc| \;=\; 2\,|\det(\Psi)|.
\]
It is known that $C$ takes values in the range  $0\le C\le 1$. A state is entangled if and only if $C\neq0$. For instance, the Bell state $(|00\rangle+|11\rangle)/\sqrt2$  has $\det(\Psi)=1/2~(C=1)$, detecting maximal entanglement.

\medskip
\noindent\textbf{The matrix criterion for separability and the Segre variety.}
A state is a product if there exist $|\phi\rangle=\alpha|0\rangle+\beta|1\rangle$ and $|\chi\rangle=\gamma|0\rangle+\delta|1\rangle$ such that
\[
|\psi\rangle = |\phi\rangle\otimes|\chi\rangle.
\]
In terms of coefficients this means
\[
(a,b,c,d)=(\alpha\gamma,\alpha\delta,\beta\gamma,\beta\delta),
\]
and it immediately follows that $ad-bc=0$. Conversely, if $ad-bc=0$ then $\Psi$ has rank $\le1$, so it admits a factorization $\Psi=(\alpha,\beta)^{\mathsf T}(\gamma,\delta)$ and hence corresponds to a product state. Thus
\[
\text{product} \;\Longleftrightarrow\; \mathrm{rank}(\Psi)=1
\;\Longleftrightarrow\; ad-bc=0.
\]
In projective-geometric terms, the set of product states is the image of the Segre embedding
\begin{equation}\label{eq:segre-2qubit}
\mathrm{Seg}:\mathbb P^1\times \mathbb P^1 \hookrightarrow \mathbb P^3,    
\end{equation}
and this image is a smooth quadric surface defined by the single quadratic equation
\[
ad-bc=0.
\]
Therefore, whether a state is a product state is described purely algebro--geometrically as
\emph{whether the point lies on a certain algebraic variety} (see also Remark \ref{rem:entangled_is_OR}). The concurrence is precisely an elementary entanglement certificate given by  the defining equation of this Segre variety.

\medskip
\noindent\textbf{Local operations and symmetry.}
The action of a local unitary transformation $U_A\otimes U_B$ corresponds, in the matrix representation, to
\[
\Psi\longmapsto (U_A)\,\Psi\,(U_B)^{\mathsf T}.
\]
Hence
\[
\det(\Psi)\longmapsto \det(U_A)\det(U_B)\det(\Psi),
\]
and in particular, if $U_A,U_B\in SU(2)$ then $\det(\Psi)$ (and therefore $C$) is invariant. In this sense, $\det(\Psi)$ is an algebraic quantity that behaves well under local operations, and the vanishing condition $\det(\Psi)=0$ is a geometric condition invariant under local operations (this is the first manifestation of the fact that, although entanglement depends on the choice of subsystems, it is not destroyed by local operations).

\medskip
\noindent\textbf{Outlook to higher dimensions.}
For two qubits, separability was expressed by a single determinant. Similarly, for a general bipartite system $\mathbb C^{d_A}\otimes\mathbb C^{d_B}$, if one arranges the coefficients into a $d_A\times d_B$ matrix, then product states are exactly those of rank $1$, characterized by the vanishing of all $2\times2$ minors (the Segre variety). More generally, Schmidt rank $\le k$ is given by the vanishing of all $(k+1)\times(k+1)$ minors. Thus, the entanglement filtration of pure states can be described as
\[
\text{minors vanish}\quad\Longleftrightarrow\quad
\text{the state lies on a determinantal variety},
\]
namely in the language of polynomial equations / algebraic varieties.

The philosophy extracted from this elementary example is:
\begin{quote}
The basic structures of quantum information (separability and polynomial invariants) naturally appear as the Segre variety in projective space and its defining equations.
\end{quote}
In this article we extend this viewpoint to families (where quantum systems vary along classical parameters), and further to twisted backgrounds (Severi--Brauer fibrations), and we formulate algebro--geometrically the question: \emph{when (and with what obstructions) can the Segre variety be defined globally?} And when possible, we construct a global filtration.

\vskip0.3cm
\subsection{Organization of this note.}
In the rest of the paper, in order to make the discussion above precise, we gradually introduce tools for describing the twisting of projective bundles, gluing that preserves the Segre embedding (reduction of structure group), and the relative formulation of determinantal loci in the bipartite case.

Section~\ref{sec:2} reviews the projective-geometric description of pure states and the Segre embedding (\emph{standard background}). Section~\ref{sec:integer} introduces a partition-lattice viewpoint on separability patterns together with a brief
number-theoretic analogy (\emph{original to this note}). Section~\ref{sec:secant-tensor-rank} recalls tensor rank, secant varieties, and border rank as an additive refinement of the entanglement filtration (\emph{standard background}; see ref. \cite{landsberg2011tensors} for example). Section~\ref{sec:5} specializes to the bipartite case, relating product states to matrix rank and determinantal loci (\emph{standard background}).

Sections \ref{sec:6}--\ref{sec:spin-chain-toy} are based on ref. \cite{Ikeda:2026ojm} and relevant established facts. Sections~6--8 turn to families of quantum systems over a parameter space $X$ and explain how local subsystem decompositions correspond to reductions of structure group to the stabilizer of the Segre variety, and how the obstruction is encoded by the Brauer class. Sections~9--14 develop the geometry of entangled states on the Severi--Brauer scheme $\mathrm{SB}(A)\to X$. Section~15 presents a concrete spin-system example on a torus where the gluing holonomy yields an entangling gate.

Sections~\ref{sec:16}--\ref{sec:general_types} propose a spectral/Hecke-theoretic criterion of entanglement (\emph{original to this note}). We define $d$-product Satake parameters as the tensor-product image in the dual group and explain how this image condition is detected by finitely many invariant polynomials, with explicit computations in the cases $d=(2,2)$ and $d=(2,2,2)$. Section~\ref{sec:summary} summarizes the geometric and spectral viewpoints, and the appendices collect auxiliary material on determinantal resolutions (\ref{sec:supplement-3points}) and on the gluing computations in the torus example (\ref{sec:torus}).

\section{Geometry of Pure States}\label{sec:2}

\subsection{Pure states are projective space}
A state vector $\psi\in H$ representing a pure state is physically unchanged if we multiply it by an overall nonzero scalar $\lambda$. Thus we regard a pure state as a point in projective space.

\begin{definition}[Projective space]
For a complex vector space $H\simeq \C^n$, define the projective space $\PP(H)$ by
\[
\PP(H)\;:=\;(H\setminus\{0\})/\sim,
\qquad
\psi\sim \lambda\psi\ \ (\lambda\in\C^\times).
\]
It means that a point of $\PP(H)$ is a $1$-dimensional linear subspace (a line) in $H$.
\end{definition}

\begin{proposition}[Two equivalent descriptions of pure states]
Let $H$ be a finite-dimensional complex inner-product space. The following sets are naturally in bijection:
\begin{enumerate}
  \item points of the projective space $\PP(H)$,
  \item rank-one orthogonal projections (density matrices)
  \[
  \Bigl\{\,P\in\End(H)\ \bigm|\ P^2=P,\ P^\dagger=P,\ \rk(P)=1\,\Bigr\}.
  \]
\end{enumerate}
\end{proposition}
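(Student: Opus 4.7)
The plan is to construct explicit maps in both directions and verify they are mutual inverses.

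First, I would define the forward map $\Phi\colon \PP(H)\to\{\text{rank-one orthogonal projections}\}$ by sending a line $[\psi]$ to the operator
\[
\Phi([\psi])\;=\;\frac{|\psi\rangle\langle\psi|}{\langle\psi|\psi\rangle},
\]
using any nonzero representative $\psi$. The first step is to check this is well defined: replacing $\psi$ by $\lambda\psi$ with $\lambda\in\C^\times$ scales the numerator by $|\lambda|^2$ and the denominator by $|\lambda|^2$, so $\Phi([\psi])$ depends only on the line. Next I would verify the three algebraic conditions: self-adjointness follows because $(|\psi\rangle\langle\psi|)^\dagger=|\psi\rangle\langle\psi|$; idempotence follows from $|\psi\rangle\langle\psi|\psi\rangle\langle\psi|=\langle\psi|\psi\rangle\,|\psi\rangle\langle\psi|$; and the image of $\Phi([\psi])$ is exactly $\C\cdot\psi$, so the rank equals $1$.

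Then I would construct the inverse map $\Psi\colon P\mapsto \im(P)\in\PP(H)$. Since $P$ has rank $1$, its image is a $1$-dimensional subspace of $H$, hence a well-defined point of $\PP(H)$. To show $\Phi$ and $\Psi$ are mutually inverse, I would argue as follows. Given $[\psi]$, the image of $\Phi([\psi])$ is $\C\cdot\psi=[\psi]$, so $\Psi\circ\Phi=\id$. Conversely, given a rank-one orthogonal projection $P$, pick a unit vector $\psi\in\im(P)$; then $P\psi=\psi$ and $P$ vanishes on the orthogonal complement $(\C\cdot\psi)^\perp$ (here one uses self-adjointness: if $\phi\perp\psi$, then $\langle\psi,P\phi\rangle=\langle P\psi,\phi\rangle=\langle\psi,\phi\rangle=0$, but also $P\phi\in\C\cdot\psi$, forcing $P\phi=0$). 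Decomposing an arbitrary vector as $v=\langle\psi,v\rangle\psi+v^\perp$ shows $P=|\psi\rangle\langle\psi|=\Phi([\psi])$, so $\Phi\circ\Psi=\id$.

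The only nontrivial input is the step showing that a rank-one self-adjoint idempotent is automatically the orthogonal projection onto its image; this is where self-adjointness, as opposed to mere idempotence, is essential, and it is the one place where the inner-product structure of $H$ really enters the argument. Everything else is a direct unwinding of definitions. The word ``naturally'' in the statement can be understood as follows: both sides carry an obvious $\U(H)$-action (by conjugation on projections, and by the induced action on lines), and the bijection $\Phi$ is evidently equivariant, so no additional choices are involved.
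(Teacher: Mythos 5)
Your proof is correct and follows essentially the same route as the paper: the same explicit map $[\psi]\mapsto \ket{\psi}\bra{\psi}/\braket{\psi}{\psi}$, the same inverse $P\mapsto\im(P)$, and the same verification that they are mutually inverse. If anything, you spell out more carefully than the paper why a rank-one self-adjoint idempotent must equal the orthogonal projection onto its image, which is a welcome clarification of the paper's brief "Since $P$ is the orthogonal projection onto $\C\psi$" step.
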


\begin{proof}
\textbf{(Step 1) Define a map $\PP(H)\to$ rank-one projectors.}
Take $[\psi]\in\PP(H)$ and choose a representative $\psi\in H\setminus\{0\}$. Define
\[
P_\psi \;:=\;\frac{\ket{\psi}\bra{\psi}}{\braket{\psi}{\psi}}.
\]

\smallskip
\textbf{(Step 2) Show it is independent of the representative.}
If we choose another representative $\psi'=\lambda\psi$ with $\lambda\in\C^\times$, then
\[
\ket{\psi'}\bra{\psi'}
=
\ket{\lambda\psi}\bra{\lambda\psi}
=
(\lambda\ket{\psi})(\overline{\lambda}\bra{\psi})
=
|\lambda|^2\ket{\psi}\bra{\psi},
\]
and also
\[
\braket{\psi'}{\psi'}
=
\braket{\lambda\psi}{\lambda\psi}
=
\overline{\lambda}\lambda\braket{\psi}{\psi}
=
|\lambda|^2\braket{\psi}{\psi}.
\]
Hence
\[
P_{\psi'}
=
\frac{|\lambda|^2\ket{\psi}\bra{\psi}}{|\lambda|^2\braket{\psi}{\psi}}
=
\frac{\ket{\psi}\bra{\psi}}{\braket{\psi}{\psi}}
=
P_\psi.
\]
Therefore $P_\psi$ depends only on $[\psi]$.

\smallskip
\textbf{(Step 3) Show $P_\psi$ is a rank-one orthogonal projection.}
We compute
\[
P_\psi^2
=
\frac{\ket{\psi}\bra{\psi}}{\braket{\psi}{\psi}}
\frac{\ket{\psi}\bra{\psi}}{\braket{\psi}{\psi}}
=
\frac{\ket{\psi}(\bra{\psi}\ket{\psi})\bra{\psi}}{\braket{\psi}{\psi}^2}
=
\frac{\ket{\psi}\braket{\psi}{\psi}\bra{\psi}}{\braket{\psi}{\psi}^2}
=
P_\psi.
\]
Taking the adjoint gives
\[
P_\psi^\dagger
=
\left(\frac{\ket{\psi}\bra{\psi}}{\braket{\psi}{\psi}}\right)^\dagger
=
\frac{(\ket{\psi}\bra{\psi})^\dagger}{\overline{\braket{\psi}{\psi}}}
=
\frac{\ket{\psi}\bra{\psi}}{\braket{\psi}{\psi}}
=
P_\psi
\]
(since $\braket{\psi}{\psi}$ is real), so $P_\psi$ is self-adjoint. Moreover, $\mathrm{im}(P_\psi)=\C\psi$ is $1$-dimensional, hence $\rk(P_\psi)=1$.

\smallskip
\textbf{(Step 4) Construct the inverse map.}
Let $P$ be a rank-one orthogonal projection. Then $\mathrm{im}(P)\subset H$ is a $1$-dimensional subspace. We associate to $P$ the point $[\psi]:=\mathrm{im}(P)\in\PP(H)$.

\smallskip
\textbf{(Step 5) Show the two maps are inverse to each other.}
Starting from $[\psi]$, the image of the projector $P_\psi$ is $\C\psi$, so applying the inverse map returns $[\psi]$. Conversely, starting from $P$ and letting $[\psi]=\mathrm{im}(P)$, choose a representative $\psi$ of this line. Since $P$ is the orthogonal projection onto $\C\psi$, we have $P=P_\psi$. Thus the correspondence is a bijection.
\end{proof}

The identification above corresponds to a Bloch sphere representation for the general case.

\subsection{Product states and the Segre map}
For the moment we focus on the bipartite case. The multipartite generalization appears in \S\ref{sec:general_types}. 

\begin{definition}[Product states in the bipartite case]
Let $H_A\simeq\C^{d_A}$ and $H_B\simeq\C^{d_B}$, and set $H:=H_A\otimes H_B$.
A pure state $[\psi]\in\PP(H)$ is called \emph{product} if there exist
$a\in H_A\setminus\{0\}$ and $b\in H_B\setminus\{0\}$ such that
\[
[\psi]=[a\otimes b].
\]
\end{definition}

Choose bases $\{e_i\}_{i=1}^{d_A}$ of $H_A$ and $\{f_j\}_{j=1}^{d_B}$ of $H_B$, and write
\[
a=\sum_{i=1}^{d_A} a_i e_i,\qquad
b=\sum_{j=1}^{d_B} b_j f_j.
\]
Then
\[
a\otimes b
=
\left(\sum_i a_i e_i\right)\otimes\left(\sum_j b_j f_j\right)
=
\sum_{i,j} a_i b_j (e_i\otimes f_j).
\]
That is, with respect to the basis $\{e_i\otimes f_j\}$ of $H$, the coefficients have the ``product form''
\[
\psi_{ij}=a_i b_j.
\]

Projectivizing this construction gives the Segre map:
\[
\sigma:\PP(H_A)\times\PP(H_B)\longrightarrow \PP(H_A\otimes H_B),
\qquad
([a],[b])\longmapsto [a\otimes b].
\]
Its image $\Sigma_{A,B}:=\mathrm{im}(\sigma)$ is the Segre variety (the locus of product states).

\subsection{Product states and matrix rank}
\begin{definition}[Flattening of a tensor]
With respect to the chosen bases, write any $\psi\in H_A\otimes H_B$ as
\[
\psi=\sum_{i=1}^{d_A}\sum_{j=1}^{d_B}\psi_{ij}\,(e_i\otimes f_j).
\]
The $d_A\times d_B$ matrix obtained by arranging the coefficients,
\[
M(\psi):=(\psi_{ij})\in \mathrm{Mat}_{d_A\times d_B}(\C),
\]
is called the flattening matrix of $\psi$.
\end{definition}

\begin{proposition}[Product state $\Leftrightarrow$ matrix rank $1$]\label{prop:matrixrank1}
Assume $\psi\neq 0$. The following are equivalent:
\begin{enumerate}
  \item $[\psi]$ is product ($[\psi]=[a\otimes b]$),
  \item $\rk(M(\psi))=1$.
\end{enumerate}
\end{proposition}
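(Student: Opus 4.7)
The plan is to prove the two implications separately, using the explicit coefficient computation displayed just above the proposition as the core identity. Throughout I write $\mathbf{a}=(a_i)\in\C^{d_A}$ and $\mathbf{b}=(b_j)\in\C^{d_B}$ for the coordinate vectors of $a\in H_A$ and $b\in H_B$, so that $\mathbf{a}\mathbf{b}^{\mathsf T}$ is an honest $d_A\times d_B$ outer-product matrix.

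For $(1)\Rightarrow(2)$, I would take a representative of the form $\psi=\lambda(a\otimes b)$ with $\lambda\in\C^\times$ and $a,b\neq 0$. The entry formula $\psi_{ij}=\lambda a_i b_j$ already derived above the proposition says precisely that $M(\psi)=\lambda\,\mathbf{a}\mathbf{b}^{\mathsf T}$. Because both $\mathbf{a}$ and $\mathbf{b}$ are nonzero, this outer product has image $\C\mathbf{a}$, which is one-dimensional, so $\rk(M(\psi))=1$.

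For $(2)\Rightarrow(1)$, I would invoke the standard fact that any complex matrix of rank one factors as a single outer product: if $\rk(M)=1$ then $\im(M)=\C\mathbf{u}$ for some nonzero $\mathbf{u}$, so every column of $M$ is a scalar multiple of $\mathbf{u}$, say column $j$ equals $v_j\mathbf{u}$; this yields $M=\mathbf{u}\mathbf{v}^{\mathsf T}$ with $\mathbf{u},\mathbf{v}$ nonzero. Applying this to $M(\psi)$, we read off $\psi_{ij}=u_i v_j$, and reassembling in the tensor basis
\[
\psi=\sum_{i,j} u_i v_j\,(e_i\otimes f_j)=\Bigl(\sum_i u_i e_i\Bigr)\otimes\Bigl(\sum_j v_j f_j\Bigr)=a\otimes b
\]
exhibits $[\psi]$ as a product state with $a,b\neq 0$.

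There is no genuine obstacle here — both directions reduce to the outer-product description of rank-one matrices combined with the bijection $\psi\leftrightarrow M(\psi)$. The one point I would flag, looking ahead to the parameterized setting of later sections, is that $M(\psi)$ itself depends on the chosen bases of $H_A$ and $H_B$: a change of bases transforms $M(\psi)\mapsto P\,M(\psi)\,Q^{\mathsf T}$ with $P\in\GL(H_A)$, $Q\in\GL(H_B)$, and since such transformations preserve rank, the condition $\rk M(\psi)=1$ is independent of all choices. This basis-invariance is the local germ of the later global story, where only the subgroup of $\GL(H)$ generated by such factors (the stabilizer of the Segre variety) preserves separability across overlaps.
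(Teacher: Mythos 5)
Your proof is correct and follows essentially the same route as the paper: both directions reduce to the outer-product form of rank-one matrices via the coefficient identity $\psi_{ij}=a_ib_j$, with your version only adding the explicit scalar $\lambda$ and the (accurate, but not needed here) remark on basis-independence of the rank.
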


\begin{proof}
\textbf{(1)$\Rightarrow$(2).}
If $[\psi]=[a\otimes b]$, then by the computation above we have $\psi_{ij}=a_i b_j$, and hence
\[
M(\psi)=
\begin{pmatrix}
a_1\\ \vdots\\ a_{d_A}
\end{pmatrix}
\begin{pmatrix}
b_1 & \cdots & b_{d_B}
\end{pmatrix}
=
a\,b^{\mathsf{T}}
\]
Every column is a scalar multiple of $a$, so the column space is $1$-dimensional. Thus $\rk(M(\psi))=1$.

\smallskip
\textbf{(2)$\Rightarrow$(1).}
Assume $\rk(M(\psi))=1$. Then all columns lie in a single $1$-dimensional subspace of $\C^{d_A}$, so there exists $a\neq 0$ such that each column is a scalar multiple of $a$. Writing the $j$-th column as $c^{(j)}$, there exists $b_j\in\C$ such that
\[
c^{(j)} = b_j a.
\]
In components this means
\[
\psi_{ij} = a_i b_j.
\]
Let $b:=(b_1,\dots,b_{d_B})$. Then
\[
\psi=\sum_{i,j}\psi_{ij}(e_i\otimes f_j)
=\sum_{i,j}a_i b_j(e_i\otimes f_j)
=\left(\sum_i a_i e_i\right)\otimes\left(\sum_j b_j f_j\right)
=a\otimes b,
\]
so $[\psi]=[a\otimes b]$ is product.
\end{proof}

The rank-$1$ condition is equivalent to the vanishing of all $2\times 2$ minors. This can be shown elementarily as follows.

\begin{theorem}
For a $d_A\times d_B$ matrix $M$, the following are equivalent:
\begin{enumerate}
  \item $\rk(M)\le 1$,
  \item for any $1\le i_1<i_2\le d_A$ and $1\le j_1<j_2\le d_B$,
  \[
  \det
  \begin{pmatrix}
  M_{i_1 j_1} & M_{i_1 j_2}\\
  M_{i_2 j_1} & M_{i_2 j_2}
  \end{pmatrix}
  =0.
  \]
\end{enumerate}
\end{theorem}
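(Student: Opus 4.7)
My plan is to prove the two implications separately, with a common engine: both directions pass through a rank-$1$ factorization $M = uv^{\mathsf T}$. The forward direction substitutes such a factorization into the $2\times 2$ minor and observes that it cancels termwise, while the converse reconstructs the vectors $u$ and $v$ from the hypothesis that every $2\times 2$ minor vanishes.

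For $(1)\Rightarrow(2)$, I would first set aside the trivial case $M=0$ (where every minor is zero). If $\rk(M)=1$, then by the same column-space argument used in Proposition~\ref{prop:matrixrank1} — every column lies in a single $1$-dimensional subspace and is therefore a scalar multiple of a common nonzero column vector $u$ — there exist nonzero $u\in\C^{d_A}$ and $v\in\C^{d_B}$ with $M_{ij}=u_i v_j$. Plugging this into the $2\times 2$ minor yields
\[
M_{i_1 j_1} M_{i_2 j_2} - M_{i_1 j_2} M_{i_2 j_1} = u_{i_1}v_{j_1}u_{i_2}v_{j_2} - u_{i_1}v_{j_2}u_{i_2}v_{j_1} = 0
\]
uniformly in $i_1<i_2$ and $j_1<j_2$. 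This direction reduces to one line once the factorization is available.

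For $(2)\Rightarrow(1)$, the plan is to extract a rank-$1$ factorization directly from the minor identities. I would again dispose of $M=0$ separately, so assume some entry $M_{ab}\neq 0$ and use it as a pivot. For arbitrary indices $i,j$, the vanishing of the minor on rows $\{a,i\}$ and columns $\{b,j\}$ reads
\[
M_{ab}\,M_{ij} - M_{aj}\,M_{ib} = 0,
\]
so setting $u_i := M_{ib}/M_{ab}$ and $v_j := M_{aj}$ yields $M_{ij} = u_i v_j$ for all $(i,j)$. Hence $M = uv^{\mathsf T}$ and $\rk(M)\le 1$.

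The main obstacle is only bookkeeping: one must verify that the reconstructed formula $M_{ij}=u_i v_j$ is consistent on the pivot row $i=a$ and pivot column $j=b$ themselves, where the $2\times 2$ minor degenerates into a trivial identity; and one must separate the zero-matrix case so that a pivot exists at all. Beyond these small housekeeping points, the statement collapses to a single algebraic identity, so no deeper idea is required and the argument is entirely elementary.
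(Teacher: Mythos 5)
Your proposal is correct, and your forward direction is essentially the paper's: both substitute a factorization $M=uv^{\mathsf T}$ into the $2\times 2$ minor and watch it cancel (the paper also offers a second phrasing via linear dependence of columns, but the computation is the same). Where you diverge is in $(2)\Rightarrow(1)$. The paper argues by contradiction: if $\rk(M)\ge 2$, two columns are linearly independent, and linear independence of two vectors in $\C^{d_A}$ forces some $2\times 2$ subdeterminant of those two columns to be nonzero, contradicting (2). You instead give a constructive argument: pick a pivot entry $M_{ab}\neq 0$, set $u_i=M_{ib}/M_{ab}$, $v_j=M_{aj}$, and read the identity $M_{ab}M_{ij}=M_{aj}M_{ib}$ off the vanishing minor on rows $\{a,i\}$ and columns $\{b,j\}$ (up to a sign, this is one of the ordered minors in statement (2), so its vanishing is equivalent), with the pivot row and column checked by direct substitution and $M=0$ handled trivially. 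Both routes are elementary and complete; the paper's contradiction argument is marginally shorter, while your pivot argument has the advantage of producing the rank-one factorization explicitly, which is exactly the object needed in Proposition~\ref{prop:matrixrank1} to exhibit the product state, so your version ties the minor condition more directly to separability rather than only to the rank bound.
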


\begin{proof}
\textbf{(1)$\Rightarrow$(2).}
If $\rk(M)\le 1$, then any two columns are linearly dependent, so any $2\times 2$ submatrix has rank $\le 1$ and hence determinant $0$. More directly, if $\rk(M)\le 1$ then $M=a b^{\mathsf T}$ for some vectors $a,b$, so
\[
\det
\begin{pmatrix}
a_{i_1}b_{j_1} & a_{i_1}b_{j_2}\\
a_{i_2}b_{j_1} & a_{i_2}b_{j_2}
\end{pmatrix}
=
a_{i_1}a_{i_2}
\det
\begin{pmatrix}
b_{j_1} & b_{j_2}\\
b_{j_1} & b_{j_2}
\end{pmatrix}
=0.
\]

\smallskip
\textbf{(2)$\Rightarrow$(1).}
We argue by contradiction. If $\rk(M)\ge 2$, then at least two columns are linearly independent. Let these columns be $u,v\in\C^{d_A}$. Since $u$ and $v$ are linearly independent, there exist indices $i_1,i_2$ such that
\[
\det
\begin{pmatrix}
u_{i_1} & v_{i_1}\\
u_{i_2} & v_{i_2}
\end{pmatrix}\neq 0
\]
(otherwise $u$ and $v$ would be linearly dependent). But $u$ and $v$ are the columns corresponding to some $j_1,j_2$, so this determinant is exactly a nonzero $2\times 2$ minor, contradicting (2). Hence $\rk(M)\le 1$.
\end{proof}

\begin{remark}
The locus defined by minors is, geometrically, the image of the Segre embedding of $\PP^{d_A-1}\times\PP^{d_B-1}$ (the Segre variety). This generalizes the elementary example (eq. \eqref{eq:segre-2qubit}) to the generic case.
\end{remark}

\subsection{Ideals for multipartite quantum systems}
\label{subsec:partition-ideal}

In this section we extend, from the bipartite case, the correspondence
\[
\text{product states}
\;\Longleftrightarrow\;
\text{lying on a Segre variety}
\;\Longleftrightarrow\;
\text{all $2\times2$ minors vanish}
\]
to the multipartite setting. Moreover, we prepare a dictionary that organizes the logic
\begin{quote}
\emph{for which partitions does the state factorize as a product?}
\end{quote}
in terms of ideals.

\medskip
\noindent
The basic philosophy is the same for a general subsystem decomposition $H\simeq H_1\otimes\cdots\otimes H_r$: product states form a Segre variety, and entangled states lie outside it. In the multipartite case, there is much more freedom in how one chooses subsystems, and entanglement with respect to subsystems is no longer captured only by conditions of the form minors vanish. The aim of this section is therefore to set up as much general theory as possible that remains valid without relying heavily on minors.

\subsubsection{Partitions and $\pi$-product states}
Fix the number of subsystems $N\ge 2$ and a finite-dimensional complex vector space
\[
H \;=\; H_1\otimes H_2\otimes\cdots\otimes H_N
\qquad(\dim H_i=d_i).
\]
Pure states are points of the projective space $P(H)$.

\begin{definition}[Partition and refinement]
A \emph{partition} $\pi$ of the set $[N]:=\{1,2,\dots,N\}$ is a collection of pairwise disjoint nonempty subsets $B_1,\dots,B_r\subset [N]$ such that
\[
[N]=B_1\sqcup B_2\sqcup\cdots\sqcup B_r.
\]
We call each $B_j$ a \emph{block}.

For two partitions $\pi,\rho$, we define
\[
\pi\preceq \rho
\quad\xLeftrightarrow[]{\text{def}}\quad
\pi \text{ is a refinement of }\rho,
\]
meaning that each block of $\pi$ is contained in some block of $\rho$.
\end{definition}

\begin{definition}[$\pi$-product states and the Segre variety $\Sigma_\pi$]
For a partition $\pi=\{B_1,\dots,B_r\}$, set
\[
H_{B_j}:=\bigotimes_{i\in B_j} H_i.
\]
A pure state $[\psi]\in P(H)$ is said to be \emph{$\pi$-product} if there exist nonzero vectors $0\neq \psi_{B_j}\in H_{B_j}$ such that
\[
[\psi]=[\psi_{B_1}\otimes\cdots\otimes \psi_{B_r}].
\]
We denote the set of all such states by
\[
\Sigma_\pi \subset \PP(H)
\]
and call it the \emph{$\pi$-product locus}.
\end{definition}

Since $\Sigma_\pi$ is the image of the Segre embedding
\[
\PP(H_{B_1})\times\cdots\times\PP(H_{B_\ell})
\hookrightarrow \PP(H),
\]
it is a closed subset and moreover irreducible. Hence (in the sense of the homogeneous coordinate ring) its vanishing ideal
\[
I(\Sigma_\pi)\subset k[\PP(H)]
\]
is a \emph{prime ideal}.

If a partition $\pi$ refines $\pi'$ ( $\pi\preceq \pi'$, meaning each block of $\pi'$ is a union of blocks of $\pi$), then 
\[
\Sigma_\pi \subset \Sigma_{\pi'}.
\]
By contravariance this yields
\[
I(\Sigma_{\pi'})\subset I(\Sigma_\pi).
\]

\begin{remark}[Relation to the type $d$ (subsystem type)]
For each block $B$, set $D_B:=\dim H_B=\prod_{i\in B} d_i$. Then a partition $\pi=\{B_1,\dots,B_r\}$ induces a tuple of dimensions
\[
d(\pi):=(D_{B_1},\dots,D_{B_r})
\qquad\Bigl(\ \prod_{j=1}^r D_{B_j}=\dim H\ \Bigr).
\]
In this sense, $\Sigma_\pi$ is the same as the Segre variety of type $d(\pi)$ (the Segre image of $P^{D_{B_1}-1}\times\cdots\times P^{D_{B_r}-1}$. The order of the blocks is not essential).
\end{remark}

\begin{proposition}[Coarser partitions give larger $\Sigma_\pi$ / smaller ideals]
\label{prop:refinement-inclusion}
If partitions $\pi,\rho$ satisfy $\pi\preceq \rho$ (thus $\pi$ is finer), then
\[
\Sigma_\pi\subseteq \Sigma_\rho
\qquad\text{and}\qquad
I(\Sigma_\pi)\supseteq I(\Sigma_\rho).
\]
\end{proposition}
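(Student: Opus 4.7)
The plan is to reduce both assertions to a single constructive observation: a $\pi$-product decomposition can be \emph{regrouped} into a $\rho$-product decomposition whenever $\pi$ refines $\rho$. The ideal inclusion is then an immediate formal consequence of the order-reversing property of the vanishing-ideal operator $I(\cdot)$.

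First I would fix notation for the refinement. Write $\rho=\{C_1,\dots,C_s\}$ and $\pi=\{B_1,\dots,B_r\}$. Since $\pi\preceq\rho$, each block $B_j$ is contained in some unique $C_k$; equivalently, there is a partition $[r]=J_1\sqcup\cdots\sqcup J_s$ such that
\[
C_k \;=\; \bigsqcup_{j\in J_k} B_j \qquad (k=1,\dots,s).
\]
Using the canonical associativity/commutativity isomorphisms of the tensor product, I get a natural identification
\[
H_{C_k} \;=\; \bigotimes_{i\in C_k} H_i \;\simeq\; \bigotimes_{j\in J_k}\Bigl(\bigotimes_{i\in B_j} H_i\Bigr) \;=\; \bigotimes_{j\in J_k} H_{B_j}.
\]

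Next I would verify $\Sigma_\pi\subseteq\Sigma_\rho$ pointwise. Take $[\psi]\in\Sigma_\pi$, so by definition $[\psi]=[\psi_{B_1}\otimes\cdots\otimes\psi_{B_r}]$ with each $\psi_{B_j}\neq 0$. Set
\[
\psi_{C_k} \;:=\; \bigotimes_{j\in J_k}\psi_{B_j}\ \in\ H_{C_k},
\]
which is nonzero because a tensor product of nonzero vectors in finite-dimensional spaces over a field is nonzero. Reassembling, $\psi_{B_1}\otimes\cdots\otimes\psi_{B_r}$ corresponds under the canonical isomorphism to $\psi_{C_1}\otimes\cdots\otimes\psi_{C_s}$, so $[\psi]=[\psi_{C_1}\otimes\cdots\otimes\psi_{C_s}]\in\Sigma_\rho$.

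Finally, the ideal statement $I(\Sigma_\pi)\supseteq I(\Sigma_\rho)$ follows formally: any homogeneous polynomial vanishing on $\Sigma_\rho$ automatically vanishes on the subset $\Sigma_\pi\subseteq\Sigma_\rho$. There is no real obstacle here; the only point to be careful about is the bookkeeping in the first step, namely that ``$\pi$ refines $\rho$'' really does let one group the $B_j$'s into clusters indexed by the blocks of $\rho$, and that this grouping is compatible with the associativity of $\otimes$. Once this is stated cleanly, both inclusions are immediate.
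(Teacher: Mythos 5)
Your proof is correct and follows essentially the same route as the paper: regroup the tensor factors of a $\pi$-product decomposition into the blocks of the coarser partition $\rho$ to get $\Sigma_\pi\subseteq\Sigma_\rho$, then invoke the order-reversing property of $I(\cdot)$. Your extra remarks (nonvanishing of the regrouped factors, compatibility with associativity of $\otimes$) only make explicit what the paper leaves implicit.
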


\begin{proof}
Saying that $\pi$ is finer means that the condition can be written as a product is stronger because it requires a decomposition into more tensor factors. Suppose $\pi=\{B_1,\dots,B_r\}$ and $\psi=\psi_{B_1}\otimes\cdots\otimes\psi_{B_r}$. Since $\rho$ is obtained by grouping some blocks of $\pi$, each block $C$ of $\rho$ can be written as $C=B_{i_1}\sqcup\cdots\sqcup B_{i_m}$. Setting
\[
\psi_C:=\psi_{B_{i_1}}\otimes\cdots\otimes\psi_{B_{i_m}}\in H_C
\]
gives $\psi=\bigotimes_{C\in \rho}\psi_C$, so $\psi$ is $\rho$-product. Hence $\Sigma_\pi\subseteq\Sigma_\rho$. In general, if $X\subseteq Y$ then $I(X)\supseteq I(Y)$, so the ideal inclusion follows in the opposite direction.
\end{proof}

\subsubsection{Vanishing ideals and $2\times2$ minors (the bipartition case)}

In projective geometry one should think of a point $[v]\in \mathbf{P}(H)$ as a line $\mathbf{C}v\subset H$ rather than a specific vector $v\in H\setminus\{0\}$. On the other hand, a homogeneous polynomial $f\in \mathrm{Sym}^t(H^\vee)$ is naturally a polynomial function on the vector space $H$. The role of the affine cone below is to bridge these two viewpoints.

\begin{definition}[Zero locus and vanishing ideal]
Let $P(H)\cong P^{D-1}$ ($D=\prod_i d_i$), and let its homogeneous coordinate ring be
\[
S:=\mathrm{Sym}(H^\vee)=\bigoplus_{t\ge 0}\mathrm{Sym}^t(H^\vee)
\]
graded by $\mathbb{Z}_{\ge0}$.

For a subset $X\subset P(H)$, define its (homogeneous) affine cone by
\[
\widehat{X}:=\{\,v\in H \mid v=0\ \text{or}\ [v]\in X\,\}\subset H.
\]
Then the \emph{vanishing ideal} of $X$ is defined by
\[
I(X):=\{\,f\in S\ \text{homogeneous}\mid f(v)=0\ \text{for all }v\in\widehat{X}\,\}.
\]
(For homogeneous $f$, the condition $f([v])=0$ is well-defined, independent of the choice of representative $v$.)

Conversely, for a homogeneous ideal $I\subset S$, define the zero locus
\[
V(I):=\{\, [v]\in P(H)\mid f(v)=0\ \text{for all homogeneous }f\in I\,\}.
\]
\end{definition}

If $f\in \mathrm{Sym}^t(H^\vee)$ is homogeneous of degree $t$, then for any $\lambda\in \mathbf{C}$ and
$v\in H$ one has
\[
f(\lambda v)=\lambda^t f(v).
\]
In particular, for $v\neq 0$ the condition $f(v)=0$ depends only on the projective class $[v]$, because $f(\lambda v)=0$ if and only if $f(v)=0$. This is why we restrict to homogeneous polynomials when defining $I(X)$ and $V(I)$.

Also, note that including $v=0$ in $\widehat{X}$ is harmless: every homogeneous polynomial of positive degree vanishes at $0$, and the only degree-$0$ homogeneous polynomial that vanishes at $0$ is the zero polynomial.  So $I(X)$ contains no nonzero constants and is automatically a homogeneous ideal.

\begin{example}[A point in $\mathbf{P}^1$]
Let $H=\mathbf{C}^2$ and choose dual coordinates $(x_0,x_1)$ on $H$, so $S=\mathbf{C}[x_0,x_1]$.
Let
\[
X=\{[1:0]\}\subset \mathbf{P}(H)=\mathbf{P}^1.
\]
Then the affine cone is the line through $(1,0)$:
\[
\widehat{X}=\{(t,0)\mid t\in \mathbf{C}\}\subset \mathbf{C}^2.
\]
A homogeneous polynomial $f(x_0,x_1)$ vanishes on $\widehat{X}$ if and only if $f(t,0)=0$ for all $t$. Hence
\[
I(X)=(x_1)\subset \mathbf{C}[x_0,x_1],
\qquad
V\bigl((x_1)\bigr)=\{[1:0]\}=X.
\]
So, in this simplest case, the projective subset $X$ is recovered as $V(I(X))$.
\end{example}

\medskip
In the bipartite case, $\Sigma_{A|B}$ is characterized by matrix rank $1$, and this condition is the vanishing of all $2\times2$ minors. The same holds for a bipartition $A|B$ in the multipartite setting.

\begin{definition}[Flattening]\label{def:Flattening}
Fix a bipartition $A|B$ (so $A\sqcup B=[N]$), and set
\[
H_A:=\bigotimes_{i\in A}H_i,\qquad H_B:=\bigotimes_{j\in B}H_j.
\]
After choosing bases, the coefficient tensor can be written with multi-indices $x_{a_1\cdots a_N}$. By grouping indices into the $A$-side and the $B$-side, one arranges the coefficients into a $D_A\times D_B$ matrix ($D_A=\dim H_A$, $D_B=\dim H_B$)
\[
M_{A|B}=\bigl(x_{\alpha,\beta}\bigr),
\]
which is called the \emph{flattening}.
\end{definition}

The vanishing ideal of the product locus for the bipartition $A|B$,
\[
\Sigma_{A|B}:=\mathrm{Seg}\bigl(P(H_A)\times P(H_B)\bigr)\subset P(H),
\]
is given by
\[
I_{A|B}:=I(\Sigma_{A|B})
=\big\langle \text{all $2\times2$ minors of }M_{A|B}\big\rangle.
\]

\begin{remark}
$V(I_{A|B})=\Sigma_{A|B}$ is the Segre variety of rank-$1$ matrices, and $I_{A|B}$ is a typical \emph{determinantal ideal}. Since $\Sigma_{A|B}$ is irreducible, $I_{A|B}$ is a \emph{prime ideal}.
\end{remark}

\begin{example}[Two qubits]
Let $H=\mathbf{C}^2\otimes \mathbf{C}^2$ and choose bases $\{e_0,e_1\}$ and $\{f_0,f_1\}$.
Write
\[
\psi=\sum_{i,j\in\{0,1\}} \psi_{ij}\, e_i\otimes f_j,
\qquad
[a:b:c:d]:=[\psi_{00}:\psi_{01}:\psi_{10}:\psi_{11}]\in \mathbf{P}(H)\cong \mathbf{P}^3,
\]
so $S=\mathbf{C}[a,b,c,d]$.

The \emph{flattening matrix} is
\[
M(\psi)=\begin{pmatrix}
a & b\\
c & d
\end{pmatrix}.
\]
The product locus (Segre variety) is
\[
\Sigma_{2,2}=\mathrm{Seg}\bigl(\mathbf{P}^1\times \mathbf{P}^1\bigr)\subset \mathbf{P}^3,
\]
and, by a linear-algebra fact, it follows that
\[
[\psi]\in \Sigma_{2,2}
\ \Longleftrightarrow\
\mathrm{rank}\,M(\psi)\le 1
\ \Longleftrightarrow\
\det M(\psi)=ad-bc=0.
\]
Thus the vanishing ideal of the product locus is the principal ideal
\[
I(\Sigma_{2,2})=(ad-bc)\subset \mathbf{C}[a,b,c,d],
\qquad
\Sigma_{2,2}=V(ad-bc).
\]
\end{example}

\begin{example}[General bipartite cut]
Let $H=H_A\otimes H_B$ with $\dim H_A=d_A$ and $\dim H_B=d_B$.  After choosing bases, write coordinates $(x_{ij})$ on $H$ so that the flattening matrix is
\[
M(\psi)=(x_{ij})\in \mathrm{Mat}_{d_A\times d_B}(\mathbf{C}).
\]
For indices $1\le i_1<i_2\le d_A$ and $1\le j_1<j_2\le d_B$, define the $2\times 2$ minors
\[
\Delta_{i_1,i_2;\,j_1,j_2}
:=
x_{i_1 j_1}x_{i_2 j_2}-x_{i_1 j_2}x_{i_2 j_1}
\ \in\ S_2.
\]
Let $I_{A|B}\subset S$ be the homogeneous ideal generated by all these minors:
\[
I_{A|B}:=\bigl\langle\,\Delta_{i_1,i_2;\,j_1,j_2}\ \bigm|\ i_1<i_2,\ j_1<j_2\,\bigr\rangle.
\]
Then the product locus for the bipartition $A|B$ is exactly
\[
\Sigma_{A|B}=V(I_{A|B})\subset \mathbf{P}(H),
\]
because $\Delta_{i_1,i_2;\,j_1,j_2}(\psi)=0$ for all minors is equivalent to $\mathrm{rank}\,M(\psi)\le 1$. So, \ $\psi$ is decomposable as $\psi=a\otimes b$ up to scale.
\end{example}

\subsubsection{OR/AND and the corresponding ideal operations}
\begin{proposition}[Dictionary between set operations and ideal operations]
\label{prop:ideal-dictionary}
For Zariski closed subsets $X,Y\subset P(H)$ in projective space, the following hold:
\begin{enumerate}
\item \textbf{OR (union):}
\[
I(X\cup Y)=I(X)\cap I(Y).
\]
\item \textbf{AND (intersection):}
\[
I(X\cap Y)=\left(\sqrt{\,I(X)+I(Y)\,}\right)^{\mathrm{sat}}.
\]
\end{enumerate}
Here $\mathfrak m:=S_{>0}$ is the irrelevant ideal, and the saturation of a homogeneous ideal $J$ is defined by
\[
J^{\mathrm{sat}}
:=\bigcup_{k\ge0}(J:\mathfrak m^k)
=\{\,f\in S\mid \exists k,\ \mathfrak m^k f\subset J\,\}.
\]
\end{proposition}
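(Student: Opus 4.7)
The plan is to treat the two statements separately. Statement (1) is a direct unfolding of definitions, while statement (2) combines a set-theoretic identity with the projective Nullstellensatz, the saturation arising precisely to compensate for the fact that the irrelevant ideal $\mathfrak m$ contributes no projective points.

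For (1), I would argue as follows. A homogeneous $f\in S$ lies in $I(X\cup Y)$ iff $f(v)=0$ for every $v\in \widehat{X\cup Y}=\widehat X\cup\widehat Y$, which holds iff $f$ vanishes on both $\widehat X$ and $\widehat Y$, i.e.\ $f\in I(X)\cap I(Y)$. Since the intersection of two homogeneous ideals is again homogeneous, passing to graded pieces causes no trouble, and both inclusions follow immediately.

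For (2), I would first establish the set-theoretic identity
\[
X\cap Y \;=\; V\bigl(I(X)+I(Y)\bigr).
\]
Here $X=V(I(X))$ and $Y=V(I(Y))$ because both are Zariski closed; on the other hand, a projective point lies in $V(I(X)+I(Y))$ iff every element $f+g$ with $f\in I(X)$ and $g\in I(Y)$ vanishes there, which by linearity and homogeneity is equivalent to the simultaneous vanishing of $I(X)$ and $I(Y)$. Applying $I$ to both sides reduces the claim to the projective Nullstellensatz: for any homogeneous ideal $J\subset S$, one has $I(V(J))=(\sqrt J)^{\mathrm{sat}}$. This in turn can be derived from the classical affine Nullstellensatz applied to the affine cone $\widehat{V(J)}$, together with the observation that the origin $\{0\}\subset H$ is cut out by $\mathfrak m=S_{>0}$ and must be excised when returning to $\PP(H)$.

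The main subtlety lies in justifying the saturation step and treating the empty intersection. I would verify $I(V(J))=(\sqrt J)^{\mathrm{sat}}$ by showing that $f$ vanishes on $V(J)$ iff $\mathfrak m^k f\subset\sqrt J$ for some $k\ge0$: a homogeneous $f$ vanishes projectively on $V(J)$ iff it vanishes on $\widehat{V(J)}\setminus\{0\}$, whence the affine Nullstellensatz forces $x_i^{k_i}f\in\sqrt J$ for each coordinate $x_i$, and taking $k=\max_i k_i$ yields $\mathfrak m^k f\subset\sqrt J$. The edge case $X\cap Y=\emptyset$ is handled separately: then $\mathfrak m^N\subset\sqrt{I(X)+I(Y)}$ for some $N$, so the saturation becomes the whole ring $S$, matching the convention $I(\emptyset)=S$. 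With these points in place, both equalities follow.
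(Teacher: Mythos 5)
The paper states this proposition without proof, treating it as standard background (only part (1) is used later, and the remark immediately after the proposition records exactly your set-theoretic identity $V(I(X)+I(Y))=X\cap Y$), so there is no paper proof to compare with; judged on its own, your argument is correct and is the standard one via affine cones and the Nullstellensatz. Two minor points. First, the exponent bookkeeping: from $x_i^{k_i}f\in\sqrt{J}$ for each coordinate, the choice $k=\max_i k_i$ does not give $\mathfrak m^k f\subset\sqrt{J}$ (a degree-$k$ monomial need not be divisible by any single $x_i^{k_i}$); take $k=\sum_i k_i$, or simpler, note that $x_i f$ already vanishes on the whole affine cone (at the origin because of the factor $x_i$), so the affine Nullstellensatz gives $x_i f\in\sqrt{J}$ directly and hence $\mathfrak m f\subset\sqrt{J}$, i.e.\ $k=1$ suffices. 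In fact, with the paper's definition of $\widehat{X}$ (which contains the origin), any homogeneous $f\in I(X\cap Y)$ vanishes on all of the affine vanishing locus of $I(X)+I(Y)$ whenever $X\cap Y\neq\emptyset$, so in that case $I(X\cap Y)=\sqrt{I(X)+I(Y)}$ outright and the saturation is vacuous (a radical homogeneous ideal other than $\mathfrak m$ is already saturated). Second, the empty case: your convention $I(\emptyset)=S$ is what makes (2) literally true, but it conflicts with the paper's own definition, under which $I(X)$ never contains nonzero constants and hence $I(\emptyset)=\mathfrak m$; so either state that convention explicitly or restrict (2) to $X\cap Y\neq\emptyset$ --- this is a defect of the statement as given rather than of your argument.
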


In the projective setting, a homogeneous ideal $J$ and its saturation $J^{\mathrm{sat}}$ may define the same zero locus, but they can differ as vanishing ideals. Statement (2) in the proposition means that to describe the \emph{vanishing ideal} of $X\cap Y$ itself, one needs saturation (and radical). As point sets, however, one always has
\[
V(I(X)+I(Y))=X\cap Y.
\]

\subsubsection{The biproduct locus for an $N$-partite system and primary decomposition}
In this subsection we first organize, for a general $N$-partite system, for which bipartitions does the state factorize? in terms of ideals, and then treat the simple but nontrivial case $N=3$ as an example.

From now on, let the base field be $k=\C$ (or an algebraically closed field), and fix
\[
H=H_1\otimes\cdots\otimes H_N,\qquad d_i:=\dim H_i\ge 2.
\]

\paragraph{Bipartitions and the definition of biproduct / GME states.}

\begin{definition}[Bipartition $A|A^c$ and biproduct / GME states]
\label{def:bisep-GME-general}
For a nonempty proper subset $\emptyset\neq A\subsetneq [N]$, write $A^c:=[N]\setminus A$ for its complement. The product locus
\[
\Sigma_{A|A^c}:=\Sigma_{\{A,A^c\}}\subset\PP(H)
\]
is the image of the Segre embedding
\[
\PP(H_A)\times\PP(H_{A^c})\hookrightarrow \PP(H),
\]
where $H_A:=\bigotimes_{i\in A}H_i$. We define:
\begin{itemize}
\item \emph{biproduct locus:} the set of pure states that are product with respect to \emph{some}
bipartition,
\[
X_{\mathrm{bisep}}^{(N)} :=\bigcup_{\emptyset\neq A\subsetneq [N]} \Sigma_{A|A^c}\subset\PP(H).
\]
\item \emph{GME locus} (genuinely multipartite entangled): the set of pure states that are not product with respect to \emph{any} bipartition,
\[
X_{\mathrm{GME}}^{(N)}:=\PP(H)\setminus X_{\mathrm{bisep}}^{(N)}.
\]
\end{itemize}
\end{definition}

Since $\Sigma_{A|A^c}=\Sigma_{A^c|A}$, one can avoid double counting by restricting to, for example,
\[
X_{\mathrm{bisep}}^{(N)}
=\bigcup_{\substack{\emptyset\neq A\subsetneq[N]\\ 1\in A}}\Sigma_{A|A^c},
\]
which gives the same set.

\begin{proposition}[Primary decomposition of the biproduct locus]
\label{prop:bisep-primary-general}
The vanishing ideal of $X_{\mathrm{bisep}}^{(N)}\subset\PP(H)$ is
\[
I\bigl(X_{\mathrm{bisep}}^{(N)}\bigr)
=\bigcap_{\emptyset\neq A\subsetneq[N]} I\bigl(\Sigma_{A|A^c}\bigr).
\]
Each $I(\Sigma_{A|A^c})$ is a prime ideal, so the right-hand side is a primary decomposition (in fact, an
\emph{intersection of prime ideals}).
\end{proposition}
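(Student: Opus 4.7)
The plan is to reduce the claim to two ingredients already at hand: the union/intersection dictionary of Proposition~\ref{prop:ideal-dictionary}(1), and the fact that each $\Sigma_{A|A^c}$, being the image of a product of projective spaces under the Segre embedding, is irreducible and closed, hence has a prime vanishing ideal.

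First I would observe that the index set over which the union defining $X_{\mathrm{bisep}}^{(N)}$ ranges is finite (at most $2^{N}-2$ nonempty proper subsets of $[N]$, or $2^{N-1}-1$ if one normalizes $1\in A$). Therefore $X_{\mathrm{bisep}}^{(N)}$ is a finite union of Zariski closed subsets, hence itself Zariski closed in $\PP(H)$. Iterating Proposition~\ref{prop:ideal-dictionary}(1) along this finite union then yields directly
\[
I\bigl(X_{\mathrm{bisep}}^{(N)}\bigr)
=I\Bigl(\bigcup_{\emptyset\neq A\subsetneq[N]}\Sigma_{A|A^c}\Bigr)
=\bigcap_{\emptyset\neq A\subsetneq[N]}I(\Sigma_{A|A^c}),
\]
which is the asserted equality.

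Next I would address the primality of each factor. By construction $\Sigma_{A|A^c}$ is the set-theoretic image of the Segre morphism $\PP(H_A)\times\PP(H_{A^c})\to\PP(H)$. The source is a product of projective spaces, hence irreducible; the Segre map is a closed immersion, so its image is closed and irreducible in $\PP(H)$. The vanishing ideal of an irreducible Zariski closed subset of projective space is a homogeneous prime ideal, so $I(\Sigma_{A|A^c})$ is prime for every bipartition. Consequently the intersection on the right-hand side is literally a finite intersection of prime ideals, which is what is meant by \emph{primary decomposition} in this setting (each prime is automatically its own primary component).

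The main subtlety is not the proof of the equality itself but the bookkeeping around the indexing set: one should note that $\Sigma_{A|A^c}=\Sigma_{A^c|A}$, so different $A$'s can index the same variety, and when $d_i=1$ for some $i$ certain Segre varieties may even coincide or be contained in others, producing redundancy in the intersection. The statement does not require the decomposition to be irredundant, so I would flag this only briefly: one may, if desired, pass to a minimal subcollection (for instance by restricting to $A\ni 1$) and use Proposition~\ref{prop:refinement-inclusion} to drop any $A$ for which $\Sigma_{A|A^c}\subseteq\Sigma_{A'|A'^c}$ for another bipartition, obtaining a genuinely irredundant primary decomposition; but this refinement is not needed for the stated proposition.
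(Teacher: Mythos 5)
Your proof is correct and follows essentially the same route as the paper: the equality comes from the finite-union/intersection dictionary of Proposition~\ref{prop:ideal-dictionary}(1), and primality of each $I(\Sigma_{A|A^c})$ comes from irreducibility of the Segre image (closed immersion of a product of projective spaces). Your extra remark on possible redundancy in the indexing is a fine observation but, as you note, not needed for the statement (and the paper assumes $d_i\ge 2$ in this section, so the $d_i=1$ degeneracy does not arise).
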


\begin{proof}
By Proposition~\ref{prop:ideal-dictionary}(1), the ideal of a finite union of closed sets is the intersection:
\[
I\Bigl(\bigcup_{A}\Sigma_{A|A^c}\Bigr)=\bigcap_A I(\Sigma_{A|A^c}).
\]
Moreover, the Segre map is a closed immersion, and $\PP(H_A)\times\PP(H_{A^c})$ is irreducible (a product of irreducible varieties is irreducible), hence its image $\Sigma_{A|A^c}$ is also irreducible. The vanishing ideal of an irreducible projective variety is prime, so $I(\Sigma_{A|A^c})$ is prime. Since prime ideals are primary, the right-hand side gives a primary decomposition.
\end{proof}

\begin{remark}
Proposition~\ref{prop:bisep-primary-general} means that, in general,
\[
X_{\mathrm{bisep}}^{(N)}\ \text{is not irreducible, and is formed as a union of the}\ 
\Sigma_{A|A^c}.
\]
Thus biproductness carries a discrete ``OR'' structure:
\[
\text{\emph{which bipartition does it factorize in?}}
\]
and this becomes visible as an intersection of prime ideals.
\end{remark}

\paragraph{What happens if several partitions are satisfied simultaneously?}

\begin{definition}[Common refinement of partitions]
\label{def:partition-meet}
For two partitions $\pi=\{B_a\}$ and $\pi'=\{C_b\}$, the collection of all nonempty intersections
\[
B_a\cap C_b\qquad (1\le a\le \ell,\ 1\le b\le \ell')
\]
forms a partition again. This is called the \emph{common refinement (meet)} of $\pi$ and $\pi'$, and we
write
\[
\pi\wedge \pi'.
\]
This corresponds to the $\wedge$ operation in the lattice of partitions.
\end{definition}

\begin{lemma}[If a state is a product for two partitions, then it is a product for the meet]
\label{lem:intersection-is-meet}
For any partitions $\pi,\pi'$, as point sets one has
\[
\Sigma_\pi\cap \Sigma_{\pi'}=\Sigma_{\pi\wedge \pi'}.
\]
\end{lemma}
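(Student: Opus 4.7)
The target equality $\Sigma_\pi\cap\Sigma_{\pi'}=\Sigma_{\pi\wedge\pi'}$ splits into two inclusions. The inclusion $\Sigma_{\pi\wedge\pi'}\subseteq\Sigma_\pi\cap\Sigma_{\pi'}$ is immediate from Proposition~\ref{prop:refinement-inclusion}, since by construction $\pi\wedge\pi'\preceq\pi$ and $\pi\wedge\pi'\preceq\pi'$. So the real content is the reverse inclusion.

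For $\Sigma_\pi\cap\Sigma_{\pi'}\subseteq\Sigma_{\pi\wedge\pi'}$, I would fix $[\psi]\in\Sigma_\pi\cap\Sigma_{\pi'}$ and write $\psi=\bigotimes_{B\in\pi}\psi_B=\bigotimes_{C\in\pi'}\phi_C$. The plan is to fix a block $B\in\pi$ and show that the factor $\psi_B\in H_B$ itself splits as a tensor product along the induced partition $\pi'|_B:=\{B\cap C : C\in\pi',\ B\cap C\neq\emptyset\}$. Doing this for every $B\in\pi$ and collecting the resulting factors exhibits $\psi$ as a $(\pi\wedge\pi')$-product, because the blocks $\{B\cap C\}_{B,C}$ are by definition the blocks of $\pi\wedge\pi'$.

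The key step will be carried out by the bipartite rank-one flattening characterization of Proposition~\ref{prop:matrixrank1}. Take any bipartition $\{T,B\setminus T\}$ of $B$ in which $T$ is a union of $\pi'|_B$-blocks, and set $E:=\bigcup\{C\in\pi':B\cap C\subseteq T\}\subseteq[N]$. Then $E$ is a union of $\pi'$-blocks with $E\cap B=T$, so $\{E,[N]\setminus E\}$ is a coarsening of $\pi'$ and hence $M_{E|[N]\setminus E}(\psi)$ has rank $\leq 1$. On the other hand, using the $\pi$-factorization $\psi=\psi_B\otimes\psi_{[N]\setminus B}$, I would reorganize the four-part decomposition $[N]=T\sqcup(B\setminus T)\sqcup(E\setminus T)\sqcup([N]\setminus(B\cup E))$ and write the same flattening as $\sum_{i,j}(u_i\otimes p_j)\otimes(v_i\otimes q_j)$, where $\psi_B=\sum_i u_i\otimes v_i$ is a minimal Schmidt decomposition along $T|(B\setminus T)$ of rank $r$, and $\psi_{[N]\setminus B}=\sum_j p_j\otimes q_j$ is a minimal Schmidt decomposition along $(E\setminus T)|([N]\setminus(B\cup E))$ of rank $s$. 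The vectors $\{u_i\otimes p_j\}$ and $\{v_i\otimes q_j\}$ remain linearly independent, so this flattening has rank exactly $rs$. Combined with the bound $\leq 1$, this forces $r=s=1$, hence $\psi_B$ is $\{T,B\setminus T\}$-product.

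Running this argument across every bipartition $\{T,B\setminus T\}$ arising from unions of $\pi'|_B$-blocks, and iterating over blocks of $\pi'|_B$ by a short induction on their number (peel off one block at a time using the bipartite product just established, and transport the rank-one hypothesis to the residual tensor by tensoring with a fixed nonzero Schmidt vector), yields that $\psi_B$ is $\pi'|_B$-product. Assembling over all $B\in\pi$ gives $\psi\in\Sigma_{\pi\wedge\pi'}$. The main obstacle I anticipate is the bookkeeping in the third paragraph: one has to carefully rearrange tensor factors into the four-part decomposition of $[N]$ and verify that the Schmidt ranks multiply as claimed, which is pure linear algebra but easy to mis-track. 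Everything else is a direct consequence of Propositions~\ref{prop:refinement-inclusion} and~\ref{prop:matrixrank1}.
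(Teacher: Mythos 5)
Your proof is correct, but it takes a genuinely different route from the paper. The paper argues by a single contraction: fixing a block $B_0\in\pi$, it bundles the remaining $\pi$-factors into $u_{B_0^c}$, picks a $\pi'$-decomposable functional $\varphi=\bigotimes_{C\in\pi'}\varphi_{C\cap B_0^c}$ with $\varphi(u_{B_0^c})=1$ (possible because decomposable functionals span $H_{B_0^c}^\vee$), and applies $\id_{H_{B_0}}\otimes\varphi$ to both factorizations of $\psi$; the $\pi$-side gives back $u_{B_0}$ while the $\pi'$-side exhibits it as a tensor along the pieces $B_0\cap C$, so each $\pi$-factor splits along the meet in one stroke, with no induction. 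You instead reduce everything to bipartite rank statements: for each block-respecting cut $T\,|\,B\setminus T$ you compute the $E\,|\,E^c$ flattening two ways — rank $\le 1$ from the $\pi'$-factorization via Proposition~\ref{prop:refinement-inclusion} and Proposition~\ref{prop:matrixrank1}, and rank exactly $rs$ from the $\pi$-factorization using minimal Schmidt decompositions (tensor products of linearly independent families stay independent) — forcing $r=1$; then your peeling induction, transporting rank-one across the fixed factor $\phi_1$, upgrades bipartite productness for all block-respecting cuts to the full $\pi'|_B$-factorization. Both steps check out, including the identity $E\cap B=T$ and the fact that $E$ is a proper nonempty union of $\pi'$-blocks, so the argument is sound; the only things worth adding are the degenerate cases (e.g.\ $E\setminus T=\emptyset$ or $B\cup E=[N]$), handled by the convention that an empty tensor factor is the scalars. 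The trade-off: the paper's contraction argument is shorter and avoids the induction entirely, while yours stays within the flattening/Schmidt-rank toolkit already developed for the bipartite theory, at the cost of more bookkeeping.
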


\begin{proof}
The inclusion $\Sigma_{\pi\wedge\pi'}\subset \Sigma_\pi\cap\Sigma_{\pi'}$ is clear, since $\pi\wedge\pi'$ is a common refinement of $\pi$ and $\pi'$: if one can factorize more finely, then one is both $\pi$-product and $\pi'$-product.

Conversely, let $[\psi]\in \Sigma_\pi\cap\Sigma_{\pi'}$ and choose a representative $\psi\neq 0$. Since we are in projective space, we may absorb scalars and assume the equality holds at the level of
vectors:
\[
\psi=\bigotimes_{B\in\pi} u_B \;=\; \bigotimes_{C\in\pi'} v_C
\qquad (u_B\in H_B\setminus\{0\},\ v_C\in H_C\setminus\{0\}).
\]

Fix a block $B_0\in\pi$ and set
\[
u_{B_0^c}:=\bigotimes_{B\in\pi,\ B\neq B_0} u_B \in H_{B_0^c}\setminus\{0\},
\]
where $B_0^c=[N]\setminus B_0$.

\medskip
\noindent
\emph{(Choice of a functional)}
For each block $C$ of $\pi'$, consider $C\cap B_0^c$ (ignore it if empty). Then
\[
H_{B_0^c}\;\cong\;\bigotimes_{C\in\pi'} H_{C\cap B_0^c},
\]
and hence
\[
H_{B_0^c}^\vee\;\cong\;\bigotimes_{C\in\pi'} H_{C\cap B_0^c}^\vee.
\]
Under this identification, \emph{decomposable (pure)} functionals $\varphi=\bigotimes_{C\in\pi'}\varphi_{C\cap B_0^c}$ linearly generate $H_{B_0^c}^\vee$. Since $u_{B_0^c}\neq 0$, there exists such a decomposable functional with
\[
\exists\ \varphi=\bigotimes_{C\in\pi'}\varphi_{C\cap B_0^c}\in H_{B_0^c}^\vee
\quad\text{s.t.}\quad \varphi(u_{B_0^c})\neq 0.
\]
(Otherwise all pure functionals would annihilate $u_{B_0^c}$, hence so would their linear span, i.e.\ all of $H_{B_0^c}^\vee$, a contradiction.) After scaling we may assume $\varphi(u_{B_0^c})=1$.

\medskip
\noindent
\emph{(Factorization along the meet)}
Apply the contraction map
\[
(\id_{H_{B_0}}\otimes \varphi): H_{B_0}\otimes H_{B_0^c}\to H_{B_0}
\]
to $\psi$. Using $\psi=u_{B_0}\otimes u_{B_0^c}$ gives
\[
(\id\otimes\varphi)(\psi)
=(\id\otimes\varphi)(u_{B_0}\otimes u_{B_0^c})
=\varphi(u_{B_0^c})\,u_{B_0}
=u_{B_0}.
\]
On the other hand, writing $\psi=\bigotimes_{C\in\pi'} v_C$ and using $\varphi=\bigotimes_{C\in\pi'}\varphi_{C\cap B_0^c}$, we obtain
\[
(\id\otimes\varphi)(\psi)
=\bigotimes_{C\in\pi'}
\Bigl(\ (\id_{H_{B_0\cap C}}\otimes \varphi_{C\cap B_0^c})(v_C)\ \Bigr)
\in \bigotimes_{C\in\pi'} H_{B_0\cap C}.
\]
Hence $u_{B_0}$ factorizes along the pieces $\{B_0\cap C\}_{C\in\pi'}$.

Doing this for every $B_0\in\pi$, we obtain
\[
\psi=\bigotimes_{B\in\pi} u_B
=\bigotimes_{B\in\pi}\ \bigotimes_{C\in\pi'} w_{B\cap C}
=\bigotimes_{D\in\pi\wedge\pi'} w_D,
\]
so $[\psi]\in \Sigma_{\pi\wedge\pi'}$.
\end{proof}

\begin{remark}
The set theoretic equality $\Sigma_\pi\cap\Sigma_{\pi'}=\Sigma_{\pi\wedge\pi'}$ is correct as a statement in linear algebra. In contrast, the scheme theoretic intersection (the closed subscheme defined by the ideal sum $I(\Sigma_\pi)+I(\Sigma_{\pi'})$) is generally more delicate, and in particular can contain nilpotents if the intersection is not transverse. In this note, when we discuss product/entangled criteria, we first consider the equality at the level of sets.
\end{remark}

\paragraph{A nontrivial example.}
Now let $N=3$ and $H=H_1\otimes H_2\otimes H_3$. There are only three bipartitions:
\[
1|23,\qquad 2|13,\qquad 3|12.
\]
Thus Definition~\ref{def:bisep-GME-general} becomes
\[
X_{\mathrm{bisep}}^{(3)}=\Sigma_{1|23}\cup\Sigma_{2|13}\cup\Sigma_{3|12},
\qquad
X_{\mathrm{GME}}^{(3)}=\PP(H)\setminus X_{\mathrm{bisep}}^{(3)}.
\]
In this note we also call $X_{\mathrm{bisep}}^{(3)}$ the \emph{biproduct locus}.

\begin{proposition}[Primary decomposition of the $3$-partite biproduct locus]
\label{prop:bisep-primary-3}
\[
I\bigl(X_{\mathrm{bisep}}^{(3)}\bigr)
= I(\Sigma_{1|23})\cap I(\Sigma_{2|13})\cap I(\Sigma_{3|12}).
\]
Each $I(\Sigma_{i|jk})$ is a prime ideal, so the right-hand side is a primary decomposition (in fact, a prime decomposition).
\end{proposition}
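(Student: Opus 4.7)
The plan is to specialize Proposition~\ref{prop:bisep-primary-general} to the case $N=3$, so the work consists of three small verifications. First I would enumerate bipartitions: for $[N]=\{1,2,3\}$ the identification $\Sigma_{A|A^c}=\Sigma_{A^c|A}$ leaves exactly the three Segre subvarieties $\Sigma_{1|23},\Sigma_{2|13},\Sigma_{3|12}$, so
\[
X_{\mathrm{bisep}}^{(3)}=\Sigma_{1|23}\cup\Sigma_{2|13}\cup\Sigma_{3|12}.
\]

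Next I would apply the OR-rule of Proposition~\ref{prop:ideal-dictionary}(1), extended to three closed sets by associativity, to convert this union into an intersection of vanishing ideals:
\[
I\bigl(X_{\mathrm{bisep}}^{(3)}\bigr)=I(\Sigma_{1|23})\cap I(\Sigma_{2|13})\cap I(\Sigma_{3|12}).
\]

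Then I would establish the primeness claim. For each bipartition $i\mid jk$, the Segre map $\PP(H_i)\times\PP(H_j\otimes H_k)\hookrightarrow\PP(H)$ is a closed immersion, and its source is irreducible as a product of projective spaces over an algebraically closed field. Hence the image $\Sigma_{i|jk}$ is an irreducible closed projective subvariety of $\PP(H)$, and therefore $I(\Sigma_{i|jk})$ is a prime ideal. Since every prime ideal is primary, the intersection above is a primary decomposition, and in fact a prime decomposition. This is the entire content of the statement.

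The only possible subtlety, which the proposition does not actually claim but which one might want to record, is irredundancy: one should check that the three primes are pairwise incomparable, which amounts to exhibiting, in each bipartition, a state that is product along it but entangled along the other two (for example a Bell-type state on the pair $\{j,k\}$ tensored with any state on $\{i\}$). Since the statement only asserts that a primary/prime decomposition exists, this verification is optional and I would mention it as a remark rather than carry it out as part of the proof.
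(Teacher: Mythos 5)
Your proposal is correct and matches the paper's approach: the paper's proof is simply the one-line observation that this is Proposition~\ref{prop:bisep-primary-general} specialized to $N=3$, and your argument just unpacks that same specialization (enumeration of the three bipartitions, the union-to-intersection rule of Proposition~\ref{prop:ideal-dictionary}(1), and primeness via irreducibility of the Segre image). Your closing remark on irredundancy is a reasonable optional addition, and you correctly note it is not required by the statement.
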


\begin{proof}
This is just Proposition~\ref{prop:bisep-primary-general} specialized to $N=3$.
\end{proof}

\begin{lemma}
\label{lem:two-cuts-imply-full-3}
Let $\psi\neq 0$. If
\[
[\psi]\in \Sigma_{1|23}\cap \Sigma_{2|13},
\]
then $[\psi]\in \Sigma_{1|2|3}$. Similarly, the intersection of any two distinct bipartitions satisfies
\[
\Sigma_{1|23}\cap \Sigma_{2|13}
=\Sigma_{1|23}\cap \Sigma_{3|12}
=\Sigma_{2|13}\cap \Sigma_{3|12}
=\Sigma_{1|2|3}
\qquad(\text{as point sets}).
\]
\end{lemma}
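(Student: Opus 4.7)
The plan is to obtain this lemma as an immediate corollary of Lemma~\ref{lem:intersection-is-meet} together with a short computation in the partition lattice. Since that lemma already identifies $\Sigma_\pi \cap \Sigma_{\pi'}$ with $\Sigma_{\pi\wedge\pi'}$ as point sets, the only remaining work is to evaluate the meets of the three bipartitions of $[3]$; once these meets are identified with the finest partition $1|2|3$, the equalities in the statement follow directly, and the first assertion is the particular case $[\psi]\in\Sigma_{1|23}\cap\Sigma_{2|13}\subseteq\Sigma_{1|2|3}$.

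Concretely, I would first rewrite the three bipartitions as set partitions,
\[
\pi_1=\{\{1\},\{2,3\}\},\qquad
\pi_2=\{\{2\},\{1,3\}\},\qquad
\pi_3=\{\{3\},\{1,2\}\},
\]
and then apply Definition~\ref{def:partition-meet} to compute $\pi_i\wedge\pi_j$ for each unordered pair $i\neq j$, i.e.\ list all pairwise intersections of blocks and discard the empty ones. For $\pi_1\wedge\pi_2$, the nonempty intersections are $\{1\}\cap\{1,3\}=\{1\}$, $\{2,3\}\cap\{2\}=\{2\}$, and $\{2,3\}\cap\{1,3\}=\{3\}$, so $\pi_1\wedge\pi_2=\{\{1\},\{2\},\{3\}\}$, which is exactly the finest partition $1|2|3$. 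The remaining pairs $\pi_1\wedge\pi_3$ and $\pi_2\wedge\pi_3$ yield the same result by the evident $S_3$-relabelling symmetry on the index set $[3]$. Feeding these three identities back into Lemma~\ref{lem:intersection-is-meet} gives
\[
\Sigma_{1|23}\cap\Sigma_{2|13}
=\Sigma_{1|23}\cap\Sigma_{3|12}
=\Sigma_{2|13}\cap\Sigma_{3|12}
=\Sigma_{1|2|3},
\]
as desired.

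I expect no real obstacle here, because all of the analytic content has already been discharged inside Lemma~\ref{lem:intersection-is-meet}, whose proof produces the finer factorization by contracting one tensor factor against a suitably chosen decomposable functional. If one instead tried to prove the present lemma from scratch without invoking that lemma, the hard step would be precisely this contraction argument: starting from two distinct bipartite factorizations of the same vector $\psi$, one must manufacture a tripartite factorization, and the nontrivial move is to pick a pure (decomposable) functional on one complementary tensor factor that is simultaneously compatible with the second bipartite structure. In the $N=3$ case this is especially transparent because any two distinct bipartitions of $[3]$ already have the discrete partition as their meet, so a single application of the general meet lemma collapses straight to full separability.
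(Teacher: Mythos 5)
Your proposal is correct and follows essentially the same route as the paper: the paper also derives the lemma by applying Lemma~\ref{lem:intersection-is-meet} and observing that the meet of any two distinct bipartitions of $[3]$ is the full partition $1|2|3$ (your explicit block-intersection computation just spells out what the paper states in one line).
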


\begin{proof}
Using the general statement about meets of partitions (Lemma~\ref{lem:intersection-is-meet}), for example,
\[
(1|23)\wedge(2|13)=1|2|3.
\]
For $N=3$, the meet of two distinct bipartitions is always the full partition, hence
\[
\Sigma_{1|23}\cap\Sigma_{2|13}=\Sigma_{1|2|3}.
\]
\end{proof}

\begin{remark}[Why $N=3$ is special]
\label{rem:N3}
For $N=3$ there are only three bipartitions, and if a state is product for two distinct bipartitions, then their meet is the full partition. Hence
\[
\text{(product for one bipartition)}\land \text{(product for another bipartition)}
\ \Rightarrow\ \text{fully product}.
\]
However, for $N\ge 4$, even if two distinct bipartitions are satisfied, their meet need not refine all the way to $N$ blocks. It may become, for instance,
\[
1|2|\text{(the rest)}.
\]
Thus the state need not be fully product. For example, when $N=4$, if
\[
\psi=a_1\otimes a_2\otimes \Phi_{34}
\qquad(\Phi_{34}\in H_3\otimes H_4\ \text{is entangled}),
\]
then
\[
[\psi]\in\Sigma_{1|234}\cap\Sigma_{2|134},
\]
but $[\psi]\notin\Sigma_{1|2|3|4}$. This phenomenon (entanglement may persist in the remaining factors after taking partial marginals) is a genuinely new feature in the multipartite case ($N\ge 4$).
\end{remark}

\subsubsection{Algebraic properties of GME}

\paragraph{Bipartition entanglement is an ``OR of minors $\neq0$.''}

\begin{remark}[Bipartition $A|A^c$: product is AND, entangled is OR]\label{rem:entangled_is_OR}
\label{rem:OR-of-minors-general}
For a bipartition $A|A^c$, the locus $\Sigma_{A|A^c}\subset\PP(H)$ is (via flattening) the set of rank-$1$ tensors, and its ideal is generated by
\[
I(\Sigma_{A|A^c})=\langle\ 2\times2\text{ minors of }M_{A|A^c}\ \rangle. 
\]
Writing the set of generators as $\mathcal{M}_{A|A^c}$, we have
\[
\Sigma_{A|A^c}=V\bigl(I(\Sigma_{A|A^c})\bigr)=\bigcap_{\Delta\in\mathcal{M}_{A|A^c}}V(\Delta).
\]
Hence the complement is
\[
\PP(H)\setminus \Sigma_{A|A^c}
=\bigcup_{\Delta\in\mathcal{M}_{A|A^c}} D(\Delta),
\qquad
D(\Delta):=\{[\psi]\in\PP(H)\mid \Delta(\psi)\neq 0\}.
\]
As logical statements,
\[
\text{(product in $A|A^c$)}\iff \bigwedge_{\Delta\in\mathcal{M}_{A|A^c}}(\Delta=0),
\qquad
\text{(entangled in $A|A^c$)}\iff \bigvee_{\Delta\in\mathcal{M}_{A|A^c}}(\Delta\neq 0).
\]
That is, \emph{the nonvanishing of at least one minor} is an elementary criterion for being entangled with respect to that bipartition.
\end{remark}

\paragraph{$N$-partite GME is an ``AND of OR over all bipartitions.''}

\begin{remark}[GME is an ``AND of OR'' (general $N$)]
\label{rem:AND-of-OR-general}
By Definition~\ref{def:bisep-GME-general},
\[
X_{\mathrm{GME}}^{(N)}
=\PP(H)\setminus\Bigl(\ \bigcup_{\emptyset\neq A\subsetneq[N]}\Sigma_{A|A^c}\ \Bigr)
=\bigcap_{\emptyset\neq A\subsetneq[N]}\bigl(\PP(H)\setminus\Sigma_{A|A^c}\bigr).
\]
Each $\PP(H)\setminus\Sigma_{A|A^c}$ can be written as
$\bigcup_{\Delta\in\mathcal{M}_{A|A^c}}D(\Delta)$ by
Remark~\ref{rem:OR-of-minors-general}, so
\[
[\psi]\in X_{\mathrm{GME}}^{(N)}
\quad\Longleftrightarrow\quad
\bigwedge_{\emptyset\neq A\subsetneq[N]}
\Bigl(\ \bigvee_{\Delta\in\mathcal{M}_{A|A^c}}(\Delta(\psi)\neq 0)\ \Bigr).
\]
Thus GME can be characterized as follows:
\begin{center}
\emph{for every bipartition, at least one $2\times2$ minor is nonzero.}
\end{center}
\end{remark}

\begin{remark}[Important: the entangled locus is not a zero locus]
\label{rem:ent-open-not-closed}
The loci $\Sigma_{A|A^c}$ and $X_{\mathrm{bisep}}^{(N)}$ are \emph{closed sets} defined by equations, so they can be described by ideals (as $V(I)$). In contrast, their complement $X_{\mathrm{GME}}^{(N)}$ is in general an \emph{open set}.

Since projective space $\PP(H)$ is irreducible, the only subsets that are both open and closed are the empty set and the whole space. Therefore,
\[
\boxed{
\begin{minipage}{0.99\linewidth}
it is algebro--geometrically natural to describe non-entangled (product) states by ideals, and view entangled states as the complement (an open locus).
\end{minipage}
}
\]
\end{remark}

\paragraph{GME is Zariski open.}

\begin{proposition}[The GME locus is Zariski open and dense (general $N$)]
\label{prop:GME-open-dense-general}
Assume $d_i\ge 2$. Then $X_{\mathrm{GME}}^{(N)}$ is Zariski open and moreover dense. In particular, $X_{\mathrm{GME}}^{(N)}$ is nonempty.
\end{proposition}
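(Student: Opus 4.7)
The plan is to deduce openness purely from Proposition~\ref{prop:bisep-primary-general} (and the fact that Segre embeddings are closed immersions), and then to derive density from irreducibility of $\PP(H)$ together with a dimension count. First, I would note that for each bipartition $\emptyset\neq A\subsetneq[N]$ the product locus $\Sigma_{A|A^c}$ is the image of the Segre embedding $\PP(H_A)\times\PP(H_{A^c})\hookrightarrow\PP(H)$; since this is a closed immersion of projective varieties, $\Sigma_{A|A^c}$ is Zariski closed. Taking the finite union over all bipartitions, $X_{\mathrm{bisep}}^{(N)}$ is Zariski closed, and hence its complement $X_{\mathrm{GME}}^{(N)}$ is Zariski open. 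This gives openness essentially for free.

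For density, I would use that $\PP(H)$ is irreducible, so it suffices to show that each $\Sigma_{A|A^c}$ is a \emph{proper} closed subvariety; then the finite union $X_{\mathrm{bisep}}^{(N)}$ remains a proper closed subset of $\PP(H)$, and its complement is a nonempty (hence dense) open set. The key computation is dimensional: writing $D_A:=\prod_{i\in A}d_i$ and $D_{A^c}:=\prod_{i\notin A}d_i$, and using that the Segre map is injective on the product of projective spaces, one has
\[
\dim\Sigma_{A|A^c}=(D_A-1)+(D_{A^c}-1)=D_A+D_{A^c}-2,
\]
whereas $\dim\PP(H)=D_A D_{A^c}-1$. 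The strict inequality $\dim\Sigma_{A|A^c}<\dim\PP(H)$ is equivalent to $(D_A-1)(D_{A^c}-1)>0$, which holds exactly when $D_A,D_{A^c}\ge 2$. Since every local dimension satisfies $d_i\ge 2$ and both $A$ and $A^c$ are nonempty, both $D_A\ge 2$ and $D_{A^c}\ge 2$ follow, so each $\Sigma_{A|A^c}$ is a proper subvariety.

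From there I would conclude: a finite union of proper closed subsets of an irreducible variety is again a proper closed subset (otherwise $\PP(H)$ would be the union of finitely many proper closed subsets, contradicting irreducibility), so $X_{\mathrm{bisep}}^{(N)}\subsetneq\PP(H)$; its complement $X_{\mathrm{GME}}^{(N)}$ is therefore a nonempty Zariski open subset of the irreducible space $\PP(H)$, and every nonempty open subset of an irreducible space is dense. Nonemptiness then follows automatically. If one wanted a concrete witness rather than a dimension argument, one could also exhibit a state such as a GHZ-type vector $\sum_{k}e_k^{\otimes N}$ (with $e_k$ ranging over a basis of $\C^{\min_i d_i}\hookrightarrow H_i$) and verify that every flattening has rank $\ge 2$, but I would prefer the dimension route since it uses only tools already in the paper.

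The only mildly subtle step is the dimension count for $\Sigma_{A|A^c}$: one must be careful that the Segre map is indeed a closed immersion (so the image dimension equals the source dimension) and that the dimensions of the two projective factors add. Both are standard, so I do not expect this to be a real obstacle; the proof is short and essentially a packaging of irreducibility plus a one-line inequality $(D_A-1)(D_{A^c}-1)>0$.
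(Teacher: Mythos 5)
Your argument is correct and is essentially the paper's own proof: closedness of the finite union of Segre varieties gives openness, and the dimension count $(\dim H_A-1)(\dim H_{A^c}-1)>0$ shows each $\Sigma_{A|A^c}$ is proper, so by irreducibility of $\PP(H)$ the complement is nonempty open and hence dense. Your extra explicitness about why a finite union of proper closed subsets stays proper, and the optional GHZ-type witness, are fine but not a different route.
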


\begin{proof}
The locus $X_{\mathrm{bisep}}^{(N)}$ is a finite union of closed sets (Segre varieties), hence it is closed, and its complement $X_{\mathrm{GME}}^{(N)}$ is open.

Moreover, each $\Sigma_{A|A^c}\subset\PP(H)$ is a Segre variety, so $\dim \Sigma_{A|A^c}=(\dim H_A-1)+(\dim H_{A^c}-1)$, while $\dim\PP(H)=\dim H-1$. If $d_i\ge2$, then $\dim H=(\dim H_A)(\dim H_{A^c})$, and
\[
(\dim H-1)-\dim\Sigma_{A|A^c}=(\dim H_A-1)(\dim H_{A^c}-1)>0.
\]
Thus $\Sigma_{A|A^c}$ is always a proper closed subset. Hence the finite union $X_{\mathrm{bisep}}^{(N)}$ is also proper closed, and therefore its complement $X_{\mathrm{GME}}^{(N)}$ is dense.
\end{proof}

\section{Observation: A brief number-theoretic analogy}\label{sec:integer}

Since the quantum states considered in this note are restricted to pure states, there are places where quantum states can look as if they behave like ``primes''. In particular, the central question of this note---whether a state factorizes (whether it is a product state or an entangled state)---is not entirely unlike the classical question people have pondered since antiquity: whether an integer admits a prime factorization. (The question ``does a quantum state factorize?'' will raise foundational questions in the near future.) In this section we therefore think a little about how far this intuitive analogy goes, and where it breaks down.

Since addition of states as vectors is already defined, we keep it as is (we ignore overall scalars, and if an operation produces the zero vector we regard the state as ``vanishing''). We reinterpret the tensor product as the analogue of multiplication of integers.

\paragraph{Related literature.}
In this note we relate number theory and quantum separability from the viewpoint of ``factorization,'' but the structural appearance of prime factorization and the Chinese remainder theorem (CRT) in finite-dimensional quantum systems has also been discussed in earlier work \cite{EllinasFloratos1999PrimeDecompositionCorrelation}. There are also proposals to organize inclusion relations of finite quantum systems by divisibility $m\mid n$ and to build a $T_0$ topology from this structure \cite{Vourdas2012PartialOrderT0Topology}. Moreover, there is the ``prime state'', a construction that builds quantum states from the prime sequence itself and analyzes their entanglement \cite{LatorreSierra2020PrimeState}.

\medskip
\paragraph{Primes.}
In number theory, the word ``prime'' has at least the following two meanings.
\begin{itemize}
\item \textbf{Prime element--type ``prime'':}
a ``prime'' in the sense that an element cannot be decomposed multiplicatively. A typical example is a prime number $p\in\mathbb Z$, characterized by
\[
p=ab \ \Rightarrow\ a=\pm1 \ \text{or}\ b=\pm1.
\]
\item \textbf{Prime ideal--type ``prime'':}
a \emph{prime ideal} $\mathfrak p$ of a commutative ring $R$ is an ideal satisfying $\mathfrak p\subsetneq R$ and, for all $f,g\in R$,
\[
fg\in\mathfrak p\ \Rightarrow\ f\in\mathfrak p \ \text{or}\ g\in\mathfrak p.
\]
Moreover, if we equip $X=\Spec R$ with the Zariski topology, then
\[
V(\mathfrak p):=\{\mathfrak q\in \Spec R\mid \mathfrak p\subseteq \mathfrak q\}
=\overline{\{\mathfrak p\}}
\]
is an irreducible closed subset, and $\mathfrak p$ is the \emph{generic point} of $V(\mathfrak p)$. Conversely, any irreducible closed subset $Z\subset \Spec R$ has a unique generic point $\eta\in Z$ such that
\[
Z=\overline{\{\eta\}}=V(\eta).
\]
Hence prime ideals and irreducible closed subsets correspond bijectively via
\[
\mathfrak p \longleftrightarrow V(\mathfrak p)
\]
with inverse $Z\mapsto I(Z)$.
\end{itemize}
Since $\mathbb Z$ is a UFD (unique factorization domain), the above notions of ``prime'' coincide in $\mathbb Z$. In general, however, they do not coincide.

\medskip
Accordingly, from now on we consider these two meanings of ``prime'' separately for sets of quantum states.

\paragraph{(1) Prime element--type ``prime'' (relative irreducibility): not decomposable under tensor product (``multiplication'').}

\begin{definition}[Product / entangled with respect to a bipartition (a property of a ``point'')]
Let $H=H_1\otimes\cdots\otimes H_N$. For a nonempty proper subset $\emptyset\neq A\subsetneq [N]:=\{1,\dots,N\}$, set
\[
H_A:=\bigotimes_{i\in A}H_i,\qquad H_{A^c}:=\bigotimes_{j\in A^c}H_j.
\]
A pure state $[\psi]\in\PP(H)$ is said to be
\begin{itemize}
\item \emph{$A|A^c$-product (product in the cut $A|A^c$)} if there exist nonzero vectors $0\neq u\in H_A$ and $0\neq v\in H_{A^c}$ such that
\[
[\psi]=[u\otimes v].
\]
\item \emph{$A|A^c$-entangled (entangled in the cut $A|A^c$)} if it is not $A|A^c$-product.
\end{itemize}
\end{definition}

Primality of an integer is absolute with respect to the multiplicative structure. By contrast, whether a pure state can (or cannot) factorize depends on the \emph{chosen} tensor product structure and also on \emph{which bipartition $A|A^c$ one considers}. Thus, in the sense of
\[
\text{there is no tensor factorization with respect to a fixed cut,}
\]
one may say that an entangled state is ``prime-like.''

\medskip
Just as the prime factorization of a natural number is uniquely determined, the decomposition of a fully product state is also uniquely determined (in projective space).

\begin{proposition}[Projective uniqueness of the fully product decomposition]
\label{prop:rank1-Nfactor}
Let $H=H_1\otimes\cdots\otimes H_N$, and let $v_i,w_i\in H_i\setminus\{0\}$.
If
\[
v_1\otimes\cdots\otimes v_N \;=\; w_1\otimes\cdots\otimes w_N\ \in H,
\]
then there exist $\lambda_1,\dots,\lambda_N\in k^\times$ such that
\[
w_i=\lambda_i v_i\quad(1\le i\le N),
\qquad
\prod_{i=1}^N \lambda_i = 1.
\]
In particular, the projective points are uniquely determined:
\[
[v_i]=[w_i]\in\PP(H_i)\qquad(1\le i\le N).
\]
\end{proposition}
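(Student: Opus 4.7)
I would prove this by separating the tensor factors via linear functionals, recovering each ratio $\lambda_j$ one index at a time. The crucial preliminary observation is that the common value
\[
T \;:=\; v_1\otimes\cdots\otimes v_N \;=\; w_1\otimes\cdots\otimes w_N \;\in\; H
\]
is nonzero: a tensor product of nonzero vectors in finite-dimensional spaces is nonzero (for instance by choosing bases for each $H_i$ containing $v_i$ and reading off a nonzero coordinate of $T$). This nonvanishing is what allows the scalars to be extracted unambiguously.

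\textbf{Extracting each $\lambda_j$ by contraction.} Fix $j\in\{1,\dots,N\}$. For every $i\neq j$, choose a functional $\varphi_i\in H_i^{\vee}$ with $\varphi_i(v_i)\neq 0$; such a $\varphi_i$ exists because $v_i\neq 0$ and $H_i$ is finite-dimensional. After reordering tensor factors, apply the contraction $\mathrm{id}_{H_j}\otimes\bigotimes_{i\neq j}\varphi_i$ to both sides of the equality defining $T$. On the left it produces $\bigl(\prod_{i\neq j}\varphi_i(v_i)\bigr)\,v_j$, and on the right $\bigl(\prod_{i\neq j}\varphi_i(w_i)\bigr)\,w_j$. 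The left scalar is nonzero by construction, so the resulting vector is a nonzero multiple of $v_j$; hence the right scalar is nonzero too, yielding $w_j=\lambda_j v_j$ for some $\lambda_j\in k^{\times}$. Running this over all $j$ gives the pointwise scaling $w_i=\lambda_i v_i$, which already proves the projective conclusion $[v_i]=[w_i]$. Substituting back into the original identity and using multilinearity of $\otimes$ produces
\[
v_1\otimes\cdots\otimes v_N \;=\; \Bigl(\prod_{i=1}^{N}\lambda_i\Bigr)\,v_1\otimes\cdots\otimes v_N,
\]
and since $T\neq 0$ we conclude $\prod_i\lambda_i=1$.

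\textbf{Main obstacle and alternatives.} Honestly, there is no deep obstacle once the contraction framework is set up: the entire proof rests on the elementary fact that a nonzero vector in a finite-dimensional space is not annihilated by all of its dual. The only point that deserves a line of justification is that the functionals $\varphi_i$ can be chosen so that the scalar $\prod_{i\neq j}\varphi_i(v_i)$ is nonzero; this is immediate because each factor can be arranged to be nonzero independently. An alternative route is induction on $N$, with the base case $N=2$ handled by the rank-one matrix statement (Proposition~\ref{prop:matrixrank1}): write $T$ as a rank-one element of $H_1\otimes(H_2\otimes\cdots\otimes H_N)$, extract $w_1=\lambda v_1$, and apply the inductive hypothesis to $w_2\otimes\cdots\otimes w_N=\lambda^{-1}\,v_2\otimes\cdots\otimes v_N$. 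This works but merely repackages the same separating-functional argument into a recursion. No algebraic-closure or characteristic hypothesis is needed, so the statement holds over any field $k$.
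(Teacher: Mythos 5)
Your proof is correct and follows essentially the same route as the paper: contracting with functionals that are nonzero on the $v_i$'s to isolate each factor, then substituting back and using $T\neq 0$ to force $\prod_i\lambda_i=1$. The only cosmetic difference is that the paper normalizes $\varphi_j(v_j)=1$ while you only require nonvanishing, and you spell out explicitly the nonvanishing of $T$, which the paper leaves implicit.
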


\begin{proof}
Fix $i$, and for each $j\neq i$ choose $\varphi_j\in H_j^\vee$ such that $\varphi_j(v_j)=1$. By the universal property of the tensor product, this defines a contraction map
\[
T_i:=\bigotimes_{j\neq i}\varphi_j:\ H\longrightarrow H_i,
\]
viewing $H=\bigl(\otimes_{j\neq i}H_j\bigr)\otimes H_i$.
Then
\[
T_i(v_1\otimes\cdots\otimes v_N)=v_i.
\]
On the other hand,
\[
T_i(w_1\otimes\cdots\otimes w_N)
=\Bigl(\prod_{j\neq i}\varphi_j(w_j)\Bigr)\,w_i.
\]
Since these are equal, we have $v_i=\alpha_i w_i$ for some $\alpha_i\neq 0$, hence $w_i=\lambda_i v_i$ with $\lambda_i=\alpha_i^{-1}$.

Finally,
\[
w_1\otimes\cdots\otimes w_N
=(\lambda_1\cdots\lambda_N)\,(v_1\otimes\cdots\otimes v_N),
\]
so comparing with the original equality yields $\prod_i\lambda_i=1$.
\end{proof}

\begin{corollary}[Injectivity of the Segre map]
\label{cor:Segre-injective}
The Segre map
\[
\mathrm{Seg}:\ \PP(H_1)\times\cdots\times\PP(H_N)\longrightarrow \PP(H),
\qquad ([v_1],\dots,[v_N])\longmapsto [v_1\otimes\cdots\otimes v_N]
\]
is injective. Consequently, for a fully product state $[\psi]\in\Sigma_{1|\cdots|N}:=\im(\mathrm{Seg})$, the tuple of projective points $([v_1],\dots,[v_N])$ satisfying
\[
[\psi]=[v_1\otimes\cdots\otimes v_N]
\]
is \emph{uniquely} determined.
\end{corollary}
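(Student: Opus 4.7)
The plan is to derive both statements of the corollary directly from Proposition~\ref{prop:rank1-Nfactor}. The only work beyond citing that proposition is a small bookkeeping step: Proposition~\ref{prop:rank1-Nfactor} is phrased as an equality at the level of vectors in $H$, whereas the Segre map is defined on projective spaces. So first I would reduce projective equality to vector equality up to a global scalar.

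Concretely, suppose $\mathrm{Seg}([v_1],\dots,[v_N]) = \mathrm{Seg}([w_1],\dots,[w_N])$, i.e.\ $[v_1\otimes\cdots\otimes v_N]=[w_1\otimes\cdots\otimes w_N]$ in $\PP(H)$. Then there exists $\mu\in k^\times$ with $v_1\otimes\cdots\otimes v_N = \mu\,w_1\otimes\cdots\otimes w_N$. I would absorb $\mu$ into a single factor by replacing $w_1$ with $w_1':=\mu w_1$, which does not change the projective class $[w_1]$ but yields the vector-level equality $v_1\otimes\cdots\otimes v_N = w_1'\otimes w_2\otimes\cdots\otimes w_N$. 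Proposition~\ref{prop:rank1-Nfactor} then supplies scalars $\lambda_i\in k^\times$ with $w_i=\lambda_i v_i$ for $i\ge 2$ and $w_1'=\lambda_1 v_1$, so in particular $[v_i]=[w_i]$ in $\PP(H_i)$ for every $i$. This is the injectivity of $\mathrm{Seg}$.

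For the second assertion, a fully product state $[\psi]\in\Sigma_{1|\cdots|N}=\mathrm{im}(\mathrm{Seg})$ is by definition in the image of the Segre map, so at least one tuple $([v_1],\dots,[v_N])$ with $[\psi]=\mathrm{Seg}([v_1],\dots,[v_N])$ exists; the injectivity just established shows that any two such tuples coincide, giving uniqueness. The only place anything could go wrong is the scalar-absorption step, but it is automatic since each $H_i$ has dimension at least one and projective classes are scale-invariant; this is why I isolate it at the very beginning rather than repeat the contraction argument of Proposition~\ref{prop:rank1-Nfactor}.
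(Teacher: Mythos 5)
Your argument is correct and matches the paper's intent: the corollary is meant to follow immediately from Proposition~\ref{prop:rank1-Nfactor}, and your scalar-absorption step (replacing $w_1$ by $\mu w_1$ to pass from projective to vector-level equality) is exactly the routine reduction the paper leaves implicit. Nothing further is needed.
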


\medskip
Next, we explain that the information of which partitions a state $[\psi]$ can factorize with respect to its separability pattern can be uniquely organized by a finest partition. The statements below are  correct as point sets.

\begin{definition}
\label{def:pattern-finest}
For a state $[\psi]\in\PP(H)$, define
\[
\mathcal P([\psi])\;:=\;\{\ \pi\ \text{a partition}\mid [\psi]\in\Sigma_\pi\ \}
\]
to be the set of partitions for which $[\psi]$ is a product state (it is nonempty since it always contains the trivial partition $\{[N]\}$).

\smallskip
Define the \emph{finest product partition of $[\psi]$} by
\[
\pi_{\min}([\psi])\;:=\;\bigwedge_{\pi\in\mathcal P([\psi])}\ \pi,
\]
where $\wedge$ denotes the common refinement.
\end{definition}

\begin{theorem}[Existence and uniqueness of the finest partition]
\label{thm:finest-partition-unique}
Let $[\psi]\in\PP(H)$ be arbitrary. Then:
\begin{enumerate}
\item (Existence) $[\psi]$ is a product state with respect to $\pi_{\min}([\psi])$:
\[
[\psi]\in \Sigma_{\pi_{\min}([\psi])}.
\]
\item (Uniqueness) The partitions $\rho$ for which $[\psi]$ is a product state are exactly the \emph{coarser partitions} of $\pi_{\min}([\psi])$ (those obtained by merging blocks). That is, for any partition $\rho$,
\[
[\psi]\in\Sigma_\rho
\quad\Longleftrightarrow\quad
\rho \ \text{is obtained by merging blocks of }\ \pi_{\min}([\psi]).
\]
\end{enumerate}
Hence the finest partition (the separability pattern) is \emph{unique}, and all information about the product structure of $[\psi]$ can be recovered from $\pi_{\min}([\psi])$ alone.
\end{theorem}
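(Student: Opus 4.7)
The plan is to leverage two results already established in the excerpt: Lemma~\ref{lem:intersection-is-meet}, which gives the set-theoretic identity $\Sigma_\pi \cap \Sigma_{\pi'} = \Sigma_{\pi \wedge \pi'}$, and Proposition~\ref{prop:refinement-inclusion}, which states that $\pi \preceq \rho$ implies $\Sigma_\pi \subseteq \Sigma_\rho$. The key structural observation is that $\mathcal{P}([\psi])$ is a subset of the \emph{finite} partition lattice on $[N]$ and is nonempty (it always contains the trivial partition $\{[N]\}$), so the formally infinite-looking meet in Definition~\ref{def:pattern-finest} is really a finite meet and is well defined.

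For existence (part (1)), I would first show that $\mathcal{P}([\psi])$ is closed under the binary meet $\wedge$. Indeed, if $\pi, \pi' \in \mathcal{P}([\psi])$, then $[\psi] \in \Sigma_\pi \cap \Sigma_{\pi'}$, and Lemma~\ref{lem:intersection-is-meet} identifies this with $\Sigma_{\pi \wedge \pi'}$, so $\pi \wedge \pi' \in \mathcal{P}([\psi])$. Iterating over the finitely many elements of $\mathcal{P}([\psi])$ then yields $\pi_{\min}([\psi]) = \bigwedge_{\pi \in \mathcal{P}([\psi])} \pi \in \mathcal{P}([\psi])$, which is exactly the statement $[\psi] \in \Sigma_{\pi_{\min}([\psi])}$.

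For the characterization in (2), the ``coarsening $\Rightarrow$ product'' direction is immediate from Proposition~\ref{prop:refinement-inclusion}: if $\pi_{\min}([\psi]) \preceq \rho$, then $\Sigma_{\pi_{\min}([\psi])} \subseteq \Sigma_\rho$, hence $[\psi] \in \Sigma_\rho$. Conversely, if $[\psi] \in \Sigma_\rho$ then $\rho \in \mathcal{P}([\psi])$, and by the very definition of $\pi_{\min}([\psi])$ as the meet over $\mathcal{P}([\psi])$, one has $\pi_{\min}([\psi]) \preceq \rho$, meaning $\rho$ is obtained by merging blocks of $\pi_{\min}([\psi])$. Uniqueness of the finest partition is then automatic: in the partition lattice, the meet of a fixed subset is unique.

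The hard part is really absorbed upstream into Lemma~\ref{lem:intersection-is-meet}, whose contraction-and-factorization argument along the meet carries the nontrivial content; once that lemma and refinement monotonicity are in hand, the present theorem reduces to bookkeeping in a finite lattice. One caveat worth flagging is that the argument is valid only at the level of point sets; the scheme-theoretic analogue using $I(\Sigma_\pi) + I(\Sigma_{\pi'})$ is more delicate because of possible non-reducedness at non-transverse intersections, exactly as noted in the remark following Lemma~\ref{lem:intersection-is-meet}.
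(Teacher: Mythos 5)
Your proof is correct and follows essentially the same route as the paper: finiteness of the partition lattice, closure of $\mathcal P([\psi])$ under $\wedge$ via Lemma~\ref{lem:intersection-is-meet} for existence, and the definition of the meet together with refinement monotonicity for the characterization in (2). The only cosmetic difference is that you cite Proposition~\ref{prop:refinement-inclusion} for the ``coarser $\Rightarrow$ product'' direction where the paper simply repeats its regrouping argument, which changes nothing of substance.
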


\begin{proof}
Since the total number of partitions is finite for fixed $N$, the set $\mathcal P([\psi])$ is finite. Hence $\wedge_{\pi\in\mathcal P([\psi])}\pi$ is well-defined.

\smallskip
(1) For any $\pi,\rho\in\mathcal P([\psi])$, we have $[\psi]\in\Sigma_\pi\cap\Sigma_\rho$, so by Lemma~\ref{lem:intersection-is-meet} we get $[\psi]\in\Sigma_{\pi\wedge\rho}$. Repeating this finitely many times yields
\[
[\psi]\in\Sigma_{\wedge_{\pi\in\mathcal P([\psi])}\pi}
=\Sigma_{\pi_{\min}([\psi])}.
\]

\smallskip
(2) If $\rho$ is obtained by merging blocks of $\pi_{\min}([\psi])$ (thus $\rho$ is coarser), then a $\pi_{\min}([\psi])$-product decomposition becomes a $\rho$-product decomposition simply by regrouping some tensor factors, so $[\psi]\in\Sigma_\rho$.

Conversely, assume $[\psi]\in\Sigma_\rho$. Then $\rho\in\mathcal P([\psi])$. By definition, $\pi_{\min}([\psi])$ is the common refinement of all elements of $\mathcal P([\psi])$, hence in particular $\pi_{\min}([\psi])$ refines $\rho$ (equivalently, $\rho$ is coarser than $\pi_{\min}([\psi])$). Therefore $\rho$ is obtained by merging blocks of $\pi_{\min}([\psi])$.
\end{proof}

The statement and meaning of this theorem can be interpreted as follows. If $\pi_{\min}([\psi])=\{B_1,\dots,B_k\}$, then $[\psi]$ can be written as
\[
[\psi]=[\psi_{B_1}\otimes\cdots\otimes \psi_{B_k}],
\]
and this block decomposition is projectively unique.

Moreover, since $\pi_{\min}([\psi])$ is the finest, each block state $[\psi_{B_a}]$ cannot be further factorized inside that block. Indeed, if $B_a$ admitted a nontrivial further partition and $[\psi_{B_a}]$ were product with respect to it, then the whole partition $\pi_{\min}([\psi])$ could be refined further, a contradiction. Thus each $B_a$ can be interpreted as a ``genuinely entangled block''
internally.

\begin{example}[The case $N=3$]
Let us examine the case $N=3$. In this case there are only five partitions:
\[
1|2|3,\quad 1|23,\quad 2|13,\quad 3|12,\quad 123.
\]
Hence $\pi_{\min}([\psi])$ is always one of these.
\begin{itemize}
\item If $\pi_{\min}([\psi])=1|2|3$, then $[\psi]$ is fully product.
\item If $\pi_{\min}([\psi])=1|23$ (or cyclic permutations), then $[\psi]$ is biproduct
(product only in some bipartition).
\item If $\pi_{\min}([\psi])=123$, then $[\psi]$ is GME (not product in any bipartition).
\end{itemize}
As noted in Remark~\ref{rem:N3}, for $N=3$ the meet of two distinct bipartitions such as $1|23$ and $2|13$ is $1|2|3$, so being product in two different bipartitions implies being fully product. For $N\ge 4$ this generally fails (the meet can be of the form $1|2|\text{(the rest)}$).
\end{example}

\paragraph{(2) Prime ideal--type ``prime'': the product locus is an irreducible closed set, and its vanishing ideal is prime.}

As we already saw in Proposition~\ref{prop:bisep-primary-general}, for a bipartition $A|A^c$ the
product locus
\[
\Sigma_{A|A^c}\subset \PP(H)
\]
is an irreducible closed subset, and its vanishing ideal
\[
I_{A|A^c}:=I(\Sigma_{A|A^c})\subset S
\]
is a \textbf{prime ideal}. Moreover, the biproduct locus from
Definition~\ref{def:bisep-GME-general},
\[
X_{\mathrm{bisep}}^{(N)}
:=\bigcup_{\emptyset\neq A\subsetneq [N]}\Sigma_{A|A^c}
\subset \PP(H),
\]
admits a primary decomposition -- corresponding to decomposition into irreducible components-- 
(Proposition~\ref{prop:bisep-primary-general}):
\[
I\!\left(X_{\mathrm{bisep}}^{(N)}\right)
=\bigcap_{\emptyset\neq A\subsetneq [N]} I_{A|A^c}.
\]

Recall the basic example $\Spec\mathbb Z$ in number theory: for an integer $n\neq 0$,
\[
V(n):=\{\mathfrak p\in\Spec\mathbb Z\mid (n)\subset \mathfrak p\}
=\{(p)\mid p\ \text{is prime and}\ p\mid n\},
\]
so
\[
V(n)=\bigcup_{p\mid n}V(p),\qquad
I(V(n))=\bigcap_{p\mid n}(p).
\]

\smallskip
In exactly the same form, on the quantum side we have
\[
X_{\mathrm{bisep}}^{(N)}
=\bigcup_{A|A^c}\Sigma_{A|A^c},
\qquad
I\!\left(X_{\mathrm{bisep}}^{(N)}\right)=\bigcap_{A|A^c} I_{A|A^c}.
\]
In this sense, the analogy is quite close.

\smallskip
However, an important difference is that in $\Spec\mathbb Z$ a ``prime number $p$'' directly generates the ``prime ideal $(p)$'', whereas in $\PP(H)$ a ``state $[\psi]$'' does not naturally produce a prime ideal of the kind above. Another difference is that in number theory $(p)$ is a closed point in $\Spec\mathbb Z$, while the GME locus is Zariski open
(Proposition~\ref{prop:GME-open-dense-general}).

\begin{remark}[A remark for careful readers]
\begin{itemize}
\item $V(n)$ remembers only the set of prime divisors of $n$ and forgets exponents (the $p$-adic valuation). For instance, $V(p^k)=V(p)$. Thus the comparison above is not with prime factorization itself, but rather with the level of radicals / sets of irreducible components.

\item A point $[\psi]\in\PP(H)$ corresponds to the point ideal $I([\psi])$ (a maximal homogeneous ideal), which is prime. However, the prime ideals relevant to quantum states here are those like $I_{A|A^c}=I(\Sigma_{A|A^c})$, thus prime ideals defining product loci (closed sets defined by equations). In that sense, the roles are different: a single state $[\psi]$ does not naturally generate a prime ideal that encodes ``productness'' in the above way.
\end{itemize}
\end{remark}

\paragraph{Summary.}
Summarizing the discussion, one may say the following.
\begin{enumerate}
\item \textbf{Prime element--type ``prime'':}
with respect to a fixed bipartition, or with respect to all bipartitions, one can compare the GME property---\emph{not decomposable under tensor product}---to the notion of an integer being ``indecomposable (prime).'' However, this is a relative irreducibility depending on the chosen partition, not an absolute notion as in number theory. (If a factorization exists, it is unique projectively.)

\item \textbf{Prime ideal--type ``prime'':}
the product locus $\Sigma_{A|A^c}$ is an irreducible closed subset, hence $I_{A|A^c}$ is a prime ideal. The biproduct locus is the union of these loci, and its vanishing ideal admits a primary decomposition as an \emph{intersection of prime ideals}.
\end{enumerate}
Thus, as a slogan,
\[
\boxed{\ \text{GME is the set of points that do not lie in the union of the prime components given by product loci.}\ }
\]

In this section we took a small peek at the worlds of quantum states and number theory through the lens of multiplication. In \S\ref{sec:secant-tensor-rank}, we will make further observations from the viewpoint of addition.

\section{Additive decomposition: tensor rank and secant varieties}
\label{sec:secant-tensor-rank}

In this section, as a viewpoint corresponding to the ``multiplicative side'' discussed so far (whether one can factorize under the tensor product), we introduce the \textbf{additive side} (\emph{superposition}). Just as number theory has the basic pair of problems
\[
\text{multiplication: prime factorization}
\qquad\text{addition: Waring's problem (how many summands?)},
\]
for quantum states one can naturally consider:
\begin{align}
\begin{aligned}
\text{multiplication}&: \text{can the state be decomposed as a product state?}\notag\\
\text{addition}&: \text{how many product states are needed in a sum?}
\end{aligned}
\end{align}
We treated the ``multiplicative'' viewpoint in \S\ref{sec:integer}. Now we turn to the ``additive'' viewpoint. A standard reference for this subsection is \cite{landsberg2011tensors}.

\subsection{Measuring by sums of product states: tensor rank}
\label{subsec:tensor-rank}

Fix a subsystem decomposition
\[
H \;=\; H_1\otimes\cdots\otimes H_N,
\]
and regard pure states as points of the projective space $\PP(H)$.

\begin{definition}[Tensor rank]
For $0\neq \psi\in H$, define the \emph{tensor rank} $\mathrm{rank}_\otimes(\psi)$ by
\[
\mathrm{rank}_\otimes(\psi)
:=
\min\left\{\,r\ \middle|\ 
\psi=\sum_{i=1}^{r} v_{i1}\otimes\cdots\otimes v_{iN}\ \text{for some }v_{ij}\in H_j\right\}.
\]
For a projective state $[\psi]$, this does not change under scalar multiplication, so we may write
\[
\mathrm{rank}_\otimes([\psi]) := \mathrm{rank}_\otimes(\psi).
\]
\end{definition}

In general, even if one writes
\[
\psi=\sum_{i=1}^r c_i\,(v_{i1}\otimes\cdots\otimes v_{iN})
\qquad(c_i\in\C),
\]
one can absorb $c_i$ into (say) $v_{i1}$ in each term:
\[
c_i\,(v_{i1}\otimes\cdots\otimes v_{iN})
=(c_i v_{i1})\otimes v_{i2}\otimes\cdots\otimes v_{iN}.
\]
Thus the coefficient-free form used in the definition is sufficient.

\smallskip
When $N=2$, $H=H_A\otimes H_B$ can be identified with matrices, and rank-$1$ tensors correspond to rank-$1$ matrices. Hence $\mathrm{rank}_\otimes(\psi)$ agrees with the matrix (Schmidt) rank (Theorem~\ref{thm:SR_is_rk}).

\subsection{A filtration as closed sets: secant varieties and border rank}
\label{subsec:secant-variety}

Tensor rank measures whether a state can be written as a \emph{finite sum} of product states, but in multipartite systems an important phenomenon occurs:
\[
\text{the set ``rank $\le r$'' is not Zariski closed.}
\]
Since algebraic geometry naturally treats objects defined by equations (closed sets), we introduce the notion obtained by taking the Zariski closure (border rank).

\begin{remark}[Failure of closedness typically occurs for $N\ge 3$]
In the bipartite case ($N=2$), the set $\{[\psi]\mid \mathrm{rank}_\otimes(\psi)\le r\}$ is a determinantal locus defined by $(r+1)\times(r+1)$ minors, hence Zariski closed (see \S\ref{subsec:secant-bipartite}). In contrast, for $N\ge 3$ this set is generally not closed, and tensor rank can drop in a limit (the latter example of the $W$ state is a typical instance).
\end{remark}

\begin{definition}[$r$-th secant variety]
For a projective variety $X\subset\PP^m$, define its \emph{$r$-th secant variety} $\sigma_r(X)$ by
\[
\sigma_r(X)
:=
\overline{
\bigcup_{x_1,\dots,x_r\in X}
\langle x_1,\dots,x_r\rangle
}
\ \subset\ \PP^m.
\]
Here $\langle x_1,\dots,x_r\rangle$ denotes the smallest projective linear subspace containing them (a secant line for $r=2$, a secant plane for $r=3$), and $\overline{(\ \cdot\ )}$ denotes the Zariski closure.
\end{definition}

\begin{remark}
The bar $\overline{(\cdot)}$ in the definition of $\sigma_r(X)$ is the Zariski closure. Accordingly, the condition $[\psi]\in \sigma_r(X)$ can be rephrased, for example, as follows: there exists a one-parameter family $[\psi(t)]$ ($t\in k^\times$) such that for each $t\neq 0$, $[\psi(t)]$ lies on a projective linear subspace $\langle x_1(t),\dots,x_r(t)\rangle$ spanned by $r$ points of $X$, and (allowing scalar rescaling in projective space) $[\psi]$ appears as the limit $t\to 0$.
\end{remark}

\begin{definition}[Border rank (secant rank)]
For $0\neq \psi\in H$, define the \emph{border rank} (the \emph{secant rank}) by
\[
\underline{\mathrm{rank}}_\otimes(\psi)
:=
\min\{\,r \mid [\psi]\in \sigma_r(\Sigma)\,\}.
\]
The same definition applies to a projective state $[\psi]$.
\end{definition}

The inequality $\underline{\mathrm{rank}}_\otimes(\psi)\le r$ means that $[\psi]$ can be obtained as a limit of points lying in projective spaces spanned by $r$ product states. Hence, in general,
\[
\underline{\mathrm{rank}}_\otimes(\psi)\le \mathrm{rank}_\otimes(\psi),
\]
since allowing limits can reduce the required number of summands.

\begin{proposition}[Tensor rank, border rank, and secant varieties]
\label{prop:rank-vs-secant}
Let $0\neq \psi\in H$. Then:
\begin{enumerate}
\item If $\mathrm{rank}_\otimes(\psi)\le r$, then $[\psi]\in \sigma_r(\Sigma)$.
\item $[\psi]\in \sigma_r(\Sigma)\ \Longleftrightarrow\ \underline{\mathrm{rank}}_\otimes(\psi)\le r$.
\end{enumerate}
\end{proposition}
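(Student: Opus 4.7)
The plan is to observe that the proposition is essentially a matter of unpacking definitions, with all the genuine content in part~(1); part~(2) will then be essentially tautological. I would prove~(1) first and then dispatch~(2) in a short remark.

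For part~(1), I would start from an explicit tensor-rank decomposition $\psi=\sum_{i=1}^{r} v_{i1}\otimes\cdots\otimes v_{iN}$ realizing the bound, and first discard any summand whose product vanishes (equivalently, any summand where some factor is zero). What remains is a sum of $r'\le r$ \emph{nonzero} rank-one tensors, whose projective classes $x_i:=[v_{i1}\otimes\cdots\otimes v_{iN}]\in\Sigma$ are bona-fide points of the Segre variety. Since $\psi$ lies in the linear span of those chosen representatives in $H$, the projective point $[\psi]$ lies in $\langle x_1,\dots,x_{r'}\rangle$. If $r'<r$, I would pad the list by repeating any single $x_i$ an additional $r-r'$ times; the projective span is unaffected by repetitions, so the padded $r$-tuple still has $[\psi]$ in its span. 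Therefore $[\psi]$ already belongs to the pre-closure union appearing in the definition of $\sigma_r(\Sigma)$, and in particular to $\sigma_r(\Sigma)$.

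For part~(2), both directions follow directly from the definition $\underline{\mathrm{rank}}_\otimes(\psi):=\min\{r:[\psi]\in\sigma_r(\Sigma)\}$. The forward implication uses the monotonicity $\sigma_{r-1}(\Sigma)\subseteq\sigma_r(\Sigma)$, which itself follows from the definition by letting one of the $r$ points in the span coincide with another. The reverse implication simply observes that if $[\psi]\in\sigma_r(\Sigma)$ then $r$ lies in the (nonempty) set $\{s:[\psi]\in\sigma_s(\Sigma)\}$, whose minimum is therefore at most $r$.

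There is no substantial obstacle in the argument; the one point worth flagging is that part~(1) produces membership only in the \emph{pre-closure} union, so it is essential that $\sigma_r(\Sigma)$ is defined as the Zariski closure rather than the bare union -- otherwise the analogue of the proposition would already force the closedness of tensor-rank loci, which, as the remark preceding the proposition warns, fails in general for $N\ge 3$. The real subtlety thus lies not in the proof itself but in the fact that the converse of~(1) can fail, which is precisely why border rank, and not tensor rank, is the right algebro--geometric invariant in the multipartite setting.
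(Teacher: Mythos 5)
Your proof is correct and follows essentially the same route as the paper: part (1) by exhibiting $[\psi]$ in the projective span of $r$ points of $\Sigma$ (hence in the pre-closure union defining $\sigma_r(\Sigma)$), and part (2) as an unpacking of the definition of border rank. Your extra care with zero summands, padding by repeated points, and the monotonicity $\sigma_{s}(\Sigma)\subseteq\sigma_r(\Sigma)$ only makes explicit what the paper's proof leaves implicit.
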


\begin{proof}
(1) If $\psi=\sum_{i=1}^r \psi_i$ with each $\psi_i$ a product state, then $[\psi]$ lies in the projective linear subspace $\langle[\psi_1],\dots,[\psi_r]\rangle$ spanned by $[\psi_1],\dots,[\psi_r]\in\Sigma$. Hence $[\psi]$ lies in the union of such linear subspaces, and therefore in its Zariski closure $\sigma_r(\Sigma)$.

(2) This is exactly the definition of border rank.
\end{proof}

Since $\sigma_r(\Sigma)$ is Zariski closed (hence a projective variety), its vanishing ideal
\[
I(\sigma_r(\Sigma))
\]
is defined. In this way, the filtration of states that can be approximated by $\le r$ product states is obtained as an object that can be detected by equations (homogeneous polynomials).

\subsection{The bipartite case}
\label{subsec:secant-bipartite}

The bipartite filtration $R_{\le r}$ discussed in \S\ref{sec:variety} coincides exactly with the
secant varieties.

\begin{proposition}[Bipartite case: $\sigma_r(\Sigma)=R_{\le r}$]
\label{prop:secant=det}
Let $H=H_A\otimes H_B$, and let $\Sigma=\PP(H_A)\times\PP(H_B)\subset\PP(H)$ be the Segre variety.
Then for any $r$,
\[
\sigma_r(\Sigma)=R_{\le r}
\]
holds. (When $r\ge \min(d_A,d_B)$, both sides equal $\PP(H)$, so this is trivial.) In particular, in the bipartite case,
\[
\underline{\mathrm{rank}}_\otimes(\psi)=\mathrm{rank}_\otimes(\psi)=\mathrm{SR}(\psi)
\]
holds (rank and border rank coincide).
\end{proposition}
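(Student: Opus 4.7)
The plan is to prove the two inclusions $\sigma_r(\Sigma)\subseteq R_{\le r}$ and $R_{\le r}\subseteq \sigma_r(\Sigma)$ directly, and then read off the equality of ranks as a corollary. The crucial (bipartite-only) feature is that, in this case, $R_{\le r}$ is already Zariski closed and rank-$\le r$ matrices admit \emph{honest} rank-$r$ decompositions, so no limit is needed on either side. The statement for $r\ge \min(d_A,d_B)$ is trivial because then $R_{\le r}=\PP(H)$ and $\sigma_r(\Sigma)\subseteq\PP(H)$, so I will focus on $r<\min(d_A,d_B)$.

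For the inclusion $\sigma_r(\Sigma)\subseteq R_{\le r}$, I would first note that $R_{\le r}$ is cut out by the vanishing of all $(r+1)\times(r+1)$ minors of the flattening $M(\psi)$ (\S\ref{sec:5}), so it is a closed subvariety of $\PP(H)$. Any point in the \emph{open} secant locus $\bigcup_{x_1,\dots,x_r\in\Sigma}\langle x_1,\dots,x_r\rangle$ corresponds to a tensor $\psi=\sum_{i=1}^r c_i\,a_i\otimes b_i$ for some $c_i\in\C$, hence to a flattening $M(\psi)=\sum_{i=1}^r c_i\,a_i b_i^{\mathsf T}$ of rank at most $r$, so it already lies in $R_{\le r}$. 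Taking Zariski closure and using that $R_{\le r}$ is closed yields the inclusion for the whole secant variety.

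For the reverse inclusion $R_{\le r}\subseteq \sigma_r(\Sigma)$, I would take $[\psi]\in R_{\le r}$ and write $M(\psi)=\sum_{i=1}^r u_i v_i^{\mathsf T}$ as an honest sum of rank-one matrices (via SVD, or more elementarily by picking a basis of the column space). By Proposition~\ref{prop:matrixrank1}, each summand corresponds to a point $[u_i\otimes v_i]\in\Sigma$, and $[\psi]$ lies in the projective span $\langle [u_1\otimes v_1],\dots,[u_r\otimes v_r]\rangle$. Hence $[\psi]$ already lies in the secant union \emph{before} taking Zariski closure, and in particular in $\sigma_r(\Sigma)$.

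Once the set-theoretic equality is in place, the statement $\underline{\mathrm{rank}}_\otimes(\psi)=\mathrm{rank}_\otimes(\psi)=\mathrm{SR}(\psi)$ is immediate: by Proposition~\ref{prop:rank-vs-secant}, $[\psi]\in\sigma_r(\Sigma)\iff \underline{\mathrm{rank}}_\otimes(\psi)\le r$, while $[\psi]\in R_{\le r}\iff \mathrm{SR}(\psi)\le r$ by definition of the flattening rank; and $\mathrm{rank}_\otimes(\psi)=\mathrm{SR}(\psi)$ in the bipartite case by Proposition~\ref{prop:matrixrank1} (applied to each rank-one summand in an SVD-style decomposition). I expect no serious obstacle: the only mildly delicate point is that, in the reverse inclusion, one must note that a genuine rank decomposition exists without passing to a limit, which is exactly the bipartite miracle that fails for $N\ge 3$ and justifies introducing border rank in the first place.
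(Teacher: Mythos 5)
Your proposal is correct and follows essentially the same route as the paper's proof: one inclusion via the closedness of the determinantal locus $R_{\le r}$ (cut out by $(r+1)$-minors) applied to the Zariski closure, and the reverse inclusion via an honest rank-one decomposition of a rank-$\le r$ flattening, with the rank equalities then read off from Proposition~\ref{prop:rank-vs-secant} and Theorem~\ref{thm:SR_is_rk}. Your added remark that no limit is needed in the bipartite case is exactly the point the paper is implicitly exploiting, so nothing is missing.
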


\begin{proof}
$\sigma_r(\Sigma)\subset R_{\le r}$: By definition, $\sigma_r(\Sigma)$ is the Zariski closure of the union of projective subspaces spanned by $r$ rank-$1$ matrices. Any point in this union clearly has matrix rank $\le r$, and the condition ``matrix rank $\le r$'' is a closed condition defined by the vanishing of $(r+1)$-minors. Hence it remains true after taking Zariski closure, so $\sigma_r(\Sigma)\subset R_{\le r}$.

$\sigma_r(\Sigma)\supset R_{\le r}$: Any matrix of rank $\le r$ can be decomposed (by basic linear algebra) as a sum of at most $r$ rank-$1$ matrices. Therefore any point of $R_{\le r}$ lies on a projective linear subspace spanned by $r$
points of $\Sigma$, hence belongs to $\sigma_r(\Sigma)$.
\end{proof}

In the bipartite case, the ideal defined by minors therefore gives
\[
I(\sigma_r(\Sigma))=I(R_{\le r}).
\]

\subsection{In multipartite systems: rank and border rank can differ}
\label{subsec:rank-vs-border-example}

In multipartite systems, since the tensor-rank loci are not closed, a typical phenomenon is that rank and border rank differ. As a nontrivial example, we consider the $W$ state of $3$ qubits and show that $\mathrm{rank}_\otimes=3$ while $\underline{\mathrm{rank}}_\otimes=2$.

Let $H=(\C^2)^{\otimes 3}$ and consider
\[
|W\rangle:=|001\rangle+|010\rangle+|100\rangle.
\]
This state has tensor rank $3$ but border rank $2$.

To see border rank $\le 2$, consider the $t$-dependent state
\[
\psi(t):=(|0\rangle+t|1\rangle)^{\otimes 3}-|000\rangle.
\]
The right-hand side is the difference of two product states, so for $t\neq 0$ we have $\mathrm{rank}_\otimes(\psi(t))\le 2$. Expanding,
\[
\psi(t)=
t(|001\rangle+|010\rangle+|100\rangle)+t^2(\cdots)+t^3|111\rangle.
\]
Since projective space identifies scalar multiples,
\[
[\psi(t)]
=\left[\frac{1}{t}\psi(t)\right]
=\bigl[\,|001\rangle+|010\rangle+|100\rangle+t(\cdots)+t^2|111\rangle\,\bigr]
\qquad(t\neq 0),
\]
so $[\psi(t)]\to [W]$ as $t\to 0$. (And since $W$ is not a product state, this implies $\underline{\mathrm{rank}}_\otimes(W)=2$.)

\begin{remark}[A simple proof that $\mathrm{rank}_\otimes(W)=3$]
From $W=|001\rangle+|010\rangle+|100\rangle$, it is clear that $\mathrm{rank}_\otimes(W)\le 3$.

Conversely, assume $\mathrm{rank}_\otimes(W)\le 2$. Then there exists a decomposition
\[
W=u_1\otimes v_1\otimes w_1+u_2\otimes v_2\otimes w_2.
\]
View $W$ as an element of $H_1\otimes (H_2\otimes H_3)$ and consider the contraction map $W:H_1^\vee\to H_2\otimes H_3$. Then $\mathrm{Im}(W)$ is contained in $\mathrm{span}\{v_1\otimes w_1,\ v_2\otimes w_2\}$. In particular, if $W$ had rank $2$, its image would have to be generated by two linearly independent rank-$1$ (product) vectors (if they were dependent, $W$ would have rank $1$).

However, a direct computation shows that for the basis $\{|0\rangle,|1\rangle\}$ of $H_1$ and the dual basis $\{\varepsilon_0,\varepsilon_1\}\subset H_1^\vee$,
\[
\varepsilon_0(W)=|01\rangle+|10\rangle,\qquad
\varepsilon_1(W)=|00\rangle,
\]
hence
\[
\mathrm{Im}(W)=\mathrm{span}\{|00\rangle,\ |01\rangle+|10\rangle\}\subset H_2\otimes H_3.
\]
Within this $2$-dimensional subspace, the only rank-$1$ vectors are scalar multiples of $|00\rangle$. Indeed, identifying $H_2\otimes H_3\simeq \Mat_{2\times 2}$, a vector $a|00\rangle+b(|01\rangle+|10\rangle)$ corresponds to $\begin{psmallmatrix}a&b\\ b&0\end{psmallmatrix}$, whose determinant is $-b^2$. Thus it has rank $1$ ($\det=0$) only when $b=0$. This contradicts $\mathrm{rank}_\otimes(W)\le 2$.

Therefore $\mathrm{rank}_\otimes(W)\ge 3$, and hence $\mathrm{rank}_\otimes(W)=3$.
\end{remark}

This example of the $W$ state suggests the following.
\begin{itemize}
\item \textbf{Tensor rank} measures whether one can write the state \emph{exactly} as a sum of $r$ terms (but the locus is generally not closed).
\item \textbf{Secant varieties} measure whether one can \emph{approximate} the state by a sum of $r$
terms (they are closed and have a vanishing ideal).
\end{itemize}

\subsection{The ideal of a secant variety}
\label{subsec:secant-ideal}

\begin{definition}[Vanishing ideal of a secant variety]
For the secant variety $\sigma_r(\Sigma)\subset\PP(H)$, write its vanishing ideal as
\[
I\bigl(\sigma_r(\Sigma)\bigr)\subset S.
\]
This is the set of homogeneous polynomial equations that must be satisfied by states of border rank $\le r$.
\end{definition}

In general it is difficult to write down a complete set of generators for $I(\sigma_r(\Sigma))$, but the minors coming from flattenings remain a powerful tool, as follows.

If $\underline{\mathrm{rank}}_\otimes(\psi)\le r$, then for \emph{every} bipartition $A|B$, any
flattening must satisfy
\[
\rk\bigl(M_{A|B}(\psi)\bigr)\le r.
\]
Indeed, if $\psi$ is a limit of $r$ rank-$1$ tensors, then after any flattening it becomes a limit of matrices of rank $\le r$, and the matrix rank condition is closed. (More precisely: a flattening is induced by a linear map $H\to \Hom(H_B^\vee,H_A)$ and hence a morphism of projective spaces. The locus of matrices of rank $\le r$ is a determinantal Zariski closed subset, so its preimage is closed and contains $\sigma_r(\Sigma)$.)

Therefore, for any bipartition $A|B$,
\[
\text{all $(r+1)\times(r+1)$ minors vanish.}
\]
Equivalently,
\[
\sigma_r(\Sigma)\ \subseteq\ 
\bigcap_{A|B} V\bigl(I_{A|B}^{(r+1)}\bigr),
\]
where $I_{A|B}^{(r+1)}$ denotes the ideal generated by the $(r+1)$-minors of the flattening matrix $M_{A|B}$.

\smallskip
In the bipartite case this inclusion is an equality (Proposition~\ref{prop:secant=det}), but in the multipartite case the flattening minors alone generally do not generate $I(\sigma_r(\Sigma))$. Higher degree equations are needed for an exact description.

Secant varieties form an increasing sequence:
\[
\sigma_1(\Sigma)=\Sigma\ \subset\ \sigma_2(\Sigma)\ \subset\ \cdots\ \subset\ \PP(H).
\]
This yields a filtration by border rank. Since ideals are contravariant, we obtain a descending chain of ideals:
\[
I(\Sigma)=I(\sigma_1(\Sigma))\ \supset\ I(\sigma_2(\Sigma))\ \supset\ \cdots\ \supset\ (0).
\]
This is the additive analogue of the picture in Proposition~\ref{prop:refinement-inclusion}: coarser partitions give larger varieties and smaller ideals.

Let $\dim \Sigma = \sum_{i=1}^N(d_i-1)$. In general,
\[
\dim \sigma_r(\Sigma)\ \le\ \min\bigl\{\,r(\dim \Sigma+1)-1,\ \dim \PP(H)\,\bigr\}.
\]
The right-hand side is called the \emph{expected dimension}. Intuitively, it is the sum of the degrees of freedom to choose $r$ points on $\Sigma$ (namely $r\dim\Sigma$) and the degrees of freedom to choose a point in the projective subspace they span (namely $r-1$). When the actual dimension is strictly smaller than the expected dimension, $\Sigma$ is called \emph{secant defective} (see, e.g., \cite{HarrisAlgebraicGeometry,landsberg2011tensors}).

\section{Bipartite Quantum Entanglement}
\label{sec:5}

\subsection{Decomposition of $\End(H_A\otimes H_B)$}
As a basic isomorphism for finite-dimensional spaces, we have
\[
\End(H_A\otimes H_B)\simeq \End(H_A)\otimes \End(H_B),
\]
which, in a choice of bases, is represented by the Kronecker product of matrices.

Using the vector-space direct sum decompositions
\[
\End(H_A)=\C\cdot I_A \oplus \End_0(H_A),\qquad
\End(H_B)=\C\cdot I_B \oplus \End_0(H_B)
\]
here $\End_0$ denotes the trace-zero part, we can expand the tensor product as
\begin{align*}
\End(H_A)\otimes\End(H_B)
&=
(\C I_A \oplus \End_0(H_A))\otimes (\C I_B \oplus \End_0(H_B))\\
&=
(\C I_A\otimes \C I_B)
\oplus (\End_0(H_A)\otimes \C I_B)
\oplus (\C I_A\otimes \End_0(H_B))
\oplus (\End_0(H_A)\otimes \End_0(H_B)).
\end{align*}
Here $\C I_A\otimes \C I_B=\C(I_A\otimes I_B)=\C I_{AB}$ is the scalar part of the whole algebra. Therefore the trace-zero part $\End_0(H_A\otimes H_B)$ decomposes as
\[
\End_0(H_A\otimes H_B)
\simeq
(\End_0(H_A)\otimes I_B)\ \oplus\ (I_A\otimes \End_0(H_B))\ \oplus\ (\End_0(H_A)\otimes \End_0(H_B)).
\]
Since the last summand is trace-zero on both sides, it is typically the component that entangles the states (the \emph{entangling sector}).

\subsection{Schmidt rank}
\begin{definition}[Schmidt rank]
For $\psi\in H_A\otimes H_B$, define the \emph{Schmidt rank} $\mathrm{SR}(\psi)$ by
\[
\mathrm{SR}(\psi):=\min\left\{r\ \middle|\ \psi=\sum_{\alpha=1}^r a_\alpha\otimes b_\alpha\ \text{for some }a_\alpha\in H_A,\ b_\alpha\in H_B\right\}.
\]
For a projective state $[\psi]$, this does not change under nonzero scalar rescaling of the representative, so one may also write $\mathrm{SR}([\psi])$.
\end{definition}

From here on, we take the inner product $\langle\cdot,\cdot\rangle$ to be linear in the second argument (and conjugate-linear in the first). For a complex Hilbert space $H$, the Riesz isomorphism from the conjugate space $\overline H$ to the dual $H^\vee$,
\[
J_H:\overline H \xrightarrow{\ \simeq\ } H^\vee,\qquad
J_H(\overline y)(x):=\langle y,x\rangle,
\]
is $\mathbb C$-linear.

In general, for finite-dimensional complex vector spaces $V,W$, there is a canonical isomorphism
\[
V\otimes W \;\simeq\; \Hom(W^\vee,V),\qquad
(v\otimes w)\mapsto(\ell\mapsto v\,\ell(w)).
\]
Composing this with $J_{H_B}$ yields
\[
H_A\otimes H_B \;\simeq\; \Hom(\overline{H}_B,H_A).
\]

Fix bases $\{e_i\}$ of $H_A$, $\{f_j\}$ of $H_B$, and the dual basis $\{f^j\}\subset H_B^\vee$.
For
\[
\psi=\sum_{i,j}\psi_{ij}\,e_i\otimes f_j,
\]
define
\[
\widetilde T_\psi:H_B^\vee\to H_A,\qquad
\widetilde T_\psi(\ell):=(\id\otimes \ell)(\psi).
\]
Then
\[
\widetilde T_\psi(f^j)=\sum_i\psi_{ij}e_i,
\]
so the matrix representation of $\widetilde T_\psi$ is exactly $M(\psi)=(\psi_{ij})$. Also, setting
$T_\psi:=\widetilde T_\psi\circ J_{H_B}$ gives a map $T_\psi:\overline{H}_B\to H_A$.

\begin{theorem}\label{thm:SR_is_rk}
If $\psi\neq 0$, then
\[
\mathrm{SR}(\psi)=\rk(M(\psi)).
\]
\end{theorem}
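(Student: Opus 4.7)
The plan is to reduce the statement to the elementary linear-algebra fact that the rank of a matrix equals the minimal number of rank-one matrices needed to express it as a sum, using the canonical identification $H_A\otimes H_B\simeq \Hom(H_B^\vee,H_A)$ set up just before the statement. Under this identification $\psi$ corresponds to the linear map $\widetilde T_\psi:H_B^\vee\to H_A$ whose matrix in the chosen bases $\{e_i\}$ and $\{f^j\}$ is precisely $M(\psi)$. In particular $\rk(M(\psi))=\rk(\widetilde T_\psi)$, so it is enough to prove $\mathrm{SR}(\psi)=\rk(\widetilde T_\psi)$.

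For the bound $\mathrm{SR}(\psi)\le \rk(M(\psi))$, set $r:=\rk(M(\psi))$ and write $M(\psi)=\sum_{\alpha=1}^{r} u_\alpha v_\alpha^{\mathsf T}$ as a sum of $r$ rank-one matrices (for instance by choosing a basis of the column space of $M(\psi)$ and expressing each column in it, or via the SVD). By Proposition~\ref{prop:matrixrank1}, each rank-one block $u_\alpha v_\alpha^{\mathsf T}$ is the flattening of a product tensor $a_\alpha\otimes b_\alpha$ with $a_\alpha=\sum_i(u_\alpha)_i e_i$ and $b_\alpha=\sum_j (v_\alpha)_j f_j$. Summing yields $\psi=\sum_{\alpha=1}^{r} a_\alpha\otimes b_\alpha$, which is a Schmidt decomposition of length $r$.

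For the reverse bound $\mathrm{SR}(\psi)\ge \rk(M(\psi))$, pick any minimal decomposition $\psi=\sum_{\alpha=1}^{r'} a_\alpha\otimes b_\alpha$ with $r'=\mathrm{SR}(\psi)$. Expanding each summand in the chosen bases gives $M(\psi)=\sum_{\alpha=1}^{r'} a_\alpha b_\alpha^{\mathsf T}$, a sum of $r'$ rank-one matrices, so subadditivity of matrix rank forces $\rk(M(\psi))\le r'$. Equivalently, at the level of the map $\widetilde T_\psi$, the formula $\widetilde T_\psi(\ell)=\sum_\alpha \ell(b_\alpha)\,a_\alpha$ shows $\mathrm{im}(\widetilde T_\psi)\subseteq \mathrm{span}(a_1,\dots,a_{r'})$, hence $\rk(\widetilde T_\psi)\le r'$. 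Combining the two inequalities gives the equality.

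The proof has no serious obstacle; the only thing requiring care is the bookkeeping that ensures the chosen bases on the $H_A$ and $H_B$ sides correspond to the correct row/column indexing of $M(\psi)$, and that the appearance of the dual basis in $\widetilde T_\psi$ matches the convention $M(\psi)=(\psi_{ij})$. Once these conventions are pinned down, both directions reduce to Proposition~\ref{prop:matrixrank1} together with the standard fact that an arbitrary $r$-rank matrix is a sum of $r$ rank-one matrices, and the theorem becomes a direct tensor-theoretic reformulation of that statement.
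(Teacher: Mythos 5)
Your proof is correct and follows essentially the same route as the paper: both directions rest on the identification $\psi\leftrightarrow\widetilde T_\psi$ with matrix $M(\psi)$, the image-containment argument giving $\rk(M(\psi))\le \mathrm{SR}(\psi)$, and a length-$\rk(M(\psi))$ decomposition built from a basis of the column space (the paper phrases this abstractly via a basis of $\mathrm{im}(\widetilde T_\psi)$ and double duality, you phrase it as a sum of rank-one matrices, but the content is identical). No gaps.
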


\begin{proof}
Let $\widetilde T_\psi:H_B^\vee\to H_A$ be as above. By the matrix representation, $\rk(\widetilde T_\psi)=\rk(M(\psi))$.

(Step 1) If $\psi=\sum_{\alpha=1}^r a_\alpha\otimes b_\alpha$, then for any $\ell\in H_B^\vee$,
\[
\widetilde T_\psi(\ell)=(\id\otimes \ell)(\psi)=\sum_{\alpha=1}^r a_\alpha\,\ell(b_\alpha).
\]
Hence $\im(\widetilde T_\psi)\subset \mathrm{span}\{a_1,\dots,a_r\}$, so $\rk(\widetilde T_\psi)\le r$. Therefore $\rk(M(\psi))\le r$, and thus $\mathrm{SR}(\psi)\ge \rk(M(\psi))$.

(Step 2) Conversely, assume $\rk(M(\psi))=\rk(\widetilde T_\psi)\le r$. Choose a basis $a_1,\dots,a_s$ of $\im(\widetilde T_\psi)$ with $s\le r$. Then there exist unique linear functionals $c_\alpha\in (H_B^\vee)^\vee$ such that
\[
\widetilde T_\psi(\ell)=\sum_{\alpha=1}^s a_\alpha\,c_\alpha(\ell)
\qquad(\forall \ell\in H_B^\vee).
\]
Since we are in finite dimensions, $(H_B^\vee)^\vee\simeq H_B$, so each $c_\alpha$ can be identified with some $b_\alpha\in H_B$, and we can write $c_\alpha(\ell)=\ell(b_\alpha)$. Hence
\[
\widetilde T_\psi(\ell)=\sum_{\alpha=1}^s a_\alpha\,\ell(b_\alpha)
=(\id\otimes \ell)\Bigl(\sum_{\alpha=1}^s a_\alpha\otimes b_\alpha\Bigr).
\]
Since this holds for all $\ell$, we have $\psi=\sum_{\alpha=1}^s a_\alpha\otimes b_\alpha$. Therefore $\mathrm{SR}(\psi)\le s\le r$, hence $\mathrm{SR}(\psi)\le \rk(M(\psi))$.

Combining Steps 1 and 2 proves the claim.
\end{proof}

\subsection{Schmidt decomposition (SVD)}
\begin{theorem}[Schmidt decomposition]
Let $\psi\in H_A\otimes H_B$ with $\|\psi\|=1$. Then there exist orthonormal systems $\{u_\alpha\}\subset H_A$, $\{v_\alpha\}\subset H_B$, and nonnegative real numbers $\sigma_\alpha\ge 0$ such that
\[
\psi=\sum_{\alpha=1}^{m}\sigma_\alpha\,u_\alpha\otimes v_\alpha,
\qquad
\sigma_\alpha>0\ (1\le \alpha\le m),\ \sigma_\alpha=0\ (\alpha>m).
\]
Here $m=\mathrm{SR}(\psi)=\rk(M(\psi))$, and
\[
\sum_{\alpha=1}^{m}\sigma_\alpha^2=1.
\]
\end{theorem}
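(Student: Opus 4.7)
The plan is to reduce the statement to the singular value decomposition (SVD) of the flattening matrix $M(\psi)$ introduced above, and then translate back to the tensor picture. By Theorem~\ref{thm:SR_is_rk}, we already know $\mathrm{SR}(\psi) = \rk(M(\psi))$, so once we exhibit a decomposition of the required form with exactly $m$ nonzero terms, the identification $m = \mathrm{SR}(\psi) = \rk(M(\psi))$ comes for free. The real content is producing \emph{orthonormal} systems on each side, which is precisely what SVD supplies.

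First I would choose orthonormal bases $\{e_i\}$ of $H_A$ and $\{f_j\}$ of $H_B$, so that $M(\psi)=(\psi_{ij})\in\mathrm{Mat}_{d_A\times d_B}(\C)$ and $\|M(\psi)\|_F^2 = \sum_{i,j}|\psi_{ij}|^2 = \|\psi\|^2 = 1$. Next I would invoke the standard complex SVD: there exist unitary matrices $U\in U(d_A)$ and $V\in U(d_B)$, and a (rectangular) diagonal matrix with nonnegative entries $\sigma_1\ge\sigma_2\ge\cdots\ge 0$, with exactly $m=\rk(M(\psi))$ strictly positive entries, such that
\[
M(\psi)\;=\;U\,\mathrm{diag}(\sigma_1,\ldots)\,V^{*}.
\]
Equivalently, $\psi_{ij}=\sum_{\alpha}\sigma_\alpha\,U_{i\alpha}\,\overline{V_{j\alpha}}$.

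Then I would define $u_\alpha := \sum_i U_{i\alpha}\,e_i \in H_A$ and $v_\alpha := \sum_j \overline{V_{j\alpha}}\,f_j \in H_B$, and check by direct substitution that $\psi=\sum_\alpha \sigma_\alpha\,u_\alpha\otimes v_\alpha$. Unitarity of $U$ gives $\langle u_\alpha,u_\beta\rangle=\sum_i \overline{U_{i\alpha}}U_{i\beta}=\delta_{\alpha\beta}$, and similarly unitarity of $V$ gives $\langle v_\alpha,v_\beta\rangle=\delta_{\alpha\beta}$; the complex conjugation in the definition of $v_\alpha$ is what makes this work (since it is $V^{*}$, not $V^{\mathsf T}$, that appears in SVD and in the convention where $\langle\cdot,\cdot\rangle$ is antilinear in the first slot). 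Finally, unitary invariance of the Frobenius norm gives $\sum_\alpha\sigma_\alpha^2=\|M(\psi)\|_F^2=\|\psi\|^2=1$, and the number of nonzero singular values equals $\rk(M(\psi))$, which by Theorem~\ref{thm:SR_is_rk} equals $\mathrm{SR}(\psi)$, so necessarily this number is $m$.

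The main obstacle is essentially bookkeeping, not mathematics: getting the complex conjugation on the $H_B$ side correct so that both $\{u_\alpha\}$ and $\{v_\alpha\}$ are genuine orthonormal systems in their respective Hilbert spaces (rather than in $\overline{H}_B$ or $H_B^\vee$). Once the convention is fixed, the core analytic step, namely SVD, can be invoked as classical linear algebra (or proved from scratch via the spectral theorem applied to the positive semidefinite matrix $M(\psi)^{*}M(\psi)$, whose eigenvalues are $\sigma_\alpha^2$). The minimality of $m$ is not a separate argument: any representation $\psi=\sum_{\alpha=1}^r a_\alpha\otimes b_\alpha$ forces $\rk(M(\psi))\le r$ by Step~1 of the proof of Theorem~\ref{thm:SR_is_rk}, so $m\le r$ for any such $r$, while our explicit decomposition achieves $r=m$.
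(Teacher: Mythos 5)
Your proof is correct and follows essentially the same route as the paper: reduce to the singular value decomposition of the flattening matrix $M(\psi)$, translate the rank-one terms back into tensors, and use Theorem~\ref{thm:SR_is_rk} plus unitary invariance of the Frobenius norm to identify $m$ and the normalization $\sum_\alpha\sigma_\alpha^2=1$. The only difference is that you invoke SVD as a known result (while correctly noting it follows from the spectral theorem applied to $M^*M$), whereas the paper derives it inline from the spectral theorem for $MM^\dagger$; your care with the conjugation on the $H_B$ side is exactly the point that makes the translation work.
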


\begin{proof}
(Step 1) Fix bases and set $M:=M(\psi)$. Then $MM^\dagger$ is Hermitian and positive semidefinite. By the spectral theorem, there exist a unitary matrix $U$ and eigenvalues $\lambda_\alpha\ge 0$ such that $U^\dagger (MM^\dagger)U=\mathrm{diag}(\lambda_1,\dots,\lambda_{d_A})$. Writing the column vectors of $U$ as $u_\alpha$, we have $(MM^\dagger)u_\alpha=\lambda_\alpha u_\alpha$.

(Step 2) Set $\sigma_\alpha:=\sqrt{\lambda_\alpha}\ge 0$. For $\sigma_\alpha>0$, define
\[
v_\alpha:=\frac{1}{\sigma_\alpha}M^\dagger u_\alpha.
\]
Then $\|v_\alpha\|=1$, and for those $\alpha$ with $\sigma_\alpha>0$, the vectors $\{v_\alpha\}$ are orthogonal.

(Step 3) Moreover, $Mv_\alpha=\sigma_\alpha u_\alpha$. Hence for any $x$,
\[
Mx=\sum_{\alpha:\sigma_\alpha>0}\sigma_\alpha u_\alpha v_\alpha^\dagger x,
\]
so
\[
M=\sum_{\alpha:\sigma_\alpha>0}\sigma_\alpha u_\alpha v_\alpha^\dagger.
\]

(Step 4) Translating the rank-one matrices $u_\alpha v_\alpha^\dagger$ back into tensors gives
\[
\psi=\sum_{\alpha:\sigma_\alpha>0}\sigma_\alpha\,u_\alpha\otimes v_\alpha.
\]
The number of $\sigma_\alpha>0$ equals $\rk(M)$, so $m=\rk(M)=\mathrm{SR}(\psi)$. By orthogonality,
\[
\|\psi\|^2=\sum_{\alpha=1}^m\sigma_\alpha^2.
\]
Since $\|\psi\|=1$, we obtain $\sum_{\alpha=1}^m\sigma_\alpha^2=1$.
\end{proof}

\subsection{Determinantal varieties}\label{sec:variety}
\begin{theorem}
For a $d_A\times d_B$ matrix $M$, the following are equivalent:
\begin{enumerate}
  \item $\rk(M)\le k$,
  \item all $(k+1)\times(k+1)$ minors of $M$ vanish.
\end{enumerate}
\end{theorem}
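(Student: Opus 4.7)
The plan is to prove the two implications separately, relying on two classical facts from linear algebra: (a) the rank of any submatrix of $M$ is at most the rank of $M$, and (b) row rank equals column rank equals rank, so that a square matrix has nonzero determinant if and only if its rank equals its size.

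For the direction $(1)\Rightarrow(2)$, I would observe that for any choice of $k+1$ rows $i_1<\cdots<i_{k+1}$ and $k+1$ columns $j_1<\cdots<j_{k+1}$, the resulting $(k+1)\times(k+1)$ submatrix $N$ is obtained from $M$ by deleting rows and columns, so $\rk(N)\le \rk(M)\le k<k+1$. A square matrix of size $k+1$ with rank at most $k$ is singular, hence $\det(N)=0$. This direction is essentially automatic and requires no serious idea.

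For $(2)\Rightarrow(1)$, I plan to argue by contrapositive: assume $\rk(M)\ge k+1$ and exhibit a nonvanishing $(k+1)\times(k+1)$ minor. Since the column rank is at least $k+1$, I can choose $k+1$ linearly independent columns indexed by $j_1<\cdots<j_{k+1}$. The resulting $d_A\times(k+1)$ submatrix $M'$ has column rank exactly $k+1$, hence by equality of row and column rank its row rank is also $k+1$, so I can select $k+1$ linearly independent rows of $M'$ indexed by $i_1<\cdots<i_{k+1}$. The corresponding $(k+1)\times(k+1)$ submatrix of $M$ then has rank $k+1$, and a square matrix whose rank equals its size is invertible, so its determinant is nonzero. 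This contradicts the hypothesis that all $(k+1)\times(k+1)$ minors vanish.

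I expect the only real subtlety to be the careful use of the equivalence of row rank and column rank at the second step of the contrapositive argument, namely the passage from $k+1$ independent columns of $M$ to a $(k+1)\times(k+1)$ submatrix whose rows are also independent. Everything else is either the definition of determinant or the basic fact that rank is monotone under taking submatrices. This is a standard fact which already appeared in the $k=1$ case earlier in the text (as the equivalence with the vanishing of all $2\times2$ minors); the only difference here is that the scalar criterion $\det=0$ is replaced by the vanishing of a family of minors, and the argument scales uniformly in $k$.
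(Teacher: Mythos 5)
Your proposal is correct and follows essentially the same route as the paper: the forward direction via singularity of any $(k+1)\times(k+1)$ submatrix (the paper phrases it through linear dependence of any $k+1$ columns, you through monotonicity of rank under passing to submatrices, which is the same fact), and the converse by selecting $k+1$ independent columns, using equality of row and column rank to pick $k+1$ rows, and exhibiting a nonzero minor. No gap to report.
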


\begin{proof}
\textbf{(1)$\Rightarrow$(2).}
If $\rk(M)\le k$, then any $(k+1)$ columns are linearly dependent, hence any $(k+1)\times(k+1)$ submatrix has determinant $0$.

\smallskip
\textbf{(2)$\Rightarrow$(1).}
We argue by contradiction. If $\rk(M)\ge k+1$, then at least $(k+1)$ columns are linearly independent. Form the $d_A\times(k+1)$ matrix consisting of those columns. It has column rank $k+1$, hence row rank also $k+1$. Therefore one can choose $(k+1)$ rows so that the resulting $(k+1)\times(k+1)$ submatrix has nonzero determinant, contradicting (2). Thus $\rk(M)\le k$.
\end{proof}

Consider the subset of nonzero vectors in $H_A\otimes H_B\simeq\C^{d_A d_B}$ given by
\[
Z_{\le k}:=\{\,\psi\neq 0\mid \rk(M(\psi))\le k\,\}.
\]
Since this is defined by homogeneous polynomials (the minors), its projectivization
\[
R_{\le k}:=\PP(Z_{\le k})\subset \PP(H_A\otimes H_B)
\]
is a closed subset (an algebraic variety) in projective space. We call $R_{\le k}$ the
\emph{determinantal variety} \cite{BrunsVetterDeterminantal,WeymanSyzygies,HarrisAlgebraicGeometry}.

\subsection{Dimension formula}
The following is a classical fact about determinantal varieties.
\begin{proposition}
Let $1\le k\le \min(d_A,d_B)$. The projective determinantal variety $R_{\le k}\subset \PP^{d_Ad_B-1}$ has dimension
\[
\dim(R_{\le k})=k(d_A+d_B-k)-1,
\]
and codimension
\[
\mathrm{codim}(R_{\le k})=(d_A-k)(d_B-k).
\]
\end{proposition}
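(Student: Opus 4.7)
The plan is to compute the dimension of the affine cone $Z_{\le k} \subset \Mat_{d_A \times d_B}(\C)$ first and then projectivize to obtain $R_{\le k}$. The main tool is the \emph{rank factorization}: every matrix of rank at most $k$ admits a decomposition $M = AB$ with $A \in \Mat_{d_A \times k}(\C)$ and $B \in \Mat_{k \times d_B}(\C)$. This gives a surjective morphism
\[
\mu: \Mat_{d_A \times k}(\C) \times \Mat_{k \times d_B}(\C) \longrightarrow Z_{\le k} \cup \{0\}, \qquad (A, B) \longmapsto AB.
\]
The source is an irreducible affine space of dimension $k(d_A + d_B)$, so the image is irreducible, and the fiber-dimension theorem reduces the problem to computing the dimension of a generic fiber of $\mu$.

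For a matrix $M_0$ of rank exactly $k$, I would argue that $\mu^{-1}(M_0)$ is a single $\GL_k(\C)$-orbit of dimension $k^2$. If $AB = A'B' = M_0$ both have rank $k$, then $A$ and $A'$ must themselves have rank $k$ with image equal to $\im(M_0)$, forcing $A' = AG$ for a unique $G \in \GL_k(\C)$; left injectivity of $A$ on $\C^k$ then forces $B' = G^{-1}B$. The rank-$k$ locus is Zariski-dense in the source (its complement is cut out by the vanishing of some $k\times k$ minor of one factor) and its image is exactly the rank-$k$ stratum of $Z_{\le k}$. Applying the fiber-dimension theorem gives
\[
\dim Z_{\le k} = k(d_A + d_B) - k^2 = k(d_A + d_B - k).
\]
Projectivizing drops the dimension by one, so $\dim R_{\le k} = k(d_A + d_B - k) - 1$, and the codimension in $\PP^{d_A d_B - 1}$ follows from the elementary identity
\[
(d_A d_B - 1) - \bigl(k(d_A + d_B - k) - 1\bigr) = d_A d_B - k d_A - k d_B + k^2 = (d_A - k)(d_B - k).
\]

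The main obstacle is making the fiber-dimension computation rigorous: one must check that rank-$k$ matrices are Zariski-dense in $Z_{\le k}$, so that $k^2$ is really the dimension of the generic fiber rather than of a smaller stratum. This follows from openness and density of full-rank pairs $(A,B)$ in the source of $\mu$. A cleaner alternative, if one prefers to avoid this ad hoc density argument, is to desingularize $Z_{\le k}$ via the incidence variety
\[
\widetilde{Z}_{\le k} = \bigl\{(M, V) : \im(M) \subset V\bigr\} \subset \Mat_{d_A \times d_B}(\C) \times \Gr(k, d_A),
\]
which is a rank-$k d_B$ vector bundle over $\Gr(k, d_A)$, hence irreducible of total dimension $k(d_A - k) + k d_B = k(d_A + d_B - k)$, and which projects birationally onto $Z_{\le k}$. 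Either route yields the stated dimension, and the codimension formula is then purely formal.
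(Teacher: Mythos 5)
Your argument is correct and is essentially the paper's own proof: both rest on the rank factorization $M=AB$ with the $\GL_k$-ambiguity of dimension $k^2$, the density of the rank-exactly-$k$ stratum, and then projectivization plus the arithmetic identity for the codimension; you merely phrase the count via the fiber-dimension theorem rather than the paper's direct parameter count. Your alternative route through the incidence variety over $\Gr(k,d_A)$ is also fine and mirrors the incidence resolution the paper introduces later for the singularity analysis.
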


\begin{proof}
(Step 1)
A $d_A\times d_B$ matrix $M$ of rank exactly $k$ can be written as
\[
M = U V,
\]
where
\[
U\in\mathrm{Mat}_{d_A\times k},\qquad V\in\mathrm{Mat}_{k\times d_B},
\]
with $U$ of full column rank $k$ and $V$ of full row rank $k$. Indeed, $\mathrm{im}(M)\subset \C^{d_A}$ is a $k$-dimensional subspace. Taking a basis and arranging it as columns produces $U$. Each column of $M$ is a linear combination of this basis, and the coefficients form the matrix $V$, giving $M=UV$.

\smallskip
(Step 2)
For any $g\in\GL_k$ we have
\[
(Ug^{-1})(gV)=UV=M,
\]
so $(U,V)$ and $(Ug^{-1},gV)$ determine the same $M$. Conversely, in the full-rank situation, any two pairs $(U,V)$ giving the same $M$ are related by this equivalence: if $UV=U'V'$ and $U,U'$ have column rank $k$, then their column spaces both equal $\mathrm{im}(M)$, so there exists $g\in\GL_k$ with $U'=Ug^{-1}$. Then $UV=U'V'=Ug^{-1}V'$ implies $V=g^{-1}V'$.

\smallskip
(Step 3)
The dimensions are
\[
\dim(\mathrm{Mat}_{d_A\times k})=d_Ak,\qquad
\dim(\mathrm{Mat}_{k\times d_B})=kd_B.
\]
Quotienting by the above equivalence subtracts $\dim(\GL_k)=k^2$, hence
\[
\dim(\{\text{rank }k\text{ matrices}\})
=
d_Ak+kd_B-k^2
=
k(d_A+d_B-k).
\]
The rank $\le k$ locus has the same dimension because the rank-$k$ locus is dense in it (and its closure is the rank $\le k$ locus).

\smallskip
(Step 4)
Passing to projective space removes the $1$-dimensional freedom of scalar rescaling, so
\[
\dim(R_{\le k})=k(d_A+d_B-k)-1.
\]
Since the ambient space is $\PP^{d_Ad_B-1}$, the codimension is
\[
(d_Ad_B-1)-\bigl(k(d_A+d_B-k)-1\bigr)
=
d_Ad_B-kd_A-kd_B+k^2
=
(d_A-k)(d_B-k).
\]
\end{proof}

\section{An Algebro--Geometric Treatment of Quantum Subsystems}\label{sec:6}

\subsection{Local identifications and transition functions (gluing)}
From here on, we consider a \emph{family of quantum systems} over a parameter space $X$ (which may be a manifold or an algebraic variety).

Take an open cover $\{U_i\}$ of $X$, and assume that on each $U_i$ we have local identifications
\[
H_x \simeq \C^n\quad (x\in U_i).
\]
Comparing the identifications on two patches $U_i$ and $U_j$, on the overlap $U_{ij}:=U_i\cap U_j$ there appears a change of basis
\[
\C^n \xrightarrow{\ g_{ij}(x)\ } \C^n.
\]

Since pure states identify vectors up to nonzero scalars, the matrix $g_{ij}(x)$ is meaningful only up to scalar rescaling, hence it naturally defines an element of
\[
g_{ij}(x)\in \PGL_n(\C)=\GL_n(\C)/\C^\times.
\]

Assume that the gluing on $U_{ij}$ is given by
\[
([v],x)\longmapsto([g_{ij}(x)\cdot v],x),\qquad g_{ij}:U_{ij}\to \PGL_n.
\]
When the cocycle condition $g_{ij}g_{jk}g_{ki}=1$ holds on triple overlaps $U_{ijk}$ (in $\PGL_n$), this datum is equivalent to a principal $\PGL_n$-bundle (a $\PGL_n$-torsor) $P\to X$, and the family of pure-state spaces can be written as the associated bundle
\[
\text{family of pure states}\quad \simeq\quad P\times_{\PGL_n}\PP^{n-1}.
\]

\begin{definition}[Severi--Brauer scheme]
A projective family
\[
\pi:SB\to X,\qquad \pi^{-1}(x)\simeq \PP^{n-1}
\]
whose fibers are fppf-locally isomorphic to $\PP^{n-1}$ is called a \emph{Severi--Brauer scheme}.
\end{definition}

\subsection{Observable algebra (a sheaf of matrix algebras)}
In quantum mechanics, observables are elements of $\End(H)$ (matrices). Traditionally one often restricts to self-adjoint Hamiltonians, but many recent physical settings consider Hamiltonians that are not self-adjoint. So, here we work with the full matrix algebra.

If we have a local identification $H|_{U_i}\simeq \C^n\times U_i$, then the observable algebra over $U_i$
looks like
\[
\End(\C^n)\times U_i \simeq \mathrm{Mat}_n(\C)\times U_i.
\]
On overlaps, matrices are glued by conjugation via the change of basis:
\[
A_j(x)\ \longmapsto\ A_i(x)=g_{ij}(x)\,A_j(x)\,g_{ij}(x)^{-1}.
\]
In this way we obtain a sheaf of matrix algebras that is locally a matrix algebra, with gluing by conjugation. This provides a basic motivation for why we consider Azumaya algebras \cite{GilleSzamuelyCSAGC,DeMeyerIngrahamSeparable}.

\subsection{Local subsystem decompositions and the Segre variety}
Now fix a factorization
\[
n=d_1\cdots d_r,
\]
which we regard as a \emph{subsystem type}. Suppose that on each patch $U_i$ we refine the local identification further to a tensor product decomposition
\[
H_x \simeq \C^{d_1}\otimes\cdots\otimes \C^{d_r}
\quad (x\in U_i).
\]
Then, fiberwise, the product states form the Segre variety
\[
\Sigma_d \subset \PP(\C^{d_1}\otimes\cdots\otimes \C^{d_r})\simeq \PP^{n-1}.
\]

If, on an overlap $U_{ij}$, the transition function $g_{ij}(x)\in\PGL_n$ satisfies
\[
g_{ij}(x)\cdot \Sigma_d = \Sigma_d,
\]
then the locus of product states seen over $U_i$ agrees with the one seen over $U_j$ and these glue compatibly. In this case one obtains a global product-state subfibration over $X$.

Conversely, if the transition functions move the Segre variety to a different position, then the locally defined product loci do not glue, and one cannot define the set of product states globally.

\subsection{Stabilizer}
\begin{definition}[Stabilizer $G_d$]\label{def:stabilizer}
As a subgroup of the projective linear automorphism group $\PGL_n$ of $\PP^{n-1}$, define
\[
G_d := \{\,g\in \PGL_n \mid g(\Sigma_d)=\Sigma_d\,\}.
\]
We call $G_d$ the \emph{stabilizer} of the Segre variety $\Sigma_d$ \cite{Westwick1967,GesmundoHanLovitz2024}.
\end{definition}

In the bipartite case $d=(d_A,d_B)$, $G_d$ is typically understood as corresponding to local operations ($A\otimes B$),
\[
\PGL_{d_A}\times \PGL_{d_B}
\]
along with the factor-swap symmetry when $d_A=d_B$. This matches what is called \emph{local invertible operations} (SLOCC-type symmetry) in quantum information. Thus, $G_d$ is the group of allowed gauge transformations that do not destroy the globally compatible subsystem structure.

\begin{proposition}[Subsystem structure as a gluing condition]
Assume that the transition functions of the projective bundle of pure states are given, with respect to some open cover, by
\[
g_{ij}:U_{ij}\to \PGL_n.
\]
For a given choice of bipartite subsystem type $d=(d_A,d_B)$, a global product-state subfibration exists if and only if (after possibly passing to a refinement of the cover) one can take
\[
g_{ij}(x)\in G_d
\]
on each overlap \cite{Ikeda:2026ojm}.
\end{proposition}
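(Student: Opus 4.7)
The plan is a reduction-of-structure-group argument treating the two directions separately, using that $G_d$ is by Definition~\ref{def:stabilizer} the $\PGL_n$-stabilizer of a fixed standard Segre variety $\Sigma_d \subset \PP^{n-1}$, together with the (classical) fact that any two Segre subvarieties of type $d$ in $\PP^{n-1}$ are related by a single element of $\PGL_n$.

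For sufficiency, I assume $g_{ij}(x) \in G_d$ on every overlap. Over each $U_i$ the local tensor-product identification singles out a subfibration $\Sigma^{(i)} := U_i \times \Sigma_d$ inside the trivial $\PP^{n-1}$-bundle on $U_i$, namely the locally defined product states. On $U_{ij}$ the gluing acts by $(x,[v]) \mapsto (x,[g_{ij}(x) v])$, and by hypothesis this maps $\Sigma^{(j)}|_{U_{ij}}$ bijectively onto $\Sigma^{(i)}|_{U_{ij}}$. The cocycle identity holds already at the level of the $g_{ij}$, hence a fortiori on the restrictions to the $\Sigma^{(i)}$, so they glue to a closed subfibration $\Sigma \subset SB(A)$ whose fibers are Segre varieties of type $d$.

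For necessity, I assume a global product-state subfibration $\Sigma \subset SB(A)$ is given and aim to modify the trivializations so that the new transition functions land in $G_d$. Fix $i$. For each $x \in U_i$, the fiber $\Sigma|_x \subset \PP^{n-1}$ is, by the hypothesis that $\Sigma$ is a product-state locus, a Segre variety of type $d$, so there exists $h_i(x) \in \PGL_n$ carrying it to the standard $\Sigma_d$. Replacing the trivialization on $U_i$ by its composition with $h_i(\cdot)$, every fiber of $\Sigma$ becomes the standard $\Sigma_d$ in the new identification. The modified transition functions $\tilde g_{ij}(x) := h_i(x)\, g_{ij}(x)\, h_j(x)^{-1}$ represent the same projective bundle as the $g_{ij}$ and preserve the standard $\Sigma_d$ in every fiber, hence lie in $G_d$ by the definition of the stabilizer.

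The main obstacle is the regularity of $x \mapsto h_i(x)$: a pointwise choice determines $h_i(x)$ only up to right multiplication by $G_d$, so what is canonically defined is a section of the associated quotient bundle with fiber $\PGL_n/G_d$ over $U_i$. Lifting this section to a regular $\PGL_n$-valued map, as needed in order to interpret $h_i$ as a change of trivialization, is the local triviality of a $G_d$-torsor, which in general requires a further étale or analytic refinement of $\{U_i\}$. The refinement clause in the statement is exactly what accommodates this step. Once the $h_i$ are chosen coherently on the refined cover, the resulting $\tilde g_{ij}$ form a cocycle with values in $G_d$, exhibiting the required reduction of structure group from $\PGL_n$ to $G_d$.
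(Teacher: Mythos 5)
Your proof is correct, and in outline it is the same Čech / reduction-of-structure-group argument as the paper's: the sufficiency direction (glue the local pieces $\Sigma_d\times U_i$ using that each $g_{ij}(x)$ preserves $\Sigma_d$) is essentially word-for-word the paper's. Where you diverge is in the necessity direction. The paper's proof simply asserts that, in the chosen trivializations, the subfibration ``agrees (at least up to the chosen identifications) with $\Sigma_d\times U_i$'' and reads off $g_{ij}(x)\in G_d$, thereby sweeping under the rug exactly the point you make explicit: fiberwise one only knows that $\Sigma|_x$ is \emph{some} Segre variety of type $d$, so one must re-trivialize by a family $h_i(x)\in\PGL_n$ carrying it to the standard $\Sigma_d$, and such an $h_i$ is canonically defined only as a section of the associated $\PGL_n/G_d$-bundle; lifting it to an honest $\PGL_n$-valued change of frame uses local triviality of the $G_d$-torsor and may force an (étale or analytic) refinement of the cover. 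Your version therefore explains \emph{why} the refinement clause appears in the statement, which the paper's terse necessity argument does not, at the modest cost of invoking the classical fact that all type-$d$ Segre subvarieties of $\PP^{n-1}$ form a single $\PGL_n$-orbit (harmless here, since the fibers of a product-state subfibration arise from tensor decompositions of the fiber and so are of the form $g\cdot\Sigma_d$ by construction). In short: same strategy, but your treatment of necessity is the more complete one.
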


\begin{proof}
(Necessity).
Assume there exists a global product-state subfibration $\Sigma_d(\text{family})\subset SB$. On each patch $U_i$, identify $SB|_{U_i}\simeq \PP^{n-1}\times U_i$. Then the subfibration agrees (at least up to the chosen identifications) with $\Sigma_d\times U_i$. On overlaps, the gluing map is the automorphism of $\PP^{n-1}$ induced by $g_{ij}(x)$, so we must have
\[
g_{ij}(x)\cdot (\Sigma_d\times\{x\}) = \Sigma_d\times\{x\}.
\]
Thus $g_{ij}(x)\in G_d$.

\smallskip
(Sufficiency).
Conversely, assume $g_{ij}(x)\in G_d$. On each $U_i$ take $\Sigma_d\times U_i$ as a subfamily. On an
overlap $U_{ij}$ we have
\[
([v],x)\in \Sigma_d\times U_{ij}
\mapsto
([g_{ij}(x)\cdot v],x)\in \Sigma_d\times U_{ij},
\]
and since $\Sigma_d$ is preserved, these local pieces glue compatibly. Hence a global subfibration exists.
\end{proof}

\section{\label{sec:Brauer_intro}Obstructions}
\subsection{Brauer class}
In this section, in order to gain intuition for \emph{why a global subsystem may fail to exist}, we compute the scalar discrepancy that appears when we lift the gluing data by one level.

\smallskip
Choose $\GL_n$-valued functions $\tilde g_{ij}$ locally representing the projective transition functions $g_{ij}\in\PGL_n$ (so that the image of $\tilde g_{ij}$ in $\PGL_n$ is $g_{ij}$). On a triple overlap, we have the cocycle condition in $\PGL_n$:
\[
g_{ij}g_{jk}g_{ki}=1.
\]
However, for the $\GL_n$-lifts one can in general have a purely scalar discrepancy:
\[
\tilde g_{ij}\tilde g_{jk}\tilde g_{ki} = c_{ijk}\,I_n
\qquad (c_{ijk}:U_{ijk}\to\C^\times).
\]
In other words, a $\PGL_n$-cocycle need not lift to a $\GL_n$-cocycle.

\smallskip
Let us directly verify that $\{c_{ijk}\}$ forms a \v{C}ech $2$-cocycle. On a quadruple overlap
$U_{ijkl}$, consider products such as
\[
(\tilde g_{ij}\tilde g_{jk}\tilde g_{ki})
(\tilde g_{ik}\tilde g_{kl}\tilde g_{li})
\]
and regroup them in two different ways. The resulting scalar factors must agree. Indeed, the expression above equals
\[
(c_{ijk}I)(c_{ikl}I)=c_{ijk}c_{ikl}I.
\]
On the other hand, the same product can also be rewritten as
\[
(\tilde g_{ij}\tilde g_{jl}\tilde g_{li})
(\tilde g_{jk}\tilde g_{kl}\tilde g_{lj}),
\]
which equals
\[
(c_{ijl}I)(c_{jkl}I)=c_{ijl}c_{jkl}I.
\]
Therefore we must have
\[
c_{ijk}c_{ikl}=c_{ijl}c_{jkl},
\]
which is precisely the \v{C}ech $2$-cocycle condition
\[
(\delta c)_{ijkl}=1.
\]

The cohomology class $[c]$ encodes the obstruction that says: a projective bundle exists, but it does not glue to a vector bundle. This obstruction is recorded as a Brauer class.

\subsection{A relation to subsystem structures}
Now fix a subsystem type $d=(d_1,\dots,d_r)$ and assume that the transition functions take values in $G_d$ (so the subsystem structure is globally meaningful). In this case, locally the transitions can be represented as tensor-product changes of basis:
\[
\tilde g_{ij}
\sim
A^{(1)}_{ij}\otimes\cdots\otimes A^{(r)}_{ij}
\qquad (A^{(m)}_{ij}\in \GL_{d_m}).
\]
Moreover, after adjusting each $A^{(m)}_{ij}$ by a scalar so that it lies in $\SL_{d_m}$, the remaining scalar ambiguity is only up to $d_m$-th roots of unity. As a consequence, the discrepancy appearing on triple overlaps satisfies the constraint
\[
\tilde g_{ij}\tilde g_{jk}\tilde g_{ki}\in
\mu_{d_1}\times\cdots\times \mu_{d_r},
\]
where $\mu_d$ denotes the group of complex numbers whose $d$-th power is $1$. Hence the discrepancy becomes trivial after raising to the power $\ell:=\mathrm{lcm}(d_1,\dots,d_r)$ (its $\ell$-th power is $1$). This is the intuitive reason why if a subsystem structure exists, then the obstruction is $\ell$-torsion. We summarize this as follows.

\begin{proposition}\label{prop:lcm-torsion}
Fix a type $d=(d_1,\dots,d_r)$ and let $\ell=\mathrm{lcm}(d_1,\dots,d_r)$. If an Azumaya algebra (or the corresponding Brauer class) admits a subsystem structure of type $d$, then the associated Brauer class is $\ell$-torsion:
\[
[A]\in Br(X)[\ell]:=\ \{\,\beta\in Br(X)\mid \ell\beta=0\,\}.
\]
More precisely, for a suitable lift $\widetilde G_d\subset \GL_n$ of $G_d$, there is a central extension
\[
1\to \mu_\ell\to \widetilde G_d\to G_d\to 1,
\]
and the obstruction $2$-cocycle takes values in $\mu_\ell$.
\end{proposition}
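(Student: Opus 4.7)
The plan is to build an explicit lift $\widetilde{G}_d\subset \GL_n$ of $G_d$ whose central kernel over $G_d$ is exactly $\mu_\ell$, and then rerun the \v{C}ech calculation of Section~\ref{sec:Brauer_intro} using $\widetilde{G}_d$-valued local lifts instead of arbitrary $\GL_n$-valued ones. Because the scalar indeterminacy is then confined to $\mu_\ell$, the Brauer cocycle $\{c_{ijk}\}$ automatically takes values in $\mu_\ell$ and is therefore annihilated by $\ell$.

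First I would define the candidate lift. Take the Kronecker embedding
\[
\iota:\ \SL_{d_1}\times\cdots\times \SL_{d_r}\ \longrightarrow\ \GL_n,\qquad (A_1,\dots,A_r)\longmapsto A_1\otimes\cdots\otimes A_r,
\]
set $\widetilde{G}_d^{\circ}:=\iota(\SL_{d_1}\times\cdots\times \SL_{d_r})$, and enlarge to $\widetilde{G}_d$ by adjoining the permutation matrices realizing the finite component group of $G_d$ that comes from swapping tensor factors of equal dimension. By the classical description of the stabilizer of the Segre variety (Definition~\ref{def:stabilizer}; cf.\ Westwick), composing $\widetilde{G}_d\hookrightarrow \GL_n\twoheadrightarrow \PGL_n$ gives precisely $G_d$.

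Next I would identify the central kernel. The scalars in $\widetilde{G}_d^{\circ}$ are exactly the $\lambda I_n$ that can be written as $\bigotimes_m \mu_m I_{d_m}$, since a Kronecker product of matrices is a scalar matrix only when each factor is scalar; the $\SL$-condition forces $\mu_m^{d_m}=1$, and so $\lambda=\mu_1\cdots\mu_r$ with $\mu_m\in\mu_{d_m}$. As subgroups of $\C^\times$, the product $\mu_{d_1}\cdots\mu_{d_r}$ is the subgroup generated by all $d_m$-th roots of unity, which is $\mu_\ell$ with $\ell=\mathrm{lcm}(d_1,\dots,d_r)$. The adjoined permutation matrices are honest signed permutations of a fixed basis and do not enlarge this scalar kernel, so the desired central extension $1\to\mu_\ell\to\widetilde{G}_d\to G_d\to 1$ drops out.

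Finally I would lift the transition cocycle. By hypothesis and the gluing criterion, one can refine the cover so that each $g_{ij}$ takes values in $G_d\subset \PGL_n$. Since $\widetilde{G}_d\to G_d$ is a finite (étale) cover with fiber $\mu_\ell$, after a further refinement I can pick local lifts $\tilde g_{ij}:U_{ij}\to \widetilde{G}_d$. Reading the discrepancy $c_{ijk}:=\tilde g_{ij}\tilde g_{jk}\tilde g_{ki}$ inside $\GL_n$ as in Section~\ref{sec:Brauer_intro}, it lies in $\ker(\widetilde{G}_d\to G_d)=\mu_\ell\subset\C^\times\cdot I_n$; the same quadruple-overlap argument already given shows $(\delta c)=1$, so $c$ is a $\mu_\ell$-valued \v{C}ech $2$-cocycle and therefore $\ell\cdot[A]=\ell\cdot[c]=0$ in $\Br(X)$. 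The main delicate point I expect is the central-kernel computation above — specifically verifying that $\mu_{d_1}\cdots\mu_{d_r}=\mu_\ell$ as subgroups of $\C^\times$ and that the permutation factors contribute nothing scalar — since this is exactly where $\ell=\mathrm{lcm}(d_1,\dots,d_r)$ (rather than the naive product $d_1\cdots d_r$) enters the statement.
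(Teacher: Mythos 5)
Your proposal is correct and follows essentially the same route as the paper: the paper's argument in \S\ref{sec:Brauer_intro} likewise normalizes the tensor-factor lifts into $\SL_{d_m}$ so that the scalar ambiguity is confined to $d_m$-th roots of unity, making the triple-overlap discrepancy land in the subgroup generated by the $\mu_{d_m}$, hence killed by $\ell=\mathrm{lcm}(d_1,\dots,d_r)$. Your explicit construction of $\widetilde G_d$ and the verification that its scalar kernel is exactly $\mu_{d_1}\cdots\mu_{d_r}=\mu_\ell$ (with the permutation components contributing no new scalars) is just a more careful spelling-out of the central extension already asserted in the statement, so no gap.
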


\section{Summary of the first half}
\begin{itemize}
  \item A pure state is represented as a point $[\psi]$ of the projective space $\PP(H)$ (overall scaling is ignored).

  \item In the bipartite case, the set of all product states is the Segre variety, and it can be detected by rank $1$ of the flattening matrix (equivalently, by the vanishing of all $2\times 2$ minors).

  \item The Schmidt rank coincides with the rank of the flattening matrix, and the condition rank $\le k$ is given by the vanishing of $(k+1)$ minors (a determinantal variety).

  \item Even for a twisted quantum system over a parameter space $X$, the family of pure-state spaces (a projective bundle) can be constructed globally by gluing.

  \item However, a general quantum system also contains classical states (product states), and to discuss whether a state is genuinely quantum (entangled), one must fix a \emph{subsystem decomposition} (a tensor product structure).

  \item A subsystem decomposition need not exist globally. Only when the transition functions lie in the stabilizer subgroup $G_d$ of the Segre variety (the structure group can be reduced to $G_d$) do product states and entanglement hierarchies make global sense over $X$.

  \item (Analogy with number theory: \textbf{multiplicative side})
  Under a fixed subsystem decomposition, the product locus appears as an irreducible closed subset (a Segre variety), and the biproduct locus is the union of these loci. Hence its vanishing ideal can be written as an intersection of prime ideals (a primary decomposition), analogous to $V(n)=\bigcup_{p\mid n}V(p)$ and $I(V(n))=\bigcap_{p\mid n}(p)$ in $\Spec\mathbb Z$ (a comparison at the radical level, forgetting exponents).

  \item (Analogy with number theory: \textbf{additive side})
  The minimal number of product states needed to express a state as a sum is the tensor rank, and the filtration given by its closure (border rank) is geometrized by the secant varieties $\sigma_r(\Sigma)$ of the Segre variety $\Sigma$. In the bipartite case $\sigma_r(\Sigma)$ agrees with the determinantal variety, but in the multipartite case flattening minors give only necessary conditions, and additional higher degree equations are needed for an exact classification.
\end{itemize}

\section{Quantum Entanglement Geometry on the Severi--Brauer Scheme}
\subsection{Overview of the second half}
The goal of the second half is to lift the notions introduced in the first half,
\[
\Sigma_d\subset \PP(H),\qquad
R_{\le k}\subset \PP(H)\qquad\text{(fixed Hilbert space }H)
\]
namely the product locus and the entanglement filtration, to the setting of a \emph{twisted family over a parameter space $X$}.

\medskip
\noindent
Here, twisting means the following phenomenon: although at each point $x\in X$ one indeed sees a pure-state space
\[
\PP(H_x)\simeq \PP^{n-1},
\]
globally over $X$ one cannot bundle the $H_x$ into a single vector bundle $H$. (Physical meaning is that locally the
system looks like an $n$-level system, but the global identifications are obstructed by gauge twisting.)

\medskip
\noindent
The most natural language to describe this twisting is that of Azumaya algebras. For further details see
\cite{Grothendieck1968BrauerI,Grothendieck1968BrauerII,Grothendieck1968BrauerIII} and \cite{Azumaya1951MaximallyCentral,AuslanderGoldman1960Brauer}. Using an Azumaya
algebra, one can \emph{always} construct the Severi--Brauer scheme
\[
\pi:SB(A)\to X
\]
as a \emph{family of pure-state spaces}. However, when one wants to discuss entanglement on $SB(A)$
(or even before that), there are several crucial points:
\begin{quote}
\begin{enumerate}
    \item To discuss entanglement one must choose a subsystem decomposition, and geometrically
    this appears as a \emph{reduction of structure group}. (A convenient reduction does not necessarily
    exist.)
    \item Only once such a reduction exists do the product locus and the entanglement
    filtration (determinantal varieties) \emph{descend} as \emph{global closed subschemes} inside $SB(A)$.
\end{enumerate}
\end{quote}

In this section we explain this claim from two viewpoints:
\begin{itemize}
  \item \textbf{(Via \v{C}ech gluing)}: the concrete condition that the transition functions preserve the Segre
  embedding.
  \item \textbf{(Via the Hilbert scheme)}: Segre-type subfamilies are modularized as a parameter space, and
  one shows this is precisely the reduction.
\end{itemize}

\medskip
\noindent
\textbf{Why is this detour necessary?}\;
In quantum information, entanglement is defined assuming a subsystem decomposition $H=H_1\otimes\cdots\otimes H_r$. But for a twisted family such as $SB(A)$, since the $H_x$ cannot be assembled globally into a single $H$, the very task of \emph{choosing a tensor decomposition globally} becomes nontrivial. The correct geometric formalization of this nontriviality is the reduction, and the existence/nonexistence of such a reduction can be measured as an obstruction in terms of the Brauer class. This is the overall picture of the paper.

\subsection{The Azumaya algebra $A$ and the pure-state space $SB(A)$}

Let $A$ be a sheaf of $\cO_X$-modules (or an $\cO_X$-algebra) on $X$, and let $f:U\to X$ be a morphism. We write the pullback of $A$ along $f$,
\[
f^*A,
\]
as
\[
A|_U \quad \text{or}\quad A_U.
\]
In particular, when $X=\Spec R$ and $U=\Spec S$ are affine and $A$ comes from an $R$-algebra
(or $R$-module), we regard
\[
A|_U \;\simeq\; A\otimes_R S
\]
as the operation of changing the coefficient ring from $R$ to $S$.

\begin{definition}[\'etale cover]
An \emph{\'etale cover} of $X$ is a family of morphisms $\{f_i:U_i\to X\}$ such that
\begin{enumerate}
  \item each $f_i$ is \'etale,
  \item the images cover $X$ ($\bigcup_i f_i(U_i)=X$).
\end{enumerate}
\end{definition}

As an analytic/differential-geometric analogy, an \'etale morphism is like a map that is \emph{locally a homeomorphism (locally a diffeomorphism)}, but globally may cover the target by multiple sheets. Thus localizing by an \'etale cover means allowing finite (or locally finite) unramified covers in order to unwind twisting that is invisible from Zariski open sets alone.

\begin{example}
For a finite product field extension $k\subset k'$, the morphism $\Spec k'\to \Spec k$ is finite \'etale.
\end{example}

\begin{definition}[fppf cover]
\emph{fppf} is an abbreviation of the French \emph{fid\`element plat et de pr\'esentation finie}, i.e.\ \emph{faithfully flat and locally of finite presentation}.

An \emph{fppf cover} of $X$ is a family of morphisms $\{f_i:U_i\to X\}$ such that
\begin{enumerate}
  \item each $f_i$ is faithfully flat,
  \item each $f_i$ is (locally) of finite presentation,
  \item the images cover $X$.
\end{enumerate}
\end{definition}

Flatness expresses the idea that relations among equations do not collapse or artificially appear under base change (extension of scalars). More algebraically, in the affine case $X=\Spec R$, $U=\Spec S$,
\[
R\to S \text{ is flat} \quad \Longleftrightarrow \quad
(-)\otimes_R S \text{ preserves exactness}.
\]
Faithfully means there is no loss of information, for instance
\[
M\neq 0 \Rightarrow M\otimes_R S\neq 0,
\]
so tensoring does not annihilate nonzero data.

\begin{remark}[Why consider fppf?]
When one needs to treat twisting or gluing phenomena that are not visible \'etale-locally, the fppf topology is a very natural framework. However, for Azumaya algebras, since they can essentially be described via a $\PGL_n$-torsor, the \'etale topology is often sufficient, and some references even define an Azumaya algebra by \'etale-locally a matrix algebra.
\end{remark}

\begin{definition}[Azumaya algebra]
Let $X$ be a scheme and let $n\ge 2$.
A sheaf of $\cO_X$-algebras $A$ is an \emph{Azumaya algebra of degree $n$} if there exists an \'etale
(or fppf) cover $f:U\to X$ such that
\[
A|_U \;\simeq\; M_n(\cO_U),
\]
the $n\times n$ matrix algebra over $\cO_U$.
\end{definition}

Choose isomorphisms over a cover,
\[
\phi_i:\ A|_{U_i}\xrightarrow{\ \sim\ } M_n(\cO_{U_i}).
\]
On overlaps (fiber products) $U_{ij}:=U_i\times_X U_j$, we have two identifications
\[
A|_{U_{ij}}
\ \xrightarrow{\ \phi_i|_{U_{ij}}\ }\ 
M_n(\cO_{U_{ij}}),
\qquad
A|_{U_{ij}}
\ \xrightarrow{\ \phi_j|_{U_{ij}}\ }\ 
M_n(\cO_{U_{ij}}),
\]
so their difference defines an automorphism
\[
\alpha_{ij}
:=
(\phi_i|_{U_{ij}})\circ(\phi_j|_{U_{ij}})^{-1}.
\]
Locally, there exists an invertible matrix $g_{ij}\in \GL_n(\cO_{U_{ij}})$ such that
\[
\alpha_{ij}(T) \;=\; g_{ij}\,T\,g_{ij}^{-1}
\qquad (T\in M_n(\cO_{U_{ij}})).
\]
Since $g_{ij}$ and $\lambda I_n\cdot g_{ij}$ induce the same conjugation, the gluing is essentially encoded by
\[
[g_{ij}] \in \PGL_n(\cO_{U_{ij}}).
\]
Abstracting the transition functions are $\PGL_n$-valued yields a principal $\PGL_n$-bundle (torsor) $P\to X$. The Azumaya algebra $A$ and the $\PGL_n$-torsor $P$ may be regarded as equivalent data.

Now, even if $A$ is twisted, the projective family of pure-state spaces always exists globally. This is the Severi--Brauer scheme
\[
\pi:SB(A)\to X.
\]
Locally it looks like
\[
SB(A)|_U\simeq \PP^{n-1}\times U.
\]
Thus the basic quantum-information principle ``pure states = projective space'' is guaranteed \emph{in families} by $SB(A)$. (Algebraically, $SB(A)$ has the universality of representing rank-$1$ right ideals.)

If $A$ is split ($A\simeq \End(E)$), then
\[
SB(A)\simeq \PP(E),
\]
recovering the usual projectivization of a vector bundle. In this sense, $SB(A)$ is a canonical base that always supplies the pure-state space.

\subsection{The stabilizer $G_d$ and reduction}
This is the main topic of the second half. Fix a factorization
\[
n=\prod_{i=1}^r d_i,\qquad d=(d_1,\dots,d_r),
\]
as a subsystem type.

In the usual (split) model,
\[
\PP^{n-1}\simeq \PP\bigl(k^{d_1}\otimes\cdots\otimes k^{d_r}\bigr),
\]
the image of the Segre embedding
\[
\Sigma_d
=\PP^{d_1-1}\times\cdots\times\PP^{d_r-1}
\hookrightarrow \PP^{n-1}
\]
is the locus of product states. As we discussed in the first half, entangled states are the points outside this product locus.

The key point is that $\Sigma_d$ is defined only after choosing a tensor decomposition. Although $SB(A)$ is locally isomorphic to $\PP^{n-1}$, these identifications are glued by $\PGL_n$-valued transition functions. If the transitions are arbitrary elements of $\PGL_n$, they need not send a Segre variety to a Segre variety. Consequently, the judgment this point is product/entangled risks becoming dependent on the choice of local coordinates.

Therefore, to define product states globally, one must require that the transitions preserve the Segre variety. This requirement is exactly a reduction.

\vspace{0.3cm}
We now use the stabilizer group $G_d$ defined as in Definition~\ref{def:stabilizer}.

\begin{definition}[Subsystem structure \cite{Ikeda:2026ojm}]
Let $P\to X$ be the $\PGL_n$-torsor associated to an Azumaya algebra $A$. A \emph{subsystem structure of type $d$} is the datum that
\[
P\ \text{admits a reduction to a }G_d\text{-torsor }P_d\to X,
\]
i.e.\ that one can write
\[
P \simeq P_d\times_{G_d} \PGL_n.
\]
\end{definition}

If we trivialize $P$ over an open cover $\{U_i\}$, then the transition functions are
\[
g_{ij}:U_{ij}\to \PGL_n.
\]
In these terms, being reducible is equivalent to the condition
\[
g_{ij}(x)\in G_d\quad(\forall x),
\]
which is exactly the same as the condition stated in the first half:
\[
g_{ij}(x)\cdot\Sigma_d=\Sigma_d.
\]

Given a $G_d$-torsor $P_d$, one can define
\[
\Sigma_d(A):=P_d\times_{G_d}\Sigma_d\to X,
\]
and the Segre embedding glues to a closed immersion
\[
\iota_d:\Sigma_d(A)\hookrightarrow SB(A).
\]
This is the \emph{global product locus}.

\medskip
\noindent
While $SB(A)$ is always defined as a family of pure states, deciding \emph{which states are product (classical) and which are entangled (genuinely quantum)} has no meaning without introducing subsystems. Moreover, it is not automatic that a globally consistent choice of subsystems exists. The correct geometric translation of the subsystem choice is the reduction.

\subsection{Hilbert schemes and modularizing subsystem structures}
Up to this point we have discussed constructions in a concrete form. Let us now treat them uniformly using the Hilbert scheme.

Having a subsystem structure means that locally there exists a Segre-type subfamily
\[
(SB(A)|_{U_i},\ \Sigma_d(A)|_{U_i})
\simeq
(\PP^{n-1}\times U_i,\ \Sigma_d\times U_i).
\]
Conversely, if inside $SB(A)$ one can find a family of closed subschemes that fiberwise looks like the Segre variety, then the overlap transitions are forced to preserve the Segre variety. As a result, the transition functions take values in $G_d$, and one recovers the reduction.

Since $\pi:SB(A)\to X$ is projective, the relative Hilbert scheme $\Hilb(SB(A)/X)$ exists and
parameterizes all closed subscheme families in $SB(A)$ that are flat over $X$.
Inside it, we single out those families of closed subschemes that satisfy:
\begin{center}
$fppf$-locally, they are isomorphic to $(\PP^{n-1}\times T,\ \Sigma_d\times T)$.
\end{center}
The locus defined by this condition is a \emph{locally closed subscheme} of $\Hilb(SB(A)/X)$, and we call it the subsystem-structure locus (inside the relative Hilbert scheme).

\begin{example}
In the split case, inside the Hilbert scheme of $\PP^{n-1}$, take the point
\[
[\Sigma_d]\in \Hilb(\PP^{n-1}).
\]
Its $\PGL_n$-orbit is
\[
O_{\Sigma_d}\simeq \PGL_n/G_d.
\]
This expresses the simple structure that $\PGL_n$ moves a Segre variety around, and $G_d$ is the freedom that fixes it.
\end{example}

\vskip0.3cm
\begin{definition}
Since $\pi:SB(A)\to X$ is projective, the relative Hilbert scheme $\Hilb(SB(A)/X)$ exists. The locally closed subscheme consisting of Segre-type closed subfamilies that are $fppf$-locally isomorphic to $(\PP^{n-1}\times T,\Sigma_d\times T)$ is called the subsystem-structure locus
\cite{Ikeda:2026ojm}.
\end{definition}

\begin{theorem}[{\cite{Ikeda:2026ojm}}]\label{thm:hilb-rep}
Fix an Azumaya algebra $A$ and a type $d$. The moduli of subsystem structures ($G_d$-reductions) is represented by
\[
P/G_d,
\]
and this is identified with the subsystem-structure locus inside the relative Hilbert scheme. In particular,
\[
\boxed{\ \text{a subsystem structure exists}\ \Longleftrightarrow\ \Sigma_d(A)\subset SB(A)\ \text{exists}\ }.
\]
\end{theorem}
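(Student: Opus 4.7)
The plan is to prove the theorem by introducing three functors on the category of $X$-schemes and showing they are naturally equivalent. For $T \to X$, let $\mathcal{R}(T)$ denote the set of $G_d$-reductions of $P_T := P \times_X T$; let $\mathcal{Q}(T) := \Hom_X(T, P/G_d)$; and let $\mathcal{H}(T)$ denote the set of closed subschemes $Z \subset SB(A)_T$, flat over $T$, that are fppf-locally on $T$ identified with $\Sigma_d \times T \subset \PP^{n-1} \times T$. I will construct natural bijections $\mathcal{R}(T) \cong \mathcal{Q}(T) \cong \mathcal{H}(T)$; the last assertion of the theorem is then the level of $X$-points of the three equivalent functors.

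The equivalence $\mathcal{R} \cong \mathcal{Q}$ is the classical torsor-reduction correspondence: the quotient map $P \to P/G_d$ is itself a $G_d$-torsor, so a section $s:T \to (P/G_d)_T$ yields a $G_d$-reduction $s^*P$ of $P_T$, and conversely a reduction $P_d \subset P_T$ descends to a canonical section of $(P/G_d)_T$. The main work is therefore $\mathcal{R} \cong \mathcal{H}$. In one direction, send $P_d \in \mathcal{R}(T)$ to the associated subbundle $\Sigma_d(A) := P_d \times_{G_d} \Sigma_d \hookrightarrow P_d \times_{G_d} \PP^{n-1} \cong SB(A)_T$; this is well-defined because $\Sigma_d \hookrightarrow \PP^{n-1}$ is $G_d$-equivariant by the very definition of $G_d$ (Definition~\ref{def:stabilizer}). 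In the other direction, given $Z \in \mathcal{H}(T)$, define $P_d$ as the sheaf on $T$ whose sections over $U \to T$ are the trivializations $\phi:SB(A)_U \xrightarrow{\sim} \PP^{n-1}\times U$ that carry $Z_U$ onto $\Sigma_d \times U$. Two such trivializations differ by an element of $\PGL_n$ stabilizing $\Sigma_d$, i.e.\ by an element of $G_d$, so $P_d$ is a $G_d$-pseudo-torsor; the fppf-local triviality of $Z$ upgrades it to a torsor, and its extension along $G_d \hookrightarrow \PGL_n$ is canonically identified with $P_T$. A direct check shows that the two constructions are mutually inverse.

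The main obstacle I expect is the \emph{rigidity} statement underlying the reverse map $\mathcal{H} \to \mathcal{R}$: that the scheme-theoretic stabilizer of the Segre variety $\Sigma_d$ in $\PGL_n$ is exactly $G_d$, so that any two adapted trivializations of $(SB(A)_U, Z_U)$ differ by a \emph{unique} element of $G_d$ rather than by a larger group. This is what allows the sheaf of adapted trivializations to be an honest $G_d$-torsor rather than merely a pseudo-torsor, and it is the geometric content that makes $P/G_d$ behave well as a moduli space. Once rigidity is in hand, identifying the common object with the subsystem-structure locus is bookkeeping: the natural transformation $\mathcal{Q} \to \Hilb(SB(A)/X)$ defined by $s \mapsto \Sigma_d(A)$ factors through a locally closed immersion onto the subsystem-structure locus, and the fppf-local condition carved out by that locus matches the one built into the definition of $\mathcal{H}$.
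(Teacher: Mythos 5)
Your proposal is correct and follows essentially the same route as the paper: the torsor-reduction correspondence (reductions of $P$ to $G_d$ $\leftrightarrow$ sections of $P/G_d$, as in Proposition~\ref{prop:reduction-section}), the associated-bundle construction $P_d\times_{G_d}\Sigma_d\subset SB(A)$, and the converse via trivializations adapted to the Segre subfamily, which works precisely because the stabilizer of $\Sigma_d$ in $\PGL_n$ is exactly $G_d$ (the injectivity/rigidity point the paper makes in its Hilbert-scheme appendix). Your functor-of-points packaging with the pseudo-torsor upgraded to a torsor by fppf-local triviality is just a more formal rewriting of the paper's Čech/transition-function argument, at the same level of rigor.
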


\medskip
\noindent
Summarizing the discussion so far:
\begin{itemize}
  \item One can decide, in a unified way, whether a global subsystem structure exists by the existence of a scheme representing it.
  \item Once the moduli object ($P/G_d$) is available, subsequent descent statements can be obtained systematically by forming associated bundles.
\end{itemize}
Using the Hilbert scheme makes this picture more transparent.

\section{Bipartite Quantum Entanglement Geometry}
\subsection{Entanglement filtration}

From here on, we again consider the bipartite case $d=(d_A,d_B)$ (so $n=d_Ad_B$). We discuss explicitly the relationship between the most familiar notions in quantum information---the Schmidt decomposition / Schmidt rank---and the algebro--geometric entanglement filtration. A more detailed technical discussion is given in \S\ref{subsec:incidence-from-flattening}.

Let $H_A\simeq k^{d_A}$ and $H_B\simeq k^{d_B}$. Then a tensor $\psi\in H_A\otimes H_B$ can be identified with a matrix, and the condition that the Schmidt rank is $\le r$ is equivalent to
\[
\text{all }(r+1)\times(r+1)\text{ minors vanish.}
\]
The closed subscheme defined by this condition is
\[
R_{\le r}\subset \PP(H_A\otimes H_B).
\]
Here $R_{\le 1}$ is the product locus, and as $r$ increases the filtration becomes thicker. We obtain the natural filtration (by closed subvarieties) of the determinantal varieties
\[
R_{\le 1}\subset R_{\le 2}\subset\cdots\subset\PP(H_A\otimes H_B),
\]
which we call the \emph{entanglement filtration} \cite{Ikeda:2026ojm}.

\medskip
\noindent
\begin{remark}[Why do minors detect entanglement?]
Minors are polynomial conditions that detect whether the rank is $\le r$, which can be understood as the \emph{vanishing of polynomial equations}. This compatibility with polynomial/algebraic conditions is precisely what makes the viewpoint well-suited to algebraic geometry. Our discussion addresses the question of when entanglement criteria can be globalized over $X$?    
\end{remark}

When a subsystem structure $P_d\to X$ is given, $R_{\le r}$ is $G_d$--invariant, so by forming the associated bundle one can define
\[
\Sigma_{\le r}(A,d)
:=P_d\times_{G_d}R_{\le r}
\subset
P_d\times_{G_d}\PP(H_A\otimes H_B)\simeq SB(A).
\]
Concretely, this can be understood as follows: locally, after identifying $SB(A)\simeq \PP(H_A\otimes H_B)$, insert $R_{\le r}$ and glue. Because the subsystem structure exists, this gluing is consistent. Moreover, $\Sigma_{\le r}(A,d)\to X$ is fppf-locally isomorphic to $R_{\le r}\times U\to U$, hence it is flat. Flatness guarantees that the geometry of the fibers (dimension, etc.) does not jump, stabilizing the behavior as a family.

For each point $x\in X$, inside the fiber $(SB(A))_x\simeq \PP^{n-1}$ we obtain a filtration
\[
(\Sigma_{\le 1})_x\subset (\Sigma_{\le 2})_x\subset\cdots\subset (SB(A))_x,
\]
which is exactly the filtration Schmidt rank $\le r$. Therefore, for a pure state $[\psi]\in (SB(A))_x$ we can define a discrete entanglement invariant
\[
SR_d([\psi]) := \min\{r\mid [\psi]\in (\Sigma_{\le r}(A,d))_x\}.
\]
The condition $SR_d([\psi])=1$ means product, and larger values agree with the intuition of being more entangled.

\subsection{\label{sec:sing}Entanglement filtration and singularities}
The determinantal variety $R_{\le r}$ is singular in general, and its singularities obey a strong and regular pattern. When a reduction exists, this pattern is inherited uniformly in families. Concretely, for $1\le r<\min(d_A,d_B)$,
\[
\Sing(R_{\le r})=R_{\le r-1}
\]
holds \cite{WeymanSyzygies,BrunsVetterDeterminantal}.
Intuitively, the lower the rank, the more the Jacobian matrix of the defining equations degenerates, so the boundary of the filtration itself becomes singular.

Since $\Sigma_{\le r}(A,d)$ is locally $R_{\le r}\times U$, the above description of singular loci descends
directly to give
\[
\Sing\bigl(\Sigma_{\le r}(A,d)\bigr)=\Sigma_{\le r-1}(A,d).
\]
Thus the property the boundary of the entanglement filtration is singular holds uniformly over $X$.

A standard resolution of singularities of $R_{\le r}$ can be constructed using Grassmannians.
Using
\[
\Gr_A(r)=\Gr(r,H_A),\qquad \Gr_B(r)=\Gr(r,H_B)
\]
and their tautological subbundles $U_A, U_B$, define
\[
\widetilde R_{\le r}:=\PP(U_A\boxtimes U_B)\to \Gr_A(r)\times \Gr_B(r).
\]
Then there is a natural morphism (an incidence resolution)
\[
\rho_r:\widetilde R_{\le r}\to R_{\le r},
\]
which is projective and birational, and is an isomorphism over the open locus of matrices of rank exactly $r$. In this sense it gives a canonical resolution of $R_{\le r}$. A concrete construction is explained in \S\ref{subsec:incidence-from-flattening}.

This construction is $G_d$--equivariant. Hence, by forming the associated bundle, we obtain
\[
\widetilde\Sigma_{\le r}(A,d)
:=P_d\times_{G_d}\widetilde R_{\le r},
\qquad
\rho_r:\widetilde\Sigma_{\le r}(A,d)\to \Sigma_{\le r}(A,d).
\]
The total space $\widetilde\Sigma_{\le k}(A,d)$ is smooth over $X$, and $\rho_k$ is projective and birational and is an isomorphism over the open locus of rank exactly $k$ \cite{Ikeda:2026ojm}.

\begin{remark}[Practical meaning]
Once a subsystem structure is fixed, $\Sigma_{\le r}$ is not merely a subset, but behaves well in that:
\begin{itemize}
  \item it is flat (so the family has stable geometry),
  \item it is compatible with base change, and
  \item it comes equipped with a canonical resolution of singularities.
\end{itemize}
    
\end{remark}

\subsection{Concrete reinterpretations from the viewpoint of quantum information}

\paragraph{(QI-1) The algebro--geometric version of the Schmidt decomposition = SLOCC normal form.}
In quantum information, taking $k=\mathbb{C}$ and expanding a pure state $|\psi\rangle\in \mathcal{H}_A\otimes\mathcal{H}_B$ in a basis gives
\[
|\psi\rangle=\sum_{i=1}^{d_A}\sum_{j=1}^{d_B}\psi_{ij}\,|i\rangle_A\otimes|j\rangle_B,
\]
and we can identify it with the coefficient matrix $\Psi=(\psi_{ij})\in \mathrm{Mat}_{d_A\times d_B}$
and with the associated linear map $\psi_B:\mathcal{H}_B^\vee\to \mathcal{H}_A$).
Then
\[
\mathrm{SR}(|\psi\rangle)=\mathrm{rank}(\Psi)=\mathrm{rank}(\psi_B).
\]
Moreover (once an inner product is chosen), a Schmidt decomposition exists:
\[
|\psi\rangle=\sum_{\ell=1}^{r}\lambda_\ell\,|a_\ell\rangle\otimes|b_\ell\rangle,
\qquad \lambda_\ell>0,
\]
and the integer $r$ (the number of nonzero Schmidt coefficients) agrees with the above matrix rank.

On the other hand, the natural group appearing in this note is not the unitary group, but rather
\[
\mathrm{GL}(\mathcal{H}_A)\times\mathrm{GL}(\mathcal{H}_B),
\]
i.e.\ local invertible operations, which correspond in quantum information to \textbf{SLOCC} (the invertible part of stochastic LOCC \cite{Bennett:2000fte}). The coefficient matrix transforms as
\[
\Psi\ \longmapsto\ (g_A)\,\Psi\,(g_B)^{\mathsf{T}}
\quad (g_A\in \mathrm{GL}(d_A),\,g_B\in\mathrm{GL}(d_B)),
\]
and the matrix rank is invariant. Hence the filtration
$R_{\le1}\subset R_{\le 2}\subset\cdots\subset \PP(H_A\otimes H_B)$ determined by Schmidt rank is an entanglement filtration invariant under local invertible operations. The $G_d$--invariance / descent in the main text is exactly the geometric incarnation of this SLOCC invariance.

\paragraph{(QI-2) Algebraic sets and entanglement (the two-qubit example).}
Any two-qubit pure state can be written as
\[
|\psi\rangle = a|00\rangle+b|01\rangle+c|10\rangle+d|11\rangle,\qquad
\Psi=\begin{pmatrix}a&b\\ c&d\end{pmatrix}.
\]
In this case, being product (Schmidt rank $1$) is equivalent to
\[
\det(\Psi)=ad-bc=0.
\]
Here $\det(\Psi)$ is the unique $2\times 2$ minor, so a single polynomial suffices to test separability. In the language of quantum information, this is the topological core of the concurrence for pure states ($C(|\psi\rangle)=2|ad-bc|$): \textbf{being zero} is equivalent to being product.

For general $(d_A,d_B)$, similarly, if some $(r+1)\times(r+1)$ minor is nonzero then
\[
\mathrm{SR}(|\psi\rangle)\ge r+1
\]
follows algebraically (and conversely, if all such minors vanish then $\le r$). The fact that the presence or absence of entanglement can be written using polynomials (elements of a coordinate ring) is the root of the preceding discussion about whether the product locus can be globalized as a subscheme over a base scheme.

\paragraph{(QI-3) Quantum-information meaning of the entanglement filtration.}
In the bilinear case, ``tensor rank = matrix rank,'' so Schmidt rank $\le r$ is equivalent to a linear combination of $r$ product states
\[
|\psi\rangle=\sum_{\ell=1}^{r} |a_\ell\rangle\otimes|b_\ell\rangle.
\]
Geometrically one can view
\[
R_{\le r}=\overline{\mathrm{Sec}_r(\mathrm{Segre})}
\]
as the closure of the $r$-secant variety of the Segre \cite{landsberg2011tensors}. In this sense, one can also read the entanglement filtration as encoding how many product states are needed in a superposition.

\paragraph{(QI-4) Quantum-information meaning of families and reductions.}
From the quantum information viewpoint, one thinks of $x\in X$ as a classical parameter (external field, coupling constant, defect configuration, etc.), and the fiber $\mathbb{P}^{n-1}\simeq \mathbb{P}(\mathcal{H}_x)$ as the pure-state space at that parameter.

\begin{itemize}
\item For a twisted Azumaya background, even if locally $\mathcal{H}_x\simeq k^n$, one cannot
  \emph{globally fix a single $\mathcal{H}$} (the coordinate changes twist by $PGL_n$).
\item To discuss entanglement (as emphasized repeatedly), one needs additional data asserting that, for each $x$, $\mathcal{H}_x$ decomposes \emph{with the same meaning} as $\mathcal{H}_{A,x}\otimes\mathcal{H}_{B,x}$.
\end{itemize}

The $G_d$--reduction in the main text is exactly this additional datum: in the quantum-information picture it means that the coordinate changes (gauge transformations) are restricted from the full $\PGL_n$ to local operations $g_A\otimes g_B$. Only then does the statement
\[
\text{this state is product / has Schmidt rank $\le r$}
\]
acquire an \emph{$x$-independent, gluable meaning}, and $\Sigma_{\le r}(A,d)$ becomes an entanglement ``phase diagram'' defined over the whole parameter space $X$.

\paragraph{(QI-5) Intuitive understanding of singularities and quantum phase transitions.}
As already noted, from the quantum information viewpoint, a Schmidt-rank-$r$ state is SLOCC-equivalent to
\[
\sum_{i=1}^{r}|i\rangle\otimes|i\rangle
\]
(which expresses that rank $r$ is a complete SLOCC invariant). However, when the rank drops, the local degrees of freedom (orbit dimensions) change within the same $R_{\le r}$, and the geometry develops a sharp boundary.

For example, in the $3\times 3$ case with $r=2$,
\[
R_{\le 2}=\{\det(\Psi)=0\}\subset \mathbb{P}^8
\]
is a hypersurface defined by a single cubic equation. Since the gradient of $\det$ vanishes precisely when the rank is $\le 1$ (all $2\times 2$ cofactors vanish), one has
\[
\mathrm{Sing}(R_{\le 2})=R_{\le 1}.
\]
This is the geometric version of the familiar phenomenon that when one Schmidt coefficient collapses
to $0$ and the rank drops, the Jacobian degenerates as well. In the language of condensed-matter or
statistical physics, such a sudden change in dimension corresponds to a change in degeneracy of energy eigenstates in a quantum many-body system, and hence to a quantum phase transition.

\paragraph{(QI-6) Physical meaning of the incidence resolution.}
A rank-$r$ state $|\psi\rangle$ naturally determines $r$-dimensional subspaces on each side:
\[
U_A(\psi):=\mathrm{im}(\psi_B)\subset \mathcal{H}_A,
\qquad
U_B(\psi):=\mathrm{im}(\psi_A)\subset \mathcal{H}_B.
\]
(With an inner product, this agrees with the support $\mathrm{supp}(\rho_A)$ of the reduced density
matrix $\rho_A=\mathrm{Tr}_B|\psi\rangle\langle\psi|$.)

The incidence resolution
\[
\widetilde R_{\le r}=\mathbb{P}(U_A\boxtimes U_B)\to R_{\le r}
\]
can be understood intuitively as the enhancement
\[
\text{keep not only the state $[\psi]$, but also its support subspaces $U_A,U_B$.}
\]
\begin{itemize}
\item If the rank is exactly $r$, then $U_A(\psi)$ and $U_B(\psi)$ are uniquely determined, so the resolution is an isomorphism over that open locus.
\item If the rank is $<r$, then there are multiple choices of $r$-dimensional subspaces containing $\psi$, and the fibers thicken by exactly this non-uniqueness, resolving the singularities.
\end{itemize}

The same picture holds in families: $\widetilde\Sigma_{\le r}(A,d)$ is a smooth space encoding not only states but also how the support subspaces vary along the parameters. In quantum-information terms, this provides a uniform geometric framework for the entanglement structure and the geometry of its local effective degrees of freedom.

\paragraph{(QI-7) A remark on rank.}
Since the locus
\[
\{\,\mathrm{rank}(\Psi)\le r\,\}
\]
is closed, in a parameter family the rank is typically large at a general point and drops suddenly on a discriminant locus (where all relevant minors vanish). For instance, for
\[
|\psi(t)\rangle = |00\rangle + t\,|11\rangle
\]
one has rank $2$ for $t\neq 0$ and rank $1$ for $t=0$, and the discriminant is $\det(\Psi)=t$. The fact that the entanglement filtration globalizes as subschemes is directly tied to the fact that such ``entanglement phase boundaries'' can be written globally as polynomial conditions.

\section{Classification of Brauer-theoretic obstructions of entanglement}
The phenomenon from the first half---that on triple overlaps one may pick up a scalar discrepancy
(\S\ref{sec:Brauer_intro})---can be formulated in the language of the Brauer group. The Brauer class of an Azumaya algebra lies in $Br(X)$. Among these classes, we collect those for which there exists a representative admitting a subsystem structure of type $d$, and define the subset 
\[
Br_d(X)=\mathrm{Im}\Bigl(H^1(X,G_d)\to H^1(X,\PGL_n)\xrightarrow{\ \delta\ } Br(X)\Bigr)
\subset Br(X).
\]

\begin{definition}[{\cite{Ikeda:2026ojm}}]\label{def:Br_d}
A class $\beta\in Br(X)$ is said to be obstruction class for a type $d$ subsystem structure if
\[
\beta\notin Br_d(X).
\]
\end{definition}
This means that, if $\beta\notin Br_d(X)$, no matter which Azumaya algebra representing the Brauer class $\beta$ one chooses, it admits no subsystem structure of type $d$. Therefore, if $\beta\notin Br_d(X)$, then
\[
SB(A)\to X\ \text{always exists, but the product-state family}\ \Sigma_d(A)\subset SB(A)\ \text{does not exist.}
\]
If a globally consistent product state does not exist, then it is intrinsically entangled.

\begin{definition}[$\ell$-torsion subgroup]
For an abelian group $A$ and an integer $\ell\ge 1$, define
\[
A[\ell]\ :=\ \{\,a\in A\mid \ell a=0\,\}.
\]
In particular,
\[
Br(X)[\ell]\ :=\ \{\,\beta\in Br(X)\mid \ell\beta=0\,\}.
\]
Here, $\beta\in Br(X)[\ell]$ means that the period $\per(\beta)$ divides $\ell$.
\end{definition}

\begin{proposition}[{\cite{Ikeda:2026ojm}}]
Let $\ell=\mathrm{lcm}(d_1,\dots,d_r)$. Then for any $X$ one has
\[
Br_d(X)\subset Br(X)[\ell].
\]
In particular, if $\beta\in Br_d(X)$ then $\per(\beta)\mid \ell$.
\end{proposition}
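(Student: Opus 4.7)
The plan is to construct a central extension
\[
1 \to \mu_\ell \to \widetilde{G}_d \to G_d \to 1
\]
lifting the standard central extension $1 \to \Gm \to \GL_n \to \PGL_n \to 1$, in the sense that the leftmost inclusion $\mu_\ell \hookrightarrow \Gm$ fits into a morphism of short exact sequences
\[
\begin{array}{ccccccccc}
1 & \to & \mu_\ell & \to & \widetilde{G}_d & \to & G_d & \to & 1 \\
& & \downarrow & & \downarrow & & \downarrow & & \\
1 & \to & \Gm & \to & \GL_n & \to & \PGL_n & \to & 1.
\end{array}
\]
Passing to non-abelian \v{C}ech cohomology, the functoriality of the connecting map will produce a commutative square
\[
\begin{array}{ccc}
H^1(X, G_d) & \xrightarrow{\delta'} & H^2(X, \mu_\ell) \\
\downarrow & & \downarrow i_* \\
H^1(X, \PGL_n) & \xrightarrow{\delta} & H^2(X, \Gm) = Br(X),
\end{array}
\]
so that $Br_d(X) = \delta(\mathrm{Im}(H^1(X,G_d) \to H^1(X,\PGL_n)))$ is contained in $\mathrm{Im}(i_*)$. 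The Kummer sequence $1 \to \mu_\ell \to \Gm \xrightarrow{(\cdot)^\ell} \Gm \to 1$ identifies $\mathrm{Im}(i_*)$ with $Br(X)[\ell]$, giving the desired inclusion.

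The concrete construction of $\widetilde{G}_d$ would proceed as follows. First I would treat the identity component $G_d^\circ \cong \PGL_{d_1} \times \cdots \times \PGL_{d_r}$, embedded in $\PGL_n$ via the tensor product. Set $H := \SL_{d_1} \times \cdots \times \SL_{d_r}$; the tensor product of matrices $\otimes: H \to \GL_n$ is a group homomorphism whose kernel $K_0$ consists, by uniqueness of the tensor decomposition up to scalars, of tuples $(\zeta_1 I_{d_1}, \ldots, \zeta_r I_{d_r})$ with $\zeta_i \in \mu_{d_i}$ and $\prod_i \zeta_i = 1$. Meanwhile the kernel of the composite $H \to \GL_n \to \PGL_n$ is the full diagonal torsion $K := \mu_{d_1} \times \cdots \times \mu_{d_r}$. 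Setting $\widetilde{G}_d^\circ := H/K_0$ yields an embedding $\widetilde{G}_d^\circ \hookrightarrow \GL_n$ together with a central extension $1 \to K/K_0 \to \widetilde{G}_d^\circ \to G_d^\circ \to 1$. The quotient $K/K_0$ is identified with the image of $K \to \Gm$, $(\zeta_i) \mapsto \prod_i \zeta_i$, which is the subgroup of $\Gm$ generated by $\mu_{d_1}, \ldots, \mu_{d_r}$, namely $\mu_\ell$ where $\ell = \mathrm{lcm}(d_1, \ldots, d_r)$.

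With the central extension in hand, the cohomological step becomes short. The morphism of central extensions induces the commutative square of connecting maps displayed above, and the Kummer sequence pins the image of $i_*$ to $Br(X)[\ell]$. Consequently any class $\beta \in Br_d(X)$ is of the form $i_*(\alpha)$ for some $\alpha \in H^2(X, \mu_\ell)$, and $\ell \beta = i_*(\ell \alpha) = 0$, so $\beta \in Br(X)[\ell]$ and $\per(\beta) \mid \ell$ as claimed.

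The main obstacle I anticipate is extending the construction from $G_d^\circ$ to the full stabilizer $G_d$. When some of the $d_i$ coincide, $G_d$ also contains permutations of the corresponding tensor factors; these lift tautologically to $\GL_n$ via tensor-factor permutation matrices, but one must verify that they can be adjoined to $\widetilde{G}_d^\circ$ so that the resulting extension still has central kernel contained in $\mu_\ell$. Since a tensor-factor permutation has determinant $\pm 1$, the scalar ambiguity it introduces is at worst a second root of unity, which sits inside $\mu_\ell$ when $\ell$ is even and can otherwise be absorbed after a minor adjustment of the chosen lift. This bookkeeping is essentially routine but must be carried out with care so that the morphism of central extensions in the first paragraph is genuinely well defined, after which the Kummer-sequence argument applies verbatim and yields the proposition.
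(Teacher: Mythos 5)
Your proposal is correct and takes essentially the paper's route: the paper's Proposition~\ref{prop:lcm-torsion} and the remark following the statement rest on exactly the same central extension $1\to\mu_\ell\to\widetilde G_d\to G_d\to 1$ sitting inside $1\to\Gm\to\GL_n\to\PGL_n\to 1$, with the $\mu_\ell$-valued scalar discrepancy on triple overlaps mapping to the Brauer class, so your $H/K_0$ construction and the functoriality-of-connecting-maps diagram are just a more explicit write-up of that argument. (One small quibble: for the permutation part the determinant $\pm1$ is not the relevant issue---what matters is that a nontrivial tensor-factor permutation operator is never a scalar multiple of a product operator, so adjoining the canonical permutation lifts adds no new central scalars and the kernel remains exactly $\mu_\ell$.)
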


\medskip
\noindent
\begin{remark}[why does it become $\ell$-torsion?]
When one attempts to lift $\PGL$-valued transition functions to $\GL$, a scalar discrepancy appears on triple overlaps. If a subsystem structure exists (the transitions lie in $G_d$), then this scalar discrepancy is constrained to be compatible with the Segre decomposition, and in the end it can only take values in $\mu_\ell$ (the group of $\ell$-th roots of unity). Consequently the Brauer class becomes $\ell$-torsion. The intuition from the first half is made precise here in cohomological terms.
\end{remark}

\paragraph{Related literature: entanglement as an obstruction on a fixed background}
\label{rem:fixed-vs-twisted-obstruction}
In this article, we represent the background of a quantum system by an Azumaya algebra $A$ on a scheme $X$, and describe the family of pure-state spaces as the Severi--Brauer scheme $SB(A)\to X$. On the other hand, in order to decide globally whether a quantum state is entangled, one must specify concretely where the correlations lie. This is a relative notion that should be specified separately from the underlying space or group structure. In other words, to define globally whether a state is entangled, one must \emph{globalize} the subsystem type (the type of tensor product decomposition given by $\mathbf d=(d_1,\dots,d_N)$). The central viewpoint of this paper is that this globalization of subsystem structure is equivalent to the possibility of reducing the structure group of the corresponding $PGL_n$-torsor to the stabilizer subgroup $G_d$ that stabilizes the Segre variety, and that the feasibility of such a reduction appears as a \emph{geometric obstruction} (including twisting of the background). To organize this obstruction, we introduce $Br_d(X)\subset Br(X)$ (Definition~\ref{def:Br_d}). If $[A]\notin Br_d(X)$, then a subsystem structure of the given type does not exist globally.

On the other hand, even when the background
Hilbert space and the tensor product structure are fixed globally, entanglement can arise in a different sense as an obstruction to gluing. For example, in \cite{Ikeda:2025mgj}, the phenomenon that local data (local states and marginals on subsystems) look mutually consistent while the global state is (nonexistent / non-unique) is formulated as the \v{C}ech cohomology of a presheaf of states and witnesses, and a framework is proposed in which entanglement is regarded as an
obstruction to global reconstruction. (See also \cite{Abramsky:2011sbx} for related work.)

Therefore, the target of the term ``obstruction'' in quantum systems should be distinguished as follows:
\begin{itemize}
  \item[\textbf{(I)}] \textbf{(Obstruction of structure)}
  This article: the problem of whether one can define the \emph{subsystem structure (the type of tensor
  product decomposition itself)} on $X$, i.e.\ whether a reduction of the structure group is possible.
  \item[\textbf{(II)}] \textbf{(Obstruction of states)}
  \cite{Ikeda:2025mgj,Abramsky:2011sbx}: after fixing the background and the subsystem structure, the
  problem of whether \emph{locally given state data can be reconstructed (glued) into a global state, and whether it is unique}.
\end{itemize}
Both share the same philosophy of ``from local to global,'' but it is important not to confuse them, since the level at which the obstruction appears (\textbf{structure} vs.\ \textbf{states}) is different.

\section{Examples where $SB(A)$ exists but no globally consistent subsystem exists}
\subsection{Symbol Azumaya algebra}
\begin{definition}[{\cite{Ikeda:2026ojm}}]\label{def:kummer}
Let $p$ be a prime and set $m=p^2$. Assume that $k$ contains a primitive $m$-th root of unity $\zeta_m$. On
\[
X=\Gm^2=\Spec k[u^{\pm1},v^{\pm1}],
\]
define a sheaf of $\cO_X$-algebras $A(u,v)_m$ by generators $x,y$ and relations
\[
x^m=u,\qquad y^m=v,\qquad yx=\zeta_m xy.
\]
We call $A(u,v)_m$ a \emph{symbol Azumaya algebra}.
\end{definition}

\begin{theorem}[{\cite{Ikeda:2026ojm}}]\label{thm:kummer}
Let $\beta=[A(u,v)_{p^2}]\in Br(X)$ be the Brauer class of $A(u,v)_{p^2}$. Then $\per(\beta)=p^2$. On the other hand, for the bipartite type $d=(p,p)$ one has $\ell=\mathrm{lcm}(p,p)=p$, hence
\[
Br_{(p,p)}(X)\subset Br(X)[p].
\]
Therefore $\beta\notin Br_{(p,p)}(X)$: although $SB(A)\to X$ exists, the $(p,p)$-subsystem structure
(and hence the entanglement filtration) cannot be globalized.
\end{theorem}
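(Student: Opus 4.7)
The plan splits cleanly into two steps: first show that $\per(\beta)=p^2$, and then invoke Proposition~\ref{prop:lcm-torsion} to deduce $\beta\notin Br_{(p,p)}(X)$. Once the period is pinned down, the second step is bookkeeping, so the real content lies in the period computation.

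For the period, the upper bound $\per(\beta)\mid p^2$ is automatic from $\deg A(u,v)_{p^2}=p^2$ together with the general relation $\per\mid\deg$ for Azumaya algebras. For the lower bound, I would identify the Brauer class of the symbol algebra with a Kummer cup product. Since $k$ contains $\zeta_{p^2}$, there is a canonical trivialisation $\mu_{p^2}^{\otimes 2}\cong\mu_{p^2}$, and the generators-and-relations presentation of Definition~\ref{def:kummer} realises $[A(u,v)_{p^2}]$ as the image of $[u]\cup[v]$ under the embedding $H^2_{\text{ét}}(X,\mu_{p^2})\hookrightarrow Br(X)[p^2]$, where $[u],[v]\in H^1_{\text{ét}}(X,\mu_{p^2})$ are the Kummer classes of the coordinate units; this is the classical dictionary between cyclic algebras and symbols in Galois cohomology. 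To see that this class has order exactly $p^2$, I would specialise to the generic point, pulling back along $\Spec k(u,v)\to X$, where the symbol $(u,v)_{p^2}$ lives over a field and is amenable to the standard residue calculus. The tame residue map at the discrete valuation of $k(u,v)$ with uniformiser $u$ and residue field $k(v)$ sends $[(u,v)_{p^2}]$ to the Kummer class of $v$ in $k(v)^\times/(k(v)^\times)^{p^2}\cong\mathbb{Z}/p^2$, which is nonzero since $v$ is transcendental over $k$. Hence the pulled-back symbol has period exactly $p^2$, and by naturality of $\per$ under restriction we conclude $\per(\beta)=p^2$ on $X$ as well.

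Combining with the torsion bound: by Proposition~\ref{prop:lcm-torsion} applied to $d=(p,p)$ one has $Br_{(p,p)}(X)\subset Br(X)[\ell]$ where $\ell=\mathrm{lcm}(p,p)=p$, so $Br_{(p,p)}(X)\subset Br(X)[p]$. Since $\per(\beta)=p^2$ we have $p\beta\ne 0$, hence $\beta\notin Br_{(p,p)}(X)$. By Definition~\ref{def:Br_d} this is precisely the assertion that every Azumaya representative of $\beta$ yields a well-defined Severi--Brauer scheme $SB(A)\to X$ but admits no $(p,p)$-subsystem reduction, so the entanglement filtration $\Sigma_{\le r}(A,(p,p))$ fails to be globalisable. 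The main obstacle throughout is the lower bound on $\per(\beta)$: one must verify that the cup-product symbol is genuinely of maximal order and not secretly $p$-torsion, which — over a not-necessarily-algebraically-closed base — appears cleanest via the residue argument above, rather than through a direct Künneth calculation on $\Gm^2$.
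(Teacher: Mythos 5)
The paper itself does not prove Theorem~\ref{thm:kummer} --- it is quoted from \cite{Ikeda:2026ojm} --- so your argument can only be judged on its own terms. Your overall architecture is the right one and matches the logic the theorem itself spells out: the second half (apply Proposition~\ref{prop:lcm-torsion} with $\ell=\mathrm{lcm}(p,p)=p$ to get $Br_{(p,p)}(X)\subset Br(X)[p]$, then note $p\beta\neq 0$) is exactly the intended bookkeeping, and the real content is indeed the lower bound on $\per(\beta)$. Your route --- identify $\beta$ at the generic point with the Kummer symbol $[u]\cup[v]$ in $H^2(k(u,v),\mu_{p^2}^{\otimes 2})$, take the tame residue at the $u$-adic valuation, and use that restriction to $\Spec k(u,v)$ can only divide the period --- is the standard and almost certainly the intended computation, and the upper bound $\per(\beta)\mid p^2$ from $\per\mid\deg$ is fine.

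There is, however, a genuine gap in the last step of the residue argument as you wrote it. First, $k(v)^\times/(k(v)^\times)^{p^2}$ is \emph{not} isomorphic to $\Z/p^2$ (every irreducible polynomial contributes a free generator); what is true is that the class of $v$ generates a cyclic subgroup of order $p^2$. Second, and more importantly, ``the residue is nonzero'' only shows the symbol is nonsplit, i.e.\ $\per\in\{p,p^2\}$; it does not rule out period $p$, which is precisely the case you must exclude. To close this you need the residue class to have order exactly $p^2$: if $p\beta_{k(u,v)}=0$, then the residue of $p\beta_{k(u,v)}$, namely the class of $v^{p}$, vanishes in $k(v)^\times/(k(v)^\times)^{p^2}$, so $v^p=f^{p^2}$ for some $f\in k(v)^\times$, hence $v=\zeta f^{p}$ with $\zeta\in\mu_p\subset k$; since $\mu_{p^2}\subset k$, $\zeta$ is itself a $p$-th power in $k$, so $v\in (k(v)^\times)^p$, which is impossible by a degree count ($v$ has degree $1$, while any $p$-th power in $k(v)$ has numerator and denominator of degrees divisible by $p$). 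With this one-line strengthening the lower bound $\per(\beta_{k(u,v)})=p^2$ follows, and the rest of your proof goes through unchanged.
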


\subsection{A concrete quantum example}\label{sec:holonomic_gate}

Fix a prime $p$, and set $m=p^{2}$. As an external parameter space, consider
\[
\mathcal{X}=(\mathbb{C}^{\times})^{2},\qquad (u,v)\in\mathcal{X},
\]
and let
\[
\zeta := e^{2\pi i/m}
\]
be a primitive $m$-th root of unity. Below we give a local presentation of the symbol Azumaya algebra in the language of quantum information (a concrete Hilbert space and operators), and we explicitly show that the holonomy is not a local operation but becomes entangling.

\paragraph{Step 1: Local trivialization and an operator realization depending on $(u,v)$.}
Take a simply connected open subset $U\subset\mathcal{X}$, and choose (analytically) branches of the $m$-th roots $u^{1/m},v^{1/m}$ on $U$. Let
\[
H:=\mathbb{C}^{m}
\]
and fix the basis $\{|r\rangle\}_{r\in\mathbb{Z}/m\mathbb{Z}}$. Define Weyl-type shift/clock operators by
\begin{equation}
\label{eq:Weyl}
\mathbf{X}|r\rangle = |r+1\rangle,\qquad
\mathbf{Z}|r\rangle = \zeta^{\,r}|r\rangle,
\end{equation}
so that $\mathbf{Z}\mathbf{X}=\zeta\,\mathbf{X}\mathbf{Z}$ holds.
For $(u,v)\in U$, set
\[
x(u,v):=u^{1/m}\mathbf{X},\qquad y(u,v):=v^{1/m}\mathbf{Z}.
\]
Then we immediately obtain
\[
x(u,v)^{m}=u\,\mathrm{id}_{H},\qquad y(u,v)^{m}=v\,\mathrm{id}_{H},\qquad
y(u,v)\,x(u,v)=\zeta\,x(u,v)\,y(u,v).
\]
Thus, locally, the generators $x,y$ of the symbol algebra are realized as concrete quantum operators depending on $(u,v)$.

\paragraph{Step 2: Gauge holonomy arising from multivaluedness ($m$-th roots).}
Fix a base point $(u_{0},v_{0})\in U$ and consider the loops
\[
\gamma_{u}(t)=(u_{0}e^{2\pi it},\,v_{0}),\qquad
\gamma_{v}(t)=(u_{0},\,v_{0}e^{2\pi it})\qquad (0\le t\le 1).
\]
Analytically continuing the chosen branches gives
\[
u^{1/m}\longmapsto \zeta\,u^{1/m}\quad(\text{along }\gamma_{u}),\qquad
v^{1/m}\longmapsto \zeta\,v^{1/m}\quad(\text{along }\gamma_{v}),
\]
so $x(u,v)$ picks up the phase $\zeta$ along $\gamma_{u}$, and $y(u,v)$ picks up the phase $\zeta$ along $\gamma_{v}$. However, using the conjugations
\[
yxy^{-1}=\zeta x,\qquad x^{-1}yx=\zeta y,
\]
these phase changes can be absorbed as gauge transformations (gluing of bases). Since pure states identify an overall non-zero scalar, the holonomy can be taken as an element of the projective group $PGL(H)=GL(H)/\mathbb{C}^{\times}$:
\[
g_{u}=[\,y(u_{0},v_{0})\,]\in PGL(H),\qquad
g_{v}=[\,x(u_{0},v_{0})^{-1}\,]\in PGL(H)
\]
(where $[\cdot]$ denotes projectivization).

Moreover, computing the commutator in a $GL(H)$-lift leaves a scalar:
\[
g_{u}g_{v}g_{u}^{-1}g_{v}^{-1}
\ \sim\
(\mathbf{Z})(\mathbf{X}^{-1})(\mathbf{Z}^{-1})(\mathbf{X})
=\zeta^{-1}\mathrm{id}_{H}.
\]
That is, it looks commuting in $PGL$, but in $GL$ a phase (central scalar) $\zeta^{-1}$ remains. This is the local picture of the Brauer-theoretic twist (discrepancy).

\paragraph{Step 3: A local $(p,p)$ decomposition $H\simeq\mathbb{C}^{p}\otimes\mathbb{C}^{p}$.} Write $r\in\mathbb{Z}/m\mathbb{Z}$ uniquely as
\[
r=a+pb,\qquad a,b\in\{0,1,\dots,p-1\}.
\]
This gives a local identification
\[
H \ \cong\ H_{A}\otimes H_{B},\qquad H_{A}\cong\mathbb{C}^{p},\ H_{B}\cong\mathbb{C}^{p},
\]
by
\[
|a\rangle_{A}\otimes |b\rangle_{B}\ \leftrightarrow\ |a+pb\rangle.
\]
Under this identification, the holonomy $g_{v}$ along $\gamma_{v}$ acts (up to an overall non-zero scalar) as $\mathbf{X}^{-1}$. In $(a,b)$-notation,
\[
\mathbf{X}^{-1}|a,b\rangle =
\begin{cases}
|a-1,\ b\rangle & (a\neq 0),\\[4pt]
|p-1,\ b-1\rangle & (a=0),
\end{cases}
\]
so $b$ changes only when $a=0$ (a borrow occurs). Therefore, it is also intuitively clear that $\mathbf{X}^{-1}$ cannot be of the form $U_{A}\otimes U_{B}$ (a local operation).

\paragraph{Step 4: Checking that a product state is sent to an entangled state.}
Take the product state
\[
|\psi\rangle := (|0\rangle_{A}+|1\rangle_{A})\otimes |0\rangle_{B}.
\]
Applying $\mathbf{X}^{-1}$ yields
\[
\mathbf{X}^{-1}|\psi\rangle
=
|0\rangle_{A}\otimes |0\rangle_{B}
+
|p-1\rangle_{A}\otimes |p-1\rangle_{B}.
\]
This has Schmidt rank $2$ (the coefficient matrix has two independent components), hence it is entangled. Consequently, the holonomy $g_{v}$ of $\gamma_{v}$ is \emph{entangling} with respect to this local $(p,p)$ decomposition, and it does not lie in the stabilizer (local operations group) $PGL_{p}\times PGL_{p}$.

\paragraph{Supplement ($p=2$ case).}
When $p=2$ (so $m=4$),
\[
(|0\rangle+|1\rangle)\otimes |0\rangle \ \longmapsto\ |00\rangle+|11\rangle,
\]
which is a Bell state after normalization. In this sense, the gauge holonomy arising from the twist in $(u,v)$ very concretely exhibits an entangling component.

Since pure states are points of the projective space $\mathbb{P}(H)$ and an overall non-zero scalar is physically identified, any $g\in \mathrm{PGL}(H)$ acts on pure states (up to phase) by $[\psi]\mapsto g[\psi]$. Hence the projective holonomy $\rho(\gamma)\in \mathrm{PGL}(H)$ associated to a loop $\gamma$ can be interpreted as a \emph{holonomic quantum gate}. Under a local decomposition $H\simeq H_A\otimes H_B$, the local operations group is $\mathrm{PGL}(H_A)\times \mathrm{PGL}(H_B)\subset \mathrm{PGL}(H)$ (the stabilizer), and the condition $\rho(\gamma)\notin \mathrm{PGL}(H_A)\times \mathrm{PGL}(H_B)$ is equivalent to $\rho(\gamma)$ being an entangling gate. In particular, when $p=2$, the holonomy $g_v$ along $\gamma_v$ is locally equivalent to the $\mathrm{CNOT}$ class, and if one can prepare a universal set of one-qubit gates on each factor, then together with $g_v$ they generate a dense subgroup of $\mathrm{PU}(H_A\otimes H_B)\cong \mathrm{PU}(4)$ (universal quantum computation is possible).

\paragraph{Related literature: relation to holonomic quantum computation}
In this article, we interpreted the following phenomenon in terms of a ``holonomic quantum gate'': the monodromy (holonomy) arising from gluing charts on a parameter space $X$ appears as an element of the projective group $\PGL(H)$, and when it goes outside the local operation group, entanglement can be generated after the gluing.

This viewpoint is conceptually parallel to holonomic quantum computation, which is based on the geometric phase (Berry phase) \cite{Berry1984QuantalPhase} and the Wilczek--Zee non-Abelian geometric phase \cite{WilczekZee1984GaugeStructure}, and uses the holonomy of a degenerate subspace as a logical gate
\cite{ZanardiRasetti1999HolonomicQC,PachosZanardiRasetti2000NonAbelianBerry}. It is also compatible with the line of nonadiabatic holonomic QC, which constructs holonomic gates without relying on adiabaticity \cite{SjoqvistTongAnderssonEtAl2012NonadiabaticHQC}.

However, while standard holonomic QC obtains unitary transformations as the holonomy of the Berry connection on the vector bundle of degenerate eigenspaces over a control-parameter manifold, in our framework monodromy arises from inconsistencies in the data of a Severi--Brauer scheme and in the subsystem structure (reduction of the structure group). Thus, our examples provide an algebro geometric viewpoint that connects the philosophy of holonomic QC with a twisted background and with the failure of globalizing subsystem structure.

\section{Subsystem reducibility is not determined solely by the Brauer class}

Even if the Brauer class is $0$ (split), a subsystem structure does \emph{not} automatically exist. In other words, being untwisted does not mean a subsystem decomposition exists automatically.

One can construct examples on $X=\PP^1$ where
\[
A=\End(E),\qquad A'=\End(E')
\]
are both split (Brauer class $0$), yet $E$ admits a tensor decomposition and hence a subsystem structure exists, while $E'$ admits no tensor decomposition and hence no subsystem structure exists.

On $X=\PP^1$, by the Grothendieck splitting theorem one has
\[
E\simeq \bigoplus_{m=1}^n \mathcal O_{\PP^1}(a_m)
\quad (a_1\le\cdots\le a_n).
\]
The projective bundle $P(E)$ (equivalently, the $\PGL_n$-torsor) is determined by this splitting type, up to the overall twist $E\mapsto E\otimes \mathcal O(t)$.

\begin{theorem}[{\cite{Ikeda:2026ojm}}]\label{thm:splitting}
The following are equivalent:
\begin{enumerate}
\item $P$ is reducible to $G_d$ (it admits a bipartite subsystem structure).
\item There exist a rank-$d_A$ bundle $F$, a rank-$d_B$ bundle $G$, and a line bundle $L$ such that
\[
E\simeq (F\otimes G)\otimes L.
\]
\item There exist integer sequences $b_1\le\cdots\le b_{d_A}$, $c_1\le\cdots\le c_{d_B}$ and an integer
$t$ such that
\[
\{a_m\}_{m=1}^n=\{\,b_i+c_j+t\mid 1\le i\le d_A,\ 1\le j\le d_B\,\}
\]
as multisets.
\end{enumerate}
\end{theorem}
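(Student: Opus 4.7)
The plan is to establish the cyclic chain $(3) \Rightarrow (2) \Rightarrow (1) \Rightarrow (2) \Rightarrow (3)$, with the substantive step being $(1) \Rightarrow (2)$; the equivalence $(2) \Leftrightarrow (3)$ is essentially bookkeeping via Grothendieck's splitting theorem on $\PP^1$. Indeed, given (3), setting $F = \bigoplus_i \mathcal{O}(b_i)$, $G = \bigoplus_j \mathcal{O}(c_j)$, and $L = \mathcal{O}(t)$ yields $(F \otimes G) \otimes L \cong \bigoplus_{i,j} \mathcal{O}(b_i + c_j + t)$, which matches the prescribed splitting type of $E$; conversely, decomposing each factor in (2) by Grothendieck and reading off the exponents gives (3). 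For $(2) \Rightarrow (1)$, the canonical isomorphism $\PP(E) \cong \PP(F \otimes G)$ (twisting by a line bundle does not affect projectivization) carries transition data in the image of $\GL(F) \times \GL(G) \to \GL_n$, which after projectivizing factors through $G_d$, the stabilizer of the Segre variety; this is the required reduction of $P$.

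For the core direction $(1) \Rightarrow (2)$, assume $P_d \to \PP^1$ is a $G_d$-reduction. First restrict attention to the identity component $\PGL_{d_A} \times \PGL_{d_B} \subset G_d$ (the swap factor when $d_A = d_B$ is handled below). From the central extension
\[
1 \to \mathbb{G}_m \times \mathbb{G}_m \to \GL_{d_A} \times \GL_{d_B} \to \PGL_{d_A} \times \PGL_{d_B} \to 1,
\]
the obstruction to lifting $P_d$ to a $(\GL_{d_A} \times \GL_{d_B})$-torsor lies in $\Br(\PP^1)^2$. Since $\PP^1$ over an algebraically closed field has trivial Brauer group, both factors lift, yielding vector bundles $F$ and $G$ of ranks $d_A, d_B$ with $\PP(F \otimes G) \cong \PP(E)$ as $\PGL_n$-bundles and matching the Segre subfamily. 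Using the standard fact on $\PP^1$ that two vector bundles with isomorphic projectivizations differ by a unique line-bundle twist (because $\Pic(\PP^1) \to \mathrm{Br}(\PP^1)$ is trivially the zero map and the comparison lives in $\Pic$), we obtain $E \cong (F \otimes G) \otimes L$ for some $L \in \Pic(\PP^1) = \mathbb{Z}$, which is (2).

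The main obstacle is the $\mathbb{Z}/2$ swap component of $G_d$ when $d_A = d_B$: a $G_d$-torsor need not reduce to its identity component, and the swap contributes an étale $\mu_2$-cover of $\PP^1$ that must split; this follows from $H^1_{\mathrm{\acute{e}t}}(\PP^1, \mu_2) \hookrightarrow \Pic(\PP^1)[2] = 0$, so after this preliminary reduction the argument above applies. A secondary subtlety is that the decomposition in (2) is not unique: the simultaneous twist $(F, G, L) \mapsto (F \otimes L_1, G \otimes L_2, L \otimes L_1^{-1} \otimes L_2^{-1})$ preserves the product, which is precisely why (3) is phrased as a multiset equality rather than an ordered identification of sequences, and one should expect only the unordered sum-set structure to be canonical.
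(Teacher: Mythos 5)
The paper does not actually prove Theorem~\ref{thm:splitting}; it is quoted from \cite{Ikeda:2026ojm} with no argument given, so there is no internal proof to compare yours against. Judged on its own, your proposal is correct and is the natural cohomological route: (2)$\Leftrightarrow$(3) is indeed pure Grothendieck-splitting bookkeeping; (2)$\Rightarrow$(1) works because transition data of $F\otimes G$ land in the image of $\GL_{d_A}\times\GL_{d_B}$ in $\PGL_n$, i.e.\ in $G_d^{\circ}\subset G_d$; and for (1)$\Rightarrow$(2) your two vanishing inputs are the right ones for $\PP^1$ over an algebraically closed field ($H^1_{\mathrm{\acute{e}t}}(\PP^1,\Z/2)=0$ to kill the swap component when $d_A=d_B$, and $\Br(\PP^1)=0$, via Tsen, so the $\PGL_{d_A}\times\PGL_{d_B}$-torsor lifts through the central extension by $\Gm\times\Gm$ to an actual pair of vector bundles $(F,G)$). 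The one place where your stated justification is off (though the conclusion is standard and true) is the step ``$\PP(E)\simeq\PP(F\otimes G)$ implies $E\simeq(F\otimes G)\otimes L$'': this has nothing to do with a map $\Pic(\PP^1)\to\Br(\PP^1)$ being zero. The correct reason is that $\Gm$ is central in $\GL_n$, so the fibers of $H^1(X,\GL_n)\to H^1(X,\PGL_n)$ are precisely the orbits of the twisting action of $\Pic(X)$; concretely, the sheaf of homomorphisms $E\to F\otimes G$ inducing the given projective isomorphism is a line bundle $L^{\vee}$, and evaluation gives $E\simeq (F\otimes G)\otimes L$. This holds over an arbitrary base, so no $\PP^1$-specific input is needed at that point. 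With that justification repaired, the argument is complete, and your closing remark correctly identifies why (3) can only be a multiset condition: the ambiguity $(F,G,L)\mapsto(F\otimes L_1,G\otimes L_2,L\otimes L_1^{-1}\otimes L_2^{-1})$, which for $d=(2,2)$ reproduces the parallelogram condition $a_1+a_4=a_2+a_3$ mentioned after the theorem.
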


In particular, for $(2,2)$ this becomes the parallelogram condition
\[
a_1+a_4=a_2+a_3.
\]
This reflects the phenomenon that, over $\PP^1$, the geometry of the product locus (= matrix rank $1$) can be translated into arithmetic conditions on the splitting type.

\section{Numerical invariants}
Once a subsystem structure is fixed and $\Sigma_{\le r}(A,d)$ is defined, each fiber is always the same determinantal variety $R_{\le r}$. As a consequence, numerical invariants such as dimension, degree, and Hilbert polynomial become uniform, independent of the point $x\in X$.

The dimension and codimension are
\[
\dim(R_{\le r})=r(d_A+d_B-r)-1,\qquad
\codim(R_{\le r})=(d_A-r)(d_B-r).
\]
Hence each fiber $(\Sigma_{\le r})_x$ has the same dimension and codimension.

The degree and the Hilbert polynomial (with respect to the standard polarization) are also constant, and are determined purely by $(d_A,d_B;r)$. In particular, for $r=1$ one has
\[
\deg(\Sigma_{A,B})=\binom{d_A+d_B-2}{d_A-1},
\qquad
h^0(\Sigma_{A,B},\mathcal O(t))
=
\binom{t+d_A-1}{d_A-1}\binom{t+d_B-1}{d_B-1}.
\]
Thus the entanglement-induced stratification provides constant geometry throughout the family, and its numerical invariants can be tracked explicitly.

\begin{proposition}[\cite{Ikeda:2026ojm}]
Assume $d_A\le d_B$. The projective degree of the determinantal variety $R_{\le r}\subset \PP^{d_Ad_B-1}$ is
\[
\deg(R_{\le r})=
\prod_{i=0}^{d_A-r-1}\frac{(d_B+i)!\,i!}{(r+i)!\,(d_B-r+i)!}.
\]
\end{proposition}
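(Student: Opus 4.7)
The plan is to realize $R_{\le r}$ as a Thom--Porteous degeneracy locus on $\PP^{d_Ad_B-1}$, write its fundamental class as a Schur determinant in the hyperplane class, and identify the resulting coefficient with the dimension of an irreducible $\GL_{d_A}$-representation of rectangular highest weight, whose closed form is the advertised product.

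First, on $\PP^N=\PP(H_A\otimes H_B)$ with $N=d_Ad_B-1$ I would assemble the tautological matrix coordinates into a morphism of locally free sheaves
\[
\varphi:\ F:=\mathcal O_{\PP^N}^{\oplus d_B}\ \longrightarrow\ E:=\mathcal O_{\PP^N}(1)^{\oplus d_A},
\]
whose value at $[\psi]$ is a nonzero scalar multiple of the flattening matrix $M(\psi)$, so that $R_{\le r}=D_r(\varphi)$. Since by \S\ref{sec:variety} the codimension $(d_A-r)(d_B-r)$ is the expected one, the Thom--Porteous formula applies and yields
\[
[R_{\le r}]\;=\;\Delta_\lambda\bigl(c(E-F)\bigr),\qquad \lambda=\bigl((d_A-r)^{d_B-r}\bigr),
\]
where $\Delta_\lambda(c):=\det\bigl(c_{\lambda_i+j-i}\bigr)$ is the Schur determinant. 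Because $F$ is trivial, $c(E-F)=(1+h)^{d_A}$ with $h$ the hyperplane class and $c_k=\binom{d_A}{k}h^k$; substituting and extracting the coefficient of $h^{(d_A-r)(d_B-r)}$ leaves
\[
\deg(R_{\le r})\;=\;\det\!\Bigl[\binom{d_A}{(d_A-r)+j-i}\Bigr]_{i,j=1}^{d_B-r}.
\]

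Next, I would close this binomial determinant into the claimed factorial product. By the dual Jacobi--Trudi identity it equals the Schur polynomial $s_\lambda$ evaluated on $d_A$ equal variables, hence $\dim V_\lambda^{\GL_{d_A}}$. For a rectangular $\lambda=(a^b)$ the Weyl dimension formula gives
\[
\dim V_{(a^b)}^{\GL_n}\;=\;\prod_{i=1}^{b}\prod_{j=b+1}^{n}\frac{a+j-i}{j-i},
\]
and substituting $(n,a,b)=(d_A,d_A-r,d_B-r)$ followed by reindexing ($k:=j-(d_A-r)$, $m:=d_A-i$) rearranges this into
\[
\prod_{i=0}^{d_A-r-1}\frac{(d_B+i)!\,i!}{(r+i)!\,(d_B-r+i)!},
\]
as claimed. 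As a sanity check, at $r=1$ the product telescopes to $\binom{d_A+d_B-2}{d_A-1}$, recovering the Segre degree stated just before the proposition.

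The hard part will be precisely this last step: the passage from the Jacobi--Trudi binomial determinant to the factorial product. The cohomological inputs are essentially bookkeeping, but the combinatorial identity is where the real content lives, and is most cleanly settled either by the hook--content formula applied to the rectangle $((d_A-r)^{d_B-r})$ or by the Lindström--Gessel--Viennot lemma (interpreting the determinant as a signed enumeration of non-intersecting lattice paths, whose cancellations collapse to a single monomial). An alternative route more in the spirit of \S\ref{sec:sing} would avoid Schur functions altogether: use the incidence resolution $\rho_r:\widetilde R_{\le r}=\PP(U_A\boxtimes U_B)\to R_{\le r}$, so that pushing forward the top power of the tautological $\mathcal O(1)$ along $\widetilde R_{\le r}\to \Gr(r,H_A)\times \Gr(r,H_B)$ expresses $\deg R_{\le r}$ as a top Segre class of $U_A\boxtimes U_B$, which standard Schubert calculus on a product of Grassmannians then evaluates to the same product.
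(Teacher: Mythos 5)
The paper itself gives no proof of this proposition (it is quoted from \cite{Ikeda:2026ojm}; the formula is the classical Giambelli degree formula for determinantal varieties), so your argument has to stand on its own. Your cohomological setup is the standard route and is essentially correct: the tautological map $\mathcal O_{\PP^N}^{\oplus d_B}\to\mathcal O_{\PP^N}(1)^{\oplus d_A}$ does realize $R_{\le r}$ as a degeneracy locus of the expected codimension $(d_A-r)(d_B-r)$, and (using that the determinantal scheme cut out by the $(r+1)$-minors is integral, so the Thom--Porteous class is $\deg(R_{\le r})$ times $h^{\mathrm{codim}}$ on the reduced locus) one gets
\[
\deg(R_{\le r})=\det\Bigl[\tbinom{d_A}{(d_A-r)+j-i}\Bigr]_{i,j=1}^{\,d_B-r}.
\]

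The genuine gap is in the representation-theoretic identification, and as written that step fails. The dual Jacobi--Trudi identity reads $\det\bigl(e_{\lambda_i+j-i}\bigr)=s_{\lambda'}$, so your $(d_B-r)\times(d_B-r)$ determinant, built from $\lambda=\bigl((d_A-r)^{d_B-r}\bigr)$, equals $s_{\lambda'}(1^{d_A})$ with the \emph{transposed} rectangle $\lambda'=\bigl((d_B-r)^{d_A-r}\bigr)$, i.e.\ the dimension of the $\GL_{d_A}$-irrep with $d_A-r$ rows of length $d_B-r$ --- not $\dim V_\lambda^{\GL_{d_A}}$. Your substitution $(n,a,b)=(d_A,\,d_A-r,\,d_B-r)$ therefore uses the wrong rectangle: whenever $d_B-r\ge d_A$ the inner product $\prod_{j=b+1}^{n}$ in your Weyl-dimension formula is empty and your expression collapses to $1$, whereas e.g.\ for $d_A=2$, $d_B=3$, $r=1$ the binomial determinant (and the Segre degree) equals $3$; representation-theoretically, $s_{(1,1)}(1,1)=1\neq 3=s_{(2)}(1,1)$. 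So the claim that this substitution ``rearranges into'' the stated product is false in general (it only happens to agree when $d_A=d_B$, where the rectangle is symmetric). The fix is local: substitute $(n,a,b)=(d_A,\,d_B-r,\,d_A-r)$, which gives
\[
\deg(R_{\le r})=\prod_{i=1}^{d_A-r}\prod_{j=1}^{r}\frac{(d_B-r)+i+j-1}{i+j-1},
\]
and this is literally the advertised product after shifting $i\mapsto i+1$, since $\frac{(d_B+i)!}{(d_B-r+i)!}=\prod_{j=1}^{r}(d_B-r+i+j)$ and $\frac{(r+i)!}{i!}=\prod_{j=1}^{r}(i+j)$. With that correction (the rectangle identity itself can be settled by hook--content or your LGV suggestion, and your alternative incidence-resolution/Segre-class route is also viable), the proof goes through.
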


\begin{proposition}[\cite{Ikeda:2026ojm}]
Writing $R_{\le r}=\Proj(S/I_{r+1})$ (where $I_{r+1}$ is the ideal generated by the $(r+1)$-minors), the degree-$t$ piece admits the representation-theoretic decomposition
\[
(S/I_{r+1})_t \simeq \bigoplus_{\lambda\vdash t,\ \len(\lambda)\le r}
\Schur_\lambda(k^{d_A})\otimes \Schur_\lambda(k^{d_B}),
\]
and hence the Hilbert polynomial is determined solely by $(d_A,d_B,r)$.
\end{proposition}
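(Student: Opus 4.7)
The plan is to exploit the $GL_{d_A}\times GL_{d_B}$-equivariance of $S$ and $I_{r+1}$: both are modules for this group, so $S/I_{r+1}$ splits into isotypic components and the content of the proposition is to identify which Schur components occur. The starting point is the algebraic Cauchy formula
\[
S_t \;=\; \mathrm{Sym}^t(V^\vee\otimes W^\vee)
    \;=\; \bigoplus_{\lambda\vdash t,\ \len(\lambda)\le \min(d_A,d_B)}
      \Schur_\lambda(V^\vee)\otimes \Schur_\lambda(W^\vee),
\]
with $V=k^{d_A}$, $W=k^{d_B}$. This is classical, via the character identity $\sum_\lambda s_\lambda(x)s_\lambda(y)=\prod_{i,j}(1-x_iy_j)^{-1}$, and (up to the isomorphism $\Schur_\lambda(V^\vee)\simeq \Schur_\lambda(V)^\vee$ of abstract representations, which preserves dimensions) reduces the proposition to showing that precisely the components with $\len(\lambda)\le r$ survive in the quotient.

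\textbf{Key step.} I would then identify which components lie in $I_{r+1}$ via the Kempf-Lascoux-Weyman parametrization already used in \S\ref{sec:sing}: rank-$\le r$ matrices are the image of
\[
\phi:\Mat_{d_A\times r}(k)\times \Mat_{r\times d_B}(k)\longrightarrow \Mat_{d_A\times d_B}(k),
\qquad (U,Y)\longmapsto UY,
\]
with $GL_r$ acting on the source by $(U,Y)\mapsto(Ug^{-1},gY)$. The First Fundamental Theorem of invariant theory for $GL_r$ gives a surjection $S\twoheadrightarrow k[\mathrm{source}]^{GL_r}$ with kernel set-theoretically $I_{r+1}$, and combined with reducedness of $R_{\le r}$ (equivalently radicality of $I_{r+1}$) one obtains an isomorphism $S/I_{r+1}\simeq k[\mathrm{source}]^{GL_r}$. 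Applying Cauchy separately to $k[\Mat_{d_A\times r}]$ and $k[\Mat_{r\times d_B}]$ and then taking $GL_r$-invariants via Schur orthogonality (the space $(\Schur_\mu(k^r)\otimes\Schur_\nu(k^r)^\vee)^{GL_r}$ is one-dimensional if $\mu=\nu$ and zero otherwise, and nonzero at all only when $\len(\mu)\le r$) isolates the diagonal $\mu=\nu=\lambda$ with $\len(\lambda)\le r$, producing exactly the claimed decomposition.

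\textbf{Hilbert polynomial and the main obstacle.} Given the decomposition, Weyl's dimension formula expresses $\dim\Schur_\lambda(k^d)$ as an explicit polynomial in $d$ depending only on $\lambda$, so $\dim(S/I_{r+1})_t$ is a finite sum of products of two such polynomials over $\lambda\vdash t$ with $\len(\lambda)\le r$; the resulting Hilbert function agrees, for $t$ sufficiently large, with a polynomial in $t$ whose data is governed entirely by $(d_A,d_B,r)$. I expect the main obstacle to be the identification $S/I_{r+1}\simeq k[\mathrm{source}]^{GL_r}$: surjectivity is the classical FFT, but injectivity requires that the scheme-theoretic image of $\phi$ has vanishing ideal exactly $I_{r+1}$, i.e.\ radicality of $I_{r+1}$. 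I would discharge this by invoking the Kempf-Weyman desingularization $\widetilde R_{\le r}\to R_{\le r}$ introduced in \S\ref{sec:sing}, which yields rational singularities and hence normality of $R_{\le r}$ and radicality of $I_{r+1}$; alternatively, one may cite the DeConcini-Eisenbud-Procesi straightening laws, which give a basis of $S/I_{r+1}$ in standard bitableaux and make the Schur decomposition directly visible.
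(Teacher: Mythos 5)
The paper itself gives no proof of this proposition: it is quoted from \cite{Ikeda:2026ojm}, and the statement is the classical Cauchy-formula description of determinantal rings (the paper elsewhere points to \cite{WeymanSyzygies,BrunsVetterDeterminantal} for such facts), so there is no in-paper argument to compare against. Your route --- Cauchy decomposition of $S_t$, identification of $S/I_{r+1}$ with $k[\Mat_{d_A\times r}\times\Mat_{r\times d_B}]^{\GL_r}$ via the fundamental theorems of invariant theory, extraction of the diagonal components $\mu=\nu=\lambda$ with $\len(\lambda)\le r$ by Schur's lemma, and Weyl's dimension formula for the Hilbert polynomial --- is exactly the standard proof and is essentially correct over $\C$ (in positive characteristic the Cauchy formula only yields a good filtration, but the paper works over $\C$); your remark that the duals $\Schur_\lambda(V^\vee)$ versus $\Schur_\lambda(V)$ only differ by an isomorphism preserving dimensions is also the right way to square your computation with the statement as printed.

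The one point you justify incorrectly is the key radicality step. Normality or rational singularities of $R_{\le r}$ is a property of the \emph{reduced} variety and does not, by itself, imply that the ideal $I_{r+1}$ generated by the $(r+1)$-minors is radical; it gives no information about whether the minors generate the full prime ideal of the image, which is precisely what you need to upgrade the set-theoretic statement of the second fundamental theorem to the isomorphism $S/I_{r+1}\simeq k[\Mat_{d_A\times r}\times\Mat_{r\times d_B}]^{\GL_r}$. The correct ways to discharge this are either the classical primality of determinantal ideals (Hochster--Eagon, De Concini--Procesi) --- which your second alternative via straightening and standard bitableaux does supply, and which simultaneously exhibits the Schur decomposition --- or the full Kempf--Weyman geometric technique, which proves generation of the defining ideal by the minors directly from the desingularization (via vanishing of higher direct images), not merely as a consequence of rational singularities. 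With that substitution your argument is complete and is, in substance, the proof the cited literature gives.
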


\begin{theorem}[{\cite{Ikeda:2026ojm}}]
Given a bipartite subsystem structure, each $\Sigma_{\le r}(A,d)\subset SB(A)$ is flat over $X$, and for every point $x\in X$ the fiber $(\Sigma_{\le r})_x$ is isomorphic to the classical $R_{\le r}$. Consequently, the dimension, degree, and Hilbert polynomial are constant in $x$ and depend only on $(d_A,d_B,r)$.
\end{theorem}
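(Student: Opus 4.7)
The plan is to exploit that $\Sigma_{\le r}(A,d)=P_d\times_{G_d}R_{\le r}$ is an associated bundle, and to reduce all three assertions to statements about a trivial product $R_{\le r}\times U\to U$ via fppf descent. The first step is to verify that $R_{\le r}\subset\PP(H_A\otimes H_B)$ is $G_d$-invariant: in the bipartite case $G_d$ acts (up to the swap when $d_A=d_B$) through $\GL_{d_A}\times\GL_{d_B}$ by $\Psi\mapsto g_A\Psi g_B^{\mathsf T}$, which preserves the matrix rank of the flattening. Hence each $R_{\le r}$, together with its hyperplane-class polarization $\mathcal{O}(1)|_{R_{\le r}}$, descends to a well-defined closed subscheme $\Sigma_{\le r}(A,d)\hookrightarrow SB(A)$ over $X$ (already used in \S\ref{sec:sing}).

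For flatness and fiber identification, I would choose an fppf cover $\{U_i\to X\}$ trivializing $P_d$. Over each $U_i$ the associated bundle becomes $R_{\le r}\times_k U_i$, and $\pi_r:\Sigma_{\le r}(A,d)\to X$ restricts to the second projection. Since $k$ is a field, $R_{\le r}\to\Spec k$ is flat (every module over a field is flat), hence so is its base change to $U_i$; flatness then descends along the fppf cover to yield flatness of $\pi_r$ globally. For any $x\in X$, further base changing to $\Spec\kappa(x)$ gives fiber $R_{\le r}\otimes_k\kappa(x)$, i.e.\ the classical determinantal variety over $\kappa(x)$. The fact that the transition data acts through $G_d$ by rank-preserving automorphisms guarantees that this identification of the fiber with $R_{\le r}$ is canonical up to an automorphism of $R_{\le r}$, independent of the chosen trivialization.

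Constancy of numerical invariants then follows. Dimension and codimension are topological and so coincide with the classical values $r(d_A+d_B-r)-1$ and $(d_A-r)(d_B-r)$ recalled in Section~\ref{sec:variety}. For the degree and Hilbert polynomial with respect to the polarization above, one can argue in two complementary ways: (a) $G_d$ acts by projective linear automorphisms that preserve both $R_{\le r}$ and $\mathcal{O}(1)|_{R_{\le r}}$, so the Hilbert function of each fiber is literally that of $R_{\le r}$, computed by the Schur/Weyman formula just recalled; or (b) flatness together with properness of $\pi_r$ forces the Hilbert polynomial to be locally constant on $X$, and the uniform fiber identification above upgrades ``locally constant'' to ``globally equal to the classical value.''

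The main obstacle I anticipate is the proper formulation of the polarization when the Brauer class of $A$ is nontrivial: then $\mathcal{O}_{SB(A)}(1)$ exists only as an $A$-twisted line bundle, so ``the Hilbert polynomial of $\Sigma_{\le r}(A,d)$'' must be interpreted either relative to $\mathcal{O}_{SB(A)}(\ell)$ for $\ell$ a multiple of $\per(A)$ (which does descend to an ordinary line bundle), or intrinsically as the Hilbert polynomial of the $G_d$-equivariant pair $(R_{\le r},\mathcal{O}(1)|_{R_{\le r}})$ transported along $P_d$. Reconciling these two viewpoints, and verifying that the Schur/Weyman formulas compute the same invariant in the twisted setting, is the only delicate step; the existence, flatness, and fiber identification are essentially formal consequences of the associated-bundle construction.
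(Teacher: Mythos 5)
Your proposal is correct and follows essentially the same route as the paper: the statement is obtained from the associated-bundle description $\Sigma_{\le r}(A,d)=P_d\times_{G_d}R_{\le r}$, fppf-local triviality ($\Sigma_{\le r}(A,d)\to X$ is locally $R_{\le r}\times U\to U$) giving flatness and the fiberwise identification with the classical $R_{\le r}$, and then constancy of dimension, degree, and Hilbert polynomial from the determinantal formulas recalled in the text. Your closing remark about interpreting the polarization when $[A]$ is nontrivial (via $\mathcal{O}_{SB(A)}(\ell)$ for $\ell$ a multiple of $\per(A)$, or equivariantly on the fiber) is a reasonable way to make precise what the paper leaves implicit in the phrase \emph{with respect to the standard polarization}, and does not change the argument.
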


\section{Physics example: a quantum spin system on a torus (gluing of the ground state and Hamiltonian)}
\label{sec:spin-chain-toy}

In this section, using a 4-site spin system with periodic boundary conditions, we explain as concretely as possible the following phenomenon: although on each chart (locally) the ground state can be made to look like a product state, once we glue over the parameter space, the transition (gauge transformation) may fail to be a local operation, and as a result the ground state appears entangled.
The key points are as follows:
\begin{enumerate}
  \item On a simply connected open set (a chart), we can locally identify the Hilbert space as
  $\mathcal{H}\cong \mathbb{C}^{2}\otimes\mathbb{C}^{2}$, and with respect to this identification we can arrange that the local ground state is a product state.
  \item However, the gluing (transition functions) when moving to another chart may fail to lie in the \emph{local operation group}
  ($ \mathrm{PGL}_2\times \mathrm{PGL}_2$) inside $\mathrm{PGL}(\mathcal{H})$.
  \item Consequently, the locally chosen set of product states (the Segre variety) does not glue globally,
  and even for the same ground-state line (projective state), the state becomes entangled when viewed in a different chart.
\end{enumerate}

The total Hilbert space of the original spin chain is $(\mathbb{C}^2)^{\otimes 4}$, which is the physically natural tensor-product decomposition. In contrast, in this section we work with the \emph{one-magnon subspace} $\mathcal{H}\cong\mathbb{C}^4$, and the further decomposition $\mathcal{H}\cong\mathbb{C}^2\otimes\mathbb{C}^2$ is a decomposition of an \emph{effective two-qubit system constructed by a change of basis}. Hence, the entanglement discussed here is not with respect to the physical decomposition $(\mathbb{C}^2)^{\otimes 4}$, but rather is relative to the local identifications of $\mathcal{H}$ (a local subsystem structure). The important point is that this subsystem structure cannot be transported globally in a compatible way.

\paragraph{Related literature: comparison with band topology (Berry phase, Chern number).}
In condensed matter physics, ``band topology'' studies how the eigenspaces of occupied bands over the Brillouin zone $T^d$
form a complex vector bundle over $T^d$ (the Bloch bundle), and how the gluing of local frames (gauge transformations) leads to the Berry connection and Berry curvature.
Universal quantities such as Chern numbers and $K$-theoretic classifications measure
\emph{whether a certain subbundle (the occupied bands) can be globally trivialized}
(e.g.\ \cite{Berry1984QuantalPhase,TKNN1982,Zak1989,HasanKane2010_RMP,QiZhang2011_RMP}).

By contrast, the \emph{global geometric entanglement} studied in this article measures (at least conceptually) a different level of structure:
\begin{itemize}
\item Band topology mainly concerns how an eigenstate line (or an occupied-band subspace) twists and glues globally.
The gluing group is typically a $U(r)$ gauge transformation (where $r$ is the rank of the occupied band),
and the obstruction appears as Chern numbers, etc.
\item In this article, we also allow the background (the quantum system itself) to globalize only \emph{projectively} via gluing of local identifications, and we furthermore ask whether a subsystem decomposition (tensor product structure) can be made globally compatible (whether a reduction of structure group is possible).
If this reduction fails, then the transitions (monodromy) between charts leave the local operation group, and the phenomenon can occur that \emph{a state that looks like a product state locally can produce entanglement after gluing}.
\end{itemize}
\[
\boxed{
\begin{minipage}{0.99\linewidth}
Therefore the obstruction classes discussed here are not Berry or Chern numbers. They are
invariants of a different kind. Berry/Chern data comes from globalizing eigenstates (or eigenbundles), whereas our classes come from globalizing the subsystem structure itself. When the monodromy of coordinate changes leaves the local operation group, it can generate entanglement. In that situation the associated obstruction classes are, at least in principle, accessible independently of Berry/Chern invariants (see \cite{Ikeda:2026ojm}).
\end{minipage}
}
\]

\subsection{Parameter space and the one-magnon Hilbert space}
As in \S\ref{sec:holonomic_gate}, we take the parameter space to be
\[
X:=(\mathbb{C}^{\times})^{2},\qquad (u,v)\in X.
\]
When considering a family of physically Hermitian (observable) Hamiltonians, we restrict to
\[
X_{\mathrm{phys}}:=(S^{1})^{2}\subset (\mathbb{C}^{\times})^{2},\qquad
u=e^{i\theta_u},\ v=e^{i\theta_v}.
\]
Below we fix $p=2$ and set
\[
m=p^{2}=4,\qquad \zeta:=e^{2\pi i/m}=i.
\]

\medskip
Consider a spin chain with periodic boundary conditions whose sites $r$ are spin-$1/2$ particles, with $r\in\mathbb{Z}/4\mathbb{Z}$. For simplicity, assume the chain consists of four particles. Let $\sigma_r^\pm$ denote the raising/lowering operators at site $r$, and define
\[
n_r:=\frac{1-\sigma_r^{z}}{2}
\]
to be the number operator for down spins (magnons). Taking the fully polarized state $|\!\uparrow\uparrow\uparrow\uparrow\rangle$ as a reference, define the one-magnon basis
\[
|r\rangle:=\sigma_r^-|\!\uparrow\uparrow\uparrow\uparrow\rangle,
\qquad r\in\mathbb{Z}/4\mathbb{Z}.
\]
This is the state where only site $r$ is down and the others are up. The one-magnon subspace
\[
\mathcal{H}:=\mathrm{span}\{|0\rangle,|1\rangle,|2\rangle,|3\rangle\}\cong \mathbb{C}^{4}
\]
is invariant under any Hamiltonian that preserves the total magnon number (the magnon moves while remaining a single magnon). Intuitively, $\mathcal{H}$ is a 4-dimensional effective space whose degrees of freedom are just the magnon position.

\medskip
On $\mathcal{H}$, define Weyl-type operators as in \eqref{eq:localH-U} by
\[
\mathbf{X}|r\rangle=|r+1\rangle,\qquad
\mathbf{Z}|r\rangle=\zeta^{\,r}|r\rangle\quad(\zeta=i),
\]
so that
\[
\mathbf{Z}\mathbf{X}=\zeta\,\mathbf{X}\mathbf{Z}.
\]
Physically, $\mathbf{X}$ can be viewed as the lattice translation (a one-site shift) restricted to the one-magnon subspace, and $\mathbf{Z}$ as a site-dependent phase (a phase gate implementable, for example, by local fields).

\subsection{Local Hamiltonian and local ground state}
Take a simply connected open set (chart) $U\subset X^{\mathrm{an}}$, and on it choose an analytic branch of $u^{1/4}$ (on a simply connected set one can choose a branch continuously). See Appendix \ref{sec:torus} for details. Fix constants $\Delta>J>0$, and define the following local Hamiltonian on the full Hilbert space of the spin chain:
\begin{equation}
\label{eq:localH-U}
H_U(u,v)
:=
-J\Big(u^{-1/4}\,\sigma_0^+\sigma_1^- + u^{1/4}\,\sigma_0^-\sigma_1^+\Big)
\;+\;\Delta\,(n_2+n_3).
\end{equation}

\medskip
\noindent
In words:
\begin{itemize}
  \item The first term is a hopping term moving the magnon between sites $0$ and $1$, with complex phases $u^{\pm1/4}$ in the hopping amplitude—i.e.\ a Peierls phase. For the Hamiltonian to be physically Hermitian, the coefficients must be complex conjugates of each other. If $|u|=1$ then $u^{-1/4}=\overline{u^{1/4}}$, so it is indeed Hermitian.
  \item The second term is a penalty term that assigns energy $\Delta$ if the magnon is at site $2$ or $3$. Hence, if $\Delta$ is large, low-energy states are pushed primarily into the subspace spanned by $\{|0\rangle,|1\rangle\}$.
\end{itemize}

\medskip
Since $H_U$ preserves the magnon number, it restricts to $\mathcal{H}$. In the basis $\{|0\rangle,|1\rangle,|2\rangle,|3\rangle\}$, the restriction is
\begin{equation}
\label{eq:matrixHU-JA}
H_U\big|_{\mathcal{H}}
=
\begin{pmatrix}
0 & -J u^{1/4} & 0 & 0\\
-J u^{-1/4} & 0 & 0 & 0\\
0 & 0 & \Delta & 0\\
0 & 0 & 0 & \Delta
\end{pmatrix}.
\end{equation}
The bottom-right $\Delta,\Delta$ means energy $\Delta$ if the magnon is at sites $2,3$. The top-left $2\times2$ block describes hopping between $|0\rangle$ and $|1\rangle$.

\begin{proposition}[Local ground state]
\label{prop:localGS-JA}
Assume $\Delta>J>0$ and $|u|=1$. Then the eigenvalues of $H_U|_{\mathcal{H}}$ are $-J,+J,\Delta,\Delta$, and the unique ground state (energy $-J$) is
\begin{equation}
\label{eq:GSU-JA}
|\mathrm{GS}\rangle_U
=
\frac{1}{\sqrt{2}}\Big(u^{1/4}|0\rangle + |1\rangle\Big).
\end{equation}
Moreover, the energy of the fully polarized state $|\!\uparrow\uparrow\uparrow\uparrow\rangle$ is $0$, and in higher-magnon sectors the energy is $\ge 0$ or $\ge \Delta$, so \eqref{eq:GSU-JA} with $-J<0$ is also the ground state of the full Hilbert space.
\end{proposition}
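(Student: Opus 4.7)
The plan is to exploit that $H_U$ splits as a sum of two commuting pieces acting on disjoint sets of sites, so the spectral question reduces to two independent two-site problems. I would first observe that both the hopping term $h_{01}:=-J(u^{-1/4}\sigma_0^+\sigma_1^-+u^{1/4}\sigma_0^-\sigma_1^+)$ and the penalty $\Delta(n_2+n_3)$ commute with the total magnon-number operator, so $H_U$ is block-diagonal on magnon sectors and in particular restricts to $\mathcal{H}\cong\mathbb{C}^4$. In the basis $\{|0\rangle,|1\rangle,|2\rangle,|3\rangle\}$, a short computation using the actions $\sigma_0^+\sigma_1^-|1\rangle=|0\rangle$, $\sigma_0^-\sigma_1^+|0\rangle=|1\rangle$ (the hopping kills $|2\rangle,|3\rangle$ since it touches only sites $0,1$), and $n_r|r\rangle=|r\rangle$ for $r=2,3$, recovers exactly the matrix in \eqref{eq:matrixHU-JA}.

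Next I would diagonalize this $4\times 4$ matrix. It is block-diagonal with a $2\times 2$ top block and $\Delta\cdot I_2$ below. Since $|u|=1$ gives $u^{-1/4}=\overline{u^{1/4}}$, the top block is Hermitian with characteristic polynomial $\lambda^2-J^2$, hence eigenvalues $\pm J$; the $-J$-eigenvector $(a,b)^{\mathsf T}$ satisfies $a=u^{1/4}b$, and normalizing (using $|u^{1/4}|=1$) produces \eqref{eq:GSU-JA}. Combined with $+J$ from the top block and $\Delta$ with multiplicity two from the bottom block, the spectrum on $\mathcal{H}$ is exactly $\{-J,+J,\Delta,\Delta\}$, and since $\Delta>J>0$, the state \eqref{eq:GSU-JA} is the unique minimizer on $\mathcal{H}$.

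The final step is to promote $|\mathrm{GS}\rangle_U$ to the global ground state of $H_U$ on $(\mathbb{C}^2)^{\otimes 4}$. Here the structural remark is that $h_{01}$ acts trivially on sites $\{2,3\}$ and $\Delta(n_2+n_3)$ acts trivially on sites $\{0,1\}$, so $H_U = h_{01}\otimes I_{23}+I_{01}\otimes \Delta(n_2+n_3)$ with the two summands commuting; hence they can be minimized independently. A direct diagonalization shows that $h_{01}$ on $\mathbb{C}^2_0\otimes\mathbb{C}^2_1$ has spectrum $\{-J,0,0,+J\}$, with unique $-J$-eigenvector $\tfrac{1}{\sqrt{2}}(u^{1/4}|\!\downarrow\uparrow\rangle+|\!\uparrow\downarrow\rangle)$ (the zero modes being the aligned states $|\!\uparrow\uparrow\rangle,|\!\downarrow\downarrow\rangle$), while $\Delta(n_2+n_3)$ attains its minimum $0$ uniquely at $|\!\uparrow\uparrow\rangle_{23}$. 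Consequently $H_U\ge -J$ on the full Hilbert space, with equality only at $\tfrac{1}{\sqrt{2}}(u^{1/4}|\!\downarrow\uparrow\uparrow\uparrow\rangle+|\!\uparrow\downarrow\uparrow\uparrow\rangle)=\tfrac{1}{\sqrt{2}}(u^{1/4}|0\rangle+|1\rangle)\in\mathcal{H}$, which is strictly below the energy $0$ of $|\!\uparrow\uparrow\uparrow\uparrow\rangle$ and below the penalized multi-magnon sectors.

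The only genuine subtlety is ensuring the reality of the $\pm J$ eigenvalues, which is precisely where $|u|=1$ (and thus the Hermiticity of $H_U$) is essential; away from the physical torus $(S^1)^2$ the eigenvalues of the top block are still $\pm J$ formally, but $H_U$ is no longer self-adjoint and the minimization-by-inequality argument in the last step would need to be replaced by a direct spectral computation. Within the physical regime, however, the argument above is essentially bookkeeping once the tensor factorization $\{0,1\}\sqcup\{2,3\}$ of the two summands is observed.
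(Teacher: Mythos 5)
Your overall route is the intended one (the paper leaves the proof implicit: block-diagonalize \eqref{eq:matrixHU-JA}, read off $\pm J,\Delta,\Delta$, and compare with the other magnon sectors), and your treatment of the full Hilbert space is actually cleaner than the paper's loose ``$\ge 0$ or $\ge\Delta$'' remark: writing $H_U=h_{01}\otimes I_{23}+I_{01}\otimes\Delta(n_2+n_3)$ with commuting summands and minimizing each factor separately gives the global bound $H_U\ge -J$ with a unique minimizer in one stroke, which is a genuine improvement in rigor.

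There is, however, one concrete slip in your first paragraph: with the paper's conventions ($|r\rangle=\sigma_r^-|\!\uparrow\uparrow\uparrow\uparrow\rangle$, so $\sigma^+$ annihilates and $\sigma^-$ creates a magnon), the hopping operator $\sigma_0^+\sigma_1^-$ moves the magnon from site $0$ to site $1$, i.e.\ $\sigma_0^+\sigma_1^-|0\rangle=|1\rangle$ and $\sigma_0^-\sigma_1^+|1\rangle=|0\rangle$ — exactly the reverse of what you wrote. If one follows your stated actions literally, the one-magnon block comes out as
\[
\begin{pmatrix}0 & -Ju^{-1/4}\\ -Ju^{1/4} & 0\end{pmatrix},
\]
the transpose (for $|u|=1$, the complex conjugate) of the block in \eqref{eq:matrixHU-JA}, and its $-J$ eigenvector is $\tfrac{1}{\sqrt2}\bigl(|0\rangle+u^{1/4}|1\rangle\bigr)$, which is projectively different from \eqref{eq:GSU-JA} unless $u^{1/2}=1$. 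Your subsequent diagonalization (relation $a=u^{1/4}b$, and the two-site eigenvector $\tfrac{1}{\sqrt2}(u^{1/4}|\!\downarrow\uparrow\rangle+|\!\uparrow\downarrow\rangle)$) uses the correct matrix, so the conclusion stands; just swap the two displayed operator actions so that the claim ``this recovers \eqref{eq:matrixHU-JA}'' is actually true as written.
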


\noindent
\emph{Important point:} the state \eqref{eq:GSU-JA} depends on $u^{1/4}$, but $u^{1/4}$ is globally multi-valued. Thus, as written, it is difficult to define it as a single vector $|\mathrm{GS}(u,v)\rangle$ varying continuously over all of $X$. How to handle this is exactly the gluing discussion that follows. See Appendix \ref{sec:torus} for more details.

\subsection{The idea of gluing: how to globalize the Hamiltonian}
In general, if one changes the choice of basis in a quantum system (applies a unitary transformation), the Hamiltonian changes by conjugation:
\[
H\ \longmapsto\ U^{-1}HU.
\]
However, the eigenvalues (energy spectrum) and the projective eigenspaces do not change, so Hamiltonians representing the same physics are determined only up to unitary conjugacy. We use this identification up to conjugacy as a \emph{gluing rule} over the parameter space $X$.

\medskip
In this example, it is most natural to use lattice translations for gluing. Define the translation operator $T$ on the full spin-chain Hilbert space by
\[
T\sigma_r^\pm T^{-1}=\sigma_{r+1}^\pm\qquad(\text{indices mod }4).
\]
Restricting $T$ to the one-magnon subspace gives
\[
T|_{\mathcal{H}}=\mathbf{X},
\]
agreeing with the shift in \eqref{eq:Weyl}.

\medskip
On another simply connected chart $U'\subset X^{\mathrm{an}}$, define the Hamiltonian by conjugation:
\begin{equation}
\label{eq:localH-Up-JA}
H_{U'}(u,v):=T^{-1}H_U(u,v)\,T.
\end{equation}
Writing $H_{U'}$ in terms of spin operators yields
\begin{equation}
\label{eq:localH-Up-explicit-JA}
H_{U'}(u,v)
=
-J\Big(u^{-1/4}\,\sigma_3^+\sigma_0^- + u^{1/4}\,\sigma_3^-\sigma_0^+\Big)
\;+\;\Delta\,(n_1+n_2),
\end{equation}
which is still a \emph{local interaction} (only nearest-neighbor interactions). Thus $H_U$ couples the bond $(0,1)$, while $H_{U'}$ couples the bond $(3,0)$, shifted by one site.

\begin{figure}[H]
    \centering
    \includegraphics[width=0.9\linewidth]{monodromy.pdf}
    \label{fig:placeholder}
\end{figure}

\medskip
\noindent
\begin{remark}
On the overlap $U\cap U'$, \eqref{eq:localH-Up-JA} shows that $H_U$ and $H_{U'}$ are related by unitary conjugation. Thus, rather than writing a single global Hamiltonian over $X$, we define a \emph{globally twisted} family of Hamiltonians by
\[
\text{taking local expressions }\{H_U,\ H_{U'},\dots\}\ \text{and gluing them on overlaps by conjugacy.}
\]
Mathematically, this corresponds to constructing a family that is determined only up to fiberwise conjugation (a family glued by gauge transformations in the projective group).
\end{remark}

\subsection{Gluing the ground-state line: transitions of projective states}
Next we explain how the ground state glues. Since $H_{U'}=T^{-1}H_UT$, if $|\psi\rangle$ is an eigenvector of $H_U$, then $T^{-1}|\psi\rangle$ is an eigenvector of $H_{U'}$ with the same eigenvalue. In particular, for the local ground state \eqref{eq:GSU-JA}, define
\[
|\mathrm{GS}\rangle_{U'}:=T^{-1}|\mathrm{GS}\rangle_U.
\]
On the one-magnon subspace, $T^{-1}=\mathbf{X}^{-1}$, so in the projective space $\mathbb{P}(\mathcal{H})$ we obtain the gluing rule
\[
[\mathrm{GS}]_{U'}=[\,\mathbf{X}^{-1}\mathrm{GS}\,]_{U}
\]
(where $[\cdot]$ denotes the state modulo overall non-zero scalar).

\medskip
A direct computation gives
\begin{equation}
\label{eq:GSUprime-JA}
|\mathrm{GS}\rangle_{U'}
=
\mathbf{X}^{-1}|\mathrm{GS}\rangle_{U}
=
\frac{1}{\sqrt{2}}\Big(u^{1/4}|3\rangle + |0\rangle\Big).
\end{equation}
This gluing is naturally formulated not for state vectors but for elements of projective space, so the gluing group is  $\mathrm{PGL}(\mathcal{H})$.

\subsection{A local two-qubit decomposition and the appearance of entanglement via gluing}
Now that the setup is in place, we check locally product, but becomes entangled after gluing. To view $\mathcal{H}\cong\mathbb{C}^4$ as two qubits, encode the basis $|r\rangle$ by a 2-bit string $(a,b)$:
\[
r=a+2b,\qquad a,b\in\{0,1\},
\]
and identify
\[
|a\rangle_A\otimes |b\rangle_B\ \longleftrightarrow\ |r=a+2b\rangle.
\]
Then
\[
|0\rangle=|00\rangle,\quad |1\rangle=|10\rangle,\quad
|2\rangle=|01\rangle,\quad |3\rangle=|11\rangle.
\]
(Again, this is a \emph{local identification inside the one-magnon space}, not the physical decomposition of the four-spin system itself.)

\medskip
Under this identification, \eqref{eq:GSU-JA} becomes
\begin{align}
|\mathrm{GS}\rangle_{U}
&=
\frac{1}{\sqrt{2}}\Big(u^{1/4}|00\rangle+|10\rangle\Big)
=
\Big(\tfrac{u^{1/4}|0\rangle_A+|1\rangle_A}{\sqrt{2}}\Big)\otimes |0\rangle_B.
\label{eq:GSU-product-JA}
\end{align}
Hence $|\mathrm{GS}\rangle_U$ is a \emph{product state} (in the sense of subsystems $A$ and $B$).

\medskip
On the other hand, the glued ground state \eqref{eq:GSUprime-JA} is
\[
|\mathrm{GS}\rangle_{U'}
=
\frac{1}{\sqrt{2}}\Big(|00\rangle+u^{1/4}|11\rangle\Big).
\]
This is immediate for readers familiar with quantum information, but one can see that this state is entangled as follows:
\begin{itemize}
\item In general, a two-qubit state
\(
|\phi\rangle=\sum_{a,b=0}^{1}c_{ab}|a\rangle_A\otimes|b\rangle_B
\)
corresponds to the coefficient matrix $C=(c_{ab})$, and
\emph{being a product state is equivalent to $C$ having rank $1$} (Proposition \ref{prop:matrixrank1}).
\item Here the coefficient matrix is
\(
C=\mathrm{diag}(1,u^{1/4}),
\)
so as long as $u^{1/4}\neq 0$, the rank is $2$. Hence the Schmidt rank is $2$, thus it is an \emph{entangled state}.
\end{itemize}
In particular, if $u=1$ then
\[
|\mathrm{GS}\rangle_{U'}=\frac{1}{\sqrt{2}}\big(|00\rangle+|11\rangle\big),
\]
which is exactly the normalized Bell state.

\subsection{Why entanglement appears after gluing: the transition element is not a local operation}
Finally, we summarize why this happens from the viewpoint of gauge transformations.

\medskip
The gluing on the overlap $U\cap U'$ was
\[
[\mathrm{GS}]_{U'}=[\,\mathbf{X}^{-1}\mathrm{GS}\,]_U.
\]
That is, the transition element is
\[
g_v=[\mathbf{X}^{-1}]\in \mathrm{PGL}(\mathcal{H}).
\]
If $g_v$ lay in the local operation group $\mathrm{PGL}(\mathbb{C}^2)\times\mathrm{PGL}(\mathbb{C}^2)$, then the locally chosen set of product states (the Segre variety) would be preserved on overlaps, and one should be able to glue a global family of product states consistently.

\medskip
However, $\mathbf{X}^{-1}$ is not a local operation with respect to this local decomposition.
Indeed, for the basis written as $r=a+2b$, it acts on a state (up to sign) as 
\[
\mathbf{X}^{-1}|a,b\rangle=
\begin{cases}
|a-1,\ b\rangle & (a=1),\\[2pt]
|1,\ b-1\rangle & (a=0),
\end{cases}
\]
so the change in $b$ occurs depending on $a$. This is an operation of the type $B$ changes depending on the state of $A$,
which is precisely the \emph{nonlocality} typical of controlled gates (CNOT-type gates). This nonlocal transition element is the mechanism that carries a locally product ground state to an entangled form when viewed in another chart.

\medskip
In summary, this example can be understood as follows:
\begin{quote}
\emph{Locally, one can introduce a two-qubit viewpoint ($\mathcal{H}\cong\mathbb{C}^2\otimes\mathbb{C}^2$) and make the ground state a product state. However, because the transition element defining the gluing between charts does not lie in the local operation group, this viewpoint cannot be extended globally. This failure of extension is observed as (geometric) entanglement arising from gluing.}
\end{quote}
This is an example that translates the viewpoint parallel transport along loops in the parameter space acts as a quantum gate (a holonomic gate), and it can be entangling (a physical translation of a geometric obstruction) into a simple model that is more directly implementable and has a more universal character. 

\medskip
\noindent
\[
\boxed{
\begin{minipage}{0.99\linewidth}
\textbf{Physical takeaway.}
For spin systems, it is reasonable to expect that many models with entangled eigenstates can also exhibit gluing-induced entanglement phenomena in parameter families. One reason such effects are rarely discussed is that many treatments work in a single coordinate patch, even when the parameter manifold requires multiple charts. For example, neither the torus nor the sphere can be covered by one chart, yet in many settings (such as topological insulators on the Brillouin zone) one has treated them on a single coordinate patch. Of course those are quantum many-body systems, so eigenstates (especially ground states) are already entangled on a local chart. But applying the viewpoint of this work naturally reveals that \textbf{new (global geometric) entanglement associated with coordinate changes exists}.
We formulate this systematically and illustrate it in basic examples. Moreover, the universal quantities (obstruction classes) accompanying such global geometric entanglement--quantities, to the best of our knowledge, that have not been emphasized in existing physics--can be newly observed (separately from the Berry numbers and Chern numbers of energy bands).
\end{minipage}
}
\]

\section{Spherical Hecke Spectra and an Algebraic Criterion for Entanglement}\label{sec:16}
So far, we have discussed a twisted family of quantum states. Here, let us consider measurement of entanglement as spectrum of some operators. To do so, we consider the map that sends the system to its dual space. The reasons are explained as follows.  

\subsection{Guiding philosophy}
The aim of this section is to set up a language for describing the following phenomenon on the geometric side,
\[
\text{``Locally one can see a tensor decomposition, but globally it does not glue.''}
\]
This \emph{local-to-global failure} (entangling monodromy) is to be encoded, on the dual side, as \emph{simultaneous eigenvalue data of commuting operators}.

The monodromy on the geometric side (a $\PGL_n$-valued holonomy) and the Satake parameter on the dual side (a semisimple conjugacy class in $G^\vee$) are \emph{not} obtained by identifying \emph{the same element of the same group}.

Nevertheless, the reason why the correspondence is ``visible'' is that the questions on both sides share essentially the same logical structure:
\[
\textbf{``Can one reduce (factorize) to a tensor type?''}
\]
More precisely:
\begin{itemize}
\item Geometric side: if the structure group cannot be reduced to $G_{\mathbf d}$, then transition functions leave $G_{\mathbf d}$,
  and holonomy appears that sends local product data to entangled data.
\item Satake side: if the Satake parameter does not lie in the image of $r_{\mathbf d}$, then
  the image conditions (polynomial relations) imposed on invariants fail.
\end{itemize}
Here we are comparing \emph{factorizability} on both sides.

\subsection{Terminology}

\subsubsection*{On ``spherical''}
The word ``spherical'' appears in two different senses in this section, so we first distinguish them:
\begin{itemize}
\item \textbf{(Sph-Hecke)} ``spherical'' in the spherical Hecke algebra, i.e.\ the unramified situation: take $G$ to be an unramified reductive group (in particular, it suffices that $G$ is split), fix a hyperspecial subgroup \(K\), and consider \(\mathcal H(G,K)\). This holds in general for (split) reductive $G$ and is unrelated to the sphericity of a homogeneous space $X=G/ H$.
\item \textbf{(Sph-var)} ``spherical'' in the sense of spherical varieties: for $X=G/ H$, a Borel subgroup has an open orbit. This is needed to interpret spectral conditions analytically, but in this section it appears only in \S\ref{sec:Sph-var} and is not used elsewhere.
\end{itemize}

From now on, the polynomial description of image conditions (the spectral defining ideal), and the discussion of reading them as Hecke eigenvalues via the Satake isomorphism, assume only (Sph-Hecke).

\medskip
\noindent
We also emphasize that the objects for which we discuss product/entangled are not the set of all quantum states, but rather semisimple conjugacy classes describable by (unramified) Satake parameters.

\subsubsection*{Notation and basic terms}
Let $F$ be a non-archimedean local field, $G$ a split reductive group, and $K\subset G$ a hyperspecial maximal compact subgroup. Write the spherical Hecke algebra as
\[
\mathcal H(G,K):=C_c^\infty(K\backslash G/K)
\]
(with convolution product).

Let $\pi$ be an irreducible smooth representation of $G$ with a nonzero $K$-fixed vector (a spherical representation). Then $\dim\pi^K=1$, so the action of $\mathcal H(G,K)$ on $V^K$ is by scalars, and one obtains the Hecke eigencharacter
\[
\lambda_\pi:\ \mathcal H(G,K)\to\C,\qquad
\pi(f)|_{\pi^K}=\lambda_\pi(f)\cdot \mathrm{id}.
\]
The ``parameters'' in this section are obtained by coordinatizing this eigencharacter via the Satake isomorphism.

\begin{definition}[Dual group $G^\vee$]
Let $G^\vee$ denote the connected complex reductive group defined by the dual root datum of $G$. In the basic example of this note,
\[
G=\PGL_n \quad\Longrightarrow\quad G^\vee=\SL_n(\C).
\]
\end{definition}

\begin{definition}[Dual torus]
Fix a split maximal torus $T\subset G$. The dual torus $T^\vee\subset G^\vee$ is the complex torus characterized by
\[
X^\ast(T^\vee)=X_\ast(T).
\]
\end{definition}

\begin{remark}[What we call a ``parameter'' in this section]
What we need is a point giving simultaneous eigenvalues for $\mathcal H(G,K)$. By the Satake isomorphism, this can be expressed as a point of $T^\vee/W_G$ (equivalently, a semisimple conjugacy class in $G^\vee$).
\end{remark}

\begin{definition}[Adjoint quotient (space of semisimple conjugacy classes)]
Define the adjoint (GIT) quotient for the conjugation action of $G^\vee$ by
\[
G^\vee\sslash G^\vee\ :=\ \Spec\bigl(\C[G^\vee]^{G^\vee}\bigr).
\]
This is the space of conjugacy classes identified by conjugation-invariant functions; over $\C$, semisimple conjugacy classes are naturally represented as points.
\end{definition}

\begin{theorem}[Chevalley restriction isomorphism]\label{thm:Chevalley}
Using a maximal torus $T^\vee\subset G^\vee$ and the Weyl group $W_G:=N_{G^\vee}(T^\vee)/T^\vee$, one has
\[
\C[G^\vee]^{G^\vee}\ \cong\ \C[T^\vee]^{W_G},
\qquad\text{hence}\qquad
G^\vee\sslash G^\vee\ \simeq\ T^\vee/W_G.
\]
\end{theorem}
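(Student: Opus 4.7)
The plan is to produce the ring isomorphism $\C[G^\vee]^{G^\vee} \cong \C[T^\vee]^{W_G}$ and then deduce the statement $G^\vee \sslash G^\vee \simeq T^\vee/W_G$ by applying $\Spec$. First, I introduce the restriction map
\[
\rho\colon \C[G^\vee]^{G^\vee} \longrightarrow \C[T^\vee], \qquad f \longmapsto f|_{T^\vee}.
\]
To see that the image lies in the $W_G$-invariants, pick any representative $n \in N_{G^\vee}(T^\vee)$ of $w \in W_G$; then for $t \in T^\vee$, conjugation invariance of $f$ gives $f(w\cdot t) = f(n t n^{-1}) = f(t)$. So $\rho$ corestricts to $\rho\colon \C[G^\vee]^{G^\vee} \to \C[T^\vee]^{W_G}$, and it remains to show this corestriction is a $\C$-algebra isomorphism.

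For injectivity, I would use the density of semisimple elements. Since $G^\vee$ is connected reductive over $\C$, the locus of regular semisimple elements is Zariski-open and dense in $G^\vee$, and every such element is $G^\vee$-conjugate to an element of $T^\vee$. Hence if $f \in \C[G^\vee]^{G^\vee}$ satisfies $f|_{T^\vee}=0$, then by conjugation invariance $f$ vanishes on the regular semisimple locus, and by density $f=0$ everywhere.

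The substantive step, which I expect to be the main obstacle, is surjectivity: given a $W_G$-invariant regular function on $T^\vee$, produce a regular conjugation-invariant extension to all of $G^\vee$. The cleanest route is via characters. Each irreducible finite-dimensional representation $V_\lambda$ of $G^\vee$ (indexed by a dominant weight $\lambda$ with respect to a chosen Borel) gives a character $\chi_\lambda \in \C[G^\vee]^{G^\vee}$, and by the algebraic Peter--Weyl decomposition of $\C[G^\vee]$ together with Schur's lemma, the family $\{\chi_\lambda\}$ is a $\C$-basis of $\C[G^\vee]^{G^\vee}$. The restriction of $\chi_\lambda$ to $T^\vee$ is the formal character $\mathrm{ch}(V_\lambda)=\sum_\mu (\dim V_\lambda(\mu))\,e^\mu$, which is $W_G$-invariant because $W_G$ permutes the weight spaces. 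I would then argue that $\{\mathrm{ch}(V_\lambda)\}_{\lambda\ \text{dominant}}$ is a $\C$-basis of $\C[T^\vee]^{W_G}$: the orbit sums $m_\lambda=\sum_{\mu\in W_G\lambda}e^\mu$ already form a basis, and the transition between $\{m_\lambda\}$ and $\{\mathrm{ch}(V_\lambda)\}$ is unitriangular with respect to the dominance partial order on weights. Surjectivity of $\rho$ follows from matching bases to bases.

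Finally, applying $\Spec$ to the isomorphism $\C[G^\vee]^{G^\vee} \cong \C[T^\vee]^{W_G}$ yields $G^\vee\sslash G^\vee \simeq T^\vee/W_G$, where the right-hand side is the GIT quotient $\Spec\,\C[T^\vee]^{W_G}$; since $W_G$ is finite acting on the affine torus $T^\vee$, this coincides with the ordinary orbit quotient as a variety. The technical heart of the whole argument is therefore the combination of Peter--Weyl (to identify $\C[G^\vee]^{G^\vee}$ as characters) and the unitriangular transition from orbit sums to Weyl characters (to identify the image in $\C[T^\vee]^{W_G}$); any difficulty in writing this out cleanly will be concentrated there.
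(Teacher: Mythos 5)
Your proposal is correct, but note that the paper does not prove this statement at all: it quotes the Chevalley restriction isomorphism as standard background (and immediately specializes to $G^\vee=\SL_n(\C)$, where the invariant ring is just $\C[e_1,\dots,e_{n-1}]$), so there is no in-paper argument to compare against. What you have written is the standard character-theoretic proof of the group-level Chevalley theorem, and each step holds: the restriction lands in $W_G$-invariants by choosing a representative $n\in N_{G^\vee}(T^\vee)$; injectivity follows from density of the regular semisimple locus plus conjugacy of maximal tori; and surjectivity follows from algebraic Peter--Weyl together with Schur's lemma (so $\{\chi_\lambda\}$ is a basis of $\C[G^\vee]^{G^\vee}$ under conjugation) and the unitriangularity of $\mathrm{ch}(V_\lambda)=m_\lambda+\sum_{\mu<\lambda}c_{\lambda\mu}m_\mu$ over the orbit sums, where one should say explicitly that for each dominant $\lambda$ the set of dominant $\mu\le\lambda$ is finite, so the inversion terminates even though dominance is only a partial order. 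The final passage to $G^\vee\sslash G^\vee\simeq T^\vee/W_G$ via $\Spec$, and the identification of the finite-group GIT quotient with the orbit quotient, are also fine. For the paper's actual use case ($\PGL_n$, so $G^\vee=\SL_n(\C)$) a much shorter argument is available -- conjugation invariants of $\SL_n$ are generated by the coefficients of the characteristic polynomial, whose restrictions to the diagonal torus are the elementary symmetric polynomials, which generate $\C[T^\vee]^{S_n}$ -- but your general proof covers all the split reductive groups the section allows, which is a genuine gain in scope.
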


\paragraph{Example: the case $G=\PGL_n$.}
Here $G^\vee=\SL_n(\C)$, $T^\vee$ is the diagonal torus ($\prod_{i=1}^n z_i=1$), and $W_G\simeq S_n$.
Thus
\[
\C[T^\vee]^{W_G}\ \cong\ \C[e_1,\dots,e_{n-1}]
\]
(where $e_k$ are the elementary symmetric polynomials and $e_n=\prod z_i=1$ is fixed),
and the ``Satake variables'' in this note correspond to generators of this invariant polynomial ring.

\subsubsection*{Segre variety and stabilizer (input from the geometric side)}
Fix $\mathbf d=(d_1,\dots,d_r)$ and set $n=\prod_{i=1}^r d_i$.
Viewing
\[
V \cong V_1\otimes\cdots\otimes V_r,\qquad \dim V_i=d_i,
\]
we obtain the Segre embedding
\[
\Sigma_{\mathbf d}:=\PP(V_1)\times\cdots\times \PP(V_r)\hookrightarrow \PP(V).
\]
Consider
\[
G:=\PGL(V)\simeq \PGL_n,\qquad
G_{\mathbf d}:=\mathrm{Stab}_G(\Sigma_{\mathbf d}),\qquad
X_{\mathbf d}:=G/G_{\mathbf d}.
\]
The space $X_{\mathbf d}$ is the space of subsystem structures of type $\mathbf d$.

\smallskip
\noindent
The shape of $G_{\mathbf d}$ and the dimension computations below are based on the standard fact that its identity component is essentially $\prod_i \PGL_{d_i}$ (the finite group part coming from permutations of equal-dimensional factors is inessential for the dimension and image-condition arguments, so we omit it from now on).

\subsection{Sphericity of $X_{\mathbf d}$}
\label{sec:Sph-var}
\begin{definition}[Spherical variety]
A $G$-variety $X$ is called spherical if some (hence any) Borel subgroup $B\subset G$ has an open orbit on $X$.
\end{definition}

\medskip
\noindent
For basic viewpoints in checking sphericity, see e.g.\ \cite{Brion1989}:
\begin{enumerate}
\item \textbf{(Existence of an open Borel orbit)}: does $B$ have an open orbit on $X$?
\item \textbf{(Necessary condition: a dimension inequality)}: if there is an open orbit, then $\dim(B\cdot x)=\dim X$, hence
\[
\dim X\ \le\ \dim B
\]
must hold (a \emph{necessary condition} for sphericity).
\item \textbf{(Rank and little Weyl group)}: a spherical $X$ has an associated Cartan part $A_X$ and little Weyl group $W_X$,
and one defines $\rank(X):=\dim A_X$.
\item \textbf{(Comparison with known classifications)}: when $G$ is classical and $H$ is reductive,
one can compare with classifications of reductive spherical subgroups (classically, Krämer \cite{Kraemer-SphericalSubgroupsClassical}, etc.).
\end{enumerate}

\begin{proposition}[Excluding sphericity by a dimension obstruction]\label{prop:sphericity-dim-test-clean}
Assume $d_i\ge2$, $r\ge2$, $n=\prod_i d_i$, and take $G=\PGL_n$, $X_{\mathbf d}=G/G_{\mathbf d}$.
The necessary condition for sphericity $\dim X_{\mathbf d}\le \dim B$ (for a Borel $B$) can hold only in the case
\[
\mathbf d=(2,2)\quad(n=4).
\]
Hence if $\mathbf d\neq(2,2)$, this particular homogeneous space $X_{\mathbf d}$ \emph{cannot be a spherical variety}
(at least it is excluded by the dimension obstruction).
\end{proposition}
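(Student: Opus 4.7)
The plan is to translate the sphericity dimension inequality into an explicit combinatorial inequality on $\mathbf{d}$ and then analyze it. First I would compute $\dim G = n^2 - 1$ and $\dim B = n(n+1)/2 - 1$ (the image in $\PGL_n$ of the upper-triangular Borel). For $G_{\mathbf d}$, the identity component is the image of the tensor-product map $\prod_i \GL(V_i) \to \GL(V)$, whose kernel consists of scalar tuples $(\lambda_i\,\mathrm{id})$ with $\prod_i \lambda_i = 1$ (dimension $r-1$); quotienting further by the central $\Gm$ of $\GL(V)$ yields
\[
\dim G_{\mathbf d} = \sum_{i=1}^r d_i^2 - r.
\]
The finite component group of $G_{\mathbf d}$ (permutations of factors of equal dimension, per Westwick) does not affect the dimension. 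Hence $\dim X_{\mathbf d} = n^2 - 1 - \sum_i d_i^2 + r$, and after cancellation the necessary condition $\dim X_{\mathbf d} \le \dim B$ reduces to
\[
\binom{n}{2} \;\le\; \sum_{i=1}^{r} d_i^2 - r.
\]
Setting $f(\mathbf d) := \binom{n}{2} - \sum_i d_i^2 + r$, the proposition becomes the claim $f(\mathbf d) > 0$ for every admissible $\mathbf d \neq (2,2)$, with $f(2,2) = 0$ (immediate: $6 - 8 + 2 = 0$).

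The base case $r = 2$ is an elementary polynomial calculation. Substituting $d_1 = 2$ gives $f(2, d_2) = (d_2 - 2)(d_2 + 1)$, which vanishes precisely at $d_2 = 2$ and is strictly positive for $d_2 \ge 3$. For $d_1, d_2 \ge 3$, the explicit expression $f(d_1, d_2) = d_1 d_2(d_1 d_2 - 1)/2 - d_1^2 - d_2^2 + 2$ is positive by inspection at $(3,3)$ (value $20$), and both partial derivatives are positive throughout this region, so $f \ge 20$.

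For $r \ge 3$ I would argue by induction on $r$, proving that appending any factor $m \ge 2$ strictly increases $f$. A direct computation gives
\[
f(\mathbf d, m) - f(\mathbf d) = \binom{nm}{2} - \binom{n}{2} - m^2 + 1 = \frac{n(nm^2 - m - n + 1)}{2} - m^2 + 1.
\]
Positivity of this expression for $n \ge 4$ and $m \ge 2$ reduces, after dividing by $m^2$ and using the crude bounds $n/m \le n/2$ and $(n^2 - n - 2)/m^2 \le (n^2 - n - 2)/4$, to the inequality $3n^2 - n - 6 > 0$, which holds for all $n \ge 2$. Combined with the base case, this yields $f(\mathbf d) > 0$ whenever $r \ge 3$.

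The main care point, rather than a genuine obstacle, is the computation of $\dim G_{\mathbf d}$: one must correctly track the chain $\prod_i \GL(V_i) \twoheadrightarrow G_{\mathbf d}^\circ \hookrightarrow \PGL(V)$, accounting both for the $(r-1)$-dimensional joint-scalar kernel of the tensor-product map and for the passage from $\GL$ to $\PGL$. Once this is in place, the rest is a routine numerical check. It is also conceptually satisfying that the equality case $\mathbf d = (2,2)$ is precisely where $\Sigma_{(2,2)} \subset \PP^3$ is a smooth quadric, so that $X_{(2,2)}$ has a genuine chance of being spherical (and indeed is, which one can confirm by exhibiting an open Borel orbit).
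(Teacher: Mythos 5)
Your proof is correct, and its skeleton is the same as the paper's: compute $\dim B=\tfrac{n(n+1)}{2}-1$, $\dim G_{\mathbf d}^{\circ}=\sum_i d_i^2-r$ (your derivation via the kernel of $\prod_i\GL(V_i)\to\GL(V)$ lands on the same number as the paper's identification $G_{\mathbf d}^{\circ}\simeq\prod_i\PGL_{d_i}$), and reduce the necessary condition to a single integer inequality — your $\binom{n}{2}\le\sum_i d_i^2-r$ is exactly one half of the paper's displayed condition $n^2-n-2\sum_i d_i^2+2r\le 0$. Where you genuinely diverge is in how that inequality is killed off. For $r=2$ the paper rewrites it as $(a^2-2)(b^2-2)\le ab$ and rules out $a,b\ge 3$ directly, while you substitute $d_1=2$ to get $(d_2-2)(d_2+1)$ and handle $d_1,d_2\ge 3$ by monotonicity from the value $20$ at $(3,3)$; these are comparable. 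For $r\ge 3$ the paper uses the crude bound $d_i\le n/2^{r-1}$, hence $\sum_i d_i^2\le r n^2/4^{r-1}$, and checks positivity at $n\ge 8$, whereas you induct on $r$, showing that appending a factor $m\ge 2$ increases $f$ by $\binom{nm}{2}-\binom{n}{2}-m^2+1>0$ (your reduction to $3n^2-n-6>0$ checks out). Your induction avoids the exponential estimate and makes the uniqueness of the tight case $(2,2)$ transparent (equality in the base case, strict increase thereafter); the paper's bound has the merit of being a one-display argument with no recursion. One small caveat: your closing parenthetical that $X_{(2,2)}$ \emph{is} spherical is an extra unproved claim, not needed for (and not asserted by) the proposition, which only states the dimension test singles out $(2,2)$.
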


\begin{proof}
First,
\[
\dim G=\dim \PGL_n=n^2-1.
\]
The dimension of a Borel subgroup of $\PGL_n$ is
\[
\dim B=\frac{n(n-1)}2+(n-1)=\frac{(n-1)(n+2)}2.
\]

Next, ignoring the finite group part (permutations of isomorphic factors), we may regard the identity component as
\[
G_{\mathbf d}^{\circ}\simeq \prod_{i=1}^r \PGL_{d_i}
\]
(a standard model coming from the evident symmetries of the Segre variety). Then
\[
\dim G_{\mathbf d}^{\circ}=\sum_{i=1}^r(d_i^2-1)=\sum_i d_i^2-r.
\]
Therefore
\[
\dim X_{\mathbf d}=\dim G-\dim G_{\mathbf d}^{\circ}
=(n^2-1)-\Bigl(\sum_i d_i^2-r\Bigr)
=n^2-\sum_i d_i^2+(r-1).
\]

The necessary condition $\dim X_{\mathbf d}\le \dim B$ is equivalent to
\[
n^2-\sum_i d_i^2+(r-1)\ \le\ \frac{(n-1)(n+2)}2,
\]
i.e.
\[
n^2-n-2\sum_i d_i^2+2r\ \le\ 0
\qquad(\ast).
\]

\smallskip\noindent
\textbf{(i) The case \(r=2\).}
Write \(\mathbf d=(a,b)\) with \(a,b\ge2\) and \(n=ab\). Then \((\ast)\) is equivalent to
\((a^2-2)(b^2-2)\le ab\). If \(a,b\ge3\), then \(a^2-2>a\) and \(b^2-2>b\), so the left-hand side is \(>ab\), a contradiction. Hence at least one of \(a,b\) equals \(2\). For instance, if \(a=2\), then
\(2(b^2-2)\le 2b\), i.e.\ \((b-2)(b+1)\le0\), so \(b=2\).

\smallskip\noindent
\textbf{(ii) The case \(r\ge 3\).}
Then \(n=\prod_i d_i\ge 2^3=8\). Also, since \(n/d_i\ge 2^{r-1}\), we have
\(d_i\le n/2^{r-1}\), hence \(d_i^2\le n^2/4^{r-1}\). Thus
\[
\sum_i d_i^2\le \frac{r}{4^{r-1}}\,n^2.
\]
Substituting into the left-hand side of \((\ast)\) gives
\[
n^2-n-2\sum_i d_i^2+2r
\ \ge\
n^2\Bigl(1-\frac{2r}{4^{r-1}}\Bigr)-n+2r.
\]
For $r\ge3$, one has $1-\frac{2r}{4^{r-1}}>0$, and substituting $n\ge8$ makes the right-hand side positive,
so \((\ast)\) cannot hold.
\end{proof}

\begin{remark}
Proposition~\ref{prop:sphericity-dim-test-clean} asserts that, for the construction $X_{\mathbf d}=G/G_{\mathbf d}$, sphericity survives essentially only for $(2,2)$.

\smallskip
On the other hand, the later arguments in this note that characterize image conditions (the $\mathbf d$-product locus) by invariant polynomials are purely algebraic and do not assume (Sph-var). Hence it is not a contradiction that one can obtain necessary and sufficient conditions even when $\mathbf d\neq(2,2)$ (e.g.\ Proposition~\ref{prop:finite-witness-revised} and Theorem~\ref{thm:3qubit-product-locus-note}).
\end{remark}

\subsection{Spherical Hecke algebra and the Satake isomorphism}
\textbf{Standing assumption (throughout this section)}:
Let $F$ be a non-archimedean local field, $G$ a split reductive group (in this note mainly $G=\PGL_n(F)$),
and $K\subset G$ a hyperspecial maximal compact subgroup (the unramified setting). In what follows we use only (Sph-Hecke).

\subsubsection*{The spherical Hecke algebra $\mathcal H(G,K)$}
From this point on we explain how the Hecke action is defined, and what the eigenvalues (Hecke eigenvalues) mean.

Let $F$ be a non-archimedean local field (e.g.\ a finite extension of $\mathbb Q_p$), $\cO$ its ring of integers,
$\varpi$ a uniformizer, and let $q$ be the size of the residue field.

In the concrete examples of this note we take
\[
G=\PGL_4(F),\qquad K=G(\cO).
\]
Here $K$ is a maximal compact subgroup and plays a central role in the spherical theory.

\begin{definition}[Spherical Hecke algebra]
Let
\[
\mathcal H(G,K):=C_c^\infty(K\backslash G/K)
\]
be the space of compactly supported, locally constant, bi-$K$-invariant complex-valued functions.
Normalize a Haar measure $dg$ by $\Vol(K)=1$. Then convolution
\[
(f_1*f_2)(g):=\int_{G} f_1(gh^{-1})f_2(h)\,dh
\]
makes $\mathcal H(G,K)$ into an algebra. This is called the (spherical) Hecke algebra.
\end{definition}

Since an $f$ is determined by its values on the double cosets $K\backslash G/K$, it is a finite linear combination of characteristic functions of finitely many double cosets. Convolution corresponds to composition of ``averaging first by $f_2$ and then by $f_1$'', so the algebra product encodes ``composition of operators''.

\begin{definition}[Convolution action (on a representation)]
Let $\pi$ be a smooth representation of $G$ with representation space $V$. For $f\in\mathcal H(G,K)$ define
\[
\pi(f)v := \int_G f(g)\,\pi(g)v\,dg,
\]
which gives an operator $\pi(f):V\to V$.
\end{definition}

Since $v$ is smooth and $f$ has compact support, the integral above is in practice a finite sum.

When \(\pi\) is irreducible and has a \(K\)-fixed vector (spherical / unramified), one has \(\dim \pi^K = 1\).
Then every $f\in\mathcal H(G,K)$ acts on $V^K$ by a scalar, and
\[
\pi(f)v_K=\lambda_\pi(f)\,v_K\qquad (0\neq v_K\in V^K).
\]
This scalar $\lambda_\pi(f)$ is called the \textbf{Hecke eigenvalue}.

Since $\mathcal H(G,K)$ is commutative, the $K$-spherical vector behaves like a \emph{simultaneous eigenstate of commuting observables}. The eigenvalue data are packaged by the Satake parameter in the next subsection.

\subsubsection*{The Satake isomorphism and Satake parameters}
The Satake isomorphism gives
\[
\mathrm{Sat}_G:\ \mathcal H(G,K)\xrightarrow{\sim}\C[T^\vee]^{W_G}.
\]
Hence the Hecke eigenvalues $\lambda_\pi$ of a spherical representation $\pi$ can be written as evaluation at a point $[s(\pi)]\in T^\vee/W_G$:
\[
\lambda_\pi(f)=\mathrm{Sat}_G(f)\bigl(s(\pi)\bigr)\qquad(f\in\mathcal H(G,K)).
\]
This $[s(\pi)]$ is called the (spherical) Satake parameter.

\begin{remark}[Operators, data, and parameters]\
\begin{itemize}
\item Hecke algebra: a commutative algebra of ``averaging (convolution) operators'' on $G(F)$.
\item Satake isomorphism: identifies the commutative algebra $\mathcal H(G,K)$ with symmetric polynomials on the dual torus.
\item Satake parameter: the ``point on the torus'' (a conjugacy class in the dual group) giving the eigenvalues.
\end{itemize}
\end{remark}

\subsection{The case $\mathbf d=(2,2)$}\label{sec:two-qubit_Langlands}

\subsubsection*{(1) Setup (tensor-product homomorphism and the torus embedding)}
For $\mathbf d=(2,2)$, consider the tensor-product homomorphism of complex groups
\[
r:\ \SL_2(\C)\times \SL_2(\C)\ \longrightarrow\ \SL_4(\C),\qquad (g_1,g_2)\longmapsto g_1\otimes g_2.
\]

For the standard torus elements in $\SL_2$,
\[
t(a)=\begin{pmatrix}a&0\\0&a^{-1}\end{pmatrix},\qquad
t(b)=\begin{pmatrix}b&0\\0&b^{-1}\end{pmatrix},
\]
the eigenvalues of $t(a)\otimes t(b)$ are
\begin{equation}\label{eq:eivenvalues}
(ab,\ ab^{-1},\ a^{-1}b,\ a^{-1}b^{-1}).
\end{equation}
Hence we obtain an embedding of tori
\[
\iota:\ (\C^\times)^2\to (\C^\times)^4,\qquad
(a,b)\mapsto (ab,ab^{-1},a^{-1}b,a^{-1}b^{-1}).
\]

\begin{definition}[Restriction map of the Satake isomorphism ($\mathbf d=(2,2)$)]
Define
\[
\mathrm{Res}_{(2,2)}:=\iota^\ast\circ\mathrm{Sat}_G:\ \mathcal H(G,K)\to \C[a^{\pm1},b^{\pm1}].
\]
By the symmetries ($a\mapsto a^{-1}$, $b\mapsto b^{-1}$, and swapping $a$ and $b$), the image lies naturally in
$\C[a^{\pm1},b^{\pm1}]^{W_{(2,2)}}$, so when needed we view it as
\[
\mathrm{Res}_{(2,2)}:\ \mathcal H(G,K)\to \C[a^{\pm1},b^{\pm1}]^{W_{(2,2)}}.
\]
\end{definition}

\subsubsection*{(2) Pullback of $e_1,e_2,e_3$}
We now compute $\iota^\ast(e_1),\iota^\ast(e_2),\iota^\ast(e_3)$ explicitly, aiming to write Hecke eigenvalues as functions of $(a,b)$.

Let $z_i$ be the image coordinates under $\iota$ (so $z_1=ab$, $z_2=ab^{-1}$, $z_3=a^{-1}b$, $z_4=a^{-1}b^{-1}$). Then:
\begin{enumerate}
\item Since $e_1=z_1+z_2+z_3+z_4$,
\[
\iota^\ast(e_1)=ab+ab^{-1}+a^{-1}b+a^{-1}b^{-1}
=(a+a^{-1})(b+b^{-1}).
\]
\item For $e_2=\sum_{i<j} z_i z_j$, compute term by term:
\[
\begin{aligned}
z_1z_2&=(ab)(ab^{-1})=a^2,\\
z_1z_3&=(ab)(a^{-1}b)=b^2,\\
z_1z_4&=(ab)(a^{-1}b^{-1})=1,\\
z_2z_3&=(ab^{-1})(a^{-1}b)=1,\\
z_2z_4&=(ab^{-1})(a^{-1}b^{-1})=b^{-2},\\
z_3z_4&=(a^{-1}b)(a^{-1}b^{-1})=a^{-2}.
\end{aligned}
\]
Summing gives
\[
\iota^\ast(e_2)=a^2+a^{-2}+b^2+b^{-2}+2.
\]
\item Since $e_3=\sum_{i<j<k} z_i z_j z_k$ and $\prod z_i=1$, we have
\[
e_3=(z_1z_2z_3z_4)\sum_{i=1}^4 z_i^{-1}=\sum_{i=1}^4 z_i^{-1}.
\]
In this case the set $\{z_1,z_2,z_3,z_4\}$ is closed under inversion ($z_1^{-1}=z_4$, $z_2^{-1}=z_3$), hence
\[
\iota^\ast(e_3)=z_1^{-1}+z_2^{-1}+z_3^{-1}+z_4^{-1}=z_1+z_2+z_3+z_4=\iota^\ast(e_1).
\]
\end{enumerate}

\subsubsection*{(3) Hecke eigenvalues for the spherical spectrum}
We normalized $T_k$ by $\mathrm{Sat}_G(T_k)=e_k$, hence
\[
\mathrm{Res}_{(2,2)}(T_k)=\iota^\ast(e_k).
\]
Therefore
\[
\mathrm{Res}_{(2,2)}(T_1)=(a+a^{-1})(b+b^{-1}),\quad
\mathrm{Res}_{(2,2)}(T_2)=a^2+a^{-2}+b^2+b^{-2}+2,\quad
\mathrm{Res}_{(2,2)}(T_3)=\mathrm{Res}_{(2,2)}(T_1).
\]
\begin{remark}[Normalization of the Satake isomorphism]
In the literature, one often uses a normalization in which the Satake image of the characteristic function $\mathbf 1_{K\varpi^\lambda K}$ carries a factor $q^{\langle\rho,\lambda\rangle}$ (or $\delta_B^{1/2}$). In this note we define $T_k:=\mathrm{Sat}_G^{-1}(e_k)$ and absorb that $q$-factor into the normalization.
\end{remark}

These results can be summarized as follows:
\begin{itemize}
\item $\SL_4(\C)\sslash \SL_4(\C)\simeq T^\vee/W_G$ has $\dim=3$ (one can take coordinates $e_1,e_2,e_3$).
\item On the other hand $(\SL_2(\C)\times \SL_2(\C))\sslash(\SL_2(\C)\times \SL_2(\C))$ has $\dim=2$ (parameters $(a,b)$).
\item Hence the image has codimension $1$, and indeed the minimal relation is $e_1=e_3$ (equivalently $T_1-T_3=0$).
\end{itemize}

\subsubsection*{(4) The product locus and an ``entanglement witness''}
Here, building on the results above, we discuss how product states and entangled states appear in the image of the Satake map.

\begin{definition}[$(2,2)$-product / entangled]
A semisimple conjugacy class $[s]\in \SL_4(\C)/\!\sim$ is called $(2,2)$-\emph{product} if it has the eigenvalue pattern
\[
(ab,\ ab^{-1},\ a^{-1}b,\ a^{-1}b^{-1})
\]
(up to reordering). Otherwise it is called $(2,2)$-\emph{entangled}.
\end{definition}

Let us check that this definition matches the product/entangled notion in quantum information. If $A\in \SL_2$ has eigenvalues $\{\alpha,\alpha^{-1}\}$ and $B\in \SL_2$ has eigenvalues $\{\beta,\beta^{-1}\}$, then the eigenvalues of $A\otimes B$ become pairwise products as in \eqref{eq:eivenvalues}.

This is the same basic fact in quantum information that the spectrum of a local gate
\[
U=U_A\otimes U_B\quad(\text{local gate})
\]
decomposes as a ``product of local spectra.'' Thus,
\[
s \text{ can be written as } s_A\otimes s_B
\quad\Longleftrightarrow\quad
\text{the spectrum decomposes into two ``local parameters''},
\]
which is exactly the product situation.

\begin{proposition}[Equivalent reformulations of the image condition]
Let $s\in \SL_4(\C)$ be semisimple with (unordered) eigenvalues $\{z_1,z_2,z_3,z_4\}$. The following are equivalent:
\begin{enumerate}
\item $s$ is $(2,2)$-product.
\item The eigenvalue set is invariant under inversion:
\[
\{z_1,z_2,z_3,z_4\}=\{z_1^{-1},z_2^{-1},z_3^{-1},z_4^{-1}\},
\]
i.e.\ it can be written as $\{u,u^{-1},v,v^{-1}\}$.
\item The characteristic polynomial is palindromic:
\[
\chi_s(T)=T^4-c_1T^3+c_2T^2-c_1T+1.
\]
\item As symmetric functions, $e_1(s)=e_3(s)$.
\end{enumerate}
\end{proposition}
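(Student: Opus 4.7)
My plan is to establish the chain $(1)\Rightarrow(2)\Rightarrow(3)\Leftrightarrow(4)\Rightarrow(1)$. The point is that conditions (2) and (3) are manifestly equivalent reformulations of the same closedness-under-inversion property, while (4) is obtained by exploiting the $\SL_4$ constraint $\det s=\prod z_i=1$, and (1) is extracted from (2) by a pair of square roots in $\C^\times$.

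First, $(1)\Rightarrow(2)$ is immediate from the definition: the multiset $\{ab,ab^{-1},a^{-1}b,a^{-1}b^{-1}\}$ is visibly invariant under $z\mapsto z^{-1}$, and one can take $u=ab$, $v=ab^{-1}$. Next, $(2)\Leftrightarrow(3)$: I would write
\[
\chi_s(T)=\prod_{i=1}^{4}(T-z_i)=T^4-e_1T^3+e_2T^2-e_3T+e_4,
\]
and observe that the palindromic condition $\chi_s(T)=T^4\chi_s(1/T)$ is equivalent, via comparing coefficients, to $e_4=1$ and $e_1=e_3$. On the eigenvalue side, $T^4\chi_s(1/T)=\prod_i(1-z_iT)\cdot(\prod_j z_j)^{-1}\cdot(\ldots)$, so the polynomial is palindromic if and only if the set $\{z_i\}$ coincides with $\{z_i^{-1}\}$ (and $\prod z_i=1$, which is automatic for $s\in \SL_4$).

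For $(3)\Leftrightarrow(4)$, I use that $s\in \SL_4(\C)$ forces $e_4(s)=\det s=1$, so the palindromic condition reduces exactly to $e_1(s)=e_3(s)$; conversely, given $e_1=e_3$ and $e_4=1$, one reads off the palindromic form of $\chi_s$. Finally, for $(4)\Rightarrow(1)$ (equivalently $(2)\Rightarrow(1)$), suppose the eigenvalues are $\{u,u^{-1},v,v^{-1}\}$. I want to solve
\[
ab=u,\qquad ab^{-1}=v,
\]
which yields $a^2=uv$ and $b^2=u/v$. Since $\C^\times$ is divisible, pick any square roots $a:=\sqrt{uv}$ and $b:=\sqrt{u/v}$; then the multiset $(ab,ab^{-1},a^{-1}b,a^{-1}b^{-1})$ equals $(\pm u,\pm v,\mp v,\mp u)$ and, after the obvious sign adjustment (replacing $a$ by $-a$ if needed), coincides with $\{u,u^{-1},v,v^{-1}\}$ as a multiset. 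This realises $s$ as the image under $r$ of $(t(a),t(b))\in \SL_2\times \SL_2$ up to conjugation, proving $(2,2)$-productness.

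The only subtlety I expect is bookkeeping in the last step: the pairing of eigenvalues into two inverse pairs $\{u,u^{-1}\},\{v,v^{-1}\}$ is not canonical (there are in general three ways to pair up four eigenvalues closed under inversion), and one must verify that every such pairing yields a consistent choice of $(a,b)$ up to the Weyl-group action $a\leftrightarrow a^{-1}$, $b\leftrightarrow b^{-1}$, and the factor swap when $u=v$. This reordering ambiguity is precisely what is absorbed by the little Weyl group $W_{(2,2)}$ appearing in the definition of $\mathrm{Res}_{(2,2)}$, so no genuine obstruction arises; the computation with $\iota^\ast(e_1)$ and $\iota^\ast(e_3)$ carried out in part (2) of the preceding subsection already confirms that $e_1=e_3$ holds identically on the image of $\iota$.
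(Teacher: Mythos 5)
Your proposal is correct and follows essentially the same route as the paper: direct inspection for $(1)\Rightarrow(2)$, coefficient comparison of the palindromic condition together with $e_4=\det s=1$ for $(2)\Leftrightarrow(3)\Leftrightarrow(4)$, and a square-root construction of $(a,b)$ for the converse. The only deviation is that you choose square roots of $uv$ and $u/v$ independently and then repair a sign (which works since $ab\cdot ab^{-1}=a^2=uv$ couples the two signs, though your intermediate multiset should read $(\epsilon u,\epsilon v,\epsilon v^{-1},\epsilon u^{-1})$), whereas the paper avoids this bookkeeping by setting $b:=u/a$ after choosing $a$ with $a^2=uv$.
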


\begin{proof}
(1)$\Rightarrow$(2):
Since $(ab)^{-1}=a^{-1}b^{-1}$ and $(ab^{-1})^{-1}=a^{-1}b$, the eigenvalue set on the image is closed under inversion.

(2)$\Leftrightarrow$(3):
The coefficients of the characteristic polynomial are symmetric polynomials, and for $\SL_4$ we have $e_4=z_1z_2z_3z_4=1$.
In general,
\[
\chi_s(T)=T^4-e_1T^3+e_2T^2-e_3T+e_4.
\]
Palindromic means $T^4\chi_s(T^{-1})=\chi_s(T)$, which by comparing coefficients is equivalent to $e_1=e_3$.

(3)$\Leftrightarrow$(4):
This is immediate from the general formula above.

(2)$\Rightarrow$(1):
Assume $\{z_i\}=\{u,u^{-1},v,v^{-1}\}$. Choose $a\in\C^\times$ such that $a^2=uv$ (always possible over $\C$). Then set $b:=u/a$, so $b\in\C^\times$ and $ab=u$ and $a/b=v$ hold. Thus, after reordering, $\{ab,ab^{-1},a^{-1}b,a^{-1}b^{-1}\}=\{u,v,u^{-1},v^{-1}\}$, so $s$ is $(2,2)$-product.
\end{proof}

From the computation above we have $\iota^\ast(e_1)=\iota^\ast(e_3)$. Hence
\[
e_1-e_3\in \ker(\iota^\ast)\subset \C[e_1,e_2,e_3].
\]
On the other hand, $\C[e_1,e_2,e_3]$ is a UFD, and the image of $\iota$ (the image on the adjoint quotient) is an irreducible set of dimension $2$, so $\ker(\iota^\ast)$ is a height-$1$ prime ideal. Therefore
\[
\ker(\iota^\ast)=(e_1-e_3).
\]
Via the Satake isomorphism $T_k:=\mathrm{Sat}_G^{-1}(e_k)$, we can write
\[
\ker(\mathrm{Res}_{(2,2)})=(T_1-T_3)\subset \mathcal H(G,K).
\]

\begin{remark}[Comparison with entanglement witnesses in quantum information]
In quantum information, an entanglement witness is defined as an inequality that separates the convex set of separable states in a Hahn--Banach sense (e.g.\ $\Tr(W\rho)<0$ implies entangled). Thus a single witness does not completely classify all states.

\smallskip
In contrast, the spectrum used in this note for the product/entanglement test is from the outset restricted to \textbf{(unramified) Hecke simultaneous eigenstates}. The eigenvalues are given by evaluating invariant polynomials on the Satake parameter $s(\pi)\in G^\vee\sslash G^\vee$. In this sense, in the $(2,2)$ case we have
\[
(T_1-T_3)v=\bigl(e_1(s(\pi))-e_3(s(\pi))\bigr)\,v,
\]
and
\[
e_1(s(\pi))=e_3(s(\pi))
\quad\Longleftrightarrow\quad
s(\pi)\ \text{lies in the $(2,2)$-product locus},
\]
so the product locus can be cut out exactly (within the unramified spectrum) by an equality condition.
\end{remark}

\section{Extension to General $\mathbf d$ Cases}
\label{sec:general_types}
In principle, the structure of the discussion in this section consists of the following two parts:
\begin{enumerate}
\item \textbf{(Algebraic)}: Define the $\mathbf d$-product locus as the image (adjoint quotient) of the tensor-product map on the dual group,
and detect it by finitely many invariant polynomials.
\item \textbf{(Hecke)}: Via the Satake isomorphism, pull those polynomials back to Hecke operators and detect the condition as unramified Hecke eigenvalues. Here the assumption (Sph-Hecke) is required.
\end{enumerate}

\subsection{Tensor-product map and $\mathbf d$-product (algebraic definition)}

In what follows, let $n=\prod_{i=1}^r d_i$, and we work in the local unramified, split setting.

\begin{definition}[Dual group in the general case and the $\mathbf d$-side dual group]
For a split type $\mathbf d=(d_1,\dots,d_r)$, set
\[
G^\vee=\SL_n(\C),\qquad
G_{\mathbf d}^\vee:=\prod_{i=1}^r \SL_{d_i}(\C)
\]
(and, if necessary, adjoin a finite group coming from permutations of equal factors). In this note, in order to describe the image condition of the tensor-product map, we use only this most direct model.
\end{definition}

\begin{definition}[Tensor-product homomorphism]
We call
\[
r_{\mathbf d}:\ G_{\mathbf d}^\vee\longrightarrow G^\vee,\qquad
r_{\mathbf d}(g_1,\dots,g_r)=g_1\otimes\cdots\otimes g_r
\]
the tensor-product homomorphism.
\end{definition}

If $s$ lies in the image of $r_{\mathbf d}$, then the eigenvalues of $s$ (with multiplicities) are of the form
\[
\Bigl\{\ \prod_{i=1}^r a_{i,j_i}\ :\ 1\le j_i\le d_i\ \Bigr\},
\]
where we write the eigenvalues of $g_i$ as $a_{i,1},\dots,a_{i,d_i}$. For $\mathbf d=(2,2)$ this could be written as in \eqref{eq:eivenvalues}.

\begin{definition}[$\mathbf d$-product / entangled]
A semisimple conjugacy class $[s]\in G^\vee\sslash G^\vee$ is called \emph{$\mathbf d$-product} if
\[
[s]\in \mathrm{Im}(r_{\mathbf d}\sslash),
\]
and otherwise it is called \emph{$\mathbf d$-entangled}.
\end{definition}

\subsection{Invariant polynomials and entanglement}
\begin{definition}[Symmetric group and the Weyl group of $\SL_m$]
For $m\ge 1$, let $S_m$ be the group of all permutations of the set $\{1,\dots,m\}$ (the symmetric group on $m$ letters).
For the standard maximal torus $T_m^\vee$ of $\SL_m(\C)$ (the diagonal matrices whose diagonal entries multiply to $1$),
the Weyl group
\[
W(\SL_m)\ :=\ N_{\SL_m(\C)}(T_m^\vee)\,/\,T_m^\vee
\]
acts by conjugation via permutation matrices, hence permutes the diagonal entries. Thus it can be naturally identified with
\[
W(\SL_m)\ \cong\ S_m.
\]
Below, for a splitting $\mathbf d$ of a subsystem, we set
\[
W_{\mathbf d}:=\prod_i W(\SL_{d_i})\cong \prod_i S_{d_i}
\]
(and, if necessary, adjoin the permutation group of equal factors via a semidirect product).
\end{definition}

\begin{remark}
Here $W(\SL_2)\cong \Z/2$ acts by $a\mapsto a^{-1}$, hence
\[
W_{\mathrm{dom}}:=W(\SL_2)\times W(\SL_2)\cong (\Z/2)^2
\]
acts on $(a,b)$. Since the two factors are isomorphic, one may also consider the extension by the swap action
\[
W_{(2,2)}^{\mathrm{ext}}:=W_{\mathrm{dom}}\rtimes S_2,
\]
but for the image condition in this article (the generators of the kernel), there is no essential difference whichever one is used.
\end{remark}

First, $r_{\mathbf d}$ induces a morphism on adjoint quotients
\[
r_{\mathbf d}\sslash:\ (G_{\mathbf d}^\vee\sslash G_{\mathbf d}^\vee)\ \longrightarrow\ (G^\vee\sslash G^\vee).
\]
On coordinate rings (rings of invariants), contravariantly we obtain
\[
r_{\mathbf d}^\ast:\ \C[G^\vee]^{G^\vee}\ \longrightarrow\ \C[G_{\mathbf d}^\vee]^{G_{\mathbf d}^\vee}.
\]
By the Chevalley restriction isomorphism (Theorem~\ref{thm:Chevalley}),
\[
\C[G^\vee]^{G^\vee}\cong \C[T^\vee]^{W_G},\qquad
\C[G_{\mathbf d}^\vee]^{G_{\mathbf d}^\vee}\cong \C[T^\vee_{\mathbf d}]^{W_{\mathbf d}},
\]
so in this note we write, via these identifications,
\[
r_{\mathbf d}^\ast:\ \C[T^\vee]^{W_G}\longrightarrow \C[T^\vee_{\mathbf d}]^{W_{\mathbf d}}
\]
(where $T^\vee_{\mathbf d}$ is a maximal torus of $G_{\mathbf d}^\vee$).

\begin{definition}[spectral defining ideal]
Define
\[
I_{\mathbf d}\ :=\ \ker(r_{\mathbf d}^\ast)\ \subset\ \C[T^\vee]^{W_G}
\]
and call it the \emph{spectral defining ideal} defining the $\mathbf d$-product locus.
\end{definition}

An element $f\in I_{\mathbf d}$ means that $f$ vanishes identically on the tensor-product image.
Hence, once generators of $I_{\mathbf d}$ are known, the image condition can be detected by finitely many equations.

\begin{proposition}[Algebraic fact]\label{prop:finite-witness-revised}
The ideal $I_{\mathbf d}$ is finitely generated. Hence there exist finitely many elements
$f_1,\dots,f_m\in \C[T^\vee]^{W_G}$ such that
\[
I_{\mathbf d}=(f_1,\dots,f_m).
\]
Then for any semisimple conjugacy class $[s]\in G^\vee\sslash G^\vee$,
\[
[s]\ \text{is $\mathbf d$-product}
\quad\Longrightarrow\quad
f_1(s)=\cdots=f_m(s)=0.
\]
Moreover, since the image of $r_{\mathbf d}\sslash$ is Zariski closed (Lemma~\ref{lem:rd-image-closed} below), the converse also holds:
\[
[s]\ \text{is $\mathbf d$-product}
\quad\Longleftrightarrow\quad
f_1(s)=\cdots=f_m(s)=0.
\]
\end{proposition}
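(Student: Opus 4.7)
The plan is to reduce the statement to three ingredients: Hilbert's finiteness theorem for invariants, an immediate definition chase, and the closedness of the image cited in Lemma~\ref{lem:rd-image-closed}.

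First I would establish finite generation of $I_{\mathbf d}$. Since $W_G$ is a finite group acting on the finitely generated $\C$-algebra $\C[T^\vee]$, Hilbert's finiteness theorem for invariants (or, more strongly in the adjoint case, Chevalley--Shephard--Todd) yields that $\C[T^\vee]^{W_G}$ is a finitely generated $\C$-algebra, and in particular Noetherian. Hence every ideal of $\C[T^\vee]^{W_G}$, including $I_{\mathbf d}=\ker(r_{\mathbf d}^\ast)$, is finitely generated, and we may fix generators $f_1,\dots,f_m$.

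Next, the forward implication is a direct unraveling of definitions. If $[s]=(r_{\mathbf d}\sslash)([s'])$ for some $[s']\in G_{\mathbf d}^\vee\sslash G_{\mathbf d}^\vee$, then for any $f\in I_{\mathbf d}$ we have $r_{\mathbf d}^\ast(f)=f\circ(r_{\mathbf d}\sslash)=0$ as a regular function on the source, so $f(s)=(r_{\mathbf d}^\ast f)([s'])=0$. Applied to the finite generating set $f_1,\dots,f_m$ this gives the ``only if'' direction.

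For the converse, set $Y:=\mathrm{Im}(r_{\mathbf d}\sslash)\subset G^\vee\sslash G^\vee$. By Lemma~\ref{lem:rd-image-closed}, $Y$ is Zariski closed in the affine variety $G^\vee\sslash G^\vee$. By the very definition of $r_{\mathbf d}^\ast$, the ideal $I_{\mathbf d}$ is precisely the ideal of regular functions on $G^\vee\sslash G^\vee$ vanishing on $Y$, hence it is radical and equals $I(Y)$. Since we work over the algebraically closed field $\C$, the Nullstellensatz gives $V(I_{\mathbf d})=V(I(Y))=Y$, so $f_1(s)=\cdots=f_m(s)=0$ is equivalent to $[s]\in Y$, which is exactly the definition of $[s]$ being $\mathbf d$-product.

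The main obstacle is Lemma~\ref{lem:rd-image-closed}, namely the Zariski-closedness of $\mathrm{Im}(r_{\mathbf d}\sslash)$: this is not automatic, since by Chevalley's constructibility theorem the image of a morphism of affine varieties is in general only constructible, not closed. I would prove the lemma by reducing to the torus level. After fixing compatible maximal tori, $r_{\mathbf d}$ restricts to a homomorphism of algebraic tori $T^\vee_{\mathbf d}\to T^\vee$, whose image is a closed subtorus (every homomorphism of algebraic tori has closed image). Passing to $W$-invariants, the quotients $T^\vee_{\mathbf d}/W_{\mathbf d}$ and $T^\vee/W_G$ are finite, hence proper and closed, over the respective tori, so the induced morphism $T^\vee_{\mathbf d}/W_{\mathbf d}\to T^\vee/W_G$ has closed image. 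Under Chevalley restriction this identifies with $r_{\mathbf d}\sslash$, giving the closedness and completing the proof.
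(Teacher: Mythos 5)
Your proposal is correct and follows essentially the same route as the paper: Noetherianity of $\C[T^\vee]^{W_G}$ for finite generation, the definitional vanishing of $\ker(r_{\mathbf d}^\ast)$ on the image for necessity, and the closedness of $\mathrm{Im}(r_{\mathbf d}\sslash)$ — proved exactly as in Lemma~\ref{lem:rd-image-closed}, via the closed subtorus $r_{\mathbf d}(T^\vee_{\mathbf d})\subset T^\vee$ and the finiteness (hence closedness) of $T^\vee\to T^\vee/W_G$ — for sufficiency. Your explicit identification $I_{\mathbf d}=I(\mathrm{Im}(r_{\mathbf d}\sslash))$ with the Nullstellensatz step merely spells out what the paper's proof sketch leaves implicit.
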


\begin{proof}[proof sketch]
Since $\C[T^\vee]^{W_G}$ is Noetherian, $I_{\mathbf d}=\ker(r_{\mathbf d}^\ast)$ is finitely generated. The necessity and sufficiency depend on the image being Zariski closed, which is shown in Lemma~\ref{lem:rd-image-closed}.
\end{proof}

\begin{lemma}[Zariski-closedness of the image of $r_\mathbf d\sslash$ in Proposition~\ref{prop:finite-witness-revised}]\label{lem:rd-image-closed} Over $\C$, let $G^\vee=\SL_n$ and $G_{\mathbf d}^\vee=\prod_i \SL_{d_i}$, and consider the tensor-product homomorphism $r_{\mathbf d}:G_{\mathbf d}^\vee\to G^\vee$. Then the induced morphism on adjoint quotients
\[
r_{\mathbf d}\sslash:\ (G_{\mathbf d}^\vee\sslash G_{\mathbf d}^\vee)\longrightarrow (G^\vee\sslash G^\vee)
\]
has Zariski-closed image.
\end{lemma}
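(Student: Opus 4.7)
The plan is to reduce the claim to a statement about tori, where closedness of the image follows from two standard facts. First I would invoke Chevalley restriction (Theorem~\ref{thm:Chevalley}) to identify
\[
G^\vee\sslash G^\vee\ \cong\ T^\vee/W_G,\qquad
G_{\mathbf d}^\vee\sslash G_{\mathbf d}^\vee\ \cong\ T^\vee_{\mathbf d}/W_{\mathbf d}.
\]
I would then choose the diagonal torus $T^\vee_{\mathbf d}=\prod_i T^\vee_{d_i}$ inside $\prod_i\SL_{d_i}$ together with the diagonal torus $T^\vee\subset\SL_n$. Since a tensor (Kronecker) product of diagonal matrices is again diagonal, $r_{\mathbf d}$ restricts to a homomorphism $r_{\mathbf d}|_T:T^\vee_{\mathbf d}\to T^\vee$. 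The two adjoint-quotient maps together with this restriction fit into a commutative square whose vertical arrows are surjective, so $\im(r_{\mathbf d}\sslash)$ equals the image of $r_{\mathbf d}(T^\vee_{\mathbf d})$ under the projection $\pi:T^\vee\to T^\vee/W_G$.

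Next I would apply two structural facts. (i) Any homomorphism of algebraic tori over $\C$ has closed image (more generally, every algebraic subgroup of a torus is closed), so
\[
S\ :=\ r_{\mathbf d}(T^\vee_{\mathbf d})\ \subset\ T^\vee
\]
is a closed subtorus. (ii) The projection $\pi:T^\vee\to T^\vee/W_G$ is a finite morphism, because $W_G$ is a finite group acting on the affine variety $T^\vee$; finite morphisms are closed. Combining (i) and (ii), $\pi(S)$ is a Zariski-closed subset of $T^\vee/W_G$, which under the Chevalley identification is precisely $\im(r_{\mathbf d}\sslash)\subset G^\vee\sslash G^\vee$.

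I do not foresee a substantive obstacle: the only choice that needs a little care is that of compatible maximal tori, so that $r_{\mathbf d}$ preserves diagonality; once this is arranged, the argument is a matter of chaining (i) with (ii). The one small bookkeeping point in writing out the proof is to verify that the commutative square induced by Chevalley restriction really does identify $\im(r_{\mathbf d}\sslash)$ with $\pi(S)$, which is routine given the surjectivity of the vertical quotient maps. (As a sanity check, in the case $\mathbf d=(2,2)$ worked out in \S\ref{sec:two-qubit_Langlands}, $S$ is the $2$-dimensional subtorus parameterized by $(a,b)$ and $\pi(S)$ is the hypersurface $\{e_1=e_3\}$ in $T^\vee/W_G$, which is manifestly closed.)
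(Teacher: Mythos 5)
Your proposal is correct and follows essentially the same route as the paper's proof: Chevalley restriction to pass to $T^\vee_{\mathbf d}/W_{\mathbf d}\to T^\vee/W_G$, closedness of the image of a homomorphism of algebraic tori, and closedness of the finite quotient map $T^\vee\to T^\vee/W_G$, together with surjectivity of $T^\vee_{\mathbf d}\to T^\vee_{\mathbf d}/W_{\mathbf d}$ to identify the image with $\pi(S)$. The only additions are cosmetic (the explicit choice of diagonal tori and the $(2,2)$ sanity check), so nothing further is needed.
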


\begin{proof}
Let $T_{\mathbf d}^\vee\subset G_{\mathbf d}^\vee$ and $T^\vee\subset G^\vee$ be maximal tori, and let $W_{\mathbf d},W_G$ be the corresponding Weyl groups. By the Chevalley restriction isomorphism,
\[
G_{\mathbf d}^\vee\sslash G_{\mathbf d}^\vee \simeq T_{\mathbf d}^\vee/W_{\mathbf d},\qquad
G^\vee\sslash G^\vee \simeq T^\vee/W_G,
\]
and $r_{\mathbf d}\sslash$ corresponds to $T_{\mathbf d}^\vee/W_{\mathbf d}\to T^\vee/W_G$.

First, $r_{\mathbf d}$ restricts to a homomorphism of tori
\[
r_{\mathbf d}|_{T_{\mathbf d}^\vee}:\ T_{\mathbf d}^\vee\longrightarrow T^\vee.
\]
The image of a morphism of algebraic tori is an algebraic subgroup (a subtorus), hence Zariski closed. Therefore
\[
S:=r_{\mathbf d}(T_{\mathbf d}^\vee)\subset T^\vee
\]
is a closed subset.

Next, the quotient map by the finite group $W_G$, namely $q:T^\vee\to T^\vee/W_G$, is a finite morphism and hence a closed map. Thus $q(S)\subset T^\vee/W_G$ is closed.

On the other hand, $T_{\mathbf d}^\vee\to T_{\mathbf d}^\vee/W_{\mathbf d}$ is also surjective, so the image of the induced map $T_{\mathbf d}^\vee/W_{\mathbf d}\to T^\vee/W_G$ agrees with the image seen from $T_{\mathbf d}^\vee$:
\[
\operatorname{Im}\bigl(T_{\mathbf d}^\vee/W_{\mathbf d}\to T^\vee/W_G\bigr)
=\operatorname{Im}\bigl(T_{\mathbf d}^\vee\to T^\vee/W_G\bigr)
=q(S).
\]
Hence the image is closed.
\end{proof}

\subsection{Detection in terms of Hecke eigenvalues}
Via $\mathrm{Sat}_G:\mathcal H(G,K)\xrightarrow{\sim}\C[T^\vee]^{W_G}$, for each $f_i$ take the corresponding Hecke operator
\[
\mathsf W_i:=\mathrm{Sat}_G^{-1}(f_i)\in\mathcal H(G,K).
\]
Then for an unramified representation $\pi$ one obtains
\[
[s(\pi)]\in \mathrm{Im}(r_{\mathbf d}\sslash)
\quad\Longrightarrow\quad
\lambda_\pi(\mathsf W_1)=\cdots=\lambda_\pi(\mathsf W_m)=0
\]
(the necessity and sufficiency depend on the closedness of the image).

\begin{remark}
This part---detecting the image condition using Hecke eigenvalues---works using only (Sph-Hecke), and is independent of whether $X_{\mathbf d}$ is a spherical variety.
\end{remark}

\subsection{The 3-qubit case ($\mathbf d=(2,2,2)$)}\label{sec:3qubit}

For $\mathbf d=(2,2,2)$, the quotient $X_{\mathbf d}=G/G_{\mathbf d}$ is not a spherical variety by Proposition~\ref{prop:sphericity-dim-test-clean}. However, the $(2,2,2)$-product locus (the image condition on the adjoint quotient) can be described purely algebraically.

\medskip
Consider the tensor-product homomorphism
\[
r:=r_{(2,2,2)}:\ \SL_2(\C)^3\longrightarrow \SL_8(\C),
\]
and the induced pullback on coordinate rings of invariants
\[
r^\ast:\ \C[\SL_8]^{\SL_8}\ \longrightarrow\ \C[\SL_2^3]^{\SL_2^3}.
\]
By the Chevalley restriction isomorphism (Theorem~\ref{thm:Chevalley}), we may identify
\[
\C[\SL_8]^{\SL_8}\ \cong\ \C[e_1,\dots,e_7],\qquad
\C[\SL_2^3]^{\SL_2^3}\ \cong\ \C[x,y,z]
\]
(where, for instance, $x$ corresponds to $a+a^{-1}$, etc.).
Let
\[
I_{(2,2,2)}:=\ker(r^\ast)\subset \C[e_1,\dots,e_7]
\]
be the (scheme-theoretic) defining ideal of the $(2,2,2)$-product locus.

Let the quartic polynomial used below be
\begin{equation}\label{eq:F-3qubit}
F
:=e_3^2+2e_1e_3+e_1^4-(e_4+2e_2+1)e_1^2\ \in \C[e_1,e_2,e_3,e_4].
\end{equation}

\begin{lemma}[Irreducibility and primeness of the quartic $F$]\label{lem:F-irreducible}
The polynomial $F\in \C[e_1,e_2,e_3,e_4]$ is irreducible. Hence $(F)$ is a prime ideal and, in particular, radical.
Moreover,
\[
J:=\bigl(e_7-e_1,\ e_6-e_2,\ e_5-e_3,\ F\bigr)\subset \C[e_1,\dots,e_7]
\]
is also a prime ideal (hence radical).
\end{lemma}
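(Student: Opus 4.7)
The strategy is two-step: first prove $F$ is irreducible in $\C[e_1,e_2,e_3,e_4]$, then deduce primeness of $J$ via a simple elimination of the linear generators.

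\textbf{Step 1 (irreducibility of $F$).} The key observation is that $F$ is \emph{monic quadratic in $e_3$} over the UFD $R_0:=\C[e_1,e_2,e_4]$:
\[
F \;=\; e_3^2 + (2e_1)\,e_3 + \bigl(e_1^4 - (e_4+2e_2+1)e_1^2\bigr).
\]
Its discriminant with respect to $e_3$ is
\[
\Delta \;=\; (2e_1)^2 - 4\bigl(e_1^4 - (e_4+2e_2+1)e_1^2\bigr) \;=\; 4e_1^2\bigl(e_4 + 2e_2 + 2 - e_1^2\bigr).
\]
The plan is to show $\Delta$ is not a square in $R_0$. The factor $4e_1^2$ is already a square, so it suffices to show that $P := e_4 + 2e_2 + 2 - e_1^2$ is not a square in $R_0$. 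But $P$ is degree $1$ in $e_4$ with nonzero leading coefficient, hence irreducible in $R_0$, and clearly coprime to $e_1$; in the UFD $R_0$ the prime $P$ would have to appear to even multiplicity in any square, which it does not. So $\Delta$ is not a square, the quadratic formula shows $F$ has no root in $\mathrm{Frac}(R_0)$, and therefore $F$ is irreducible in $R_0[e_3]=\mathrm{Frac}(R_0)[e_3]$-level. Since $F$ is monic in $e_3$ (hence primitive), Gauss's lemma lifts this to irreducibility in $\C[e_1,e_2,e_3,e_4]$.

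\textbf{Step 2 (primeness of $J$).} The first three generators of $J$ are each linear in a distinct variable absent from $F$, so they eliminate $e_5,e_6,e_7$ cleanly. Concretely, the $\C$-algebra map
\[
\varphi:\ \C[e_1,\dots,e_7]\ \longrightarrow\ \C[e_1,e_2,e_3,e_4]/(F),
\qquad
e_i\mapsto e_i\ (i\le 4),\ \ e_5\mapsto e_3,\ e_6\mapsto e_2,\ e_7\mapsto e_1,
\]
is surjective and kills each of $e_7-e_1,\ e_6-e_2,\ e_5-e_3,\ F$, hence factors through $\C[e_1,\dots,e_7]/J$. A standard check (section of $\varphi$ sending $e_i\mapsto e_i$) shows it is an isomorphism
\[
\C[e_1,\dots,e_7]/J \ \simeq\ \C[e_1,e_2,e_3,e_4]/(F).
\]
By Step 1, $F$ is irreducible in the UFD $\C[e_1,e_2,e_3,e_4]$, so $(F)$ is prime, the right-hand side is an integral domain, and therefore $J$ is prime.

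\textbf{Anticipated difficulty.} No step is computationally hard; the only subtle point is the not-a-square argument for $\Delta$, where one must invoke the UFD structure of $R_0$ (rather than, e.g., just degree counting) because we want squareness as an \emph{algebraic identity}, not just numerical non-squareness at a point. Once that is settled, the rest is routine: Gauss's lemma for Step 1, and straightforward linear elimination for Step 2.
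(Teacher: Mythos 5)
Your proof is correct and follows essentially the same route as the paper: view $F$ as a monic quadratic in $e_3$ over $\C[e_1,e_2,e_4]$, show the discriminant $4e_1^2(e_4+2e_2+2-e_1^2)$ is not a square, and then eliminate $e_5,e_6,e_7$ to identify $\C[e_1,\dots,e_7]/J$ with the domain $\C[e_1,e_2,e_3,e_4]/(F)$. The only (immaterial) difference is that you rule out squareness via prime multiplicities of $P$ in the UFD, whereas the paper uses parity of the degree in $e_4$.
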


\begin{proof}
Over the coefficient ring $A:=\C[e_1,e_2,e_4]$, view $F$ as a polynomial in $e_3$; then $F\in A[e_3]$ is a monic quadratic. A monic quadratic is reducible if and only if its discriminant is a square in $A$, so compute the discriminant:
\[
\Delta=(2e_1)^2-4\bigl(e_1^4-(e_4+2e_2+1)e_1^2\bigr) =4e_1^2\,(e_4+2e_2+2-e_1^2).
\]
Since $4e_1^2$ is already a square in $A$, for $\Delta$ to be a square it is necessary that
\[
L:=e_4+2e_2+2-e_1^2\in A
\]
be a square. But $L$ is linear in $e_4$, whereas any square in $A$ has even degree in $e_4$, so $L$ cannot be a square.
Therefore $F$ is irreducible, hence $(F)$ is prime.

Next, in the ideal $J$ we can eliminate $e_5,e_6,e_7$ using $e_5=e_3,\ e_6=e_2,\ e_7=e_1$, so
\[
\C[e_1,\dots,e_7]/J\ \cong\ \C[e_1,e_2,e_3,e_4]/(F).
\]
The right-hand side is an integral domain (since $(F)$ is prime), hence so is the left-hand side, and therefore $J$ is a prime ideal.
\end{proof}

\begin{theorem}[Product locus for 3-qubit: image condition and defining ideal]\label{thm:3qubit-product-locus-note}
For a semisimple element $s\in \SL_8(\C)$, the following are equivalent:
\begin{enumerate}
\item[\textnormal{(a)}]
$[s]\in \SL_8(\C)\sslash \SL_8(\C)$ is $(2,2,2)$-product, i.e.\ there exists $(a,b,c)\in(\C^\times)^3$ such that the multiset of eigenvalues is
\[
\{\,a^{\pm1}b^{\pm1}c^{\pm1}\,\}.
\]
\item[\textnormal{(b)}]
The following three palindromic relations and the quartic equation \eqref{eq:F-3qubit} hold:
\[
e_7(s)=e_1(s),\qquad e_6(s)=e_2(s),\qquad e_5(s)=e_3(s),
\]
\[
F(s)=0.
\]
\end{enumerate}
Moreover, scheme-theoretically,
\[
I_{(2,2,2)}=\ker(r^\ast)=\bigl(e_7-e_1,\ e_6-e_2,\ e_5-e_3,\ F\bigr).
\]
\end{theorem}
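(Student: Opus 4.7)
The plan is to establish the scheme-theoretic identity $I_{(2,2,2)} = J$ with $J := (e_7 - e_1,\, e_6 - e_2,\, e_5 - e_3,\, F)$, from which the set-theoretic equivalence (a)$\Leftrightarrow$(b) follows as $V(J) = V(I_{(2,2,2)})$. My strategy has three steps: (i) show $J \subseteq I_{(2,2,2)}$ by verifying each generator vanishes on the tensor-product image; (ii) show both $V(J)$ and $V(I_{(2,2,2)})$ are irreducible closed subvarieties of $\mathbb{A}^7$ of dimension $3$, forcing them to coincide; (iii) conclude $J = I_{(2,2,2)}$ by invoking primality on both sides. The three palindromic relations are immediate: the eigenvalues $z_\epsilon = a^{\epsilon_1} b^{\epsilon_2} c^{\epsilon_3}$ ($\epsilon \in \{\pm 1\}^3$) satisfy $z_\epsilon z_{-\epsilon} = 1$, so the eigenvalue multiset is closed under inversion, $\chi_s$ is palindromic, and $e_{8-k}(s) = e_k(s)$ for $k = 1,2,3$.

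The main technical step is the polynomial identity $F = 0$ on the image. I would use the Chebyshev reduction: setting $S := T + T^{-1}$ and grouping eigenvalues into inverse pairs, $\chi_s(T) = T^4 \prod_{i=1}^4 (S - U_i)$ with $U_i = u_i + u_i^{-1}$ and $u_i \in \{abc,\, ab/c,\, ac/b,\, bc/a\}$. Writing $x = a+a^{-1}$, $y = b+b^{-1}$, $z = c+c^{-1}$ and using the auxiliary $u = bc + (bc)^{-1}$, $v = b/c + c/b$ (with $u+v = yz$, $uv = y^2+z^2-4$), the pairing identities $U_1+U_4 = xu$, $U_2+U_3 = xv$, $U_1 U_4 = x^2+u^2-4$, $U_2 U_3 = x^2+v^2-4$ give closed-form expressions
\[
e_1 = xyz, \qquad e_2 = q - 2p + 4, \qquad e_3 = xyz(p-5), \qquad e_4 = p^2 - 2q - 4p + (xyz)^2 + 6,
\]
where $p := x^2+y^2+z^2$ and $q := x^2 y^2 + y^2 z^2 + z^2 x^2$. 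The key collapse is the ``miracle'' $e_3 = e_1(p-5)$: it yields $e_3^2 + 2 e_1 e_3 = e_1^2 (p-5)(p-3)$, while the formulas for $e_2,e_4$ give $e_4 + 2 e_2 + 1 = (p-5)(p-3) + (xyz)^2$, so $F = e_1^2 \bigl[(p-5)(p-3) + (xyz)^2 - (e_4 + 2 e_2 + 1)\bigr] \equiv 0$ in $\C[x,y,z]$.

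For the dimension and irreducibility comparison, Lemma~\ref{lem:F-irreducible} already makes $J$ prime, and eliminating $e_5, e_6, e_7$ via the three linear relations yields $\C[e_1,\ldots,e_7]/J \cong \C[e_1,e_2,e_3,e_4]/(F)$, a $3$-dimensional integral domain, so $V(J)$ is irreducible of dimension $3$. The image $V(I_{(2,2,2)})$ of $r_{(2,2,2)}\sslash \colon \mathbb{A}^3 \to \mathbb{A}^7$ is closed by Lemma~\ref{lem:rd-image-closed}, irreducible as a closed image of the irreducible $\mathbb{A}^3$, and of dimension $3$ because the explicit formulas above exhibit generic finiteness: on $\{e_1 \neq 0\}$ one recovers $p = e_3/e_1 + 5$, then $q$ from $e_2$, and finally $(x,y,z)$ from $(p,q,xyz)$ up to finite Weyl-type ambiguity. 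Hence $V(I_{(2,2,2)}) \subseteq V(J)$ are both irreducible of the same dimension, so $V(I_{(2,2,2)}) = V(J)$; primality of both ideals (Lemma~\ref{lem:F-irreducible} for $J$; $I_{(2,2,2)}$ as the kernel of a ring map into the integral domain $\C[x,y,z]$) then upgrades this to $J = I_{(2,2,2)}$.

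The hard part will be the $F = 0$ computation of Step~(i). The brute-force route---expanding via Newton's identities with power sums $p_k = (a^k+a^{-k})(b^k+b^{-k})(c^k+c^{-k})$---is tractable but uninformative, whereas the Chebyshev half-sum approach exposes the origin of $F$: once the palindromic reduction and the three relations $\sigma_k(U) = e_k - (\text{lower})$ are in place, the single algebraic constraint $e_3 = e_1(p-5)$ is the essential identity, and $F$ is precisely the quadric cutting out this constraint within the palindromic locus after clearing the denominator in $p = e_3/e_1 + 5$. The remaining dimension and primality arguments are then formal consequences of Lemmas~\ref{lem:F-irreducible} and~\ref{lem:rd-image-closed}.
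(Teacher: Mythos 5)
Your proof is correct, but it takes a genuinely different route from the paper's in the direction that matters. The paper first proves the set-theoretic equivalence in both directions --- (a)$\Rightarrow$(b) by the palindromic argument plus a simplification yielding $F=0$, and (b)$\Rightarrow$(a) by an \emph{explicit reconstruction} of $(a,b,c)$ from $(e_1,\dots,e_4)$ satisfying the relations, split into the cases $e_1\neq0$ and $e_1=0$ --- and only then deduces $I_{(2,2,2)}=J$ via the Nullstellensatz together with primality of both ideals (Lemma~\ref{lem:F-irreducible} and primeness of $\ker(r^\ast)$). You instead prove the ideal identity first and read off (a)$\Leftrightarrow$(b) as $V(I_{(2,2,2)})=V(J)$: after checking $J\subseteq I_{(2,2,2)}$ (your closed forms $e_1=xyz$, $e_2=q-2p+4$, $e_3=xyz(p-5)$, $e_4=p^2-2q-4p+(xyz)^2+6$ are correct and make the cancellation $F\equiv0$ transparent, a computation the paper only alludes to as ``the earlier computation''), you replace the reconstruction step by a dimension count: $V(J)$ is irreducible of dimension $3$ because $\C[e_1,\dots,e_7]/J\cong\C[e_1,\dots,e_4]/(F)$, while $V(I_{(2,2,2)})=\operatorname{Im}(r_{(2,2,2)}\sslash)$ is closed (Lemma~\ref{lem:rd-image-closed}), irreducible, and of dimension $3$ by generic finiteness of the parametrization on $\{e_1\neq0\}$, so the containment of irreducibles of equal dimension forces equality, and primality on both sides upgrades it to $I_{(2,2,2)}=J$. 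The trade-off: the paper's route is constructive (it actually exhibits $(a,b,c)$ from admissible eigenvalue data, including the degenerate case $e_1=0$), whereas yours is softer but arguably more robust --- it needs only the generic inversion $p=e_3/e_1+5$, $q=e_2+2p-4$, and recovery of $(x^2,y^2,z^2)$ as roots of a cubic, and it sidesteps the case analysis entirely; both arguments rest on the same two lemmas. One small point to make explicit when writing it up: the identification of (a) with membership in $V(I_{(2,2,2)})$ uses that $V(\ker r^\ast)$ is the closure of the image and that the image is already closed, i.e.\ Lemma~\ref{lem:rd-image-closed} is needed for the equivalence and not just for the converse implication.
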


\begin{proof}
\textbf{(a)$\Rightarrow$(b).}
If $s$ is $(2,2,2)$-product, then its eigenvalue multiset is invariant under $z\mapsto z^{-1}$ (and has product $1$), so the characteristic polynomial is self-reciprocal and thus $e_k=e_{8-k}$ for $k=1,2,3$. Also, writing $x=a+a^{-1}$, $y=b+b^{-1}$, $z=c+c^{-1}$, and setting
\[
E(t):=\prod_{\varepsilon\in\{\pm1\}^3}\bigl(1+t\,a^{\varepsilon_1}b^{\varepsilon_2}c^{\varepsilon_3}\bigr)
=\sum_{k=0}^8 e_k\,t^k,
\]
a straightforward simplification (as in the earlier computation) yields $F=0$.

\textbf{(b)$\Rightarrow$(a).}
Assume (b). By the palindromic relations, $e_5,e_6,e_7$ are determined by $e_1,e_2,e_3$, so it is essentially enough to check the consistency of $(e_1,e_2,e_3,e_4)$. Then, by the reconstruction already given earlier (splitting into the cases $e_1\neq 0$ and $e_1=0$), one can construct some $(a,b,c)\in(\C^\times)^3$ and
\[
t=\mathrm{diag}(a,a^{-1})\otimes \mathrm{diag}(b,b^{-1})\otimes \mathrm{diag}(c,c^{-1})\in\SL_8(\C)
\]
having the same elementary symmetric polynomials (hence the same eigenvalue multiset) as $s$. Therefore $[s]=[t]\in\Im(r\sslash)$, so $s$ is $(2,2,2)$-product.

\medskip
\textbf{Proof of the ideal equality $I_{(2,2,2)}=J$.}
Since (a)$\Rightarrow$(b) was shown, the ideal
\[
J:=\bigl(e_7-e_1,\ e_6-e_2,\ e_5-e_3,\ F\bigr)
\]
vanishes on the image, hence $J\subset I_{(2,2,2)}$. On the other hand, (b)$\Rightarrow$(a) implies that on the adjoint quotient,
\[
V(J)=\Im(r\sslash)
\]
(i.e.\ the vanishing locus coincides with the image). Hence by the Nullstellensatz,
\[
\sqrt{J}=I_{(2,2,2)}.
\]
Since $I_{(2,2,2)}$ is the kernel of a homomorphism to an integral domain, it is prime and therefore radical. Finally, by Lemma~\ref{lem:F-irreducible}, $J$ is prime (in particular radical), so $\sqrt{J}=J$. Thus $I_{(2,2,2)}=J$.
\end{proof}

\begin{corollary}[Detection by Hecke eigenvalues]\label{cor:3qubit-hecke-note}
Let $\pi$ be an irreducible unramified representation of $\PGL_8(F)$, with Satake parameter $[s(\pi)]$ and Hecke eigenvalues $\lambda_k:=e_k(s(\pi))$. Then $\pi$ is $(2,2,2)$-product (i.e.\ $[s(\pi)]\in\Im(r_{(2,2,2)}\sslash)$) if and only if
\[
\lambda_7=\lambda_1,\qquad \lambda_6=\lambda_2,\qquad \lambda_5=\lambda_3,
\]
\[
\lambda_3^2+2\lambda_1\lambda_3+\lambda_1^4-(\lambda_4+2\lambda_2+1)\lambda_1^2=0.
\]
\end{corollary}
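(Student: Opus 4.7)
The plan is to obtain this corollary as an essentially mechanical transcription of Theorem~\ref{thm:3qubit-product-locus-note} through the Satake isomorphism, and no further algebra beyond what is already packaged in that theorem will be needed. First I would recall the dictionary set up in \S\ref{sec:16}: the Satake isomorphism $\mathrm{Sat}_G:\mathcal H(G,K)\xrightarrow{\sim}\C[T^\vee]^{W_G}\cong \C[e_1,\ldots,e_7]$ satisfies $\lambda_\pi(f)=\mathrm{Sat}_G(f)(s(\pi))$, so with the normalization $T_k:=\mathrm{Sat}_G^{-1}(e_k)$ fixed in the note one gets exactly $\lambda_k=\lambda_\pi(T_k)=e_k(s(\pi))$. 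This is the only piece of ``transport of structure'' needed; everything else is substitution.

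Next I would specialize Theorem~\ref{thm:3qubit-product-locus-note} to $s=s(\pi)$. The theorem says that $[s(\pi)]$ is $(2,2,2)$-product if and only if $[s(\pi)]\in V(I_{(2,2,2)})$, where $I_{(2,2,2)}=(e_7-e_1,\ e_6-e_2,\ e_5-e_3,\ F)$ with $F$ the quartic in \eqref{eq:F-3qubit}. Evaluating each of these four generators at $s(\pi)$ and substituting $\lambda_k=e_k(s(\pi))$ produces precisely the four identities in the statement: three palindromic equalities $\lambda_{8-k}=\lambda_k$ for $k=1,2,3$, and the single quartic $\lambda_3^2+2\lambda_1\lambda_3+\lambda_1^4-(\lambda_4+2\lambda_2+1)\lambda_1^2=0$.

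For the converse, note that by Proposition~\ref{prop:finite-witness-revised} (which relies on the Zariski-closedness of $\Im(r_{(2,2,2)}\sslash)$ proved in Lemma~\ref{lem:rd-image-closed}), vanishing of a finite set of generators of $I_{(2,2,2)}$ on $s(\pi)$ is not only necessary but also sufficient for $[s(\pi)]$ to lie in the image. Combined with Theorem~\ref{thm:3qubit-product-locus-note}'s explicit generation of $I_{(2,2,2)}$, the four eigenvalue identities therefore characterize $(2,2,2)$-productness of $\pi$ exactly. I do not anticipate any genuine obstacle: the substantive content is already concentrated in Theorem~\ref{thm:3qubit-product-locus-note} and Lemma~\ref{lem:F-irreducible}, and the only point requiring care is to keep the Satake normalization consistent so that the Hecke eigenvalues literally coincide with the elementary symmetric polynomials of the Satake conjugacy class.
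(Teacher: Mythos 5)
Your proposal is correct and matches the paper's (implicit) route: the corollary is just Theorem~\ref{thm:3qubit-product-locus-note} evaluated at $s(\pi)$, transported through the Satake normalization $T_k=\mathrm{Sat}_G^{-1}(e_k)$ so that $\lambda_k=e_k(s(\pi))$. The only minor remark is that your appeal to Proposition~\ref{prop:finite-witness-revised} and Lemma~\ref{lem:rd-image-closed} for the converse is redundant, since Theorem~\ref{thm:3qubit-product-locus-note} already states the equivalence in both directions.
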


\begin{remark}
For $\mathbf d=(2,2)$, a single defining equation suffices because the codimension is $1$ and the defining ideal becomes principal. For $\mathbf d=(2,2,2)$, we have $\rank(\SL_8)=7$ while $\rank(\SL_2^3)=3$, so the codimension is $4$; it is therefore natural that multiple (and nonlinear) defining equations are needed.
\end{remark}

\section{Summary}\label{sec:summary}

The discussion in the first half about local properties of entanglement is upgraded in the second half to the following global geometry:

\begin{itemize}
  \item The fibration $SB(A)\to X$ exists \emph{for any} Azumaya background and provides a canonical
  family of pure-state spaces.
  \item However, the existence of the product locus / entanglement filtration is not automatic: one needs a \emph{subsystem structure} (a reduction to $G_d$), and there can be an obstruction.
  \item The moduli of subsystem structures is represented by $P/G_d$, and it agrees with the subsystem-structure locus inside the relative Hilbert scheme of $SB(A)$.
  \item In the bipartite case, the full determinantal properties for $R_{\le k}$ (flatness, singularities, incidence resolution, numerical invariants) descends directly to $\Sigma_{\le k}(A,d)\subset SB(A)$.
  \item Brauer classes measure whether a subsystem can exist globally in the first place, producing the subset $Br_d(X)\subset Br(X)$ and the torsion constraint $Br_d(X)\subset Br(X)[\ell]$.
  \item Nevertheless, subsystem reducibility is not determined by the Brauer class alone: it depends on finer information of the underlying $\PGL_n$-torsor (and can fail even in the split case).
\end{itemize}

\paragraph{Detecting product/entangled states via the subsystem-structure locus}
We can summarize how to decide whether entanglement exists as follows.

First, choose the subsystem dimensions (if they exist) as a type $d=(d_1,d_2,\cdots,d_r)$, and consider the locally closed subscheme
\[
\Hilb^{\Sigma_d}(SB(A)/X)
\]
(subsystem-structure locus) inside the relative Hilbert scheme of the Severi--Brauer scheme $\pi:SB(A)\to X$. This is the moduli of subsystem structures ($G_d$-reductions), and as an $X$-scheme it can be identified with
\[
\Hilb^{\Sigma_d}(SB(A)/X)\simeq P/G_d.
\]

A coordinate-independent notion of separability/entanglement can be defined globally if and only if the morphism $\Hilb^{\Sigma_d}(SB(A)/X)\to X$ admits a section $s:X\to \Hilb^{\Sigma_d}(SB(A)/X)$. In that case, the section $s$ produces a closed subfamily $\Sigma_d(A)\subset SB(A)$ (the product locus).

Hence, for a pure state $[\psi]\in SB(A)_x$ at each point $x\in X$, one can decide:
\[
[\psi]\ \text{is product} \ \Longleftrightarrow\  [\psi]\in \Sigma_d(A)_x,
\qquad
[\psi]\ \text{is entangled} \ \Longleftrightarrow\  [\psi]\notin \Sigma_d(A)_x.
\]
(Locally in coordinates, this agrees with the vanishing condition of $2\times 2$ minors explained in the
introduction (\S\ref{sec:elementary_example}).)

\subsubsection*{An entanglement criterion via Hecke operators}
Moreover, via the discussion of Satake parameters, the entangling monodromy can be related to eigenvalues of Hecke operators, and in the two-qubit case we saw that
\[
\text{product state}\ \Longleftrightarrow\ \text{(a certain) eigenvalue }=0.
\]
In the case $\mathbf d=(2,2)$ emphasized in this note, the $(2,2)$-product condition in the Satake parameter space reduces to a single relation
\[
e_1(s)=e_3(s),
\]
where $e_k$ denotes the $k$-th elementary symmetric polynomial in the eigenvalues.

This is completely parallel to the quantum-information side (as reviewed in the introduction), where
\[
\Sigma_{2,2}\subset \PP^3 \ \text{ is defined by a single equation }\ \det(C)=0.
\]
In other words, we obtain the following perfectly analogous dictionary:
\begin{center}
\begin{tabular}{@{}ll@{}}
\toprule
Quantum information (geometry of states) & Langlands (geometry of the Hecke spectrum) \\
\midrule
product locus $\Sigma_{2,2}$ & $(2,2)$-product locus $\mathrm{Im}(r)$ \\
single equation $\det(C)=0$ & single equation $e_1-e_3=0$ \\
complement is entangled states & complement is $(2,2)$-entangled parameters \\
\bottomrule
\end{tabular}
\end{center}

For the 3-qubit case, a necessary and sufficient condition is given by a combination of the nonlinear equation and linear equations (Theorem \ref{thm:3qubit-product-locus-note}). More generally, for $n$ qubits, the tensor-product morphism $r_\mathbf d$ on the dual side determines a \emph{spectral defining ideal}; by Noetherianity it is finitely generated. Hence the (Zariski) closure of the $\mathbf d$-product locus can be described as the common zero set of finitely many invariant polynomials (and if the image is Zariski closed, this gives an equivalence).

Furthermore, if the relative Langlands framework of Sakellaridis--Venkatesh \cite{2012arXiv1203.0039S,2006math......6130S,2017arXiv170208264K} is established, one may expect a more structural understanding in which these image conditions appear as representation-theoretic spectral selection rules. (However, as Proposition~\ref{prop:sphericity-dim-test-clean} shows, $X_{\mathbf d}=G/G_{\mathbf d}$ is not spherical in general, so applying the SV formalism directly beyond $\mathbf d=(2,2)$ requires additional ideas.)

Regardless of practical usefulness, it is intriguing that the basic question of whether a quantum state factorizes can be seen to connect to modern mathematics through Hecke eigenvalues and Satake parameters.
\appendix

\appendix
\section{Technical explanations}
\label{sec:supplement-3points}

In this section we supplement the following three points:
\begin{enumerate}
\item Why $P/G_d$ represents subsystem structures (reductions).
\item Why the Segre orbit becomes a locally closed subscheme of the Hilbert scheme.
\item How to derive the incidence resolution directly from flattening (matrixization).
\end{enumerate}

\subsection{Why $P/G_d$ represents subsystem structures (generalities on reductions)}
\label{subsec:PGd-reduction}

Here, in order to explain things in a way that is more intuitive than maximally scheme-theoretic, we treat $X$ as a parameter space and describe everything in terms of gluing (transition functions) over an open cover $\{U_i\}$.

\begin{definition}[Gluing description of a principal $G$-bundle]
Let $G$ be a group (here we have in mind $G=\PGL_n$, etc.). Suppose that over an open cover $\{U_i\}$ of $X$ we can identify locally
\[
P|_{U_i}\ \simeq\ U_i\times G,
\]
and that on overlaps $U_{ij}:=U_i\cap U_j$ the transition isomorphisms
\[
(U_{ij}\times G)\ \xrightarrow{\ \sim\ }\ (U_{ij}\times G)
\]
are given by
\[
(x,g)\longmapsto \bigl(x,\ g_{ij}(x)\,g\bigr)
\qquad(g_{ij}(x)\in G),
\]
with the cocycle condition on triple overlaps $U_{ijk}$:
\[
g_{ij}(x)\,g_{jk}(x)\,g_{ki}(x)=1.
\]
Then these data $\{g_{ij}\}$ glue to a global principal $G$-bundle (torsor) $P\to X$.
\end{definition}

Now let $H\subset G$ be a subgroup. A reduction of the principal $G$-bundle $P\to X$ to $H$ can be
thought of as the ability to choose the transition functions $g_{ij}(x)\in G$ defining $P$ so that (after
refining the cover if necessary) they always take values in $H$.

More geometrically, it is equivalent to the following data:
\begin{itemize}
\item a principal $H$-bundle $P_H\to X$ (locally identifiable with $U_i\times H$), and
\item an isomorphism that extends it back to $G$ and recovers $P$:
\[
P_H\times_H G\ \simeq\ P.
\]
\end{itemize}
Here $P_H\times_H G$ denotes the quotient of $P_H\times G$ by the equivalence relation
\[
(p,g)\sim (p\cdot h,\ h^{-1}g)\qquad(h\in H).
\]
If locally $P_H|_{U_i}\simeq U_i\times H$, then this is locally $U_i\times G$, so one may regard it as the operation of pushing an $H$-bundle up to a $G$-bundle.

The group $H$ acts on $P$ on the right ($p\mapsto p\cdot h$). Since each fiber $P_x$ is a torsor for $G$ (a twisted copy of $G$), we can form the quotient $P_x/H$. Collecting these fiberwise quotients gives the bundle $P/H$. Intuitively, this is the bundle whose fibers are left cosets by $H$, so that the fiber is of the form $G/H$.

\begin{proposition}
\label{prop:reduction-section}
For a principal $G$-bundle $P\to X$ and a subgroup $H\subset G$, the following are naturally
equivalent:
\begin{enumerate}
\item A reduction of $P$ to $H$ exists.
\item The quotient bundle $P/H\to X$ admits a section
\[
s:X\longrightarrow P/H.
\]
\end{enumerate}
\end{proposition}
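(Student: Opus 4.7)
The plan is to establish the equivalence by giving explicit mutually inverse constructions, and to organize them around the observation that the quotient map $q\colon P\to P/H$ is itself a principal $H$-bundle. This is the technical heart of the argument, so I would dispatch it first: on each trivializing patch $U_i$ with $P|_{U_i}\simeq U_i\times G$, the map $q$ becomes the projection $U_i\times G\to U_i\times (G/H)$ coming from $G\to G/H$. Since $G\to G/H$ is a principal $H$-bundle (the right $H$-action is free, and local sections exist by choice of a set-theoretic / scheme-theoretic splitting of $G\to G/H$ after refining), so is $q$ locally, and these local trivializations are compatible under the gluing $g\mapsto g_{ij}g$ because left multiplication commutes with the right $H$-action.

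For $(1)\Rightarrow(2)$, suppose $P_H\subset P$ is a reduction. Compose the inclusion $\iota\colon P_H\hookrightarrow P$ with $q$. Since $\iota$ is $H$-equivariant and $q$ is $H$-invariant, $q\circ\iota$ factors through the quotient $P_H/H$; but $P_H\to X$ is an $H$-torsor, so $P_H/H\simeq X$, producing a map $s\colon X\to P/H$. That $s$ is a section follows because $q\circ\iota$ lies over the identity of $X$ by construction. Conversely, for $(2)\Rightarrow(1)$, form the pullback $P_H:=X\times_{P/H}P$ along $s$ and $q$. Because $q$ is a principal $H$-bundle, so is its pullback along any base change, hence $P_H\to X$ is a principal $H$-bundle. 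The second projection gives an $H$-equivariant map $P_H\to P$, which induces a morphism $P_H\times_H G\to P$ by $(p,g)\mapsto p\cdot g$; checking this is an isomorphism can be done locally, where both sides become $U_i\times G$ after choosing a local lift of $s$, so it holds globally.

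The step I expect to require the most care is verifying that these two constructions are quasi-inverse and that everything works in the fppf/\'etale algebraic setting rather than just topologically. The cleanest check is via the cocycle description: if $\{g_{ij}\}$ trivializes $P$ and $s$ is locally lifted to $p_i\colon U_i\to P$ with $q\circ p_i = s|_{U_i}$, then on $U_{ij}$ the two lifts $p_i$ and $p_j$ map to the same point of $P/H$, so they are related by a unique element $h_{ij}(x)\in H$, and the resulting $\{h_{ij}\}$ form an $H$-valued cocycle that refines $\{g_{ij}\}$ up to $H$. This matches exactly the transition-function formulation of a reduction given just above Proposition~\ref{prop:reduction-section}, so applying the two constructions in succession recovers the original data (up to canonical isomorphism), completing the equivalence.
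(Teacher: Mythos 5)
Your proof is correct and takes essentially the same route as the paper: the section is obtained by pushing the reduction $P_H$ into $P/H$, and the reduction is recovered from a section $s$ as the locus $\{\,p\in P \mid q(p)=s(\pi(p))\,\}$, which is exactly your fiber product $X\times_{P/H}P$. The only difference is packaging—you justify the constructions via the observation that $q\colon P\to P/H$ is itself a principal $H$-bundle together with pullback stability and a cocycle check, whereas the paper verifies the same facts by direct pointwise orbit arguments.
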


\begin{proof}
\textbf{(Step 1) Reduction $\Rightarrow$ section.}
Assume a reduction exists. That is, there is a principal $H$-subbundle $P_H\subset P$ such that for each $x$ one has an $H$-orbit $P_{H,x}\subset P_x$. For each $x\in X$, choose any point
$p\in P_{H,x}$ and define
\[
s(x):=\text{the $H$-equivalence class of $p$}\in P_x/H.
\]

\smallskip
\textbf{(Step 2) Independence of the choice of representative.}
Any other point in the same fiber can be written as $p\cdot h$. But
\[
[p\cdot h]=[p]\in P_x/H,
\]
so $s$ is well-defined.

\smallskip
\textbf{(Step 3) Section $\Rightarrow$ reduction.}
Conversely, given a section $s:X\to P/H$, define
\[
P_H:=\{\,p\in P\mid \text{the $H$-class of $p$ equals $s(\pi(p))$}\,\}
\subset P,
\]
where $\pi:P\to X$ is the projection.

\smallskip
\textbf{(Step 4) $P_H\to X$ is a principal $H$-bundle.}
If $p\in P_H$ and $h\in H$, then $p\cdot h$ defines the same $H$-class, hence $p\cdot h\in P_H$. Thus $P_H$ is preserved by the right $H$-action.

Moreover, for each $x$ the fiber $P_{H,x}$ is exactly a single $H$-orbit in $P_x$, and hence carries a free and transitive $H$-action (locally). Therefore $P_H\to X$ is a principal $H$-bundle.

\smallskip
\textbf{(Step 5) Construct $P_H\times_H G\simeq P$.}
Consider the map
\[
P_H\times G\longrightarrow P,\qquad (p,g)\longmapsto p\cdot g
\]
(which makes sense since $P$ is a right $G$-torsor). This map is constant on the equivalence relation $(p\cdot h,\ h^{-1}g)\sim(p,g)$, so it descends to a map
\[
P_H\times_H G\longrightarrow P.
\]
Locally it becomes the identity $U_i\times G\to U_i\times G$, hence it is an isomorphism. Finally, starting from a reduction and forming the section as in Step 1, then reconstructing as in Step 3, recovers the original $P_H$, and conversely as well. This proves the claim.
\end{proof}

In the setting of the main text, we take $G=\PGL_n$ and $H=G_d$, the subgroup preserving the Segre variety $\Sigma_d$. Therefore
\[
P/G_d
\]
is the bundle obtained by collecting, fiberwise, the $\PGL_n$-degrees of freedom modulo the subgroup that preserves the Segre embedding. By Proposition~\ref{prop:reduction-section}, we obtain
\[
\text{subsystem structure ($G_d$-reduction)}\quad\Longleftrightarrow\quad
\text{a section of $P/G_d$}.
\]
Equivalently, a point (or section) of $P/G_d$ represents the choice of how to insert the product locus fiber-by-fiber over a twisted background.

\bigskip

\label{subsec:Segre-orbit-Hilb}

When a group $G$ acts on a set $S$, for a point $s\in S$ define
\[
\mathrm{Orb}(s)=\{g\cdot s\mid g\in G\},\qquad
\mathrm{Stab}(s)=\{g\in G\mid g\cdot s=s\}.
\]
Then one has the basic identification
\[
\mathrm{Orb}(s)\ \simeq\ G/\mathrm{Stab}(s).
\]

Now think of $S$ as the collection of all subschemes of $\PP^{n-1}$. The geometric object that parameterizes them is the Hilbert scheme
\[
\Hilb(\PP^{n-1})
\]
(more precisely, the representing scheme whose points correspond to subschemes).
The important point is:
\begin{quote}
\emph{If $\PGL_n$ acts on $\PP^{n-1}$, then it acts on all subschemes of $\PP^{n-1}$, hence it also
acts on the Hilbert scheme.}
\end{quote}

Fix the Segre variety (the product locus) $\Sigma_d\subset \PP^{n-1}$, and write $[\Sigma_d]$ for the corresponding point of the Hilbert scheme.

\begin{definition}[Segre orbit]
\[
\mathcal{O}_{\Sigma_d}:=\PGL_n\cdot[\Sigma_d]\ \subset\ \Hilb(\PP^{n-1})
\]
is called the \emph{Segre orbit}.
\end{definition}

\begin{proposition}
The stabilizer of $[\Sigma_d]$ is
\[
\mathrm{Stab}_{\PGL_n}([\Sigma_d])=\{g\in\PGL_n\mid g(\Sigma_d)=\Sigma_d\}=G_d.
\]
\end{proposition}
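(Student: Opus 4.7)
The plan is to reduce the statement to essentially unwinding the definition of the $\PGL_n$-action on the Hilbert scheme and comparing it with Definition~\ref{def:stabilizer}. Concretely, I would proceed in three short steps.

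First, I would recall (or briefly recap) how the action of $\PGL_n$ on $\PP^{n-1}$ induces an action on $\Hilb(\PP^{n-1})$. By the functor of points, for any scheme $T$ a $T$-point of $\Hilb(\PP^{n-1})$ is a closed subscheme $Z\subset \PP^{n-1}\times T$ flat over $T$ with prescribed Hilbert polynomial. Given $g\in\PGL_n(T)$, the map $g\times\id_T$ is an automorphism of $\PP^{n-1}\times T$ over $T$, so $(g\times\id_T)(Z)$ is again a closed subscheme flat over $T$ with the same Hilbert polynomial. This defines the action functorially. On closed points it specializes to the obvious formula $g\cdot[Z]=[g(Z)]$.

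Next, I would apply this to $Z=\Sigma_d$, viewed as a reduced closed subscheme of $\PP^{n-1}$ (which it is, being the image of a product of projective spaces under the Segre embedding, hence smooth and in particular reduced). By the previous step, $g\cdot[\Sigma_d]=[g(\Sigma_d)]$, and the equality $g\cdot[\Sigma_d]=[\Sigma_d]$ in the Hilbert scheme is equivalent to $g(\Sigma_d)=\Sigma_d$ as closed subschemes of $\PP^{n-1}$. Since $\Sigma_d$ is reduced and irreducible, scheme-theoretic equality with its $\PGL_n$-translate is equivalent to set-theoretic equality of the underlying varieties.

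Finally, comparing with Definition~\ref{def:stabilizer}, the condition $g(\Sigma_d)=\Sigma_d$ is literally the defining condition of $G_d$, so $\mathrm{Stab}_{\PGL_n}([\Sigma_d])=G_d$. I expect the only point requiring any care is the justification that the induced action on $\Hilb$ is given by the expected formula $g\cdot[Z]=[g(Z)]$; once this functoriality is invoked, the rest of the proof is a direct tautology, and there is no serious obstacle.
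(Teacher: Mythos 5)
Your argument is correct and is essentially the paper's own proof, which simply observes that a point of $\Hilb(\PP^{n-1})$ is fixed by $g$ if and only if the corresponding closed subscheme is invariant, so the stabilizer is exactly the group of projective transformations preserving $\Sigma_d$, i.e.\ $G_d$. Your extra care about the functorial description of the action and about reducedness of $\Sigma_d$ (so that scheme-theoretic and set-theoretic invariance coincide) only makes explicit what the paper treats as immediate.
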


\begin{proof}
This is immediate from the definition. The point of the Hilbert scheme does not move is fixed if and only if the corresponding subscheme is invariant. So the stabilizer is the group of projective transformations preserving the Segre variety.
\end{proof}

\subsubsection{The orbit map factors through $\PGL_n/G_d$}
Define the orbit map
\[
\varphi:\PGL_n\longrightarrow \Hilb(\PP^{n-1}),\qquad
g\longmapsto g\cdot[\Sigma_d].
\]
If $h\in G_d$, then $h\cdot[\Sigma_d]=[\Sigma_d]$, hence
\[
\varphi(gh)=\varphi(g).
\]
Thus $\varphi$ depends only on the left coset $gG_d$. Therefore one obtains an induced morphism
\[
\bar\varphi:\PGL_n/G_d\longrightarrow \Hilb(\PP^{n-1}),
\]
whose image is exactly the orbit $\mathcal{O}_{\Sigma_d}$.

\subsubsection{Why it is \texorpdfstring{locally closed}{locally closed}}
The key points are the following.
\begin{enumerate}
\item $\bar\varphi$ does not identify distinct cosets (including in families).
\item As a consequence, $\mathcal{O}_{\Sigma_d}$ is locally closed in the Hilbert scheme. 
\end{enumerate}

\paragraph{(A) The case of points.}
If $g_1(\Sigma_d)=g_2(\Sigma_d)$, then
\[
g_2^{-1}g_1(\Sigma_d)=\Sigma_d,
\]
so $g_2^{-1}g_1\in G_d$, hence $g_1G_d=g_2G_d$. Therefore, set-theoretically, $\PGL_n/G_d \to \mathcal{O}_{\Sigma_d}$ is injective.

\paragraph{(B) Why we consider fppf covers.}
In scheme theory, one wants to distinguish not only points but also families varying continuously over a base $T$. Suppose $x_1(t),x_2(t)\in(\PGL_n/G_d)(T)$ define the same point of the orbit. Then locally on $T$ (after an fppf cover) one can choose representatives $g_1(t),g_2(t)\in\PGL_n(T)$ such that
\[
g_1(t)(\Sigma_d\times T)=g_2(t)(\Sigma_d\times T).
\]
It follows that
\[
g_2(t)^{-1}g_1(t)\in G_d(T),
\]
hence $x_1=x_2$. Moreover, the map is injective even on families: two coordinate changes that give the same Segre variety differ only by an element of the stabilizer.

\paragraph{(C) Outline of why the image is locally closed.}
If $\bar\varphi$ is an immersion, then its image is necessarily a locally closed subscheme (analogous to an embedded image is a submanifold in differential geometry). Intuitively, the orbit is the parameter space of all Segre varieties obtained by $\PGL_n$-changes of coordinates, so it forms a smooth homogeneous space inside the Hilbert scheme, appearing as a locally closed locus in the component containing $[\Sigma_d]$.

\bigskip

\subsection{Deriving the incidence resolution directly from flattening}
\label{subsec:incidence-from-flattening}

This subsection explicitly constructs the standard incidence resolution of the determinantal variety $R_{\le k}$ (\S\ref{sec:sing}) starting only from \emph{flattening} (matrixization of a tensor).

\subsubsection{Flattening (matrixization) of a tensor}
Let $H_A\simeq k^{d_A}$ and $H_B\simeq k^{d_B}$ be finite-dimensional vector spaces, and set
\[
P:=\PP(H_A\otimes H_B).
\]
Using the basic linear-algebra isomorphism
\[
H_A\otimes H_B\ \simeq\ \Hom(H_B^\vee,\,H_A)
\]
(contraction in the $H_B$-factor), a tensor $\psi\in H_A\otimes H_B$ can be identified with a linear
map
\[
\psi_B:H_B^\vee\longrightarrow H_A.
\]
After choosing bases, this becomes the flattening matrix $M(\psi)$ from the first half.

Similarly, swapping the factors gives
\[
\psi_A:H_A^\vee\longrightarrow H_B,
\]
which corresponds to transpose at the level of matrices.

\begin{definition}[Schmidt rank and the determinantal variety]
For $\psi\neq 0$, define
\[
\rk(\psi):=\rk(\psi_B)=\rk(\psi_A).
\]
Define the projective determinantal variety (the locus of matrix rank $\le k$) by
\[
R_{\le k}\ :=\ \{[\psi]\in P\mid \rk(\psi)\le k\}.
\]
\end{definition}

\subsubsection{Idea of the resolution: record the image spaces for rank \texorpdfstring{$k$}{k}}
When $\rk(\psi)=k$, the subspaces
\[
U_A:=\im(\psi_B)\subset H_A,\qquad
U_B:=\im(\psi_A)\subset H_B
\]
are uniquely determined as \emph{$k$-dimensional subspaces}.

On the other hand, when $\rk(\psi)<k$, there are many choices of $k$-dimensional subspaces $U_A$ containing $\im(\psi_B)$. This non-uniqueness of choices is the intuitive source of the singularities of the determinantal variety. The incidence resolution resolves the singularities by parameterizing those choices together with the tensor.

\subsubsection{Construction: Grassmannians and an incidence variety}
The set of all $k$-dimensional subspaces is parameterized by the Grassmannians
\[
\Gr_A(k):=\Gr(k,H_A),\qquad \Gr_B(k):=\Gr(k,H_B).
\]
Each Grassmannian carries the tautological subbundle:
\[
\mathcal U_A\subset H_A\otimes\mathcal O_{\Gr_A(k)},\qquad
\mathcal U_B\subset H_B\otimes\mathcal O_{\Gr_B(k)}.
\]
The fiber of $\mathcal U_A$ over $[U_A]\in\Gr_A(k)$ is exactly $U_A$.

\begin{definition}[Incidence variety (symmetric version)]
Define
\[
\widetilde R_{\le k}
:=
\PP(\mathcal U_A\boxtimes \mathcal U_B)
\ \longrightarrow\
\Gr_A(k)\times\Gr_B(k).
\]
Equivalently, a point is the same as
\[
(U_A,\ U_B,\ [\psi])
\quad\text{where}\quad
U_A\in\Gr_A(k),\ U_B\in\Gr_B(k),\ [\psi]\in\PP(U_A\otimes U_B).
\]
\end{definition}

\begin{definition}[Incidence resolution]
Using the inclusion $U_A\otimes U_B\subset H_A\otimes H_B$, define the forgetful map
\[
\rho_k:\widetilde R_{\le k}\longrightarrow \PP(H_A\otimes H_B)=P,
\qquad
(U_A,U_B,[\psi])\longmapsto [\psi].
\]
\end{definition}

\subsubsection{Showing \texorpdfstring{$\mathrm{Im}(\rho_k)\subset R_{\le k}$}{Im(rho\_k) subset R\_{\le k}} using only flattening}
\begin{lemma}
$\rho_k(\widetilde R_{\le k})\subset R_{\le k}$, namely any $\psi\in U_A\otimes U_B$ has rank $\le k$.
\end{lemma}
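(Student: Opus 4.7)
The plan is to reduce everything to a single observation about the flattening map: if $\psi$ lies in the subspace $U_A\otimes U_B\subset H_A\otimes H_B$, then the flattening $\psi_B:H_B^\vee\to H_A$ factors through the inclusion $U_A\hookrightarrow H_A$, and hence has rank at most $\dim U_A=k$.

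First, I would choose any decomposition $\psi=\sum_{\alpha=1}^{N} a_\alpha\otimes b_\alpha$ with $a_\alpha\in U_A$ and $b_\alpha\in U_B$ (which exists since $\psi\in U_A\otimes U_B$; the number of terms $N$ may a priori exceed $k$, so we do not try to control it). Then for any $\ell\in H_B^\vee$ the definition of $\psi_B$ as contraction gives
\[
\psi_B(\ell)\;=\;\sum_{\alpha=1}^{N}\ell(b_\alpha)\,a_\alpha\;\in\;U_A.
\]
Hence $\operatorname{im}(\psi_B)\subset U_A$, so $\rk(\psi_B)\le \dim U_A=k$. By the identification $\rk(\psi)=\rk(\psi_B)$ (Theorem~\ref{thm:SR_is_rk}), this gives $\rk(\psi)\le k$, i.e.\ $[\psi]\in R_{\le k}$.

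A coordinate-free way to say the same thing, which I would include to make the factorization transparent: the inclusion $U_B\hookrightarrow H_B$ dualizes to a surjection $q:H_B^\vee\twoheadrightarrow U_B^\vee$, the element $\psi\in U_A\otimes U_B$ corresponds to a linear map $\bar\psi:U_B^\vee\to U_A$, and one checks from the naturality of the flattening construction that $\psi_B$ equals the composition
\[
H_B^\vee\ \xrightarrow{\ q\ }\ U_B^\vee\ \xrightarrow{\ \bar\psi\ }\ U_A\ \hookrightarrow\ H_A,
\]
so $\rk(\psi_B)\le \rk(\bar\psi)\le \dim U_A=k$.

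There is no real obstacle here — the only subtlety is notational, namely making sure the contraction used to define $\psi_B$ is the same one that makes the above triangle commute (so that one can genuinely identify $\bar\psi$ with the restriction of $\psi_B$). Once that is recorded, the statement $\rho_k(\widetilde R_{\le k})\subset R_{\le k}$ follows immediately, and the same argument applied symmetrically in the $A$-factor shows that moreover $\operatorname{im}(\psi_A)\subset U_B$, which is the compatibility needed for the later claim that $\rho_k$ is an isomorphism over the open locus of rank exactly $k$.
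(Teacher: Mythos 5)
Your proof is correct and follows essentially the same route as the paper: both arguments show that contraction in the $H_B$-factor of an element of $U_A\otimes U_B$ lands in $U_A$, so $\operatorname{im}(\psi_B)\subset U_A$ and hence $\rk(\psi)=\rk(\psi_B)\le\dim U_A=k$. Your explicit decomposition and the coordinate-free factorization through $U_B^\vee$ merely spell out what the paper states in one line.
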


\begin{proof}
Let $\psi\in U_A\otimes U_B$, and consider the flattening map $\psi_B:H_B^\vee\to H_A$. Since $\psi$ lies in $U_A\otimes U_B$, contracting in the $H_B$-factor produces only linear combinations of vectors in $U_A$. Therefore
\[
\im(\psi_B)\subset U_A.
\]
Because $U_A$ has dimension $k$, we have
\[
\rk(\psi)=\rk(\psi_B)=\dim\im(\psi_B)\le \dim U_A=k,
\]
so $[\psi]\in R_{\le k}$.
\end{proof}

\subsubsection{It is an isomorphism over the rank-exactly-\texorpdfstring{$k$}{k} locus}
Let
\[
R_{=k}:=R_{\le k}\setminus R_{\le k-1}
\]
be the open locus of tensors of rank exactly $k$. We show that $\rho_k$ restricts to an isomorphism over $R_{=k}$. This is the reason the construction is birational, and it is the core of the resolution.

\begin{proposition}[Explicit inverse over the rank-$k$ locus]
\label{prop:inverse-on-rankk}
For a point $[\psi]\in R_{=k}$, set
\[
U_A:=\im(\psi_B)\in\Gr_A(k),\qquad
U_B:=\im(\psi_A)\in\Gr_B(k).
\]
Then
\[
[\psi]\in \PP(U_A\otimes U_B).
\]
Therefore the map
\[
s:R_{=k}\longrightarrow \widetilde R_{\le k},\qquad
[\psi]\longmapsto (U_A,U_B,[\psi])
\]
is well-defined and satisfies $\rho_k\circ s=\id_{R_{=k}}$. Moreover, on $\rho_k^{-1}(R_{=k})$ one also has $s\circ\rho_k=\id$, and therefore
\[
\rho_k:\rho_k^{-1}(R_{=k})\xrightarrow{\sim} R_{=k}
\]
is an isomorphism.
\end{proposition}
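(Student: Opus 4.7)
The plan is to verify the four assertions in the proposition in order: (i) that $[\psi]\in\PP(U_A\otimes U_B)$ when $\rk(\psi)=k$ so the assignment $s$ is well-defined at all; (ii) that $\rho_k\circ s=\id$ on $R_{=k}$; (iii) that $s\circ\rho_k=\id$ on $\rho_k^{-1}(R_{=k})$; and (iv) the upgrade from a mutually inverse bijection to a scheme-theoretic isomorphism.

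First I would use the Schmidt decomposition (Theorem~\ref{thm:SR_is_rk} together with the preceding discussion) to write any $\psi$ with $\rk(\psi)=k$ as $\psi=\sum_{\alpha=1}^{k}a_\alpha\otimes b_\alpha$ with $\{a_\alpha\}\subset H_A$ and $\{b_\alpha\}\subset H_B$ linearly independent. Computing $\psi_B(\ell)=\sum_\alpha a_\alpha\,\ell(b_\alpha)$ for $\ell\in H_B^\vee$ shows $\im(\psi_B)=\mathrm{span}\{a_\alpha\}$, which has dimension exactly $k$; symmetrically, $\im(\psi_A)=\mathrm{span}\{b_\alpha\}$. In particular $U_A\in\Gr_A(k)$ and $U_B\in\Gr_B(k)$ are well-defined, and manifestly $\psi\in U_A\otimes U_B$, so $s([\psi]):=(U_A,U_B,[\psi])$ makes sense as a point of $\widetilde R_{\le k}$. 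The identity $\rho_k\circ s=\id_{R_{=k}}$ is then immediate from the fact that $\rho_k$ is the projection onto the third coordinate.

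For (iii), I would take $(U_A,U_B,[\psi])\in\rho_k^{-1}(R_{=k})$, so $\psi\in U_A\otimes U_B$ with $\rk(\psi)=k$. The lemma preceding the proposition gives $\im(\psi_B)\subset U_A$; but $\dim\im(\psi_B)=\rk(\psi_B)=k=\dim U_A$, forcing equality $\im(\psi_B)=U_A$. The mirror argument yields $\im(\psi_A)=U_B$. Hence $s(\rho_k(U_A,U_B,[\psi]))=(U_A,U_B,[\psi])$, which completes the set-theoretic inverse.

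The main obstacle, and the only step where one must be careful, is (iv): verifying that $s$ is actually a \emph{morphism} and not merely a set-theoretic inverse to $\rho_k$. Concretely, I would argue that on the open subscheme $R_{=k}\subset P$ the assignment $[\psi]\mapsto \im(\psi_B)\in\Gr_A(k)$ is a morphism. Locally one picks a $k\times k$ submatrix of the flattening $M(\psi)$ whose determinant is nonzero (which defines a Zariski open cover of $R_{=k}$), and on each such chart the $k$ independent columns give an explicit algebraic frame for $\im(\psi_B)$, hence an algebraic section to $\Gr_A(k)$. The symmetric construction gives the $U_B$-component, and the third coordinate $[\psi]$ is already a morphism. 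Thus $s$ is a morphism; since $s$ and $\rho_k|_{\rho_k^{-1}(R_{=k})}$ are mutually inverse morphisms, $\rho_k$ restricts to an isomorphism over $R_{=k}$, as claimed. Everything else in the proof is formal once the Schmidt decomposition and the determinantal-minor description of the Grassmannian embedding are in hand.
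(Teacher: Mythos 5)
Your proposal is correct, and its skeleton (dimension count for $U_A,U_B$, membership $\psi\in U_A\otimes U_B$, the preceding lemma plus a dimension count to force $\im(\psi_B)=U_A$ and $\im(\psi_A)=U_B$, hence mutually inverse maps) matches the paper's proof. The two places where you deviate are worth noting. For well-definedness of $s$, you invoke Theorem~\ref{thm:SR_is_rk} to write $\psi=\sum_{\alpha=1}^{k}a_\alpha\otimes b_\alpha$ with both families linearly independent and read off $\im(\psi_B)=\mathrm{span}\{a_\alpha\}$, $\im(\psi_A)=\mathrm{span}\{b_\alpha\}$; the paper instead extends bases of $U_A$ and $U_B$ and observes that the flattening matrix then has block form $\begin{pmatrix}\ast&0\\0&0\end{pmatrix}$, so $\psi\in U_A\otimes U_B$. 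Both are fine; your route is slightly quicker but you should justify in one line that the $b_\alpha$ are independent (otherwise the rank of $\psi_A$, which also equals $k$, would drop), and you should lean on the purely algebraic rank statement rather than the inner-product Schmidt decomposition, since the construction is stated over a general (algebraically closed) base field. Second, you explicitly verify that $s$ is a morphism by covering $R_{=k}$ by the opens where a fixed $k\times k$ minor of $M(\psi)$ is nonzero and using the corresponding columns as an algebraic frame for $\im(\psi_B)$; the paper's proof of the proposition is pointwise, and the scheme-level upgrade is only sketched afterwards via the universal flattening maps $\Phi_A,\Phi_B$ and the universal property of the Grassmannian. Your minor-chart argument is a concrete and correct substitute for that remark, so if anything your version is more complete on the "isomorphism of schemes" step.
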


\begin{proof}
\textbf{(Step 1) $U_A$ and $U_B$ are $k$-dimensional.}
Since $[\psi]\in R_{=k}$, we have $\rk(\psi)=k$. Thus
\[
\dim U_A=\dim\im(\psi_B)=k,\qquad
\dim U_B=\dim\im(\psi_A)=k,
\]
so $U_A\in\Gr_A(k)$ and $U_B\in\Gr_B(k)$.

\smallskip
\textbf{(Step 2) Show $\psi\in U_A\otimes U_B$ (the essential point).}
We give a completely elementary proof by choosing bases and viewing $\psi$ as a matrix. Extend a basis $u_1,\dots,u_k$ of $U_A=\im(\psi_B)\subset H_A$ to a basis $u_1,\dots,u_{d_A}$ of $H_A$. Similarly, extend a basis $v_1,\dots,v_k$ of $U_B=\im(\psi_A)\subset H_B$ to a basis $v_1,\dots,v_{d_B}$ of $H_B$.

Let $M(\psi)$ be the flattening matrix of $\psi$ in these bases. The condition $\im(\psi_B)=U_A=\langle u_1,\dots,u_k\rangle$ means that the column space of $M(\psi)$ lies in the span of $u_1,\dots,u_k$, hence the \emph{bottom $(d_A-k)$ rows are all zero}. Likewise, $\im(\psi_A)=U_B$ (equivalently, looking at the transpose) means that the \emph{rightmost $(d_B-k)$ columns are all zero}.

Therefore $M(\psi)$ has the block form
\[
M(\psi)=
\begin{pmatrix}
\ast & 0\\
0 & 0
\end{pmatrix}
\]
(only the top-left $k\times k$ block can be nonzero).
Equivalently, as a tensor,
\[
\psi=\sum_{i=1}^k\sum_{j=1}^k c_{ij}\,u_i\otimes v_j,
\]
which lies in $U_A\otimes U_B$. Hence
\[
[\psi]\in \PP(U_A\otimes U_B).
\]
Thus $s$ is well-defined and $\rho_k\circ s=\id$, since $\rho_k$ simply forgets $(U_A,U_B)$.

\smallskip
\textbf{(Step 3) Show $s\circ\rho_k=\id$ over the rank-$k$ locus.}
A point of $\rho_k^{-1}(R_{=k})$ is of the form
\[
(U_A,U_B,[\psi])\quad\text{with}\quad \rk(\psi)=k,
\]
and $[\psi]\in\PP(U_A\otimes U_B)$.
By the same flattening argument, $\im(\psi_B)\subset U_A$. Since both sides have dimension $k$, we get
\[
\im(\psi_B)=U_A.
\]
Similarly, $\im(\psi_A)=U_B$. Therefore the construction of $s([\psi])$ recovers the original triple $(U_A,U_B,[\psi])$, so $s\circ\rho_k=\id$ on $\rho_k^{-1}(R_{=k})$.

Hence $\rho_k$ is an isomorphism over $R_{=k}$, and in particular is birational.
\end{proof}

Thus $\widetilde R_{\le k}$ is smooth, and $\rho_k:\widetilde R_{\le k}\to R_{\le k}$ is a projective birational morphism which is an isomorphism over the rank-$k$ open locus. Therefore $\rho_k$ is the (standard) incidence resolution of the determinantal variety $R_{\le k}$.

If one rewrites the above argument not in terms of matrices at each point but in terms of the universal tensor on the projective space $P$, one recovers the two universal flattening maps used later:
\[
\Phi_A:H_A^\vee\otimes \mathcal O_P(-1)\to H_B\otimes \mathcal O_P,\qquad
\Phi_B:H_B^\vee\otimes \mathcal O_P(-1)\to H_A\otimes \mathcal O_P.
\]
On the rank-constant open subset $R_{=k}$, the images become rank-$k$ subbundles, and by the universal property of Grassmannians one obtains $(U_A,U_B)$ and hence the map $s$ as above.

\section{On branches and coordinate gluing on the torus}\label{sec:torus}

\subsection{What is a branch?}
This appendix explains the coordinate gluing on the torus used in \S\ref{sec:holonomic_gate} and \ref{sec:spin-chain-toy}.
In those examples, we considered the external parameter space
\[
X=(\mathbb{C}^{\times})^{2},\qquad (u,v)\in X,
\]
where $\mathbb{C}^\times=\mathbb{C}\setminus\{0\}$, so both $u$ and $v$ are nonzero complex numbers. Physically, one may restrict to $|u|=|v|=1$ and view this as a parameter space on the torus $X_{\mathrm{phys}}=(S^1)^2\subset (\mathbb{C}^\times)^2$.

\medskip

The notation $X^{\mathrm{an}}$ denotes the \emph{analytification} of $X$, i.e.\ viewing $X$ in the setting of complex analysis (the usual topology, continuity, and holomorphicity). Intuitively, it is harmless to think of
\[
X^{\mathrm{an}}\cong(\mathbb{C}^{\times})^{2}
\]
as a \emph{complex manifold} (in the usual sense, a space with open sets, continuous paths, and loops).

We write $X^{\mathrm{an}}$ explicitly because we want to discuss analytic notions such as analytic continuation along loops and the resulting monodromy (the mismatch after going once around a loop).

\medskip
On this space, branching phenomena occur in general. For example, for a complex number $u\in\mathbb{C}^\times$, the fourth root $u^{1/4}$ has four values in general. Writing
\[
u=re^{i\theta}\qquad (r>0),
\]
we have
\[
u^{1/4}=r^{1/4}e^{\,i(\theta+2\pi k)/4}\qquad (k=0,1,2,3).
\]
Thus, viewed as a function, $u^{1/4}$ is originally a \emph{multivalued function}.

Hence, by choosing a branch we mean the following:
\begin{quote}
On a region $U$, choose one of the four values at each point $(u,v)\in U$ in a way that is consistently continuous (and preferably holomorphic) on $U$.
\end{quote}
If we denote the chosen branch by $u^{1/4}\big|_{U}$, then it becomes a \emph{single-valued} function on $U$ and satisfies
\[
\bigl(u^{1/4}\big|_{U}\bigr)^4=u.
\]

\subsection{How do we choose a branch?}
A standard way to construct a branch is to use $\log$. The map $u\mapsto \log u$ (the complex logarithm) is itself multivalued, but on a \emph{simply connected} domain $U\subset\mathbb{C}^\times$ one can choose a suitable branch cut and obtain a \emph{single-valued logarithm} (a branch of the logarithm)
\[
\mathrm{Log}_{U}:U\to\mathbb{C}\quad(\text{holomorphic}),\qquad e^{\mathrm{Log}_{U}(u)}=u.
\]
Then defining
\[
u^{1/4}\big|_{U}:=\exp\left(\frac{1}{4}\mathrm{Log}_{U}(u)\right)
\]
gives a holomorphic fourth root on $U$.

In terms of the physical torus $X_{\mathrm{phys}}=(S^1)^2$, if we write
\[
u=e^{i\theta_u},
\]
then $\theta_u$ is intrinsically $2\pi$-periodic ($\theta_u\sim \theta_u+2\pi$), and hence \emph{cannot be chosen globally as a single-valued function}. However, on a simply connected patch $U$ obtained by cutting open the torus (for example, restricting $\theta_u$ to an interval of length $<2\pi$), one can choose $\theta_u$ continuously. In that case,
\[
u^{1/4}\big|_{U}=e^{\,i\theta_u/4}
\]
is defined analytically.

In short,
\[
\text{take a simply connected }U
\text{ then a branch (a single-valued choice of }u^{1/4}\text{) exists on }U.
\]

\medskip

\subsection{Why simply connectedness is necessary: monodromy along a loop}
Since $\mathbb{C}^\times$ has a hole at the origin $0$, a loop going once around $0$ cannot be contracted to a point (it is not simply connected). Because of this hole, the value of $u^{1/4}$ generally does not return to the original value after going once around such a loop. It comes back multiplied by a constant phase factor.

Concretely, starting from a base point $(u_0,v_0)$ and considering the loop
\[
\gamma_u(t)=(u_0e^{2\pi it},\,v_0)\qquad (0\le t\le 1),
\]
analytic continuation of a chosen branch yields the transformation
\[
u^{1/m}\longmapsto \zeta\,u^{1/m}\qquad\bigl(\zeta=e^{2\pi i/m}\bigr).
\]
In the spin-chain example on the torus we have $m=4$, so $\zeta=i$, and thus
\[
u^{1/4}\longmapsto i\,u^{1/4}.
\]
That is, even though we return to the same point $(u_0,v_0)$, the value of the fourth root is multiplied by $i$. Therefore, one cannot define $u^{1/4}$ as a single-valued function on all of $X$ (or on any region containing such a loop). This is why one must restrict to simply connected charts $U$ in order to choose a branch.

\medskip

\subsection{Gluing the Hamiltonian and the ground state}
In \S\ref{sec:spin-chain-toy}, the local Hamiltonian $H_U(u,v)$ for the spin-chain example contains coefficients such as $u^{\pm 1/4}$. If $u^{1/4}$ remains multivalued, then $H_U$ cannot be specified as a single function of $(u,v)$, and computations such as matrix representations and eigenstates (including the ground state) become ambiguous (This ambiguity is intentional. It provides a standard way to obtain a well-defined Hamiltonian family on a nontrivial parameter manifold.).

Thus we proceed as follows:
\begin{enumerate}
    \item take a simply connected $U$;
    \item define a single-valued branch;
    \item get a well-defined $H_U(u,v)$ on $U$
\end{enumerate}
so that the Hamiltonian and the ground state can first be written correctly on each chart.

On overlaps, different branch choices are glued by a gauge transformation (here, conjugation by the translation operator), producing a twisted family. If the resulting monodromy leaves the local-operation group, a state that is locally a product state can become entangled after gluing. Confirming this phenomenon was the aim of the entire example.

In much of the physics literature (especially in condensed matter and quantum information), one often works within a single coordinate description, which can make global gluing effects less explicit. From the geometric viewpoint adopted here, such global entanglement arises naturally and can be studied systematically.

\section*{Acknowledgment}
I am deeply indebted to my collaborators for many years of close collaboration during which these ideas took shape, in particular Dmitri Kharzeev, Yaron Oz, and Steven Rayan.

\medskip
Illustrations used in page 57 were created by Megumi Ikeda. 

\medskip
This work was supported by the U.S. National Science Foundation (NSF), Office of Strategic Initiatives, under Grant No. OSI-2328774.

\bibliographystyle{JHEP}
\bibliography{ref}

\end{document}